\documentclass[envcountsame,runningheads,a4paper]{llncs}




\usepackage{microtype}
\usepackage{mathtools,amsthm}
\usepackage[hidelinks,bookmarks,bookmarksopen,bookmarksdepth=3]{hyperref}
\hypersetup{colorlinks=false}

\usepackage[utf8]{inputenc}

\usepackage{amssymb}
\usepackage{stmaryrd}
\usepackage{bussproofs}
\usepackage{mathtools}
\usepackage{bbold}

\usepackage{ifthen}
\usepackage{xspace}
\usepackage{fancybox}


\newcommand{\ignore}[1]{}


\newcommand{\myinput}[1]{\ifthenelse{\boolean{withimages}}{\input{#1}}{}}

\newcommand{\reflemma}[1]{Lemma~\ref{l:#1}}
\newcommand{\reflemmap}[2]{Lemma~\ref{l:#1}.\ref{p:#1-#2}}

\newcommand{\reflemmasps}[3]{Lemmas~\ref{l:#1}.\ref{p:#1-#2}-\ref{p:#1-#3}} 

\newcommand{\reflemmaeq}[1]{{L.\ref{l:#1}}}

\newcommand{\refpoint}[1]{Point~\ref{p:#1}}

\newcommand{\refthm}[1]{Thm.~\ref{thm:#1}}
\newcommand{\refthmp}[2]{Thm.~\ref{thm:#1}.\ref{p:#1-#2}}
\newcommand{\refthmsps}[3]{Thm.~\ref{thm:#1}.\ref{p:#1-#2}-\ref{p:#1-#3}}

\newcommand{\refprop}[1]{Prop.~\ref{prop:#1}}
\newcommand{\refprops}[2]{Prop.~\ref{prop:#1}-\ref{prop:#2}} 
\newcommand{\refpropp}[2]{Prop.~\ref{prop:#1}.\ref{p:#1-#2}} 
\newcommand{\refpropsps}[3]{Propositions~\ref{prop:#1}.\ref{p:#1-#2}-\ref{p:#1-#3}} 
\newcommand{\refsect}[1]{Sect.~\ref{sect:#1}}

\newcommand{\refapp}[1]{Appendix~\ref{app:#1} (p.~\pageref{app:#1})}

\renewcommand{\refeq}[1]{(\ref{eq:#1})} 
\newcommand{\reffig}[1]{Fig.~\ref{fig:#1}}
\newcommand{\refcoro}[1]{Cor.~\ref{coro:#1}}
\newcommand{\refcor}[1]{Cor.\,\ref{coro:#1}}
\newcommand{\refdef}[1]{Definition~\ref{def:#1}}

\newcommand{\refrmk}[1]{Remark~\ref{rmk:#1}} 
\newcommand{\refrmkp}[2]{Remark~\ref{rmk:#1}.\ref{p:#1-#2}} 
\newcommand{\refrmksps}[3]{Remarks~\ref{rmk:#1}.\ref{p:#1-#2}-\ref{p:#1-#3}} 


\newcommand{\ie}{\textit{i.e.}\xspace}
\newcommand{\eg}{\textit{e.g.}\xspace}
\newcommand{\ih}{\textit{i.h.}\xspace}







\newcommand{\defeq}{\coloneqq} 
\newcommand{\eqdef}{\eqqcolon} 
\newcommand{\grameq}{\Coloneqq} 
\newcommand{\set}[1]{\{#1\}}

\newcommand{\nat}{\mathbb{N}}
\newcommand{\size}[1]{|#1|}

\newcommand{\db}{{\tt dB}}

\newcommand{\vsym}{\mathsf{v}}




\renewcommand{\l}{\lambda}
\newcommand{\isub}[2]{\{#1/#2\}}
\renewcommand{\isub}[2]{\{#1{\shortleftarrow}#2\}}
\newcommand{\esub}[2]{[#1/#2]}
\renewcommand{\esub}[2]{[#1{\shortleftarrow}#2]}
\newcommand{\fv}[1]{{\tt fv}(#1)}


\newcommand{\mnf}[1]{{\msym}(#1)} 
\newcommand{\enf}[1]{{\esym}(#1)} 
\newcommand{\shufnf}{w} 

\newcommand{\rootRew}[1]{\mapsto_{#1}}
\newcommand{\Rew}[1]{\rightarrow_{#1}}
\newcommand{\Rewn}[1]{\rightarrow_{#1}^*}
\newcommand{\lRew}[1]{\; \mbox{}_{#1}{\leftarrow}\ }
\newcommand{\lRewn}[1]{\; \mbox{}^{*}_{#1}{\leftarrow}\ }
\newcommand{\lRewp}[1]{\; \mbox{}^{+}_{#1}{\leftarrow}\ }


\newcommand{\mult}{\mathsf{m}} 
\newcommand{\expo}{\mathsf{e}} 
\newcommand{\expoabs}{\expo_{\abssym}} 
\newcommand{\expovar}{\expo_{\varsym}} 
\newcommand{\rtom}{\rootRew{\mult}} 
\newcommand{\rtoe}{\rootRew{\expo}} 
\newcommand{\rtoeabs}{\rootRew{\expoabs}} 
\newcommand{\rtoevar}{\rootRew{\expovar}} 

\newcommand{\rtobv}{\rootRew{\betav}} 
\newcommand{\rtobabs}{\rootRew{\betaabs}} 
\newcommand{\rtobvar}{\rootRew{\betavar}} 
\newcommand{\rtosigma}{\rootRew{\sigma}} 

\newcommand{\rtoin}{\rootRew{\inert}}

\newcommand{\slsym}{\sigma_1}
\newcommand{\srsym}{\sigma_3}

\newcommand{\rtosl}{\rootRew{\slsym}}
\newcommand{\rtosr}{\rootRew{\srsym}}

\newcommand{\tob}{\Rew{\beta}}


\newcommand{\betav}{{\beta_v}} 
\newcommand{\abssym}{\lambda} 
\newcommand{\varsym}{y} 
\newcommand{\betaabs}{{\beta_{\!\abssym}}} 
\newcommand{\betavar}{{\beta_{\varsym}}}
\newcommand{\betavm}{{\bilancia\beta_v}} 
\newcommand{\betashuf}{\betavm} 
\newcommand{\tobv}{\Rew{\betav}} 
\newcommand{\tobabs}{\Rew{\betaabs}} 
\newcommand{\tobvar}{\Rew{\betavar}} 
\newcommand{\tobvm}{\Rew{\betavm}} 
\newcommand{\betain}{\beta_{\isym}} 
\newcommand{\inert}{\betain} 
\newcommand{\toin}{\Rew{\inert}}


\newcommand{\tosig}{\Rew{\sigma}} 
\newcommand{\sigm}{\bilancia{\sigma}}

\newcommand{\tosigm}{\Rew{\sigm}} 

\newcommand{\sigl}{\bilancia{\sigma}_1}
\newcommand{\sigr}{\bilancia{\sigma}_3}
\newcommand{\tosl}{\Rew{\sigl}}
\newcommand{\tosr}{\Rew{\sigr}}

\newcommand{\toperm}{\Rew{\vmsym}} 
\newcommand{\toshuf}{\tovm} 
\newcommand{\tovm}{\Rew{\vmsym}} 





\newcommand{\esym}{{\mathtt e}}
\newcommand{\isym}{i}

\newcommand{\msym}{{\mathtt m}}
\newcommand{\fsym}{f}

\newcommand{\subsym}{{\mathsf{sub}}}

\newcommand{\ssym}{{\mathtt s}}
\newcommand{\wsym}{{\mathtt w}} 

\newcommand{\bilancia}[1]{{#1}^{\flat\!}}
\newcommand{\vmsym}{\mathsf{shuf}} 
\newcommand{\shuf}{\vmsym} 

\newcommand{\shufeqext}{\shufeqext} 

\newcommand{\tom}{\Rew{\mult}}
\newcommand{\toe}{\Rew{\expo}}
\newcommand{\toeabs}{\Rew{\expoabs}}
\newcommand{\toevar}{\Rew{\expovar}}



\newcommand{\eqstruct}{\equiv}
\newcommand{\tostruct}{\eqstruct}

\newcommand{\aplsym}{@\textup{l}}
\newcommand{\aprsym}{@\textup{r}}
\newcommand{\essym}{[\cdot]}
\newcommand{\comsym}{\textup{com}}
\newcommand{\tostructapl}{\tostruct_{\aplsym}}
\newcommand{\tostructapr}{\tostruct_{\aprsym}}
\newcommand{\tostructes}{\tostruct_{\essym}}
\newcommand{\tostructcom}{\tostruct_{\comsym}}

\newcommand{\seqbar}{\bumpeq}
\newcommand{\seqbarp}{\seqbar'} 
\newcommand{\seqbarmut}{\seqbar_{\mut\mut}} 

\newcommand{\tm}{t}
\newcommand{\tmtwo}{u}
\newcommand{\tmthree}{s}
\newcommand{\tmfour}{r}
\newcommand{\tmfive}{q}
\newcommand{\tmsix}{p}

\newcommand{\cotm}{e}
\newcommand{\cotmtwo}{\cotm'}


\newcommand{\tmp}{\tm'}
\newcommand{\tmtwop}{\tmtwo'}
\newcommand{\tmthreep}{\tmthree'}
\newcommand{\tmfourp}{\tmfour'}
\newcommand{\tmfivep}{\tmfive'}


\newcommand{\var}{x}
\newcommand{\vartwo}{y}
\newcommand{\varthree}{z}

\newcommand{\covar}{\alpha}


\newcommand{\val}{v}
\newcommand{\valtwo}{\val'}



\newcommand{\ctxholep}[1]{\langle #1\rangle}
\newcommand{\ctxhole}{\ctxholep{\cdot}}


\newcommand{\sctx}{L}
\newcommand{\sctxtwo}{\sctx'}
\newcommand{\sctxthree}{\sctx''}
\newcommand{\sctxp}[1]{\sctx\ctxholep{#1}}
\newcommand{\sctxtwop}[1]{\sctxtwo\!\ctxholep{#1}}
\newcommand{\sctxthreep}[1]{\sctxthree\!\ctxholep{#1}}






\newcommand{\arbctxp}[1]{\arbctxp{#1}}
\newcommand{\arbctxtwop}[1]{\arbctxtwop{#1}}

\newcommand{\genevctx}{E}


\newcommand{\evctx}{\genevctx}

\newcommand{\evctxp}[1]{\evctx\ctxholep{#1}}





\newcommand{\cotctx}{D}
\newcommand{\cotctxtwo}{\cotctx'}
\newcommand{\cotctxthree}{\cotctx''}

\newcommand{\cotctxp}[1]{\cotctx\ctxholep{#1}}
\newcommand{\cotctxtwop}[1]{\cotctxtwo\ctxholep{#1}}
\newcommand{\cotctxthreep}[1]{\cotctxthree\ctxholep{#1}}







%





%












\newcommand{\env}{e}
\newcommand{\envtwo}{e'}




\newcommand{\stempty}{\epsilon}

















\newcommand{\sizebabs}[1]{\size{#1}_{\betaabs}} 
\newcommand{\sizein}[1]{\size{#1}_{\betain}}
\newcommand{\sizebvm}[1]{\size{#1}_{\betavm}} 
\newcommand{\sizebshuf}[1]{\sizebvm{#1}} 
\newcommand{\sizevm}[1]{\size{#1}_{\vmsym}} 


\newcommand{\sizef}[1]{\size{#1}_{\betaf}}

\newcommand{\nes}[1]{|#1|_\textup{ES}} 




\newcommand{\deriv}{d}
\newcommand{\derivtwo}{e}
\newcommand{\derivthree}{f}
\newcommand{\derivfour}{g}

\newcommand{\sizehole}[2]{|#2|_{#1}}

\newcommand{\sizee}[1]{\sizehole{\expo}{#1}} 
\newcommand{\sizeeabs}[1]{\sizehole{\expoabs}{#1}} 
\newcommand{\sizeevar}[1]{\sizehole{\expovar}{#1}} 
\newcommand{\sizem}[1]{\sizehole{\mult}{#1}} 
\newcommand{\sizevsub}[1]{\sizehole{\vsub}{#1}} 
\newcommand{\sizevsubk}[1]{\sizehole{\vsubk}{#1}} 








\newcommand{\todb}{\Rew{\db}}


\newcommand{\const}{a}

 \newcommand{\gconst}{i} 
\newcommand{\gconsttwo}{\gconst'}
\newcommand{\gconstthree}{\gconst''}



\newcommand{\fire}{f}
\newcommand{\firetwo}{\fire'}
\newcommand{\firethree}{\fire''}




\newcommand{\tovs}{\Rew{\vsym\ssym}}
\newcommand{\betaf}{\beta_{\!\fsym}} 

\newcommand{\tof}{\Rew{\betaf}}

\newcommand{\vsub}{{\vsym\subsym}} 
\newcommand{\vsubeq}{{\vsub_\eqstruct}\!} 

\newcommand{\tovsub}{\Rew{\vsub}}
\newcommand{\tovsubk}{\Rew{\vsubk}} 

\newcommand{\tovsubeq}{\Rew{\vsubeq}}

\newcommand{\tow}{\Rew{\wsym}} 


\newcommand{\mctx}{B}

\newcommand{\mctxp}[1]{\mctx\ctxholep{#1}}

\newcommand{\unfsym}{\rotatebox[origin=c]{-90}{$\rightarrow$}}
\newcommand{\unf}[1]{#1\unfsym\,}

\newcommand{\la}[1]{\lambda #1.}
\newcommand{\tmseven}{m}





\newcommand{\myproof}[1]{
\ifthenelse{\boolean{omitproofs}}{\begin{IEEEproof} Proof available but omitted for readability. \end{IEEEproof}}{#1}}


\newcommand{\levy}{{L{\'e}vy}\xspace}
\newcommand{\gregoire}{Gr{\'{e}}goire\xspace}




\newcommand{\withproofs}[1]{\ifthenelse{\boolean{withproofs}}{#1}{}}

\newcommand{\withoutproofs}[1]{\ifthenelse{\boolean{withproofs}}{}{#1}}



\newcommand{\alphaeq}{=_\alpha}


\newcommand{\NoteProof}[1]{\withproofs{\marginpar{\scriptsize \ \ Proof p.\,{\pageref{#1}}}}} 
\newcommand{\NoteState}[1]{\withproofs{\marginpar{\scriptsize \ \ See p.~{\pageref{#1}}}}} 

\newcommand{\vsubterms}{\Lambda_\vsub}
\newcommand{\vsubkterms}{\Lambda_{\vsubk}} 

\newcommand{\vsubcalc}{\lambda_\vsub}
\newcommand{\vsubkcalc}{\lambda_{\vsubk}}

\newcommand{\firecalc}{\lambda_\mathsf{fire}}
\newcommand{\shufcalc}{\lambda_\shuf}
\newcommand{\plotcalc}{\lambda_\mathsf{Plot}}
\newcommand{\vseqsym}{\mathsf{vseq}}
\newcommand{\vseqcalc}{\lambda_{\vseqsym}}

\newcommand{\proper}{clean\xspace}

\newcommand{\harmless}{harmless\xspace}

\newcommand{\quiet}{inert\xspace}
\newcommand{\Quiet}{Inert\xspace}

\newcommand{\doubt}[1]{}

\newcommand{\letexp}{\mathtt{let}}


\newcommand{\lambdamucalc}{\lambdabar\tilde{\mu}}

\newcommand{\cm}{c}
\newcommand{\cmp}{\cm'}
\newcommand{\cmtwo}{\cm'}

\newcommand{\comm}[2]{\langle #1 \!\mid\! #2 \rangle}
\newcommand{\lbar}[3]{\lambda (#1,#2). #3}
\newcommand{\lambdabar}{\bar{\lambda}}
\newcommand{\mutildesym}{\tilde{\mu}}
\newcommand{\mut}{\tilde{\mu}}
\newcommand{\mutilde}[2]{\mutildesym #1.#2}
\newcommand{\stacker}[2]{#1 \! \cdot \! #2}
\newcommand{\append}[2]{#1  @  #2}

\newcommand{\cmctx}{C}
\newcommand{\cmctxtwo}{\cmctx'}
\newcommand{\cmctxthree}{\cmctx''}
\newcommand{\cmctxp}[1]{\cmctx\ctxholep{#1}}
\newcommand{\cmctxtwop}[1]{\cmctxtwo\ctxholep{#1}}
\newcommand{\cmctxthreep}[1]{\cmctxthree\ctxholep{#1}}
\newcommand{\rtobvmu}{\rootRew{\betav\mu}}
\renewcommand{\rtobvmu}{\rootRew{\lambdabar}} 
\newcommand{\rtomut}{\rootRew{\tilde{\mu}}}
\newcommand{\tobvmu}{\Rew{\betav\mu}}
\renewcommand{\tobvmu}{\Rew{\lambdabar}} 
\newcommand{\tomut}{\Rew{\tilde{\mu}}}
\newcommand{\tolbarmut}{\Rew{\vseqsym}} 
\newcommand{\tovseq}{\Rew{\vseqsym}} 
\newcommand{\tovseqeq}{{\tovseq}_\seqbar} 
\newcommand{\vseq}{{\vseqsym}} 

\newcommand{\tolbarmu}[1]{\underline{#1}}
\newcommand{\tolbarmuv}[1]{#1^\bullet}
\newcommand{\cotransl}[1]{#1^\diamond}

\newcommand{\Rule}{\mathsf{r}}
\newcommand{\torule}{\to_\Rule}
\newcommand{\sizemut}[1]{\size{#1}_{\mut}}

\newcommand{\sizevseq}[1]{\size{#1}_{\vseq}}
\newcommand{\sizelbar}[1]{\size{#1}_{\lambdabar}}

\newcounter{numberone}

\newenvironment{varenumerate}
{
\begin{list}{\arabic{numberone}.}
{
 \usecounter{numberone}
 \setlength{\itemsep}{.7pt}
 \setlength{\topsep}{.7pt}
 \setlength{\parsep}{.7pt}
 \setlength{\partopsep}{.7pt}
 \setlength{\leftmargin}{15pt}
 \setlength{\rightmargin}{0pt}
 \setlength{\itemindent}{0pt}
 \setlength{\labelsep}{5pt}
 \setlength{\labelwidth}{15pt}
}}
{
\end{list} 
}

\newcommand{\vsubk}{\vsub_{k}}

\newcommand{\vsubtoker}[1]{#1^+}


\newboolean{withproofs}
\setboolean{withproofs}{true}

\newtheorem{lemmaAppendix}{Lemma} 
\newtheorem{theoremAppendix}{Theorem} 
\newtheorem{propositionAppendix}{Proposition} 
\newtheorem{corollaryAppendix}{Corollary} 

\theoremstyle{remark}

\begin{document}

%
%
%
%
%

\mainmatter  

\title{Open Call-by-Value (Extended Version)}

\titlerunning{Open Call-by-Value}

%
%
\author{Beniamino Accattoli\inst{1}\and Giulio Guerrieri\inst{2}}
\authorrunning{B.~Accattoli\and G.~Guerrieri}


\institute{INRIA, UMR 7161, LIX, \'Ecole Polytechnique, \email{\href{mailto:beniamino.accattoli@inria.fr}{beniamino.accattoli@inria.fr}},\and
Aix Marseille Université, CNRS, Centrale Marseille, 
I2M UMR 7373, \\ F-13453 Marseille, France, 
\email{\href{mailto:giulio.guerrieri@univ-amu.fr}{giulio.guerrieri@univ-amu.fr}}
}

%
%

\toctitle{Lecture Notes in Computer Science}
\tocauthor{Authors' Instructions}
\maketitle





\begin{abstract}
The elegant theory of the call-by-value lambda-calculus relies on weak evaluation and closed terms, that are natural hypotheses in the study of programming languages. To model proof assistants, however, strong evaluation and open terms are required, and it is well known that the operational semantics of call-by-value becomes problematic in this case. Here we study the intermediate setting---that we call Open Call-by-Value---of weak evaluation with open terms, on top of which \gregoire and Leroy designed the abstract machine of Coq. Various calculi for Open Call-by-Value already exist, each one with its pros and cons. This paper presents a detailed comparative study of the operational semantics of four of them, coming from different areas such as the study of abstract machines, denotational semantics, linear logic proof nets, and sequent calculus. We show that these calculi are all equivalent from a termination point of view, justifying the slogan Open Call-by-Value. 
\end{abstract}

\section{Introduction}
\label{sect:intro}
Plotkin's call-by-value $\l$-calculus \cite{DBLP:journals/tcs/Plotkin75} is at the heart of programming languages such as OCaml and proof assistants such as Coq. In the study of programming languages, call-by-value (CBV)  evaluation is usually \emph{weak}, \ie it does not reduce under abstractions, and terms are assumed to be \emph{closed}.
These constraints give rise to a beautiful theory---let us call it \emph{Closed CBV}---having the following \emph{harmony property}, that relates rewriting and normal forms:
\begin{center}
 \emph{Closed normal forms are values} (and values are normal forms)
\end{center}
where \emph{values} are variables and abstractions.
Harmony expresses a form of internal completeness with respect to unconstrained $\beta$-reduction: the restriction to CBV $\beta$-reduction (referred to as \emph{$\betav$-reduction}, according to which a $\beta$-redex can be fired only when the argument is a value) has an impact on the order in which redexes are evaluated, but evaluation never gets stuck, as every $\beta$-redex will eventually become a $\betav$-redex and be fired, unless evaluation diverges.

It often happens, however, that one needs to go beyond the perfect setting of Closed CBV by considering \emph{Strong CBV}, where reduction under abstractions is allowed and terms may be open, or the intermediate setting of \emph{Open CBV}, where evaluation is weak but terms are not necessarily closed. 
The need arises, most notably, when trying to describe the implementation model of Coq \cite{DBLP:conf/icfp/GregoireL02}, but also from other motivations, as denotational semantics \cite{DBLP:journals/ita/PaoliniR99,parametricBook,DBLP:conf/flops/AccattoliP12,DBLP:conf/fossacs/CarraroG14},  monad and CPS translations and the associated equational theories \cite{DBLP:conf/lics/Moggi89,DBLP:journals/lisp/SabryF93,DBLP:journals/toplas/SabryW97,
DBLP:journals/logcom/DyckhoffL07,DBLP:conf/tlca/HerbelinZ09}, bisimulations \cite{DBLP:conf/lics/Lassen05}, partial evaluation \cite{Jones:1993:PEA:153676}, linear logic proof nets \cite{DBLP:journals/tcs/Accattoli15}, or cost models \cite{fireballs}.

\paragraph{Na\"ive Open CBV.} In call-by-name (CBN) turning to open terms or strong evaluation is harmless because CBN does not impose any special form to the arguments of $\beta$-redexes. 
On the contrary, turning to Open or Strong CBV is delicate. 
If one simply considers Plotkin's weak $\betav$-reduction on open terms---let us call it \emph{Na\"ive Open CBV}---then harmony does no longer hold, as there are open $\beta$-normal forms that are not values
, \eg $\var \var$, $\var (\la\vartwo\vartwo)$, $\var(\vartwo\varthree)$ or $\var\vartwo\varthree$. 
As a consequence, there are \emph{stuck $\beta$-redexes} such as $(\la\vartwo\tm) (\var \var)$, \ie $\beta$-redexes that will never be fired because their argument is normal, but it is not a value, nor will it ever become one. 
Such stuck $\beta$-redexes are a disease typical of (Na\"ive) Open CBV, but they spread to Strong CBV as well (also in the closed case), because evaluating under abstraction forces to deal with locally open terms: \eg the variable $\var$ is locally open with respect to $(\la\vartwo\tm) (\var \var)$ in $\tmthree = \la\var ((\la\vartwo\tm) (\var \var))$.

The real issue with stuck $\beta$-redexes is that they prevent the creation of other redexes, and 
provide \emph{premature} $\betav$-normal forms.
The issue is serious, as it can affect termination, and thus impact on notions of observational equivalence. 
Let $\delta \defeq \la\var(\var\var)$. The problem is exemplified by the terms $\tm$ and $\tmtwo$ in Eq.~\refeq{unsolvable} below. 
\begin{align}\label{eq:unsolvable}
  \tm &\defeq ((\la\vartwo\delta) (\varthree\varthree)) \delta & \tmtwo &\defeq \delta ((\la\vartwo\delta) (\varthree\varthree) )
\end{align}
In Na\"ive Open CBV, $\tm$ and $\tmtwo$ are premature $\betav$-normal forms because they both have a stuck $\beta$-redex forbidding evaluation to keep going, while one would expect them to behave like the 
divergent term $\Omega \defeq \delta\delta$ \mbox{(see \cite{DBLP:journals/ita/PaoliniR99,parametricBook,DBLP:conf/flops/AccattoliP12,DBLP:journals/tcs/Accattoli15,DBLP:conf/fossacs/CarraroG14,GuerrieriPR15} and pp.~\pageref{par:creation}-\pageref{par:potentially-valuable}
).}

\paragraph{Open CBV.} In his seminal work, Plotkin already pointed out an asymmetry between CBN and CBV: his CPS translation is sound and complete for CBN, but only sound for CBV. This fact led to a number of studies about monad, CPS, and logical translations \cite{DBLP:conf/lics/Moggi89,DBLP:journals/lisp/SabryF93,DBLP:journals/toplas/SabryW97,DBLP:journals/tcs/MaraistOTW99,DBLP:journals/logcom/DyckhoffL07,DBLP:conf/tlca/HerbelinZ09} that introduced many proposals of improved calculi for CBV.
 Starting with the seminal work of Paolini and Ronchi Della Rocca \cite{DBLP:journals/ita/PaoliniR99,DBLP:conf/ictcs/Paolini01,parametricBook}, the dissonance between open terms and CBV has been repeatedly pointed out and studied \emph{per se} via various calculi \cite{DBLP:conf/icfp/GregoireL02,DBLP:conf/flops/AccattoliP12,DBLP:journals/tcs/Accattoli15,DBLP:conf/fossacs/CarraroG14,GuerrieriPR15,Guerrieri15,fireballs}. 
A further point of view on CBV comes from the computational interpretation of sequent calculus due to Curien and Herbelin \cite{DBLP:conf/icfp/CurienH00}.
An important point is that the focus of most of these works is on Strong CBV. 

These solutions inevitably extend 
$\betav$-reduction with some other rewriting rule(s) or constructor (as $\letexp$-expressions) to deal with stuck $\beta$-redexes, or even go as far as changing the applicative structure of terms, as in the sequent calculus approach. 
They arise from different perspectives and each one has its pros and cons. 
By design, these calculi (when looked at in the context of Open CBV) are never observationally equivalent to Na\"ive Open CBV, as they all manage to (re)move stuck $\beta$-redexes and may diverge when 
Na\"ive Open CBV is instead stuck.  Each one of these calculi, however, has its own notion of evaluation and normal form, and their mutual relationships are not evident.

The aim of this paper is to draw the attention of the community on Open CBV. We believe that it is somewhat deceiving that the mainstream operational theory of CBV, however elegant, has to rely on closed terms, because it restricts the modularity of the framework, and raises the suspicion that the true essence of CBV has yet to be found. There is a real gap, indeed, between Closed and Strong CBV, as Strong CBV cannot be seen as an iteration of Closed CBV under abstractions because such an iteration has to deal with open terms. To improve the implementation of Coq \cite{DBLP:conf/icfp/GregoireL02}, \gregoire and Leroy see Strong CBV as the iteration of the intermediate case of Open CBV, but they do not explore its theory. Here we exalt their point of view, providing a thorough operational study of Open CBV. We insist on Open CBV rather than Strong CBV because:
	
	\begin{varenumerate}
		\item Stuck $\beta$-redexes and premature $\betav$-normal forms already affect Open CBV;
		\item Open CBV has a simpler rewriting theory than Strong CBV; 
		\item Our previous studies of Strong CBV in \cite{DBLP:conf/flops/AccattoliP12} and \cite{DBLP:conf/fossacs/CarraroG14} naturally organized themselves as properties of Open CBV that were lifted to Strong CBV by a simple iteration under abstractions.		
	\end{varenumerate}

Our contributions are along two axes:
\begin{varenumerate}
\item \emph{Termination Equivalence of the Proposals}: we show that the proposed generalizations of Na\"ive Open CBV 
 are all equivalent, in the sense that they have exactly the same sets of normalizing and diverging terms. 
So, \emph{there is just one notion of Open CBV}, independently of its specific syntactic incarnation.
 
\item \emph{Quantitative Analyses and Cost Models}: the termination results are complemented with quantitative analyses establishing precise relationships between the number of steps needed to evaluate a given term in the various calculi. In particular, we relate the cost models of the various proposals.
 \end{varenumerate}

\paragraph{The Fab Four.} 
We focus on four proposals for Open CBV, as other solutions, \eg Moggi's \cite{DBLP:conf/lics/Moggi89} or Herbelin and Zimmerman's  \cite{DBLP:conf/tlca/HerbelinZ09}, are already known to be equivalent to these ones (see the end of \refsect{bird}): 
\begin{varenumerate}
 \item \emph{The Fireball Calculus} $\firecalc$, that extends values to \emph{fireballs} by adding so-called \emph{inert terms} in order to restore harmony---it was introduced without a name by Paolini and Ronchi Della Rocca \cite{DBLP:journals/ita/PaoliniR99,parametricBook}, then rediscovered independently first by Leroy and \gregoire \cite{DBLP:conf/icfp/GregoireL02} to improve the implementation of Coq, and then by Accattoli and Sacerdoti Coen \cite{fireballs} to study cost models;
 \item \emph{The Value Substitution Calculus} $\vsubcalc$, coming from the linear logic interpretation of CBV and using explicit substitutions and contextual rewriting rules to circumvent stuck $\beta$-redexes---it was introduced by Accattoli and Paolini \cite{DBLP:conf/flops/AccattoliP12} and it is a graph-free presentation of proof nets for the CBV $\l$-calculus \cite{DBLP:journals/tcs/Accattoli15};
 \item \emph{The Shuffling Calculus} $\shufcalc$, that has rules to shuffle constructors, similar to Regnier's $\sigma$-rules for CBN \cite{regnier94}, as an alternative to explicit substitutions---it was introduced by Carraro and Guerrieri \cite{DBLP:conf/fossacs/CarraroG14} (and further analyzed in \cite{GuerrieriPR15,Guerrieri15}) to study the adequacy of Open/Strong CBV with respect to denotational semantics related to linear logic.
 
 \item \emph{The Value Sequent Calculus} $\vseqcalc$, \ie the intuitionistic fragment of  Curien and Herbelin's $\lambdamucalc$-calculus \cite{DBLP:conf/icfp/CurienH00}, that is a CBV calculus for classical logic providing a computational interpretation of sequent calculus rather than natural deduction (in turn a fragment of the $\overline\lambda\mu\mutildesym$-calculus \cite{DBLP:conf/icfp/CurienH00}, \mbox{further studied in \eg  \cite{DBLP:journals/toplas/AriolaBS09,DBLP:conf/ifipTCS/CurienM10}).} 
 \end{varenumerate} 

\paragraph{A Robust Cost Model for Open CBV.} The number of $\betav$-steps is the canonical time cost model of Closed CBV, as first proved by Blelloch and Greiner \cite{DBLP:conf/fpca/BlellochG95,DBLP:conf/birthday/SandsGM02,DBLP:journals/tcs/LagoM08}. 
 In \cite{fireballs}, Accattoli and Sacerdoti Coen generalized this result: the number of steps in $\firecalc$ is a reasonable cost model for Open CBV. 
Here we show that the number of steps in $\vsubcalc$ and $\vseqcalc$ are \emph{linearly} related to the steps in $\firecalc$, thus providing reasonable cost models for these incarnations of Open CBV. 
As a consequence, complexity analyses can now be smoothly transferred between $\firecalc$, $\vsubcalc$, and $\vseqcalc$. Said differently, our results guarantee that the number of steps is a \emph{robust} cost model for Open CBV, in the sense that it does not depend on the chosen incarnation. 
For $\shufcalc$ we obtain a similar but strictly weaker result, due to some structural difficulties suggesting that \mbox{$\shufcalc$ is less apt to complexity analyses.} 
	
\paragraph{On the Value of The Paper.} While the equivalences showed here are new, they might not be terribly surprising. Nonetheless, we think they are interesting, for the following reasons:

	\begin{varenumerate}
	\item \emph{Quantitative Relationships}: $\l$-calculi are usually related only \emph{qualitatively}, while our relationships are \emph{quantitative} and thus  stronger: not only we show simulations, but we also relate the number of steps.

	\item \emph{Uniform View}: we provide a new uniform view on a known problem, that will hopefully avoid further proliferations of CBV calculi for open/strong settings. 

	\item \emph{Expected but Non-Trivial}: while the equivalences are more or less expected, establishing them is informative, because it forces to reformulate and connect concepts among the different settings, and often tricky. 
		
		\item \emph{Simple Rewriting Theory}: the relationships between the systems are developed using basic rewriting concepts. 
	The technical development is simple, according to the best tradition of the CBV $\l$-calculus, and yet it provides a sharp and detailed decomposition of Open CBV evaluation. 

	\item \emph{Connecting Different Worlds}: while $\firecalc$ is related to Coq and implementations, $\vsubcalc$ and $\shufcalc$ have a linear logic background, and $\vseqcalc$ is rooted in sequent calculus. With respect to linear logic, $\vsubcalc$ has been used for syntactical studies while $\shufcalc$ for semantical ones. Our results therefore establish bridges between these different (sub)communities.
	\end{varenumerate}	
	
Finally, an essential contribution of this work is the recognition of Open CBV
as a simple and yet rich framework in between Closed and Strong CBV. 

\paragraph{Road Map.} \refsect{bird} provides an overview of the different presentations of Open CBV. 
\refsect{fireball-vsub} proves the termination equivalences for $\vsubcalc$, $\firecalc$ and $\shufcalc$, enriched with quantitative information.  
\refsect{kernel} proves the quantitative termination equivalence of $\vsubcalc$ and $\vseqcalc$, via an intermediate calculus $\vsubkcalc$.
\withproofs{
\refapp{rewriting} collects definitions and notations for the rewriting notions at work in the paper. 
Omitted proofs are in \refapp{omitted}.
}
\withoutproofs{

A longer version of this paper is available on Arxiv \cite{ourTechReport}. It contains two Appendices, 
one with a glossary of rewriting theory and one with omitted proofs.}

\section{Incarnations of Open Call-by-Value}\label{sect:bird}

\begin{figure}[t]
  \centering
  \scalebox{0.85}{
    $\begin{array}{c@{\hspace{1cm}}rlllllllll}
	    \mbox{Terms} & \tm,\tmtwo,\tmthree,\tmfour &\grameq& \val \mid \tm\tmtwo\\
	    \mbox{Values} & \val,\valtwo & \grameq & \var \mid \la\var\tm\\
	    \mbox{Evaluation Contexts} & \evctx &\grameq & \ctxhole\mid \tm\evctx\mid \evctx\tm\\\\
	      
	      \textsc{Rule at Top Level} & \multicolumn{3}{c}{\textsc{Contextual closure} }\\
	      (\la\var\tm)\la\vartwo\tmtwo \rtobabs \tm\isub\var{\la\vartwo\tmtwo} &
	    \multicolumn{3}{c}{\evctxp \tm \tobabs \evctxp \tmtwo \textrm{~~~if } \tm \rtobabs \tmtwo \ }\\
	    
	    (\la\var\tm)\vartwo \rtobvar \tm\isub\var\vartwo &
	    \multicolumn{3}{c}{\evctxp \tm \tobvar \evctxp \tmtwo \textrm{~~~if } \tm \rtobvar \tmtwo \ }\\\\
	    
	    \mbox{Reduction} & \multicolumn{3}{c}{\tobv \, \defeq \, \tobabs \!\cup \tobvar}	    
    \end{array}$
  }
  \caption{\label{fig:plotkin-calculus} Na\"ive Open CBV $\plotcalc$}
\end{figure}
Here we recall Na\"ive Open CBV, noted $\plotcalc$, and introduce the four forms of Open CBV that will be compared ($\firecalc$, $\vsubcalc$, $\shufcalc$, and $\vseqcalc$) together with a semantic notion (\emph{potential valuability}) reducing Open CBV to Closed CBV.
In this paper terms are always possibly open. Moreover, we focus on Open CBV and avoid on purpose to study Strong CBV  (we hint at how to define it, though). 

\paragraph{
Na\"ive Open CBV: Plotkin's calculus $\plotcalc$ \textnormal{\cite{DBLP:journals/tcs/Plotkin75}}.} 
Na\"ive Open CBV is Plotkin's weak CBV $\l$-calculus $\plotcalc$ on possibly open terms, defined in \reffig{plotkin-calculus}. 
Our presentation of the rewriting is unorthodox because we split $\betav$-reduction into two rules, according to the kind of value (abstraction or variable). 
The set of terms is denoted by $\Lambda$.
Terms (in $\Lambda$) are always identified up to $\alpha$-equivalence and the set of the free variables of a term $\tm$ is denoted by $\fv{\tm}$.
We use $\tm\isub\var\tmtwo$ for the term obtained by the capture-avoiding substitution of $\tmtwo$ for each free occurrence of $\var$ in $\tm$.
Evaluation $\tobv$ is weak and non-deterministic, since in the case of an application there is no fixed order in the evaluation of the left and right subterms. 
As it is well-known, non-determinism is only apparent: the system is strongly confluent (see \withproofs{Appendix~\ref{app:rewriting}}\withoutproofs{the appendix in \cite{ourTechReport}} for a glossary of rewriting theory). 

\newcounter{prop:basic-plotkin-strong-confluence} 
\addtocounter{prop:basic-plotkin-strong-confluence}{\value{proposition}}
\begin{proposition}
\label{prop:basic-plotkin}
  \label{p:basic-plotkin-strong-confluence} $\tobvar$,
\NoteProof{propappendix:basic-plotkin-strong-confluence}
  $\tobabs$ and $\tobv$ are strongly confluent.
\end{proposition}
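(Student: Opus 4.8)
The plan is to establish the \emph{diamond property}, the strong form of confluence: for every peak $\tmtwo \lRew\betav \tm \tobv \tmthree$, either $\tmtwo = \tmthree$, or there is a term $\tmfour$ with $\tmtwo \tobv \tmfour$ and $\tmthree \tobv \tmfour$, that is, a single reduction step closing each branch. Since the arguments for $\tobabs$, for $\tobvar$, and for their union $\tobv$ are identical, I would run the reasoning once for $\tobv$ and then observe that each step preserves the kind (abstraction or variable) of every other redex, so the same proof specialises to $\tobabs$ and to $\tobvar$ separately.

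The first and crucial step is a structural observation about redexes. A $\betav$-redex has the shape $(\la\var\tm)\val$ with $\val$ a value, that is, a variable or an abstraction. I would show that such a redex contains \emph{no other redex} as a subterm occurrence: the body $\tm$ lies under the binder $\la\var$, and the argument $\val$ is either a variable, with no proper subterms, or an abstraction whose body again lies under a binder---all positions that evaluation contexts $\evctx \grameq \ctxhole \mid \tm\evctx \mid \evctx\tm$ cannot reach, since they never descend under a $\lambda$. This is exactly where weakness is essential. Consequently, two distinct redex occurrences can never be nested (the inner one would be a proper subredex of the outer one), so any two distinct redexes in a term lie at disjoint, mutually incomparable positions.

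Given this, the second step is the standard commutation of disjoint redexes. Write the peak as the firing of two redexes $\redex_1$ and $\redex_2$ of $\tm$. If $\redex_1 = \redex_2$ then $\tmtwo = \tmthree$ and we are done. Otherwise, by the observation above, $\redex_1$ and $\redex_2$ occur at incomparable positions; hence firing $\redex_1$ modifies only the subterm at its own position---duplicating or erasing its argument $\val$, which lies \emph{inside} $\redex_1$---and leaves the subterm at the position of $\redex_2$ untouched. Thus $\redex_2$ survives unchanged, as a redex of the same kind, in $\tmtwo$, and symmetrically $\redex_1$ survives in $\tmthree$. Firing the surviving redex on each side yields one and the same term $\tmfour$, because contracting two disjoint subterms is order-independent, which closes the diamond in a single step on each branch.

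The argument is almost entirely bookkeeping; the one delicate point---and the only place where the statement could fail---is the claim that firing a redex neither creates, erases, nor duplicates a \emph{disjoint} redex. This rests squarely on the weakness of evaluation through the structural step above: weakness forces the argument of any redex to be a value with no internal redexes, so no redex can hide inside another, and therefore the only source of non-determinism is the harmless choice between two disjoint redexes. I would make the ``disjoint positions'' reasoning precise either by decomposing $\tm$ through a two-hole context with $\redex_1$ and $\redex_2$ in its holes, or by a short residual analysis; in either case no genuine overlap analysis is needed, since overlaps simply cannot occur.
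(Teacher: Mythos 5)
Your proof is correct, and it rests on the same key fact as the paper's, but packages it differently. The heart of both arguments is that weak evaluation makes values $\betav$-normal, so a redex $(\la\var\tmthree)\val$ contains no other fireable redex. The paper exploits this inside a structural induction on $\tm$: it first observes that no step of a non-trivial peak can occur at the root (if $\tm = (\la\var\tmfour)\val$ reduces at the root, then $\la\var\tmfour$ and $\val$ being $\betav$-normal forces the other step to be the root step too, so $\tmtwo = \tmthree$, against the hypothesis), and then discharges the four application-left/right cases by the induction hypothesis and the trivial commutation of side-by-side steps. You instead promote the observation to a standalone separation lemma---distinct fireable redexes occupy incomparable positions, since nesting would require an evaluation context reaching inside a redex, and such contexts never cross a $\lambda$---and then close the diamond by the generic order-independence of contracting disjoint subterms, with no induction on $\tm$ in the main line of reasoning. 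What your route buys is modularity and transparency: the specializations to $\tobabs$ and $\tobvar$ are immediate, since the surviving redex is literally untouched and so keeps its kind, and the role of weakness is exhibited in a single place rather than scattered through cases. What it defers is exactly what the paper's induction supplies: a rigorous rendering of the positional reasoning, in particular the decomposition of an evaluation context through a prefix position (needed to conclude that a redex strictly below another would be evaluation-reachable \emph{within} the outer redex, which is impossible). Your suggestion to make this precise via a two-hole context or a short residual analysis is sound; carried out in full, it would essentially reconstruct the paper's inductive bookkeeping, so the two proofs are morally equivalent, yours being the more conceptual presentation and the paper's the more elementary, self-contained one.
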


Strong confluence is a remarkable property, much stronger than plain confluence. It implies that, given a term, \emph{all} derivations to its normal form (if any) have the \emph{same length}, and that \emph{normalization and strong normalization coincide}, \ie if there is a normalizing derivation then there are no diverging derivations. Strong confluence will also hold for 
$\firecalc$, $\vsubcalc$ and $\vseqcalc$, not for $\shufcalc$.

Let us come back to the splitting of $\tobv$. In Closed CBV it is well-known that $\tobvar$ is superfluous, at least as long as small-step evaluation is considered, see \cite{DBLP:conf/wollic/AccattoliC14}. For Open CBV, $\tobvar$ is instead necessary, but---as we explained in the introduction---it is not enough, which is why we shall consider extensions of $\plotcalc$. The main problem of Na\"ive Open CBV is that there are stuck $\beta$-redexes breaking the harmony of the system. 
There are three kinds of solution: those \emph{restoring a form of harmony} ($\firecalc$), to be thought as more semantical approaches; those \emph{removing stuck $\beta$-redexes} ($\vsubcalc$ and $\shufcalc$), that are more syntactical in nature; those \emph{changing the applicative structure of terms} \mbox{($\vseqcalc$)
, inspired by sequent calculus.}
\begin{figure}[t]
  \centering
  \scalebox{0.85}{
    $
    \begin{array}{c@{\hspace{.5cm}}rll}
	    \mbox{Terms and Values} & \multicolumn{3}{c}{\textup{As in Plotkin's Open CBV (Fig.~\ref{fig:plotkin-calculus})\ }} \\
    \mbox{Fireballs} & \!\!\!\!\!\!\!\!\!\!\!\!\!\!\!\!\!\!\!\!\!\!\fire, \firetwo, \firethree & \grameq & \la\var\tm \mid \gconst\\
	    \mbox{Inert Terms} & \,\,\,\gconst,\gconsttwo, \gconstthree & \grameq &  \var \fire_1\ldots \fire_n\ \ \ \ n\geq 0\\
	    \mbox{Evaluation Contexts} & \evctx  & \grameq & \ctxhole\mid \tm\evctx \mid \evctx\tm \\\\	
	    
      \textsc{Rule at Top Level} & \multicolumn{3}{c}{\textsc{Contextual closure}} \\
	    \!(\la\var\tm)(\la\vartwo\tmtwo) \rtobabs \tm\isub\var{\la\vartwo\tmtwo} &
	    \multicolumn{3}{c}{\evctxp \tm \tobabs \evctxp \tmtwo \textrm{~~~if } \tm \rtobabs \tmtwo} \\
	    \,\,(\l\var.\tm)\gconst\rtoin \tm\isub\var\gconst &
	    \multicolumn{3}{c}{\evctxp \tm \toin \evctxp \tmtwo \quad\textup{if } \tm \rtoin \tmtwo} \\\\ 

	    \mbox{Reduction} & \multicolumn{3}{c}{\tof \, \defeq \, \tobabs \!\cup \toin}
    \end{array}
    $
  }
  \caption{\label{fig:fireball-calculus} The Fireball Calculus $\firecalc$}
\end{figure}

\subsection{\texorpdfstring{Open Call-by-Value 1: The Fireball Calculus $\firecalc$}{Open Call-by-Value 1: The Fireball Calculus}}
\label{subsect:fireball}
The Fireball Calculus $\firecalc$, defined in \reffig{fireball-calculus}, was introduced without a name by Paolini and Ronchi Della Rocca in \cite{DBLP:journals/ita/PaoliniR99} and \cite[Def.~3.1.4, p.~36]{parametricBook} where its basic properties are also proved. We give here a presentation inspired by Accattoli and Sacerdoti Coen's \cite{fireballs}, departing from it only for inessential, cosmetic details. Terms, values and evaluation contexts are the same as in $\plotcalc$.

The idea is to restore harmony by generalizing $\tobvar$ to fire when the argument is a more general \emph{inert term}---the new rule is noted $\toin$. The generalization of values as to include inert terms is called \emph{fireballs}. Actually fireballs and inert terms are defined by mutual induction (in \reffig{fireball-calculus}). 
For instance, $\la\var\vartwo$ is a fireball 
as an abstraction, while $\var$, $\vartwo(\la\var\var)$, $\var\vartwo$, and $(\varthree(\la\var\var))(\varthree\varthree) (\la\vartwo(\varthree\vartwo))$ are fireballs as inert terms. 
Note that $\gconst\gconsttwo$ is an inert term for all inert terms $\gconst$ and $\gconsttwo$. 
Inert terms can be equivalently defined as $\gconst \grameq  \var \mid \gconst\fire$\withproofs{---such a definition is used in the proofs in the Appendix
(where, moreover, inert terms that are not variables are referred to as \emph{compound inert terms})}.
The main feature of an inert term is that it is open, normal and that when plugged in a context it cannot create a redex, hence the name (it is not a so-called \emph{neutral term} because it might have redexes under abstractions). In \gregoire and Leroy's presentation \cite{DBLP:conf/icfp/GregoireL02}, inert terms are called \emph{accumulators} and fireballs are simply called values.


Evaluation is given by the fireball rule $\tof$, that is the union of $\tobabs$ and $\toin$. For instance, consider 
$\tm \defeq ((\la\vartwo \delta)(\varthree\varthree))\delta$ and $\tmtwo \defeq \delta((\la\vartwo \delta)(\varthree\varthree))$ as in Eq.~\refeq{unsolvable}, p.~\pageref{eq:unsolvable}: $\tm$ and $\tmtwo$ are $\betav$-normal but they diverge when evaluated in $\firecalc$, as desired: $\tm \toin \delta\delta \tobabs  \delta\delta \tobabs \!\dots$ and $\tmtwo \toin \delta\delta \tobabs \delta\delta \tobabs \!\dots$\,.

The distinguished, key property of $\firecalc$ is (for any $\tm \in \Lambda$):

\newcounter{prop:open-harmony} 
\addtocounter{prop:open-harmony}{\value{proposition}}
\begin{proposition}[Open Harmony]\label{prop:open-harmony}\label{prop:characterize-fnormal}
  $\tm$ 
  \NoteProof{propappendix:open-harmony}
  is $\betaf$-normal iff $\tm$ is a fireball.
\end{proposition}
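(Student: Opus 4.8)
The plan is to prove the two implications separately, both by structural induction, being careful that evaluation is \emph{weak}, so that an abstraction is always a fireball and always normal, irrespective of its body.

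For the right-to-left direction (every fireball is $\betaf$-normal) I would argue by mutual induction on the definitions of fireballs and inert terms. An abstraction $\la\var\tm$ is not an application, hence not a top-level $\tof$-redex; and since evaluation contexts $\evctx$ never enter the body of an abstraction, $\la\var\tm$ contains no $\tof$-redex at all. A variable $\var$ is trivially normal. For an inert term of the form $\gconst\fire$ (using the equivalent definition $\gconst \grameq \var \mid \gconst\fire$), the induction hypothesis gives that both $\gconst$ and $\fire$ are normal; a top-level redex would require $\gconst$ to be an abstraction, which inert terms never are, and any internal redex would lie in $\gconst$ or in $\fire$ (reached through the contexts $\evctx\tm$ and $\tm\evctx$), against the hypothesis. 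Hence every fireball is $\betaf$-normal.

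For the left-to-right direction (every $\betaf$-normal term is a fireball) I would proceed by induction on the structure of $\tm$. If $\tm$ is a variable or an abstraction it is immediately a fireball (in the abstraction case no assumption on the body is needed, precisely because evaluation is weak). If $\tm = \tmtwo\tmthree$ is an application, then $\tmtwo$ and $\tmthree$ are themselves $\betaf$-normal---otherwise a step inside either of them would lift, via the contexts $\evctx\tm$ or $\tm\evctx$, to a $\tof$-step of $\tm$---so by the induction hypothesis both are fireballs. The crucial point is that $\tmtwo$ cannot be an abstraction: if it were, then since $\tmthree$ is a fireball, i.e.\ either an abstraction or an inert term, the application $\tmtwo\tmthree$ would reduce by $\tobabs$ or by $\toin$ respectively, contradicting normality. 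Thus $\tmtwo$ is an inert term $\gconst$, and $\tm = \gconst\tmthree$ with $\tmthree$ a fireball is by definition an inert term, hence a fireball.

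The only real subtlety---and the conceptual heart of \emph{harmony}---is the observation used in the application case: the two top-level rules $\rtobabs$ and $\rtoin$ together cover exactly the situation of an abstraction applied to a fireball, so such an application is \emph{always} a redex. This is what forces a normal application to have an inert (non-abstraction) left subterm, and it is exactly what fails in Na\"ive Open CBV, where an abstraction applied to an inert term is a stuck $\beta$-redex. Everything else reduces to routine case analyses on the shape of evaluation contexts, so I do not expect a genuine obstacle beyond correctly phrasing the mutual induction.
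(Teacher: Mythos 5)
Your proposal is correct and follows essentially the same route as the paper: the paper proves the right-to-left direction via an auxiliary lemma stating that values and inert terms are $\betaf$-normal (the same mutual induction you carry out inline, likewise using the alternative grammar $\gconst \grameq \var \mid \gconst\fire$), and the left-to-right direction by the same structural induction on $\tm$. Your key observation in the application case---that a $\betaf$-normal application cannot have an abstraction on the left because its right subterm, being a fireball, would trigger either $\tobabs$ or $\toin$---is exactly the argument the paper uses.
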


The advantage of $\firecalc$ is its simple notion of normal form, \ie fireballs, that have a clean syntactic description akin to that for call-by-name. The other calculi will lack a nice, natural notion of normal form. 
The drawback of the fireball calculus---and probably the reason why its importance did not emerge before---is the fact that as a strong calculus it is not confluent:
this is 
due to the fact that fireballs are not closed by substitution (see \cite[p.~37]{parametricBook}). 
Indeed, if evaluation is strong, the following critical pair cannot be joined, 
where $\tm \defeq (\la\vartwo I) (\delta \delta)$ and $I \defeq \la\varthree\varthree$ is the identity combinator:
\begin{equation}\label{eq:counterexample}
  I \!\lRew{\betaabs\!\!} (\la\var I) \delta \!\lRew{\betain\!\!} (\la \var (\la\vartwo I) (\var\var)) \delta \tobabs\! \tm \tobabs\! \tm \tobabs \!\!\ldots
\end{equation}

On the other hand, as long as evaluation is weak (that is the case we consider) everything works fine---the strong case can then be caught by repeatedly iterating the weak one under abstraction, once a weak normal form has been obtained (thus forbidding the left part of \refeq{counterexample}). 
In fact, 
the weak evaluation of $\firecalc$ has a simple rewriting theory, as next proposition shows. In particular \mbox{it is strongly confluent.}


\newcounter{prop:basic-fireball} 
\addtocounter{prop:basic-fireball}{\value{proposition}}
\begin{proposition}[Basic Properties of $\firecalc$]
\label{prop:basic-fireball}\hfill
\NoteProof{propappendix:basic-fireball}
\begin{varenumerate}
	  \item\label{p:basic-fireball-toin-strong-normalization}\label{p:basic-fireball-toin-strong-confluence} $\toin$ is strongly normalizing and strongly confluent.
	  \item\label{p:basic-fireball-tobv-toin-strong-commutation} $\tobabs$ and $\toin$ strongly commute.	  
	  \item\label{p:basic-fireball-strong-confluence} \label{p:basic-fireball-number-steps} $\tof$ is strongly confluent, and all $\betaf$-normalizing derivations $\deriv$ from $\tm \in \Lambda$ (if any) have the same length $\sizef{\deriv}$, the same number $\sizebabs{\deriv}$ of $\betaabs$-steps, and the same number $\sizein{\deriv}$ of $\betain$-steps.	  
  \end{varenumerate}

\end{proposition}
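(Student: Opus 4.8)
The plan is to reduce all three points to a single structural observation about weak CBV, namely that \emph{distinct weak redexes never overlap}. A $\tof$-redex has the shape $(\la\var\tm)\fire$ with $\fire$ a fireball: by Open Harmony (\refprop{open-harmony}) a fireball is $\betaf$-normal, so its argument $\fire$ contains no redex; and the body $\tm$ of the firing abstraction sits under $\la\var$, hence in a position that the evaluation contexts $\evctx$ never enter. Thus no proper subterm lying in an evaluation context of a redex is itself a redex, and consequently any two distinct $\tof$-redex occurrences of a term are at incomparable positions, i.e.\ in disjoint subterms. I would isolate this as the key lemma, proved by a direct inspection of the redex shape and of the grammar of evaluation contexts.

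From disjointness the two diamond properties (Points 1 and 2) are immediate. If a term has two distinct redexes $\redex_1,\redex_2$ -- of the same kind, or one $\betaabs$ and one $\betain$ -- then they occupy disjoint subtrees, so contracting $\redex_1$ leaves $\redex_2$ untouched and still a redex of the same kind, and symmetrically; the term obtained by contracting both, in either order, is the same. This gives at once that $\toin$ and $\tobabs$ are each strongly confluent and that $\tobabs$ and $\toin$ strongly commute. For the strong normalization of $\toin$ in Point 1 I would introduce the measure $\mu(\tm)$ counting the abstraction occurrences of $\tm$ that do \emph{not} lie inside an inert subterm; formally $\mu$ is defined by structural recursion with $\mu(\var)=0$, $\mu(\la\var\tm)=1+\mu(\tm)$, and $\mu(\tm\,\tmtwo)=0$ if $\tm\,\tmtwo$ is inert, else $\mu(\tm\,\tmtwo)=\mu(\tm)+\mu(\tmtwo)$. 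The crux is that substituting an inert term for a variable preserves the inert/non-inert status of every node (substituting a fireball for a variable maps fireballs to fireballs and non-fireballs to non-fireballs) and, since inert terms satisfy $\mu(\gconst)=0$, leaves $\mu$ unchanged: $\mu(\tm\isub\var\gconst)=\mu(\tm)$. Hence a step $(\la\var\tm)\gconst \toin \tm\isub\var\gconst$ strictly decreases $\mu$ by exactly one, and this decrease propagates through evaluation contexts, whose spine down to the redex consists only of non-inert applications; so every $\toin$-step lowers $\mu$ by $1$ and $\toin$ is strongly normalizing.

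For Point 3, disjointness again yields the diamond property for the whole relation $\tof$, hence its strong confluence; by the standard rewriting lemma recalled in the appendix this implies that all $\betaf$-normalizing derivations from a term have the same length $\sizef\deriv$. To refine this to the number of $\betaabs$- and $\betain$-steps, I would observe that each diamond closes a peak by contracting the \emph{residual} of the other redex, which retains its kind: a $\betaabs$-step stays a $\betaabs$-step and a $\betain$-step stays a $\betain$-step. Together with the individual strong confluences of $\tobabs$ and $\toin$ and their strong commutation (Points 1--2), the Hindley--Rosen-style argument with lengths from the glossary then gives that $\sizebabs\deriv$ and $\sizein\deriv$ are invariant across all $\betaf$-normalizing derivations from $\tm$.

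The one genuinely delicate step is the strong normalization of $\toin$. The naive measures fail: firing $(\la\var\tm)\gconst$ can \emph{unfreeze} redexes that were hidden under the consumed abstraction $\la\var$, turning them into weak redexes, and can \emph{duplicate} redexes frozen inside the inert argument $\gconst$, so neither the number of weak redexes nor the size need decrease. The measure $\mu$ sidesteps both phenomena precisely because the only duplicated material is inert -- and inert subterms contribute $0$ to $\mu$ -- while the status-preservation lemma guarantees that substitution of an inert term neither creates nor destroys inertness; thus $\mu$ is exactly preserved by the substitution and drops by one only for the consumed abstraction.
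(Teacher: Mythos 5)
There is one genuine flaw, and it sits exactly where you located the delicacy: the claim that \emph{every $\toin$-step lowers $\mu$ by $1$}. Your additivity argument along the spine is correct \emph{before} the step (every spine node is non-inert, since inert terms are $\betaf$-normal by \refprop{open-harmony}), but it can fail \emph{after} the step: the contractum, unlike the redex, may be a fireball, and then spine nodes above it can become inert, collapsing their $\mu$ to $0$. Concretely, take $t_0 = (\var((\la\vartwo\vartwo)\varthree))(\la\varfour\varfour) \toin (\var\varthree)(\la\varfour\varfour) = t_1$: before the step neither the root nor $\var((\la\vartwo\vartwo)\varthree)$ is inert (the argument $(\la\vartwo\vartwo)\varthree$ is not a fireball), so $\mu(t_0)=2$; after the step $\var\varthree$ and then the whole of $t_1$ are inert, so $\mu(t_1)=0$, a drop of $2$. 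The measure itself survives, but the propagation step must be repaired: let $n$ be the spine node closest to the root that becomes inert after the step (if there is none, your computation is exact and the drop is $1$); above $n$ additivity holds on both sides with unchanged siblings, $\mu_{\mathrm{after}}(n)=0$, and $\mu_{\mathrm{before}}(n)\geq \mu(\text{redex}) \geq 1$ because before the step $\mu$ is additive from $n$ down to the redex. So $\mu$ strictly decreases---by at least, not exactly, $1$---which is all strong normalization needs.

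Two smaller points. First, strong commutation as defined in the paper also demands that the two one-step reducts be \emph{different}; the paper gets this from \reflemma{toin-different}, and your disjointness framework yields it too (equality of the reducts would force the $\betain$-redex to coincide with its own contractum, contradicting that lemma), but it should be stated. Second, your parenthetical ``substituting a fireball for a variable maps fireballs to fireballs and non-fireballs to non-fireballs'' is false for abstraction fireballs---$\var\vartwo\isub{\var}{\la\varthree\varthree}$ is a redex---whereas the claim you actually use, for \emph{inert} substitutions only, is correct and is essentially the paper's \reflemma{inert-anti-sub-bis} together with its easy converse.

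With these repairs your route is correct and genuinely different from the paper's. For strong normalization of $\toin$ the paper gives no measure at all: it cites termination of $\mult$-developments/superdevelopments, observing that $\betain$-steps only cause type-1 creations; your self-contained $\mu$ is more elementary but pays for it with the inert-status bookkeeping above. For the diamond properties, the paper proves Points 1--2 (and implicitly 3) by three separate inductive case analyses, each excluding root overlaps because the redex components are normal; your single disjointness lemma is a clean repackaging of exactly that fact and derives all three points, plus the Hindley--Rosen step counting, in one stroke.
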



	\subsection{\texorpdfstring{Open Call-by-Value 2: The Value Substitution Calculus $\vsubcalc$}{Open Call-by-Value 2: The Value Substitution Calculus}}
	\label{subsect:vsub}

	\paragraph{Rewriting Preamble: Creations of Type 1 and 4.}\label{par:creation}
	The problem with stuck $\beta$-redexes can be easily understood at the rewriting level as an issue about creations. According to  \levy \cite{thesislevy}, in the ordinary CBN $\l$-calculus redexes can be created in 3 ways. 
	Creations of type 1 take the following form
	\begin{equation*}
	  ((\la\var\la\vartwo\tm)\tmfour)\tmthree \tob (\la\vartwo\tm \isub\var\tmfour) \tmthree
	\end{equation*}
	where the redex involving $\l\vartwo$ and $\tmthree$ has been created by the $\beta$-step. 
	In Na\"ive Open CBV if $\tmfour$ is a normal form 
	and not a value then the creation cannot take place, blocking evaluation. 
	\mbox{This is 
	the problem concerning the term $\tm$ in 
	Eq.~\refeq{unsolvable}, p.~\pageref{eq:unsolvable}.}	
	In CBV there is another form of creation---of \emph{type 4}---not considered by \levy: 
	\begin{equation*}(\la\var\tm) ((\la\vartwo\val) \valtwo) \tobv (\la\var\tm) (\val\isub\vartwo\valtwo)\end{equation*}
	\ie a reduction in the argument turns the argument itself into a value, creating a $\betav$-redex. 
	As before, in an open setting $\valtwo$ may be replaced by a normal form that is not a value, blocking the creation of type 4.
	This is exactly the problem concerning the term $\tmtwo$ in 
	Eq.~\refeq{unsolvable}, p.~\pageref{eq:unsolvable}. 
	
	The proposals of this and the next sections introduce some way to enable creations of type 1 and 4, without 
	\mbox{substituting stuck $\beta$-redexes nor inert terms.}\bigskip
	
	The \emph{value substitution calculus} $\vsubcalc$ of Accattoli and Paolini \cite{DBLP:conf/flops/AccattoliP12,DBLP:journals/tcs/Accattoli15} was introduced as a calculus for Strong CBV inspired by linear logic proof nets. In \reffig{valuesub-calculus} we present its adaptation to Open CBV, obtained by simply removing abstractions from evaluation contexts. 
	It extends the syntax of terms with the constructor $\esub\var\tmtwo$, called \emph{explicit substitution} (shortened ES, to not be confused with the meta-level substitution $\isub\var\tmtwo$). A $\vsub$-term $\tm\esub\var\tmtwo$ represents 
      the delayed substitution of $\tmtwo$ for $\var$ in $\tm$, \ie 
      stands for $\mathtt{let} \ \var = \tmtwo \ \mathtt{in} \ \tm$. So, $\tm\esub\var\tmtwo$ binds the free occurrences of $\var$ in $\tm$. The set of $\vsub$-terms---identified up to $\alpha$-equivalence---is \mbox{denoted by $\vsubterms$ (clearly $\Lambda \subsetneq \vsubterms$).}
	\doubt{Note that $\la\var(\vartwo\esub\var{\varthree\varthree}) \in \vsubterms \smallsetminus \Lambda$ is a $\vsub$-value and not a value, and that evaluation contexts are extended to ES.
	Substitution is extended to ES as expected: $(\tm\esub\var\tmtwo)\isub\vartwo\tmthree \!\defeq\! \tm\isub\vartwo\tmthree \esub\var{\tmtwo\isub\vartwo\tmthree}$ with $\var \!\notin\! \fv{\tmthree} \cup \{\vartwo\}$.}

\begin{figure}[t]
  \centering
  \scalebox{0.85}{
    $
    \begin{array}{c@{\hspace{-.3cm}}rcc}
	    \vsub\textup{-Terms} &\!\!\!\!\!\! \tm, \tmtwo, \tmthree & \grameq & \multicolumn{1}{l}{\val \mid \tm\tmtwo \mid \tm\esub\var\tmtwo}\\
	    \vsub\textup{-Values} & \val  & \grameq  & \multicolumn{1}{l}{\var \mid \la\var\tm}\\	
	    \textup{Evaluation Contexts} & \evctx & \grameq &  \multicolumn{1}{l}{\ctxhole\mid \tm\evctx\mid \evctx\tm \mid \evctx\esub\var\tmtwo \mid \tm\esub\var\evctx} \\
	    \textup{Substitution Contexts} & \sctx & \grameq & \multicolumn{1}{l}{\ctxhole\mid \sctx\esub\var\tmtwo}\\\\
	    
      \textsc{Rule at Top Level} & \multicolumn{3}{c}{\textsc{Contextual closure}} \\
	    \!\sctxp{\l\var.\tm}\tmtwo\rtom \sctxp{\tm\esub\var\tmtwo} &
	    \multicolumn{3}{c}{\!\evctxp \tm \tom \evctxp \tmtwo \textrm{~~~\,if } \tm \rtom \tmtwo} \\


	    \tm\esub\var{\sctxp{\la\vartwo\tmtwo}} \rtoeabs \sctxp{\tm\isub\var{\la\vartwo\tmtwo}} &
	    \multicolumn{3}{c}{\evctxp \tm \toeabs \evctxp \tmtwo \textrm{~~~if } \tm \rtoeabs \tmtwo} \\

	    \tm\esub\var{\sctxp\vartwo} \rtoevar \sctxp{\tm\isub\var\vartwo} &
	    \multicolumn{3}{c}{\evctxp \tm \toevar \evctxp \tmtwo \textrm{~~~if } \tm \rtoevar \tmtwo} \\\\

	    \mbox{Reductions} & \multicolumn{3}{c}{\!\!\!\!\toe \, \defeq \, \toeabs \!\cup \toevar, \ \ \tovsub \, \defeq \, \tom \!\cup \toe} 

    %
    %

    \end{array}
    $
  }
  \caption{\label{fig:valuesub-calculus} The Value Substitution Calculus $\vsubcalc$}
\end{figure}

	ES are used to remove stuck $\beta$-redexes: the idea is that $\beta$-redexes can be fired whenever---even if the argument is not a ($\vsub$-)value---by means of the \emph{multiplicative rule} $\tom$; however the argument is not substituted but placed in a ES. 
	The actual substitution is done only when the content of the 
	ES is a $\vsub$-value, by means of the \emph{exponential rule} $\toe$. 
	These two rules are sometimes noted $\todb$ ($\beta$ at a distance) and $\tovs$ (substitution by value)---the names we use here are due to the interpretation of the calculus into linear logic proof-nets, see \cite{DBLP:journals/tcs/Accattoli15}. 
	\withproofs{Note that in \reffig{valuesub-calculus} the definition of the rewriting rules at top level $\rtom$ (resp.~$\rtoevar$ and $\rtoeabs$) assumes that the variables binded by the substitution context $\sctx$ are not free in $\tmtwo$ (resp.~$\tm$).} 
	A characteristic feature coming from such an interpretation is that the rewriting rules are contextual, or \emph{at a distance}: they are generalized as to act up to a list of substitutions (noted $\sctx$, from \emph{L}ist). 
	Essentially, stuck $\beta$-redexes are turned into ES and then ignored by the rewriting rules---this is how creations of type 1 and 4 are enabled. 
	For instance, the terms $\tm \defeq ((\la\vartwo \delta)(\varthree\varthree))\delta$ and $\tmtwo \defeq \delta((\la\vartwo \delta)(\varthree\varthree))$ (as in Eq.~\refeq{unsolvable}, p.~\pageref{eq:unsolvable}) are $\expo$-normal but $\tm \tom  \delta\esub\vartwo{\varthree\varthree} \delta \tom (xx) \esub\var\delta \esub\vartwo{\varthree\varthree} \toe (\delta\delta) \esub\vartwo{\varthree\varthree} \tom (xx) \esub\var\delta \esub\vartwo{\varthree\varthree} \toe (\delta\delta) \esub\vartwo{\varthree\varthree} \tom \dots$ and similarly for $\tmtwo$.
	
	The drawback of $\vsubcalc$ is that it requires explicit substitutions.	The advantage of $\vsubcalc$ is its simple and well-behaved rewriting theory, even simpler than the rewriting for $\firecalc$, since every rule terminates separately (while $\betaabs$ does not)---in particular strong confluence holds. Moreover, the theory has a sort of flexible second level given by a notion of structural equivalence, coming up next. \doubt{It is not by chance that we use $\vsubcalc$ as the bridge between $\firecalc$ and $\shufcalc$.}

\newcounter{prop:basic-value-substitution} 
\addtocounter{prop:basic-value-substitution}{\value{proposition}}
\begin{proposition}[Basic Properties of $\vsubcalc$, \cite{DBLP:conf/flops/AccattoliP12}]
  {\ }\NoteProof{propappendix:basic-value-substitution}
\hfill
\label{prop:basic-value-substitution}
  \begin{varenumerate}
    \item\label{p:basic-value-substitution-tom-toe-terminates}\label{p:basic-value-substitution-tom-toe-strong-confluence} $\tom$ and $\toe$ are strongly normalizing and strongly confluent (separately).
    
    \item\label{p:basic-value-substitution-tom-toe-commute}  $\tom$ and $\toe$ strongly commute.
    \item\label{p:basic-value-substitution-strong-confluence}\label{p:basic-value-substitution-number-steps} $\tovsub$ is strongly confluent, and all $\vsub$-normalizing derivations $\deriv$ from $\tm \!\in\! \vsubterms$ (if any) have the same length $\sizevsub{\deriv}$, the same number $\sizee{\deriv}$ of $\expo$-steps, and the same number $\sizem{\deriv}$ of $\mult$-steps 
    \item\label{p:basic-value-substitution-expo-less-than-mult} Let $\tm \!\in\! \Lambda$. For any $\vsub$-derivation $\deriv$ from $\tm$, $\sizee{\deriv} \leq \sizem{\deriv}$.
  \end{varenumerate}
\end{proposition}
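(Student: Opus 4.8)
The plan is to establish the four points in order, reducing everything to two standard ingredients from the rewriting glossary: strong normalization via a strictly decreasing numerical measure, and the strong diamond property, from which strong confluence together with the invariance of the length and of the number of steps of each kind follow. The pervasive simplification is that evaluation is \emph{weak}: evaluation contexts never enter abstractions, so a redex can be fired only at an \emph{active} position, i.e. one not lying under any $\lambda$. I will use repeatedly the observation that the only step duplicating subterms is $\toe$, which copies the bare substituted value (a variable or an abstraction); since everything properly inside an abstraction lies under a $\lambda$, any redex duplicated by a $\toe$-step is \emph{frozen} and never becomes active.

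For Point 1, strong normalization is witnessed by elementary measures. For $\tom$ the total number of applications strictly decreases: a $\mult$-step turns one application (the fired one) into one ES and copies or erases nothing. For $\toe$ the measure is the number of ESs occurring at active positions: firing an exponential redex removes exactly that ES, the substitution context it carries is merely pulled outward (staying active), and every ES inside the copied value is frozen by the remark above; hence the active-ES count strictly decreases. Strong confluence of each relation is then the strong diamond property, proved by the usual residual analysis. There are no genuine critical pairs: two ESs, or two applications, are either disjoint or nested, and substitution contexts consist only of ESs, so the pattern of one $\mult$-redex cannot overlap that of another. For two distinct coinitial $\tom$- (resp.\ $\toe$-) steps, firing one leaves exactly one residual of the other at an active position (here weakness is crucial, as it forbids a duplicated copy of the other redex from being active), and the two residuals close the diagram in one step on each side.

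Point 2 is the same residual analysis across the two relations: given $s \tom t_1$ and $s \toe t_2$, the $\mult$-redex is not contained in the value copied by the $\toe$-step (else it would be frozen), so it has a single active residual in $t_2$; symmetrically the $\toe$-redex has a single residual in $t_1$; firing the residuals yields $u$ with $t_1 \toe u$ and $t_2 \tom u$. Thus $\tom$ and $\toe$ strongly commute. Point 3 then follows by a standard packaging lemma: the union of two relations that separately enjoy the strong diamond property and strongly commute again enjoys it; hence $\tovsub$ is strongly confluent and all $\vsub$-normalizing derivations from a fixed term have the same length. Since in every elementary square a $\mult$-step is answered by a $\mult$-step and an $\expo$-step by an $\expo$-step (Points 1--2), the diamond is label-preserving, so $\sizem{\deriv}$ and $\sizee{\deriv}$ are invariant too. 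For Point 4, restricting to $\tm \in \Lambda$, I would first prove by induction on the derivation the invariant that no abstraction body ever contains an ES: it is preserved by $\tom$ (which creates its ES at an active position) and by $\toe$ (which substitutes a value that, by the invariant, is ES-free, hence introduces no ES under a $\lambda$). Consequently $\toe$ is non-duplicating on such terms: each $\expo$-step removes exactly one ES and each $\mult$-step creates exactly one, while the initial pure term has none; as the number of ESs stays non-negative along the derivation, at every prefix the exponential steps are at most as many as the multiplicative ones, whence $\sizee{\deriv} \leq \sizem{\deriv}$.

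The main obstacle is the residual bookkeeping underlying Points 1--3, i.e.\ turning the informal ``weak evaluation freezes duplicated redexes'' into a precise statement. I would isolate it as a single lemma---\emph{if $s$ reduces to $t$ by firing a redex $R$, then any other redex $R'$ of $s$ has exactly one residual in $t$, and it is active}---and prove it by inspection of the relative positions of $R$ and $R'$ (disjoint, nested in an argument or ES, or $R'$ inside the value copied by $R$, the last case being excluded by activeness). Everything else is then routine diagram chasing.
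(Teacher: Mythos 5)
Your proposal is correct, and its overall architecture coincides with the paper's: strong confluence of $\tom$ and $\toe$ separately, their strong commutation, the quantitative strong Hindley--Rosen lemma to package Point~3 (your ``packaging lemma'' is literally the paper's Lemma on strong Hindley--Rosen, including the label-preserving bookkeeping), and for Point~4 the very same invariant-plus-counting argument --- the paper even proves the sharper identity $\sizee{\deriv} = \sizem{\deriv} - \nes{\tmtwo}$ by induction on the derivation, with exactly your invariant that all values occurring along the reduction are ES-free, from which the inequality follows since $\nes{\tmtwo}\geq 0$. Where you genuinely depart from the paper is in the local arguments for Point~1. For strong normalization the paper simply imports termination from Accattoli--Paolini's results on the strong calculus (using $\tom\,\subseteq\,\todb$ and $\toe\,\subseteq\,\tovs$), whereas you give self-contained decreasing measures; both of yours are sound --- note that the application count for $\tom$ would work even for strong reduction, while the active-ES count for $\toe$ is where weakness does real work. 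For strong confluence and commutation the paper grinds through an explicit induction with eight to twelve cases per pair of relations; your single residual lemma (``any other redex has exactly one active residual'') factors all of this, which is cleaner but merely relocates the case analysis, and there is one case you should make sure the lemma's proof covers explicitly: nested exponential redexes on the spine of an ES content, as in $\tm\esub\var{\valtwo\esub\vartwo{\sctxp{\val}}}$, where the inner step changes the value substituted by the outer redex; the residual is unique but \emph{modified}, and closing the square in one step on each side uses the binding convention $\vartwo\notin\fv{\tm}$ together with the substitution lemma --- it is not a disjoint/nested-in-an-argument configuration, nor one where $R'$ sits inside the copied value. With that case included, your plan buys a shorter, more conceptual write-up and independence from the cited prior work, at the cost of proving the residual lemma with the same care the paper spends on its raw enumeration.
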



\noindent\emph{Structural Equivalence.}
The theory of $\vsubcalc$ comes with a notion of structural equivalence $\eqstruct$, that equates $\vsub$-terms that differ only for the position of ES. The basic idea is that the action of an ES via the exponential rule depends on the position of the ES itself only for inessential details (as long as the scope of binders is respected), namely the position of other ES, and thus can be abstracted away. 
A strong justification for the equivalence comes from the linear logic interpretation of $\vsubcalc$, in which structurally equivalent $\vsub$-terms translate to the same (recursively typed) proof net, see \cite{DBLP:journals/tcs/Accattoli15}.

Structural equivalence $\eqstruct$ is defined as the least equivalence relation on $\vsubterms$ closed by evaluation contexts (see \reffig{valuesub-calculus}) and generated by the following axioms:
\begin{align*}
	\tm\esub{\vartwo}{\tmthree}\esub{\var}{\tmtwo} &\tostructcom \tm\esub{\var}{\tmtwo}\esub{\vartwo}{\tmthree} &&\mbox{if $\vartwo\notin\fv{\tmtwo}$ and $\var\notin\fv{\tmthree}$}\\
	\tm\,\tmthree\esub\var\tmtwo &\tostructapr  (\tm\tmthree)\esub\var\tmtwo && \textrm{if }\var\not\in\fv{\tm} \\
	\tm\esub\var\tmtwo\tmthree &\tostructapl (\tm\tmthree)\esub\var\tmtwo && \textrm{if }\var\not\in\fv\tmthree \\
	\tm\esub{\var}{\tmtwo\esub{\vartwo}{\tmthree}} &\tostructes \tm\esub{\var}{\tmtwo}\esub{\vartwo}{\tmthree} && \mbox{if $\vartwo\not\in\fv{\tm}$}  
\end{align*}
\label{eq:eqstruct}
We 
set $\tovsubeq \, \defeq \, \eqstruct\tovsub\eqstruct$ (\ie for all $\tm, \tmfour \in \vsubterms$: $\tm \tovsubeq \tmfour$ iff $\tm\eqstruct\tmtwo\tovsub\tmthree\eqstruct\tmfour$ for some $\tmtwo, \tmthree \in \vsubterms$). The notation $\tovsubeq^+$  keeps its usual meaning, while $\tovsubeq^*$ stands for $\eqstruct \cup\tovsubeq^+$, \ie a $\vsubeq\,$-derivation of length zero can apply $\eqstruct$ and is not just the identity.
As $\eqstruct$ is reflexive, $\tovsub \, \subsetneq \, \tovsubeq$. 

The rewriting theory of $\vsubcalc$ enriched with structural equivalence $\eqstruct$ is remarkably simple, as next lemma shows. In fact, $\eqstruct$ commutes with evaluation, and can thus be postponed. Additionally, the commutation is \emph{strong}, as it preserves the number and kind of steps---one says that it is a \emph{strong bisimulation} (with respect to $\tovsub$). 
In particular, the equivalence is not needed to compute and it does not break, or make more complex, any property of $\vsubcalc$. On the contrary, it enhances 
the flexibility of the system: it will be essential to establish simple and clean relationships with the other calculi for Open CBV.

\begin{lemma}[Basic Properties of Structural Equivalence $\eqstruct$, \cite{DBLP:conf/flops/AccattoliP12}]
\label{l:eqstruct-post-and-term} 
  Let $\tm, \tmtwo \in \vsubterms$ and $\mathsf{x}\in\set{\mult,\expoabs, \expovar,\expo,\vsub}$.
    \begin{varenumerate}
      \item\label{p:eqstruct-post-and-term-locpost} \emph{Strong Bisimulation of $\eqstruct$ wrt $\tovsub$}: 
      if $\tm\eqstruct\tmtwo$ and $\tm\Rew{\mathsf{x}}\tmp$ then there exists $\tmtwop \in \vsubterms$ such that $\tmtwo\Rew{\mathsf{x}}\tmtwop$ and $\tmp\eqstruct\tmtwop$.      
      \item\label{p:eqstruct-post-and-term-globpost} \emph{Postponement of $\eqstruct$ wrt $\tovsub$}: 
      if $\deriv \colon \tm \tovsubeq^* \tmtwo$ then there are $\tmthree \eqstruct \tmtwo$ and $\derivtwo \colon \tm \tovsub^* \tmthree$ such that $\size{\deriv} = \size{\derivtwo}$, $\sizeeabs{\deriv} = \sizeeabs{\derivtwo}$, $\sizeevar{\deriv} = \sizeevar{\derivtwo}$ and $\sizem{\deriv} = \sizem{\derivtwo}$.
      \item\label{p:eqstruct-post-and-term-normal}\emph{Normal Forms}: if $\tm \eqstruct \tmtwo$ then $\tm$ is $\mathsf{x}$-normal iff $\tmtwo$ is $\mathsf{x}$-normal.
      \item\label{p:eqstruct-post-and-term-strong-confluence} \emph{Strong confluence}:       $\tovsubeq$ is strongly confluent.
    \end{varenumerate}	
    
\end{lemma}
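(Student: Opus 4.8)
The plan is to prove the strong bisimulation property (Point~\ref{p:eqstruct-post-and-term-locpost}) first, since Points~\ref{p:eqstruct-post-and-term-globpost}, \ref{p:eqstruct-post-and-term-normal} and \ref{p:eqstruct-post-and-term-strong-confluence} all follow from it by general rewriting reasoning. For Point~\ref{p:eqstruct-post-and-term-locpost} I would reduce $\eqstruct$ to its one-step generators: the relation of applying a single structural axiom ($\comsym$, $\aplsym$, $\aprsym$ or $\essym$), in either direction, at one occurrence under an evaluation context. Since $\eqstruct$ is the reflexive--transitive--symmetric closure of this one-step relation, and being a strong bisimulation with respect to $\tovsub$ is preserved by these closures (for transitivity one stacks two bisimulation squares, for symmetry one reads a square backwards, using crucially that the conclusion of the square is a full $\eqstruct$ and not a single axiom step), it suffices to prove the local statement: if $\tm$ rewrites to $\tmtwo$ by one axiom and $\tm \Rew{\mathsf x} \tmp$, then there is $\tmtwop$ with $\tmtwo \Rew{\mathsf x} \tmtwop$ and $\tmp \eqstruct \tmtwop$. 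It is moreover enough to treat the three atomic rules $\mathsf x \in \set{\mult, \expoabs, \expovar}$, because a single $\expo$- or $\vsub$-step is always one of these, and the square produced by Point~\ref{p:eqstruct-post-and-term-locpost} preserves the atomic kind, hence is again an $\expo$- (resp.\ $\vsub$-) step.

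This local statement I would establish by a case analysis on the relative positions of the structural axiom and the fired redex. When the two act at disjoint positions they commute immediately. The interesting cases are the overlaps, and here the at-a-distance formulation of the rules is exactly what makes things work: every structural axiom merely relocates an explicit substitution without altering the binding structure (the side conditions on free variables rule out capture), so a $\mult$-, $\expoabs$- or $\expovar$-redex whose substitution context $\sctx$ absorbs the moved ES survives the axiom --- it simply fires with a slightly different context $\sctx'$ --- and the two outcomes are again related by $\eqstruct$. I expect this to be the main obstacle: one must check each of the four axioms against each atomic redex and verify both that the redex is preserved and that the residual structural equivalence reconstitutes the displaced ES, which is precisely what the "up to $\eqstruct$" in the conclusion accommodates.

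Point~\ref{p:eqstruct-post-and-term-normal} then follows directly: if $\tm \eqstruct \tmtwo$ and $\tmtwo \Rew{\mathsf x} \tmtwop$, then by symmetry of $\eqstruct$ and Point~\ref{p:eqstruct-post-and-term-locpost} applied from $\tmtwo$ we obtain $\tm \Rew{\mathsf x} \tmp$, so $\mathsf x$-normality transfers between $\eqstruct$-equivalent terms. Point~\ref{p:eqstruct-post-and-term-globpost} is a postponement argument: a $\tovsubeq^*$ derivation is an interleaving of $\eqstruct$ and $\tovsub$ steps, and Point~\ref{p:eqstruct-post-and-term-locpost} lets me commute each $\eqstruct$ step rightward past the following $\tovsub$ step while preserving the kind of that step. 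Iterating, all $\eqstruct$ steps are pushed to the end and merged by transitivity of $\eqstruct$ into a single trailing equivalence, yielding $\tm \tovsub^* \tmthree \eqstruct \tmtwo$; the equalities of $\size{\deriv}$, $\sizeeabs{\deriv}$, $\sizeevar{\deriv}$ and $\sizem{\deriv}$ hold because commutation never changes the label $\mathsf x$.

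Finally, Point~\ref{p:eqstruct-post-and-term-strong-confluence} combines Point~\ref{p:eqstruct-post-and-term-locpost} with the strong confluence of $\tovsub$ already established in Proposition~\ref{prop:basic-value-substitution}. This is the standard abstract fact that if $\to$ is strongly confluent and $\sim$ is an equivalence that is a strong bisimulation for $\to$, then $\tovsubeq = {\eqstruct}\,{\tovsub}\,{\eqstruct}$ is strongly confluent: given two $\tovsubeq$-peaks from a common term, one uses the bisimulation of Point~\ref{p:eqstruct-post-and-term-locpost} to slide the inner occurrences of $\eqstruct$ across the $\tovsub$-steps so as to expose a genuine $\tovsub$-peak, closes that peak by strong confluence of $\tovsub$, and reassembles the surrounding equivalences, which recombine by transitivity of $\eqstruct$.
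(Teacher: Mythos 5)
Your proposal is correct and follows essentially the same route as the paper: the lemma is in fact imported from Accattoli and Paolini's work without a proof in this paper, but the scheme you describe---prove the strong bisimulation locally for single axiom applications (in both orientations) under evaluation contexts, exploiting the at-a-distance rules so that relocated ES are absorbed by the substitution context $\sctx$, then obtain postponement, preservation of normal forms, and strong confluence of $\tovsubeq$ as corollaries of Point~1 together with the strong confluence of $\tovsub$---is exactly the structure of the cited proof and of the paper's own appendix proof of the analogous statement for $\seqbar$ (\refprop{seqbar-post-and-term}, whose Points 2--4 are likewise ``immediate consequences'' of the local bisimulation). Your attention to the two subtleties---that symmetry of the closure requires the square in both directions, and that the conclusion of the square must be the full equivalence $\eqstruct$ rather than a single axiom step---is precisely what makes the closure argument go through.
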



	\subsection{\texorpdfstring{Open Call-by-Value 3: The Shuffling Calculus $\shufcalc$}{Open Call-by-Value 3: The Shuffling Calculus}}
	\label{subsect:shuffling}
The calculus introduced by Carraro and Guerrieri in \cite{DBLP:conf/fossacs/CarraroG14}, and here deemed \emph{Shuffling Calculus}, has the same syntax of terms as Plotkin's calculus. 
Two additional commutation rules help $\tobv$ to deal with stuck $\beta$-redexes, by shuffling constructors so as to enable creations of type 1 and 4. As for $\vsubcalc$, $\shufcalc$ was actually introduced, and then used in \cite{DBLP:conf/fossacs/CarraroG14,Guerrieri15,GuerrieriPR15}, to study Strong CBV.
In \reffig{shuffling-calculus} we present its adaptation to Open CBV, based on \emph{balanced contexts}, a special notion  of evaluation contexts. The reductions $\tosigm$ and $\tobvm$ are non-deterministic and---because of balanced contexts---can reduce under abstractions, but they are \emph{morally} weak: they reduce under a $\lambda$ only when the $\lambda$ is applied to an argument.
Note that the condition $\var \notin \fv{\tmthree}$ (resp.~$\var \notin \fv{\val}$) in the definition of the shuffling rule $\rtosl$ (resp.~$\rtosr$) can always be fulfilled by $\alpha$-conversion.


%

\begin{figure}[t]
  \centering
  \scalebox{0.85}{
    $
    \begin{array}{c@{\hspace{-.5cm}}c}
	    \mbox{Terms and Values} & \textup{As in Plotkin's Open CBV (Fig.~\ref{fig:plotkin-calculus})\ }\\
	    \mbox{Balanced Contexts} & \mctx \grameq  \ctxhole\mid \tm\mctx \mid \mctx\tm \mid (\la\var\mctx) \tm\\\\	
	    
      \textsc{Rule at Top Level} & \ \textsc{Contextual closure} \\
	    ((\l\var.\tm)\tmtwo)\tmthree \rtosl (\l\var.\tm \tmthree)\tmtwo,  \ \var \!\notin\! \fv{\tmthree}&
	    \quad\ \mctxp \tm \tosl \mctxp \tmtwo \textrm{~~~if } \tm \rtosl \tmtwo \\

	    \val ((\l\var.\tmthree)\tmtwo) \rtosr (\l\var.\val \tmthree)\tmtwo,  \ \var \!\notin\! \fv{\val} &
	    \quad\ \mctxp \tm \tosr \mctxp \tmtwo \textrm{~~~if } \tm \rtosr \tmtwo \\

	    \,\,(\l\var.\tm)\val\rtobv \tm\isub\var\val \qquad\qquad &
	    \quad\ \mctxp \tm \tobvm \mctxp \tmtwo \textrm{~~~if } \tm \rtobv \tmtwo \\\\

	    \mbox{Reductions} &\!\!\!\!\!\!\!\!\!\!\!\tosigm \, \defeq \, \tosl \!\cup \tosr , \ \ \toperm \, \defeq \, \tobvm \!\cup \tosigm \ 

    \end{array}
    $
  }
  \caption{\label{fig:shuffling-calculus} The Shuffling Calculus $\shufcalc$}
\end{figure}

The rewriting (shuffling) rules $\tosl$ and $\tosr$ unblock stuck $\beta$-redexes. 
For instance, consider the terms $\tm \defeq ((\la\vartwo \delta)(\varthree\varthree))\delta$ and $\tmtwo \defeq \delta((\la\vartwo \delta)(\varthree\varthree))$ where $\delta \defeq \la\var\var\var$ (as in Eq.~\refeq{unsolvable}, p.~\pageref{eq:unsolvable}): $\tm$ and $\tmtwo$ are $\betavm$-normal but $\tm \tosl (\la\vartwo \delta\delta)(\varthree\varthree) \tobvm  (\la\vartwo\delta\delta)(\varthree\varthree) \tobvm \dots$ and $\tmtwo \tosr (\la\vartwo \delta\delta)(\varthree\varthree) \tobvm (\la\var\delta\delta)(\varthree\varthree) \tobvm \dots$\,.


The similar shuffling rules in CBN, better known as Regnier's $\sigma$-rules \cite{regnier94}, are \emph{contained} in CBN $\beta$-equivalence, while in Open (and Strong) CBV they are more interesting because they are not contained into (\ie they enrich) $\betav$-equivalence.


The advantage of 
$\shufcalc$ is with respect to denotational investigations. In \cite{DBLP:conf/fossacs/CarraroG14}, $\shufcalc$ is indeed used to prove various semantical results in connection to linear logic, resource calculi, and the notion of Taylor expansion due to Ehrhard. 
In particular, in \cite{DBLP:conf/fossacs/CarraroG14} it has been proved the adequacy of $\shufcalc$ with respect to the relational model induced by linear logic: a by-product of our paper is the extension of this adequacy result to all incarnations of Open CBV.
%
The drawback of $\shufcalc$ is its technical rewriting theory. 
We summarize some properties of $\shufcalc$:

\newcounter{prop:basic-shuffling} 
\addtocounter{prop:basic-shuffling}{\value{proposition}}
\begin{proposition}[Basic Properties of $\shufcalc$, \cite{DBLP:conf/fossacs/CarraroG14}]
\label{prop:basic-shuffling}
\hfill
\NoteProof{propappendix:basic-shuffling}

  \begin{varenumerate}
    \item\label{p:basic-shuffling-different} Let $\tm, \tmtwo, \tmthree \in \Lambda$. If $\tm \tobvm \tmtwo$ and $\tm \tosigm \tmthree$ then $\tmtwo \neq\tmthree$.
    \item\label{p:basic-shuffling-terminates} $\tosigm$ is strongly normalizing and (not strongly) confluent.
    \item\label{p:basic-shuffling-confluence} $\toshuf$ is (not strongly) confluent.
    \item\label{p:basic-shuffling-weak-strong-normalize} Let $\tm \in \Lambda$: $\tm$ is strongly $\shuf$-normalizable iff $\tm$ is $\shuf$-normalizable.
  \end{varenumerate}
\end{proposition}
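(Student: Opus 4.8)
The plan is to prove the four points in turn, with two ingredients doing most of the work: strong normalization of $\tosigm$ and the commutation of $\tosigm$ with $\tobvm$. The latter incarnates the creations of type 1 and 4 and is the technical heart of the whole statement. For Point 1, I would argue by inspecting the relative positions of the two contracted redexes. The key structural fact is that a $\tosigm$-step never performs a substitution: it only relocates an argument across a neighbouring $\lambda$, so its reduct contains exactly the same immediate subterms as $\tm$, merely reparenthesised, whereas a $\tobvm$-step contracts a pattern $(\la\var\tmtwo)\val$ against a value and destroys that outer abstraction–application. A short case analysis on whether the $\tobvm$- and $\tosigm$-redexes are disjoint, nested, or overlapping then shows the one-step reducts cannot coincide: if they are disjoint, the $\sigma$-reduct still exhibits the untouched $\beta_v$-redex at its position whereas the $\beta_v$-reduct does not, and the nested and overlapping cases are settled by matching the constructors at the two redex roots. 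I expect this point to be routine.

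For Point 2 I would first establish strong normalization of $\tosigm$ by a weight argument. Since both $\sigma$-rules preserve the number of abstractions, the number of applications, and the size $\size\tm$ of the term, I may fix once and for all, along a given reduction, a constant $K > \size\tm$. To each \emph{pending adjacency}---an argument sitting outside an abstraction into whose body one of the two rules could push it---I assign the weight $K^{\size\tm - d}$, where $d$ is the depth of that abstraction, and I sum over all pending adjacencies. Each $\sigma$-step resolves exactly one adjacency, pushing its argument strictly deeper, and creates at most $\size\tm$ new adjacencies, each at strictly greater depth, hence of weight at most $K^{\size\tm - d - 1}$; since $K > \size\tm$, the weight $K^{\size\tm - d}$ of the resolved adjacency dominates, so the measure strictly decreases. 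Strong normalization then yields confluence of $\tosigm$ through Newman's lemma, once local confluence is verified on the finitely many critical pairs formed by overlapping $\sigma_1$/$\sigma_3$ redexes; exhibiting a local peak that needs two steps to be closed witnesses the failure of \emph{strong} confluence.

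For Point 3 I would use the Hindley--Rosen technique: $\toshuf = \tobvm \cup \tosigm$ is confluent as soon as $\tobvm$ and $\tosigm$ are each confluent and they commute. Confluence of $\tosigm$ is Point 2; confluence of $\tobvm$ follows from a standard Tait--Martin-L\"of parallel-reduction argument, adapted to the value restriction and to balanced contexts. The commutation of $\tobvm$ with $\tosigm$ is the delicate ingredient, precisely because $\sigma$-steps enable and duplicate $\beta_v$-redexes (creations of type 1 and 4); I would prove it by introducing parallel versions of the two relations and checking that a $\sigma$-step followed by a $\betavm$-step can always be permuted into $\betavm$-steps followed by $\sigma$-steps. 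As before, an explicit peak shows that confluence is not strong. This commutation is the step I expect to be the main obstacle of the whole proposition.

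Finally, for Point 4 (uniform normalization, i.e.\ $\shuf$-normalizability coincides with strong $\shuf$-normalizability) I would exploit that, in this morally weak calculus, every value is already $\toshuf$-normal, so the erasing case of $\tobvm$ only discards normal forms. Refining the commutation of Point 3 into a count-preserving statement---that $\tosigm$ postpones after $\tobvm$ and behaves as a strong bisimulation for $\tobvm$---gives that all normalizing $\toshuf$-reductions from a given term share the same number of $\tobvm$-steps. Calling this number $\mu(\tm)$, well defined on $\shuf$-normalizable terms (and $\shuf$-normalizability is preserved by reduction, by confluence), one checks that $\mu$ strictly decreases along $\tobvm$ and is invariant along $\tosigm$. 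Since $\tosigm$ is strongly normalizing, any infinite $\toshuf$-reduction would contain infinitely many $\tobvm$-steps, each lowering the non-negative quantity $\mu$, a contradiction; hence $\shuf$-normalizability implies strong $\shuf$-normalizability, the converse being immediate.
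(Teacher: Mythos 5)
There is a genuine gap, and it sits exactly at the two devices carrying your Points 2--4. The postponement/strong-bisimulation claim you rely on for Points 3 and 4 is false: a $\sigma$-step followed by a $\betavm$-step can \emph{not} in general be permuted into $\betavm$-steps followed by $\sigma$-steps, because the whole purpose of the $\sigma$-rules is to \emph{create} $\betav$-redexes (creations of type 1 and 4), so $\sigma$-steps must precede, not follow, the $\betavm$-steps they enable. Concretely, let $I \defeq \la\varthree\varthree$ and $\tm \defeq ((\la\var I)(\vartwo\vartwo))\,I$: then $\tm$ is $\betavm$-normal (balanced contexts cannot enter the unapplied occurrences of $I$), yet $\tm \tosl (\la\var(I\,I))(\vartwo\vartwo) \tobvm (\la\var I)(\vartwo\vartwo)$; any derivation of the shape $\tobvm^*\tosigm^*$ from $\tm$ has an empty $\betavm$-prefix, and $\tosigm$ preserves the number of $\lambda$'s while the target has one $\lambda$ fewer, so the permuted derivation does not exist. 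The same example shows $\tosigm$ is not a strong bisimulation for $\tobvm$, since it does not preserve $\betavm$-normal forms. For Point 3 what you actually need for Hindley--Rosen is commutation of the \emph{peak} (if $\tmtwo$ and $\tmthree$ are obtained from $\tm$ by a $\betavm$- and a $\sigma$-step respectively, close the diagram with $\tosigm^*$ from $\tmtwo$ and $\tobvm^*$ from $\tmthree$), which does hold and is unaffected by creations---but it is a different statement from your postponement. For Point 4, the invariance of the $\betavm$-count is true, but the paper obtains it in the \emph{opposite} direction, by projecting $\shufcalc$ into $\vsubcalc$ ($\betavm$-steps become $\expo$-steps, $\sigma$-steps become structural equivalence $\eqstruct$, see \refcoro{shuffling-number-steps}), and Point 4 itself is imported from \cite{DBLP:conf/fossacs/CarraroG14} rather than derived by a bisimulation; your $\mu$-argument is fine once invariance is available, but your route to invariance is broken.

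Your strong-normalization measure for Point 2 also fails, because $\sigma$-steps hoist abstractions \emph{upward}: in $\val\,(((\la\var\tm)\tmtwo)\tmthree) \tosl \val\,((\la\var(\tm\,\tmthree))\tmtwo)$ the contracted redex has its $\lambda$ at depth $3$, but the step creates a $\srsym$-redex at the root whose $\lambda$ sits at depth $2$, so the created weight $K^{\size\tm-2}$ strictly dominates the removed weight $K^{\size\tm-3}$ and your measure \emph{increases}. (Strong normalization is of course true; the paper imports the decreasing weighted size of \cite[Prop.~2]{DBLP:conf/fossacs/CarraroG14}.) Finally, Point 1 is not routine: in the overlap where the root is a $\slsym$-redex and the $\betavm$-step contracts its left subterm, equality of the two reducts comes down to excluding $\tmfour\isub\var\val = (\la\var\tmfour)\val$, which cannot be settled by ``matching constructors''---$\tmfour\isub\var\val$ may itself be an abstraction---and the paper rules it out by a counting argument on the occurrences of $\lambda$, of $\var$, and of $\val$ (Lemma~\ref{l:different}); moreover, even your disjoint case silently needs irreflexivity of $\tosigm$ (if $\tm \tosigm \tmp$ then $\tm \neq \tmp$), since a $\betavm$-step can reproduce its own redex, as in $\delta\delta \tobvm \delta\delta$, so ``the $\betav$-reduct no longer exhibits the redex'' is not sound as stated. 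Note, for calibration, that the paper itself cites \cite{DBLP:conf/fossacs/CarraroG14} for Points 2--4 and only proves Point 1 in full.
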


In contrast to $\firecalc$ and $\vsubcalc$, $\shufcalc$ is not strongly confluent and not all $\shuf$-norma\-lizing derivations (if any) from a given term have the same length (consider, for instance, all $\shuf$-normalizing derivations from $ (\la\vartwo\varthree)(\delta(\varthree\varthree))\delta$). 
Nonetheless, normalization and strong normalization still coincide in $\shufcalc$ (\refpropp{basic-shuffling}{weak-strong-normalize}), and \refcoro{shuffling-number-steps} in \refsect{fireball-vsub} will show that the discrepancy is encapsulated inside the additional shuffling rules
, since all $\shuf$-normalizing derivations (if any) from a given term have the same number of $\betashuf$-steps. 
\doubt{\refpropp{basic-shuffling}{different} ensures that, given a $\vmsym$-derivation $\deriv$, the number of $\betavm$- and $\sigm$-steps are uniquely determined.}

	\subsection{\texorpdfstring{Open Call-by-Value 4: The Value Sequent Calculus $\vseqcalc$}{Open Call-by-Value 4: The Value Sequent Calculus}}
	\label{subsect:vseq}
\begin{figure}[t]
  \centering
  \scalebox{0.85}{
    $\begin{array}{c@{\hspace{1cm}}rlllllllll}
	    \mbox{Commands} & \cm,\cmtwo &\grameq & \comm{\val}{\cotm}\\
	    \mbox{Values} & \val,\valtwo & \grameq & \var \mid \la\var\cm\\
	    \mbox{Environments} & \cotm,\cotmtwo & \grameq & \stempty \mid \mutilde{\var}{\cm} \mid \stacker{\val}{\cotm}\\
	    \mbox{Command Evaluation Contexts} & \cmctx &\grameq & \ctxhole \mid \cotctxp{\mutilde\var\cmctx}\\
	    \mbox{Environment Evaluation Contexts} & \cotctx &\grameq & \comm\val\ctxhole \mid \cotctxp{\stacker\val\ctxhole} \\\\
	    
	      
	    \textsc{Rule at Top Level} & \multicolumn{3}{c}{\textsc{Contextual closure} }\\
	    
	    \comm{\la\var\cm}{\stacker{\val}{\cotm}} \rtobvmu \comm\val{\append{(\mutilde\var\cm)}\cotm} &
	    \multicolumn{3}{c}{\cmctxp \cm \tobvmu \cmctxp \cmtwo \textrm{~~~if } \cm \rtobvmu \cmtwo \ }\\
	    
	    \comm{\val}{\mutilde\var\cm} \rtomut \cm\isub\var\val &
	    \multicolumn{3}{c}{\cmctxp \cm \tomut \cmctxp \cmtwo \textrm{~~~if } \cm \rtomut \cmtwo \ }\\\\
	    
	    \mbox{Reduction} & \multicolumn{3}{c}{\tolbarmut \, \defeq \, \tobvmu \!\cup \tomut}	    
    \end{array}$
  }
  \caption{\label{fig:lambdamu-calculus} The Value Sequent Calculus $\vseqcalc$}
\end{figure}
A more radical approach to the removal of stuck $\beta$-redexes is provided by what is here called the \emph{Value Sequent Calculus} $\vseqcalc$, defined in \reffig{lambdamu-calculus}. In $\vseqcalc$, it is the applicative structure of terms that is altered, by replacing the application constructor with more constructs, namely commands $\cm$ and environments $\cotm$. 
Morally, $\vseqcalc$ looks at a sequence of applications from the head, that is the value on the left of a command $\comm\val\cotm$ rather than from the tail as in natural deduction. 
In fact, $\vseqcalc$ is a handy presentation of the intuitionistic fragment of $\lambdamucalc$, that in turn is the CBV fragment of $\overline\lambda\mu\mutildesym$, a calculus obtained as the computational interpretation of a sequent calculus for classical logic. 
Both $\lambdamucalc$ and $\overline\lambda\mu\mutildesym$ are due to Curien and Herbelin \cite{DBLP:conf/icfp/CurienH00}, see 
\cite{DBLP:journals/toplas/AriolaBS09,DBLP:conf/ifipTCS/CurienM10} for further \mbox{investigations about these systems.}

A peculiar trait of the sequent calculus approach is the environment constructor $\mutilde\var\cm$, that is a binder for the free occurrences of $\var$ in $\cm$. It is often said that it is a sort of explicit substitution---we will see exactly in which sense, in \refsect{kernel}.

The change of the intuitionistic variant $\vseqcalc$ with respect to $\lambdamucalc$ is that $\vseqcalc$ does not need the syntactic category of co-variables $\covar$, as there can be only one of them, denoted here by $\stempty$. 
From a logical viewpoint, this is due to the fact that in intuitionistic sequent calculus the right-hand-side of $\vdash$ has exactly one formula, that is neither contraction nor weakening are allowed on the right.
Consequently, the binary abstraction $\lbar\var\covar \cm$ of $\lambdamucalc$ is replaced by a more traditional unary one $\la\var\cm$, and substitution on co-variables is replaced by a notion of \emph{appending of environ\-ments}, defined by mutual induction \mbox{on commands and environments as follows:}
\begin{align*}
   \append{\comm\val\cotmtwo}\cotm & \defeq \comm\val{\append\cotmtwo\cotm}
   & \append\stempty\cotm & \defeq \cotm
   \\
   \append{(\stacker\val\cotmtwo)}\cotm & \defeq \stacker\val {(\append\cotmtwo\cotm)}
   &
   \append{(\mutilde\var\cm)}\cotm 
   & \defeq \mutilde\vartwo (\append {\cm \isub\var\vartwo}\cotm) \text{ with }\vartwo \notin \fv{\cm} \cup \fv{\cotm}
  \end{align*}
  Essentially, $\append\cm\cotm$ is a capture-avoiding substitution of $\cotm$ for the only occurrence of $\stempty$ in $\cm$ that is out of all abstractions, 
  standing for the output of the term.
  The append operation is used in the rewrite rule $\tobvmu$ of $\vseqcalc$ (\reffig{lambdamu-calculus}).
%
Strong CBV can be obtained by simply extending the grammar of evaluation contexts to commands under abstractions.

We will provide a translation from $\vsubcalc$ to $\vseqcalc$ that, beyond termination equivalence, will show that switching to a sequent calculus representation is equivalent to a transformation in administrative normal form \cite{DBLP:journals/lisp/SabryF93}.

The advantage of $\vseqcalc$ is that it avoids both rules at a distance and shuffling rules.
The drawback of $\vseqcalc$ is that, syntactically, it requires to step out of the $\l$-calculus. We will show in \refsect{kernel} how to reformulate it as a fragment of $\vsubcalc$, \ie in natural deduction. However, it will still be necessary to restrict the application constructor, thus preventing the natural way of writing terms.

The rewriting of $\vseqcalc$ is very well-behaved, in particular it is strongly confluent and every rewriting rule terminates separately.

\newcounter{prop:basic-lambdamu}
\addtocounter{prop:basic-lambdamu}{\value{proposition}}
\begin{proposition}[Basic properties of $\vseqcalc$]
\label{prop:basic-lambdamu}\hfill
\NoteProof{propAppendix:basic-lambdamu}
  \begin{varenumerate}
    \item\label{p:basic-lambdamu-tobvmu-strong-confluence}\label{p:basic-lambdamu-tomut-strong-confluence} $\tobvmu$ and $\tomut$ are strongly normalizing and strongly confluent (separately).
    \item\label{p:basic-lambdamu-strong-commutation} $\tobvmu$ and $\tomut$ strongly commute.
    \item\label{p:basic-lambdamu-strong-confluence} $\tolbarmut$ is strongly confluent, and all $\vseq$-normalizing derivations $\deriv$ from a command $\cm$ 
    (if any) have the same length $\size{\deriv}$, the same number $\sizemut{\deriv}$ of $\tilde{\mu}$-steps, and the same number $\sizelbar{\deriv}$ of $\lambdabar$-steps.
  \end{varenumerate}
\end{proposition}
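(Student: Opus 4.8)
The plan is to follow the template already used for $\firecalc$ (\refprop{basic-fireball}) and $\vsubcalc$ (\refprop{basic-value-substitution}): establish Parts 1 and 2 by elementary measure and local-diagram arguments, and then obtain Part 3 abstractly, since the union of two strongly confluent and strongly commuting relations is itself strongly confluent (a standard fact, recalled in the rewriting glossary). Before anything I would prove two routine \emph{bookkeeping lemmas} about the append operation $\append{\cdot}{\cdot}$: that it commutes with the meta-substitution $\isub{\var}{\val}$, and that appending $\cotm$ to a command or environment merely relocates $\cotm$ at the unique topmost $\stempty$, altering neither the number of stack-extension constructors (the ones forming $\stacker{\val}{\cotm}$) nor the number of $\mutildesym$-binders contributed by the two arguments. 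These lemmas are what make the measures below behave well, since $\tobvmu$ is defined through $\append{\cdot}{\cdot}$.

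For the separate strong normalization in Part 1 I would use one measure per rule. For $\tobvmu$ the measure is the total number of occurrences of the stack-extension constructor in a command: the redex $\comm{\la\var\cm}{\stacker{\val}{\cotm}}\rtobvmu\comm{\val}{\append{(\mutilde\var\cm)}{\cotm}}$ consumes exactly the displayed top stack-extension and, by the append lemma, creates none, so every $\tobvmu$-step strictly decreases this count by one; this already gives strong normalization and a global bound on derivation length. For $\tomut$ the rule $\comm{\val}{\mutilde\var\cm}\rtomut\cm\isub\var\val$ removes one $\mutildesym$-binder but, when $\val$ is an abstraction, the substitution can duplicate the $\mutildesym$-binders sitting inside $\val$, so the naive count does not decrease. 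This duplication is the one genuinely delicate point, and I would dispatch it exactly as for the exponential rule of $\vsubcalc$ (\refpropp{basic-value-substitution}{tom-toe-terminates}, from \cite{DBLP:conf/flops/AccattoliP12}). The crucial observation is that substituting a \emph{value} into value positions never creates a fresh $\tomut$-redex at the root of a command: substitution maps the head constructor of an environment to one of the same shape and sends values to values, so a command is a $\tomut$-redex after substitution iff it was one before. Hence all redexes in the contractum are residuals of pre-existing ones, and a standard duplication-aware weighting of $\mutildesym$-binders strictly decreases, yielding strong normalization of $\tomut$.

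Separate strong confluence of each rule (Part 1) and their strong commutation (Part 2) I would then obtain by inspecting the finitely many ways two redexes can overlap. Redexes at disjoint positions trivially close in one step each; the only interesting situations are a $\mutildesym$- or stack-redex lying inside the body $\cm$ or the value $\val$ that another redex duplicates or substitutes, and these close using the substitution lemma and the append lemma to track the residuals. Because the value substituted by $\tomut$ and the abstraction consumed by $\tobvmu$ do not themselves overlap the fired pattern, each local diagram closes in a single matching step, giving \emph{strong} (not merely local) confluence and commutation; moreover the closing step is always of the \emph{same kind} as the starting one.

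Finally, Part 3 is assembled abstractly. From the two separate strong confluences and the strong commutation, the strong form of the Hindley--Rosen argument yields that $\tolbarmut=\tobvmu\cup\tomut$ is strongly confluent. Strong confluence alone already forces all normalizing derivations from a fixed command $\cm$ to have the same length $\size\deriv$, exactly as noted after \refprop{basic-plotkin}. To upgrade this to equality of the refined counts $\sizemut\deriv$ and $\sizelbar\deriv$, I would use that every local diagram built in Parts 1--2 is kind-preserving, so the induced tiling preserves the whole multiset of step-labels; this is the same reasoning used for $\firecalc$ in \refpropp{basic-fireball}{number-steps} and for $\vsubcalc$ in \refpropp{basic-value-substitution}{number-steps}. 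The main obstacle throughout is the termination of $\tomut$ under duplication, and, secondarily, the append lemmas needed to keep the $\tobvmu$-measure and the $\tobvmu$-diagrams under control.
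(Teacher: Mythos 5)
Your proposal is correct and, in its architecture, coincides with the paper's proof: the two bookkeeping lemmas you posit are exactly the paper's substitution and append lemmas (\reflemma{substitution} and \reflemma{cosubstitution}), the local diagrams are chased by mutual induction on commands and environments just as in the appendix, and Part 3 is obtained from the strong Hindley--Rosen lemma (\reflemma{hindley-rosen}), whose statement already delivers the per-kind step counts that you re-derive via kind-preserving diagrams. Two points of divergence are worth recording. For strong normalization, the paper counts occurrences of $\lambda$ for $\tobvmu$ (your stack-constructor count works equally well, since $\append{\cdot}{\cdot}$ creates no stack conses and its internal renaming substitutes only variables) and simply cites Herbelin's thesis \cite{Herbelin05} for $\tomut$, so your self-contained argument is a genuine addition; it is, however, heavier than needed. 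Because evaluation is weak, any $\mutildesym$-binder duplicated by substituting an abstraction ends up under that abstraction's $\lambda$, where no evaluation context reaches; hence the plain count of $\mutildesym$-occurrences lying outside all abstractions drops by exactly one at each $\tomut$-step, and no duplication-aware weighting is required.

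The one place where your justification is off target is the closure of the local diagrams. Single-step closure does not follow from the fired patterns ``not overlapping'': non-overlap is perfectly compatible with one step duplicating the other (fireable) redex, which would force several closing steps and destroy \emph{strong} confluence and commutation. The fact that actually saves the argument---recorded explicitly in the paper---is that values are $\lambdabar$-, $\mutildesym$- and $\vseqsym$-normal, since evaluation contexts never enter values. Consequently the case you list, of a redex ``lying inside the value $\val$ that another redex duplicates or substitutes'', is vacuous: a fireable redex never sits inside a substituted value, so every residual is unique and each diagram closes in exactly one step of the same kind. Once you replace your non-overlap remark by this normality-of-values observation, your proof goes through as written.
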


 \subsection{Variations on a Theme}
\paragraph{Reducing Open to Closed Call-by-Value: Potential Valuability.}\label{par:potentially-valuable}
Potential valuability relates Na\"ive Open CBV to Closed CBV via a meta-level substitution closing open terms: a (possibly open) term $\tm$ is \emph{potentially valuable} if there is a substitution of (closed)\emph{\ignore{closed} values} for its free variables, for which it $\betav$-evaluates \ignore{(in Closed CBV)} to a (closed) \emph{value}.%
\footnote{Potential valuability for Plotkin's CBV $\lambda$-calculus can be equivalently defined using weak or strong $\betav$-reduction: it is the same notion for Na\"ive Open and Strong CBV.} 
In Na\"ive Open CBV, potentially valuable terms do not coincide with normalizable terms because of premature $\betav$-normal forms---such as $\tm$ and $\tmtwo$ in Eq.~\refeq{unsolvable} at p.~\pageref{eq:unsolvable}---
which are not potentially valuable.

Paolini, Ronchi Della Rocca and, later, Pimentel \cite{DBLP:journals/ita/PaoliniR99,DBLP:conf/ictcs/Paolini01,parametricBook,paolini04itrs,paolini11tcs} gave several operational, logical, and semantical characterizations of potentially valuable terms in Na\"ive Open CBV.
In particular, in \cite{DBLP:journals/ita/PaoliniR99,parametricBook} 
it is proved that a term is potentially valuable in Plotkin's Na\"ive Open CBV iff its normalizable in $\firecalc$.

Potentially valuable terms can be defined for every incarnation of Open CBV: it is enough to update the notions of evaluation and values in the above definition to the considered calculus.
This has been done for $\shufcalc$ in \cite{DBLP:conf/fossacs/CarraroG14}, and for $\vsubcalc$ in \cite{DBLP:conf/flops/AccattoliP12}. 
For both calculi it has been proved that, in the weak setting, potentially valuable terms coincides with normalizable terms.
In \cite{GuerrieriPR15}, it has been proved that Plotkin's potentially valuable terms coincide with $\shuf$-potentially valuable terms (which coincide in turn with $\vmsym$-normalizable terms). 
Our paper makes a further step: proving that termination coincides for $\firecalc$, $\vsubcalc$, $\shufcalc$, and $\vseqcalc$ it implies that all their notions of potential valuability coincide with Plotkin's, \ie there is just one notion of potential valuability for Open (and Strong) CBV. 
	

\paragraph{Open CBV 5, 6, 7, \ldots} The literature contains many other calculi for CBV, usually presented for Strong CBV and easily adaptable to Open CBV. Some of them have $\letexp$-expressions (avatars of ES) and all of them have rules permuting constructors, therefore they lie somewhere in between $\vsubcalc$ and $\shufcalc$. Often, they have been developed for other purposes, usually to investigate the relationship with monad or CPS translations. Moggi's equational theory \cite{DBLP:conf/lics/Moggi89} is a classic standard of reference, known to coincide with that of Sabry and Felleisen \cite{DBLP:journals/lisp/SabryF93}, Sabry and Wadler \cite{DBLP:journals/toplas/SabryW97}, Dychoff and Lengrand \cite{DBLP:journals/logcom/DyckhoffL07}, Herbelin and Zimmerman  \cite{DBLP:conf/tlca/HerbelinZ09} and Maraist et al's $\lambda_{let}$ in \cite{DBLP:journals/tcs/MaraistOTW99}. In \cite{DBLP:conf/flops/AccattoliP12}, $\vsubcalc$ modulo $\eqstruct$ is shown to be termination equivalent to Herbelin and Zimmerman's calculus, and to strictly contain its equational theory, and thus Moggi's. At the level of rewriting these presentations of Open CBV are all more involved than those that we consider here. Their equivalence to our calculi can be shown along the lines of that of $\shufcalc$ with $\vsubcalc$.
\section{\texorpdfstring{Quantitative Equivalence of $\firecalc$, $\vsubcalc$, and $\shufcalc$}{Quantitative Equivalence of lambda-fire, lambda-vsub, and lambda-shuf}}
\label{sect:fireball-vsub}

Here we show the equivalence with respect to termination of $\firecalc$, $\vsubcalc$, and $\shufcalc$, enriched with quantitative information on the number of steps.

\paragraph{On the Proof Technique.} We show that termination in $\vsubcalc$ implies termination in $\firecalc$ and $\shufcalc$ by studying simulations of $\firecalc$ and $\shufcalc$ into $\vsubcalc$. 
To prove the converse implications we do not use inverse simulations.
Alternatively, we show that $\betaf$- and $\shuf$-normal forms are essentially projected into $\vsub$-normal forms, so that if evaluation terminates in $\firecalc$ or $\shufcalc$ then it also terminates on $\vsubcalc$. 

Such a simple technique works because in the systems under study \emph{normalization and strong normalization coincide}: if there is a normalizing derivation from a given term $\tm$ then there are no diverging derivations from $\tm$ (for $\vsubcalc$ and $\firecalc$ it follows from strong confluence, for $\shufcalc$ is given by \refpropp{basic-shuffling}{weak-strong-normalize}). 
  This fact is also the reason why the statements of our equivalences (forthcoming \refcor{equivalence-vsub-fire-termination} and \refcor{equivalence-vsub-shuf-termination}) address a single derivation from $\tm$ rather than considering \emph{all} derivations from $\tm$
. Moreover, for any calculus, all normalizing derivations from $\tm$ have the same number of steps (in $\shufcalc$ it holds for $\betashuf$-steps, see \refcor{shuffling-number-steps}), hence also the quantitative claims of \refcoro{equivalence-vsub-fire-termination} and \refcor{equivalence-vsub-shuf-termination} hold actually \mbox{for \emph{all} normalizing derivations from $\tm$.}

In both 
simulations, the structural equivalence $\eqstruct$ of $\vsubcalc$ plays a role.

\subsection{\texorpdfstring{Equivalence of  $\firecalc$ and $\vsubcalc$}{Equivalence of fireball-calculus and vsub-calculus}}
\label{subsect:fire-vsub}
A single $\betav$-step $(\la\var\tm)\val \tobv\! \tm\isub\var\val$ is simulated in $\vsubcalc$ by two steps: $(\la\var\tm)\val \allowbreak\tom \tm\esub\var\val \toe \tm\isub\var\val$, \ie a $\mult$-step that creates a ES, and a $\expo$-step that turns the ES into the meta-level substitution performed by the $\betav$-step.
The simulation of an inert step of $\firecalc$ is instead trickier, because in $\vsubcalc$ there is no rule to substitute an inert term, if it is not a variable. 
The idea is that an inert step $(\la\var\tm)\gconst \toin \tm\isub\var\gconst$ is simulated only by $(\la\var\tm)\gconst \tom \tm\esub\var\gconst$, \ie only by the $\mult$-step that creates the ES, and such a ES will never be fired---so the simulation is up to the unfolding of substitutions containing inert terms (defined right next). Everything works because of the key property of inert terms: they are normal and their substitution cannot create redexes, so it is useless to substitute them. 

  The \emph{unfolding} of a $\vsub$-term $\tm$ is the term $\unf{\tm}$ obtained from $\tm$ by turning ES into 
  meta-level substitutions; it is defined by\doubt{ induction on $\tm \in \Lambda_\vsub$ as follows}:
  \begin{align*}
    \unf{\var} &\defeq \var			& \unf{(\tm\tmtwo)} &\defeq \unf{\tm} \, \unf{\tmtwo} &
    \unf{(\la\var\tm)} &\defeq \la\var\unf{\tm} & \unf{(\tm\esub\var\tmtwo)} &\defeq \unf{\tm}\isub\var{\unf{\tmtwo}}
  \end{align*}
For all $\tm, \tmtwo \in \vsubterms$, $\tm \eqstruct \tmtwo$ implies $\unf{\tm} = \unf{\tmtwo}$.
Also, $\unf{\tm} = \tm$ iff $\tm \in \Lambda$.

In the simulation we are going to show, structural equivalence $\eqstruct$ plays a role. It is used to \emph{clean} the $\vsub$-terms (with ES) obtained by simulation, putting them in a canonical form where ES do not appear among other constructors.

A $\vsub$-term is \emph{\proper} if it has the form $\tmtwo \esub{\var_1}{\gconst_1} \dots \esub{\var_n}{\gconst_n}$ (with  $n \in \nat$), $\tmtwo\in \Lambda$ 
is called the \emph{body}, and $\gconst_1,\ldots,\gconst_n \in \Lambda$ are inert terms.
Clearly, any term (as it is without ES) is \proper.
We first show how to simulate a single fireball step.

\newcounter{l:proj-tof-on-vsub} 
\addtocounter{l:proj-tof-on-vsub}{\value{lemma}}
\begin{lemma}[Simulation of a $\betaf$-Step in $\vsubcalc$]
\label{l:proj-tof-on-vsub}
  Let $\tm, \tmtwo \in \Lambda$.
\NoteProof{lappendix:proj-tof-on-vsub}
  \begin{varenumerate}
    \item\label{p:proj-tof-on-vsub-tobv} If $\tm \tobabs \tmtwo$ then $\tm \tom\toeabs \tmtwo$.

    \item\label{p:proj-tof-on-vsub-toin} If $\tm \toin\tmtwo$ then $\tm \tom\eqstruct \tmthree$, 
  with $\tmthree \!\in\! \vsubterms$ \proper and $\unf\tmthree = \tmtwo$.
  \end{varenumerate}
\end{lemma}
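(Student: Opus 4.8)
The plan is to handle the two points separately but uniformly: in each case I decompose the given $\betaf$-step as $\tm = \evctxp\redex$, where $\redex$ is the fired redex and $\evctx$ is a $\firecalc$-evaluation context, and then simulate the firing of $\redex$ at the hole of $\evctx$. The enabling observation is purely syntactic: the grammar of $\firecalc$-evaluation contexts $\evctx \grameq \ctxhole \mid \tm\evctx \mid \evctx\tm$ (\reffig{fireball-calculus}) is a sub-grammar of that of $\vsubcalc$-evaluation contexts (\reffig{valuesub-calculus}). Hence any top-level simulation of $\redex$ lifts through $\evctx$ by contextual closure, and no $\vsub$-redex is ever spuriously created outside the image of $\redex$.

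For \refpoint{proj-tof-on-vsub-tobv}, write $\tm = \evctxp{(\la\var\tmfour)(\la\vartwo\tmfive)}$ and $\tmtwo = \evctxp{\tmfour\isub\var{\la\vartwo\tmfive}}$. Firing the $\tom$-rule with empty substitution context gives $\tm \tom \evctxp{\tmfour\esub\var{\la\vartwo\tmfive}}$; since the content of the freshly created ES is an abstraction, the $\toeabs$-rule applies (again with empty $\sctx$) and $\evctxp{\tmfour\esub\var{\la\vartwo\tmfive}} \toeabs \evctxp{\tmfour\isub\var{\la\vartwo\tmfive}} = \tmtwo$. Both redexes sit at the hole of $\evctx$, so the two steps are legal $\vsubcalc$-steps and $\tm \tom\toeabs \tmtwo$. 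The target is in $\Lambda$, so no cleaning is needed.

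For \refpoint{proj-tof-on-vsub-toin}, write $\tm = \evctxp{(\la\var\tmfour)\gconst}$ with $\gconst$ inert and $\tmtwo = \evctxp{\tmfour\isub\var\gconst}$. Now I fire only the $\tom$-rule, $\tm \tom \evctxp{\tmfour\esub\var\gconst} \eqdef \tmsix$, and deliberately stop: since $\gconst$ is a (possibly compound) inert term it is in general not a $\vsub$-value, so $\toe$ does not apply to the ES $\esub\var\gconst$ (and substituting an inert term would be useless anyway). The term $\tmsix$ need not be \proper, because the single ES $\esub\var\gconst$ may be nested underneath the application frames of $\evctx$ rather than sitting at top level. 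The idea is to hoist this ES to the top via structural equivalence: choosing $\var$ fresh for $\evctx$ by $\alpha$-conversion (so that the free-variable side conditions are met), each application frame of $\evctx$ is crossed by one instance of the axiom $\tostructapl$ or $\tostructapr$. Iterating along $\evctx$ yields
\begin{equation*}
  \tmsix = \evctxp{\tmfour\esub\var\gconst} \eqstruct \evctxp{\tmfour}\esub\var\gconst \eqdef \tmthree ,
\end{equation*}
and $\tmthree$ is \proper: its body $\evctxp\tmfour$ is in $\Lambda$ and the substituted term $\gconst$ is inert. Finally, unfolding the single top-level ES gives $\unf\tmthree = \unf{\evctxp\tmfour}\isub\var{\unf\gconst} = \evctxp\tmfour\isub\var\gconst$ (as $\evctxp\tmfour,\gconst \in \Lambda$), which coincides with $\evctxp{\tmfour\isub\var\gconst} = \tmtwo$ precisely because $\var$ was chosen not free in $\evctx$. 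Hence $\tm \tom\eqstruct \tmthree$ with $\tmthree$ \proper and $\unf\tmthree = \tmtwo$.

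The only genuinely delicate point is the hoisting step in \refpoint{proj-tof-on-vsub-toin}, which I would isolate as an auxiliary induction on $\evctx$ establishing $\evctxp{\tmfour\esub\var\gconst} \eqstruct \evctxp{\tmfour}\esub\var\gconst$ whenever $\var$ does not occur free in $\evctx$; the base case is trivial and the inductive cases $\tm\evctx$ and $\evctx\tm$ are exactly the axioms $\tostructapr$ and $\tostructapl$. The remaining bookkeeping is routine: keeping the $\alpha$-renaming consistent so that both the side conditions hold and the final unfolding matches $\tmtwo$ on the nose, and recalling that $\eqstruct$ preserves unfolding (so that, if one prefers, $\unf\tmthree = \unf\tmsix$ can be computed before cleaning). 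Everything else follows from the embedding of $\firecalc$-contexts into $\vsub$-contexts and from simulating the fired redex at top level.
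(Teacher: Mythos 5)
Your proof is correct and matches the paper's argument in substance: the paper proves both points by induction on the rewriting step, and that induction is exactly your induction on the evaluation context $\evctx$ in disguise --- its \emph{application left/right} cases for $\toin$ perform precisely your hoisting of the freshly created ES across one application frame via $\tostructapl$/$\tostructapr$ (with the same freshness side conditions and the same use of transitivity of $\eqstruct$ and preservation of unfolding). The only difference is presentational: you fire the $\tom$-step at the hole first and isolate the hoisting $\evctxp{\tmfour\esub\var\gconst} \eqstruct \evctxp{\tmfour}\esub\var\gconst$ as a separate lemma, whereas the paper interleaves the step and the $\eqstruct$-rearrangement along the induction.
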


We cannot simulate derivations by iterating \reflemma{proj-tof-on-vsub}, because the starting term $\tm$ has no ES but the simulation of inert steps introduces ES. 
Hence, we have to generalize 
\reflemma{proj-tof-on-vsub} up to the unfolding of ES. 
In general, unfolding ES is a dangerous operation with respect to (non-)termination, as it may erase a diverging subterm (\eg $\tm \defeq \var\esub\vartwo{\delta\delta}$ is $\vsub$-divergent and $\unf\tm = \var$ is normal). In our case, however, the simulation produces \proper $\vsub$-terms, 
so the unfolding is safe since it can erase only inert terms 
and cannot create, erase, \mbox{nor carry redexes.}

By means of a technical lemma \withproofs{in the appendix}\withoutproofs{(see the appendix in \cite{ourTechReport})} we obtain: 

\newcounter{l:proj-via-unfold} 
\addtocounter{l:proj-via-unfold}{\value{lemma}}
\begin{lemma}[Projection of a $\betaf$-Step on $\tovsub$ via Unfolding]
\label{l:proj-via-unfold}
  Let%
\NoteProof{lappendix:proj-via-unfold}
  $\tm$ be a \proper $\vsub$-term and $\tmtwo$ be a term. 
  \begin{varenumerate}
    \item \label{p:proj-via-unfold-tobv} If $\unf{\tm} \tobabs\! \tmtwo$ then $\tm \tom\toeabs \tmthree$, 
    with $\tmthree \!\in\! \vsubterms$ \proper and $\unf\tmthree \!= \tmtwo$.
    
    \item\label{p:proj-via-unfold-toin} If $\unf{\tm} \toin\! \tmtwo$ then $\tm \tom\eqstruct \tmthree$, 
    with $\tmthree \!\in\! \vsubterms$ \proper and $\unf\tmthree \!= \tmtwo$.

 \end{varenumerate}
\end{lemma}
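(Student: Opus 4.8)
The plan is to reduce both points to the single-step simulation \reflemma{proj-tof-on-vsub}, which already handles genuine $\Lambda$-terms, by \emph{reflecting} the given $\betaf$-step on $\unf\tm$ down to a $\betaf$-step on the \emph{body} of the \proper term $\tm$, and then lifting the resulting $\vsub$-steps back through the explicit substitutions. First I would write $\tm = \sctxp\tmfour$ with $\sctx \defeq \ctxhole\esub{\var_1}{\gconst_1}\cdots\esub{\var_n}{\gconst_n}$ a substitution context, $\tmfour \in \Lambda$ its body, and $\gconst_1,\dots,\gconst_n$ inert. Since the $\gconst_i$ carry no ES, unfolding simply turns $\sctx$ into a meta-level substitution: $\unf\tm = \tmfour\isub{\var_1}{\gconst_1}\cdots\isub{\var_n}{\gconst_n}$. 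The whole point of restricting to \proper terms is exactly that the substituted material is inert, so unfolding it is harmless.

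The crux, and the main obstacle, is a \emph{reflection} (technical) lemma: substituting inert terms for free variables creates no $\betaf$-redex. Precisely, for $\tmfour \in \Lambda$, an inert term $\gconst$, and $\mathsf{x}\in\set{\betaabs,\betain}$, if $\tmfour\isub\var\gconst \Rew{\mathsf x}\tmtwo$ then $\tmfour \Rew{\mathsf x}\tmfourp$ for some $\tmfourp\in\Lambda$ with $\tmtwo = \tmfourp\isub\var\gconst$. Iterating this over $\gconst_1,\dots,\gconst_n$ converts the hypothesis $\unf\tm \Rew{\mathsf x}\tmtwo$ into a step $\tmfour \Rew{\mathsf x}\tmfourp$ on the body with $\tmtwo = \tmfourp\isub{\var_1}{\gconst_1}\cdots\isub{\var_n}{\gconst_n} = \unf{\sctxp\tmfourp}$, of the same kind ($\betaabs$ in point 1, $\betain$ in point 2). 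Proving the reflection lemma rests on two facts. (a) \emph{Fireball reflection}: if $\gconst$ is inert and $\tmfive\isub\var\gconst$ is a fireball then $\tmfive$ is a fireball (contrapositively, substituting inert terms never turns a non-fireball into a fireball), by a routine induction on $\tmfive$ using that inert terms are variable-headed and never abstractions. (b) \emph{Preservation of weak evaluation positions}: by \refprop{open-harmony} inert terms are $\betaf$-normal, and any redex occurring strictly inside an inert term lies under a $\lambda$, hence not at a weak evaluation position; thus the head abstraction of the fired redex cannot come from a substituted inert term, so the redex root lies in the skeleton of $\tmfour$, not under any $\lambda$ of $\tmfour$, and the corresponding redex of $\tmfour$ is again at a weak evaluation position. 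Together with (a) --- which guarantees that the argument of a reflected $\betain$-redex is already inert in $\tmfour$ --- and the standard substitution lemma $\tmsix\isub z \tmfive\isub\var\gconst = \tmsix\isub\var\gconst\isub z{\tmfive\isub\var\gconst}$ (for $z$ fresh), this yields the reflected step and the equality $\tmtwo = \tmfourp\isub\var\gconst$.

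With the body step $\tmfour \Rew{\mathsf x}\tmfourp$ in hand I would invoke \reflemma{proj-tof-on-vsub}. For point 1, $\tmfour \tobabs \tmfourp$ gives $\tmfour \tom\toeabs \tmfourp$ by \reflemmap{proj-tof-on-vsub}{tobv}; since $\sctx$ is a substitution context, hence an evaluation context, contextual closure lifts this to $\tm = \sctxp\tmfour \tom\toeabs \sctxp\tmfourp =: \tmthree$, which is \proper (body $\tmfourp\in\Lambda$, ES still holding the inert $\gconst_i$) and satisfies $\unf\tmthree = \unf{\sctxp\tmfourp} = \tmtwo$, as required. For point 2, $\tmfour \toin \tmfourp$ gives, by \reflemmap{proj-tof-on-vsub}{toin}, $\tmfour \tom\eqstruct \tmfive$ with $\tmfive$ \proper and $\unf\tmfive = \tmfourp$; lifting the $\tom$-step through $\sctx$ by contextual closure and the $\eqstruct$-step through $\sctx$ by closure of $\eqstruct$ under evaluation contexts yields $\tm \tom\eqstruct \sctxp\tmfive =: \tmthree$, again \proper, with $\unf\tmthree = \unf{\sctxp\tmfive} = \tmfourp\isub{\var_1}{\gconst_1}\cdots\isub{\var_n}{\gconst_n} = \tmtwo$. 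The only side-conditions to check while lifting are that the variables $\var_1,\dots,\var_n$ bound by $\sctx$ do not clash with the body step (ensured by $\alpha$-renaming) and that $\unf{\cdot}$ commutes with the inert substitutions of $\sctx$ (immediate from its defining equations); both are routine, so the difficulty is genuinely concentrated in the reflection lemma.
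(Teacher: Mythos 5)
Your proposal is correct and follows essentially the same route as the paper: the paper's proof is exactly your argument with the ``reflect through the inert substitutions'' phase organized as an induction on the number of ES (peeling one $\esub{\var_i}{\gconst_i}$ per inductive step via the reflection lemma, with \reflemma{proj-tof-on-vsub} as base case), whereas you unroll it into reflect-all, project once, lift-all. Your crux lemma is literally the paper's \reflemma{inert-anti-red-bis} (substitution of inert terms does not create $\betaf$-redexes), and your fact (a) is its ingredient \reflemma{inert-anti-sub-bis} --- with the one caveat that for reflecting a $\betain$- (resp.\ $\betaabs$-) step you need the finer anti-substitution statements for inert terms and abstractions separately, not just the fireball version, though the same routine induction you invoke gives all three.
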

\noindent

Via \reflemma{proj-via-unfold} we can now simulate whole derivations (in forthcoming \refthm{sim-f-into-vsubeq}). 

\paragraph{Simulation and Normal Forms.}
The next step towards the equivalence is to relate normal forms in $\firecalc$ (aka fireballs) to those in $\vsubcalc$. The relationship is not perfect, since the simulation does not directly map the former to the latter---we have to work a little bit more. 
First of all, let us characterize the terms in $\vsubcalc$ obtained by projecting normalizing derivations (that always produce a fireball).

\newcounter{l:normal-anti-unfold} 
\addtocounter{l:normal-anti-unfold}{\value{lemma}}
\begin{lemma}
\label{l:normal-anti-unfold}  
  Let%
\NoteProof{lappendix:normal-anti-unfold}
  $\tm$ be a \proper $\vsub$-term. 
  If $\unf{\tm}$ is a fireball, then $\tm$ is $\set{\msym,\expoabs}$-normal and its body is a fireball.
\end{lemma}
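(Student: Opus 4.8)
The plan is to reduce everything to the single fact that substituting \emph{inert} terms for variables neither creates nor destroys the fireball/inert shape. Write the \proper term as $\tm = \tmtwo\esub{\var_1}{\gconst_1}\dots\esub{\var_n}{\gconst_n}$ with body $\tmtwo \in \Lambda$ and inert terms $\gconst_1,\dots,\gconst_n \in \Lambda$. Since $\tmtwo$ and each $\gconst_i$ are ES-free (so that $\unf{\gconst_i} = \gconst_i$ and $\unf{\tmtwo}=\tmtwo$), the definition of unfolding gives $\unf\tm = \tmtwo\isub{\var_1}{\gconst_1}\dots\isub{\var_n}{\gconst_n}$, an iterated meta-level substitution of inert terms into $\tmtwo$. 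Thus the hypothesis is exactly that this iterated substitution is a fireball, and the whole argument rests on how inert substitutions interact with the predicates ``being a fireball'' and ``being inert''.

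First I would prove, by mutual induction on the structure of a term $\tmthree \in \Lambda$, the following \emph{reflection} property, for every inert $\gconst$: if $\tmthree\isub\var\gconst$ is a fireball then so is $\tmthree$, and if $\tmthree\isub\var\gconst$ is inert then so is $\tmthree$. The cases where $\tmthree$ is a variable or an abstraction are immediate. The interesting case is $\tmthree = \tmfour\tmfive$: then $\tmthree\isub\var\gconst = (\tmfour\isub\var\gconst)(\tmfive\isub\var\gconst)$ is an application, so if it is a fireball it must in fact be inert, and an inert application necessarily decomposes as $\gconsttwo\fire$ with $\gconsttwo$ inert and $\fire$ a fireball; hence $\tmfour\isub\var\gconst$ is inert and $\tmfive\isub\var\gconst$ is a fireball, and the induction hypotheses give that $\tmfour$ is inert and $\tmfive$ is a fireball, whence $\tmthree$ is inert (and so a fireball). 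Applying this reflection to the outermost substitution of $\unf\tm$ shows that $\tmtwo\isub{\var_1}{\gconst_1}\dots\isub{\var_{n-1}}{\gconst_{n-1}}$ is a fireball; since every intermediate term stays in $\Lambda$ and each $\gconst_i$ is inert, I peel the substitutions off one at a time and conclude that the body $\tmtwo$ is a fireball.

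It remains to check that $\tm$ is $\set{\msym,\expoabs}$-normal. For $\expoabs$: the only explicit substitutions occurring in $\tm$ are the outer ones $\esub{\var_i}{\gconst_i}$ (both $\tmtwo$ and the $\gconst_i$ being ES-free), and their contents are inert terms, which are never of the shape $\sctxp{\la\vartwo\tmthree}$; hence there is no $\expoabs$-redex. For $\msym$: an $\msym$-redex $\sctxp{\la\vartwo\tmthree}\tmfour$ in an evaluation context requires an abstraction, possibly wrapped in ES, in the function position of an application reachable \emph{without} entering a $\lambda$. A short structural induction shows that a fireball---being either an abstraction, which carries no top-level application and cannot be entered by a weak evaluation context, or an inert term $\var\fire_1\dots\fire_m$, whose spine is variable-headed and whose arguments are again fireballs---contains no abstraction in weak function position, and, being ES-free, is therefore $\set{\msym,\expoabs}$-normal. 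Since $\tmtwo$ is a fireball and each $\gconst_i$ is inert (hence a fireball), while the outer ES sit at the root and are never themselves in the function position of an application, $\tm$ has no $\msym$-redex either. The main obstacle is the reflection property: seeing that inert substitutions cannot manufacture or conceal the fireball/inert shape forces the mutual induction and the case analysis on inert applications; once that is in place, the normality claim is a routine inspection of where ES and abstractions may occur relative to the weak evaluation contexts of $\vsubcalc$.
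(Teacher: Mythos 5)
Your proof is correct and takes essentially the same route as the paper: your ``reflection'' property is exactly the paper's anti-substitution lemma (fireballs and inert terms are closed under anti-substitution of inert terms, \reflemma{inert-anti-sub-bis}, proved there with the same case analysis on inert applications), your peeling of the meta-level substitutions is the paper's induction on the number $n$ of outer ES, and your normality check coincides with the paper's observation that the body and the $\gconst_i$, being ES-free fireballs, are $\mult$-normal, so that $\tm$ can host at most $\expovar$-redexes (when some $\gconst_k$ is a variable). No gaps.
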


Now, a $\set{\msym,\expoabs}$-normal form $\tm$ morally is $\vsub$-normal, as $\toevar$ terminates (\refpropp{basic-value-substitution}{tom-toe-terminates}) and it cannot create $\set{\msym,\expoabs}$-redexes
. 
The part about creations 
is better expressed as a postponement property.

\newcounter{l:toevar-post}
\addtocounter{l:toevar-post}{\value{lemma}}
\begin{lemma}[Linear Postponement of $\toevar$]
\label{l:toevar-post}
  Let%
\NoteProof{lappendix:toevar-post} 
  $\tm, \tmtwo \in \vsubterms$. 
  If $\deriv \colon \tm\tovsub^*\tmtwo$ then $\derivtwo \colon \tm\Rew{\msym,\expoabs}^*\!\toevar^*\!\tmtwo$ with $\sizevsub\derivtwo = \sizevsub\deriv$, $\sizem\derivtwo = \sizem\deriv$, \mbox{$\sizee{\derivtwo} \!= \sizee{\deriv}$ and $\sizeeabs\derivtwo \!\geq \sizeeabs\deriv$.} 
\end{lemma}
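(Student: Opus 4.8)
The plan is to derive the statement from a single-step \emph{swap lemma} and then iterate it with a terminating strategy, keeping track of the step counts throughout. Since $\tovsub = \tom \cup \toeabs \cup \toevar$ and $\Rew{\msym,\expoabs} = \tom \cup \toeabs$, the claim is exactly that $\toevar$ can be postponed after $\Rew{\msym,\expoabs}$. So the heart of the matter is the following local statement: whenever $\tm \toevar \tmthree \Rew{\mathsf x} \tmtwo$ with $\mathsf x \in \set{\msym,\expoabs}$, the two steps can be swapped, in one of two ways --- either (i) $\tm \Rew{\mathsf x} \cdot \toevar \tmtwo$, reproducing the same two rule kinds in the opposite order, or (ii) $\mathsf x = \expoabs$ and $\tm \toeabs \cdot \toeabs \tmtwo$, i.e. the postponed $\toevar$-step mutates into an $\toeabs$-step.

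First I would prove this swap lemma by a case analysis on the relative positions of the two fired redexes. Write the $\toevar$-step as the reduction of an ES $\esub\var{\sctxp\vartwo}$ occurring in an evaluation context, and the $\mathsf x$-step as the reduction of its own redex. The crucial observation is that, evaluation being weak, the ES $\esub\var{\sctxp\vartwo}$ fired by $\toevar$ lies outside all abstractions; hence it cannot sit inside the abstraction duplicated or erased by an $\toeabs$-step, and it is never duplicated by the $\mathsf x$-step. The only way the two steps genuinely interact is the following: $\mathsf x = \expoabs$ fires an ES $\esub\vartwo{\sctxtwop{\la\varthree\tmfour}}$ whose bound variable is exactly the content variable $\vartwo$ of the $\toevar$-redex, with the $\toevar$-ES in its scope. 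Firing $\toevar$ first copies $\vartwo$ into the body, and then $\toeabs$ replaces all these occurrences of $\vartwo$ by the abstraction; firing $\toeabs$ first instead turns the content of the $\toevar$-ES into (a substitution context enclosing) an abstraction, so that the residual of the $\toevar$-redex is now an $\toeabs$-redex --- this is case (ii), and a direct computation shows both orders reach the same $\tmtwo$. In every other configuration --- disjoint redexes, one nested in the body of the other, or the $\mathsf x$-redex lying independently elsewhere --- substituting a \emph{variable} can neither create nor destroy an $\mathsf x$-redex, so the steps commute as in case (i), giving the same result by a routine calculation (mildly delicate only in bookkeeping the substitution contexts at a distance).

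Finally I would iterate the swap lemma. Calling a step-kind in $\set{\msym,\expoabs}$ an $A$-step, a derivation is in the desired form $\Rew{\msym,\expoabs}^*\toevar^*$ exactly when it contains no adjacent pattern ``$\toevar$ then $A$'', and any derivation not yet in this form does contain such a pattern (take the rightmost $\toevar$-step preceding an $A$-step). Rewriting one occurrence of the pattern by the swap lemma strictly decreases the lexicographic measure $(\#\toevar\text{-steps},\, \nu)$, where $\nu$ counts the pairs $(i,j)$ with $i<j$, $a_i$ a $\toevar$-step and $a_j$ an $A$-step: case (i) keeps the first component and decreases $\nu$, while case (ii) strictly decreases the first component. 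Hence the process terminates, producing $\derivtwo \colon \tm \Rew{\msym,\expoabs}^*\toevar^*\tmtwo$. For the quantitative part, each swap preserves the total length, the number of $\mult$-steps, and the total number of $\expo$-steps (case (i): $2\mapsto2$ with unchanged kinds; case (ii): one $\toevar$ plus one $\toeabs$ becomes two $\toeabs$), while it can only turn a $\toevar$-step into an $\toeabs$-step; summing over all swaps yields $\sizevsub\derivtwo = \sizevsub\deriv$, $\sizem\derivtwo = \sizem\deriv$, $\sizee\derivtwo = \sizee\deriv$ and $\sizeeabs\derivtwo \geq \sizeeabs\deriv$, as required. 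The main obstacle is the swap lemma, specifically isolating case (ii) and checking, in the nested and at-a-distance configurations, that the variable substitution commutes cleanly and that the residual of the moved $\toevar$-step is correctly classified.
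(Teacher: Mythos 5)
Your proposal is correct and matches the paper's proof in essence: the paper establishes exactly your swap lemma as two local commutations ($\tm \toevar\tom \tmtwo$ implies $\tm \tom\toevar \tmtwo$, and $\tm \toevar\toeabs \tmtwo$ implies $\tm \toeabs\toe \tmtwo$, where the mutation of the postponed step into $\toeabs$ arises precisely in your case (ii), when the variable substituted by the $\toevar$-step is the one bound by the $\toeabs$-fired ES), and then iterates by induction on $\sizevsub\deriv$. Your explicit lexicographic measure for the iteration is just a more detailed rendering of that induction, and your quantitative bookkeeping per swap is the same as the paper's.
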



The next theorem puts all the pieces together.

\newcounter{thm:sim-f-into-vsubeq} 
\addtocounter{thm:sim-f-into-vsubeq}{\value{theorem}}
\begin{theorem}[Quantitative Simulation of $\firecalc$ in $\vsubcalc$]
\label{thm:sim-f-into-vsubeq}
  Let%
\NoteProof{thmappendix:sim-f-into-vsubeq}
  $\tm, \tmtwo \in \Lambda$.
  If $\deriv \colon \tm \tof^*\! \tmtwo$ then 
  there are $\tmthree, \tmfour \!\in\! \vsubterms$ and $\derivtwo \colon \tm \tovsub^*\! \tmfour$ such that
  
  \begin{varenumerate}
  \item\label{p:sim-f-into-vsubeq-qual} \emph{Qualitative Relationship}: 
  $\tmfour \eqstruct \tmthree$, $\tmtwo = \unf{\tmthree} = \unf{\tmfour}$ and $\tmthree$ is \proper;

  \item \emph{Quantitative Relationship}: \label{p:sim-f-into-vsubeq-quant} 
    \begin{varenumerate}
    \item \label{p:sim-f-into-vsubeq-quant-mult} \emph{Multiplicative Steps:} $\sizef{\deriv} = \sizem{\derivtwo}$;
    \item \label{p:sim-f-into-vsubeq-quant-exp} \emph{Exponential (Abstraction) Steps:} $\sizebabs{\deriv} = \sizeeabs{\derivtwo} = \sizee{\derivtwo}$.
    \end{varenumerate}
    
    \item \emph{Normal Forms}: if $\tmtwo$ is $\betaf$-normal then there exists $\derivfour\colon\tmfour \toevar^* \tmfive$ such that $\tmfive$ is a $\vsub$-normal form and $\sizeevar{\derivfour} \leq \sizem{\derivtwo} - \sizeeabs{\derivtwo}$.
    \end{varenumerate}
\end{theorem}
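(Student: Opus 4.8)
The plan is to establish the qualitative and quantitative relationships (Points~1 and~2) simultaneously by induction on the length of $\deriv$, and then to derive the statement on normal forms (Point~3) separately.

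For the base case, where $\deriv$ has length $0$ and $\tm = \tmtwo$, I would take $\tmthree = \tmfour = \tm$ and let $\derivtwo$ be the empty derivation: since $\tm \in \Lambda$ it is \proper and satisfies $\unf{\tm} = \tm = \tmtwo$, and all step-counts are $0$, so both relationships hold trivially. The heart of the argument is the inductive step. Suppose $\deriv$ factors as $\deriv' \colon \tm \tof^* \tmtwop$ followed by a single step $\tmtwop \tof \tmtwo$. By the induction hypothesis I obtain a \proper term $\tmthreep$, a term $\tmfourp$, and a derivation $\derivtwo' \colon \tm \tovsub^* \tmfourp$ with $\tmfourp \eqstruct \tmthreep$, $\tmtwop = \unf{\tmthreep} = \unf{\tmfourp}$, and the counting identities for $\deriv'$ and $\derivtwo'$. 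Since $\tmthreep$ is \proper and $\unf{\tmthreep} = \tmtwop \tof \tmtwo$, I would feed this last step into \reflemma{proj-via-unfold}: if it is a $\tobabs$-step I get $\tmthreep \tom\toeabs \tmthree$, and if it is a $\toin$-step I get $\tmthreep \tom \eqstruct \tmthree$, in both cases with $\tmthree$ \proper and $\unf{\tmthree} = \tmtwo$.

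It then remains to replay these $\tovsub$-steps on the endpoint $\tmfourp$ of $\derivtwo'$ rather than on $\tmthreep$. This is where the strong bisimulation of $\eqstruct$ with respect to $\tovsub$ (\reflemmap{eqstruct-post-and-term}{locpost}) is crucial: from $\tmfourp \eqstruct \tmthreep$ and each newly produced $\mult$- or $\expoabs$-step out of $\tmthreep$, it yields matching steps of the same kind out of $\tmfourp$, ending in some $\tmfour$ with $\tmfour \eqstruct \tmthree$ (using transitivity of $\eqstruct$ in the inert case, where the projection ends with an $\eqstruct$-step; and since $\eqstruct$ preserves unfoldings, $\unf{\tmfour} = \unf{\tmthree} = \tmtwo$). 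Concatenating these steps to $\derivtwo'$ produces a genuine $\tovsub$-derivation $\derivtwo \colon \tm \tovsub^* \tmfour$, while keeping $\tmthree$ as the \proper $\eqstruct$-representative needed to iterate. For the counts: a $\tobabs$-step contributes exactly one $\mult$- and one $\expoabs$-step, a $\toin$-step contributes exactly one $\mult$-step and no exponential step, and no $\expovar$-step is ever introduced; this yields $\sizef{\deriv} = \sizem{\derivtwo}$ and $\sizebabs{\deriv} = \sizeeabs{\derivtwo} = \sizee{\derivtwo}$, the last equality because $\sizeevar{\derivtwo} = 0$.

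For Point~3, assume $\tmtwo$ is $\betaf$-normal, \ie a fireball by \refprop{open-harmony}. Since $\tmthree$ is \proper with $\unf{\tmthree} = \tmtwo$ a fireball, \reflemma{normal-anti-unfold} gives that $\tmthree$ is $\set{\msym,\expoabs}$-normal, and \reflemmap{eqstruct-post-and-term}{normal} transports this to $\tmfour$ along $\tmfour \eqstruct \tmthree$. I would then let $\derivfour \colon \tmfour \toevar^* \tmfive$ reduce $\tmfour$ to a $\toevar$-normal form, which exists because $\toe$, hence $\toevar$, is strongly normalizing (\refpropp{basic-value-substitution}{tom-toe-terminates}). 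To see that $\tmfive$ is actually $\vsub$-normal, suppose it had a $\tovsub$-redex; being $\toevar$-normal, this would be a $\mult$- or $\expoabs$-step, and the Linear Postponement of $\toevar$ (\reflemma{toevar-post}) applied to $\tmfour \toevar^* \tmfive$ extended by this redex would move a $\set{\msym,\expoabs}$-step to the front, contradicting that $\tmfour$ is $\set{\msym,\expoabs}$-normal. Finally, the bound $\sizeevar{\derivfour} \leq \sizem{\derivtwo} - \sizeeabs{\derivtwo}$ follows from the standard invariant that along $\derivtwo$ each $\mult$-step creates one explicit substitution and each exponential step consumes one, so $\tmfour$ carries exactly $\sizem{\derivtwo} - \sizee{\derivtwo} = \sizem{\derivtwo} - \sizeeabs{\derivtwo} - \sizeevar{\derivtwo}$ explicit substitutions; since every $\toevar$-step removes one substitution and creates none, $\sizeevar{\derivfour}$ is at most this number, which is $\leq \sizem{\derivtwo} - \sizeeabs{\derivtwo}$.

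The main obstacle is the bookkeeping in the inductive step: the projection lemma must be applied to the \proper representative $\tmthreep$, whereas the $\tovsub$-derivation we are building must end at $\tmfourp$ and then at $\tmfour$. Keeping these two endpoints in sync, so that the next induction step again has a \proper term to project from and so that the step-counts are preserved exactly, is precisely what the strong-bisimulation property of $\eqstruct$ buys us, and correctly interfacing it with \reflemma{proj-via-unfold} is the delicate point.
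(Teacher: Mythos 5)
Your proposal is correct and takes essentially the same route as the paper's proof: induction on $\sizef{\deriv}$ interfacing \reflemma{proj-via-unfold} with the strong bisimulation/postponement properties of $\eqstruct$ (\reflemma{eqstruct-post-and-term}) for Points~1--2, and, for Point~3, open harmony plus \reflemma{normal-anti-unfold}, termination of $\toevar$, and the linear postponement of $\toevar$ (\reflemma{toevar-post}) to show that the $\expovar$-normal form $\tmfive$ is in fact $\vsub$-normal. The only cosmetic deviations are that you replay the projected steps from $\tmfourp$ via the local bisimulation, keeping the two endpoints in sync, where the paper composes first and postpones $\eqstruct$ globally, and that your ES-counting argument for the bound $\sizeevar{\derivfour} \leq \sizem{\derivtwo} - \sizeeabs{\derivtwo}$ inlines exactly the invariant (one ES created per $\mult$-step, one consumed per $\expo$-step, no ES inside substituted values when starting from $\tm \in \Lambda$) that the paper packages as \refpropp{basic-value-substitution}{expo-less-than-mult} and applies to the composite derivation.
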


\newcounter{coro:equivalence-vsub-fire-termination}
\addtocounter{coro:equivalence-vsub-fire-termination}{\value{corollary}}
\begin{corollary}[Linear Termination Equivalence of $\vsubcalc$ and $\firecalc$]
\label{coro:equivalence-vsub-fire-termination}
  Let%
\NoteProof{coroAppendix:equivalence-vsub-fire-termination}
  $\tm \in \Lambda$. 
  There is a $\betaf$-normalizing derivation $\deriv$ from $\tm$ iff there is a $\vsub$-normalizing derivation $\derivtwo$ from $\tm$. 
  Moreover, $\sizef{\deriv} \leq \sizevsub{\derivtwo} \leq 2\sizef{\deriv}$, \ie they are linearly related.
\end{corollary}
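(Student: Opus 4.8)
The plan is to read off both the equivalence and the quantitative bounds from the Quantitative Simulation Theorem \refthm{sim-f-into-vsubeq}, using as the only extra ingredient the coincidence of normalization and strong normalization in the two calculi. This coincidence follows from strong confluence (\refpropp{basic-fireball}{strong-confluence} and \refpropp{basic-value-substitution}{strong-confluence}) read as random descent: for a weakly normalizing term all maximal derivations have one and the same length $\sizef{\cdot}$ (resp.\ $\sizevsub{\cdot}$), and hence every derivation from it is bounded by that length. For the left-to-right implication I would take a $\betaf$-normalizing derivation $\deriv\colon \tm \tof^* \tmtwo$, so $\tmtwo$ is a fireball, and feed it to \refthm{sim-f-into-vsubeq}: this yields $\derivtwo\colon \tm \tovsub^*\tmfour$, and since $\tmtwo$ is $\betaf$-normal the Normal Forms clause provides $\derivfour\colon \tmfour\toevar^*\tmfive$ with $\tmfive$ a $\vsub$-normal form. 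As $\toevar$ is an instance of $\tovsub$, the concatenation $\derivthree \colon \tm \tovsub^*\tmfive$ is a $\vsub$-normalizing derivation from $\tm$.

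For the right-to-left implication I would argue by contraposition on the fireball side. Suppose $\tm$ is $\vsub$-normalizing; by strong confluence it is strongly $\vsub$-normalizing and all its $\vsub$-derivations have length at most the length $L$ of its normal-form derivation. If $\tm$ were not $\betaf$-normalizing, then by strong confluence in $\firecalc$ it would not be strongly normalizing either, hence it would admit an infinite $\tof$-derivation; truncating this to a prefix $\deriv$ with $\sizef{\deriv} = L+1$ and applying \refthmp{sim-f-into-vsubeq}{quant-mult} would give $\derivtwo\colon \tm \tovsub^* \tmfour$ with $\sizem{\derivtwo} = \sizef{\deriv} = L+1$, whence $\sizevsub{\derivtwo}\geq L+1 > L$, contradicting the bound. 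Therefore $\tm$ is $\betaf$-normalizing.

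For the inequalities I would do the step counting on the derivation $\derivthree = \derivtwo\,;\,\derivfour$ built in the forward direction. By \refthmp{sim-f-into-vsubeq}{quant-exp} the simulating derivation $\derivtwo$ uses no $\toevar$-step, since $\sizeevar{\derivtwo} = \sizee{\derivtwo}-\sizeeabs{\derivtwo} = 0$, so $\sizevsub{\derivtwo} = \sizem{\derivtwo}+\sizeeabs{\derivtwo} = \sizef{\deriv}+\sizebabs{\deriv}$; the tail $\derivfour$ consists only of $\toevar$-steps, so $\sizevsub{\derivfour} = \sizeevar{\derivfour}\leq \sizem{\derivtwo}-\sizeeabs{\derivtwo} = \sizef{\deriv}-\sizebabs{\deriv}$. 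Adding the two, $\sizef{\deriv}\leq\sizef{\deriv}+\sizebabs{\deriv} = \sizevsub{\derivtwo}\leq\sizevsub{\derivthree}\leq 2\sizef{\deriv}$, and since all $\betaf$- (resp.\ $\vsub$-) normalizing derivations from $\tm$ share the same length by \refpropp{basic-fireball}{number-steps} (resp.\ \refpropp{basic-value-substitution}{number-steps}), this bound transfers to the arbitrary $\deriv$ and $\derivtwo$ of the statement. I expect the main delicacy to lie not in the arithmetic but in the two appeals to strong confluence, namely the uniform length bound $L$ and the passage to an infinite $\tof$-derivation in the converse: both rest on treating strong confluence as random descent, a reading the paper has already flagged but which must be invoked precisely here.
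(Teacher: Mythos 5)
Your proposal is correct and takes essentially the same route as the paper's own proof: the forward direction composes the simulation of \refthm{sim-f-into-vsubeq} with the $\toevar$-normalization supplied by its Normal Forms clause; the backward direction is the paper's contradiction argument (an infinite $\tof$-derivation would project, via $\sizem{\derivtwo} = \sizef{\deriv}$, to a $\tovsub$-derivation exceeding the uniform length bound that strong confluence, read as random descent, imposes on a $\vsub$-normalizing term), which your $L+1$-truncation merely makes explicit; and the length transfer via \refpropp{basic-fireball}{number-steps} and \refpropp{basic-value-substitution}{number-steps} matches the paper's remark that the bounds hold for all normalizing derivations. The only cosmetic difference is in the arithmetic for $\sizevsub{\derivtwo} \leq 2\sizef{\deriv}$: you extract it from the explicit bound $\sizeevar{\derivfour} \leq \sizem{\derivtwo} - \sizeeabs{\derivtwo}$ in the Normal Forms clause, whereas the paper invokes \refpropp{basic-value-substitution}{expo-less-than-mult} (\ie $\sizee{\derivtwo} \leq \sizem{\derivtwo}$ for $\tm \in \Lambda$) directly---both are sound and yield the same bound.
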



The number of $\betaf$-steps in $\firecalc$ is a reasonable cost model for Open CBV \cite{fireballs}. 
Our result implies that 
also \emph{the number of $\mult$-steps in $\vsubcalc$ is a reasonable cost model}, since the number of $\mult$-steps is \emph{exactly} the number of $\betaf$-steps 
. This fact is quite surprising: in $\firecalc$ arguments of $\betaf$-redexes are required to be fireballs, while for $\mult$-redexes there are no restrictions on arguments, and yet in any normalizing derivation 
 their number coincide. Note, moreover, that $\expo$-steps are linear in $\mult$-steps, but only because the initial term has no ES: in general, this is not true.

\subsection{\texorpdfstring{Equivalence of $\shufcalc$ and $\vsubcalc$}{Equivalence of shuffling-calculus and vsub-calculus}}
\label{subsect:shuf-vsub}
A derivation $\deriv\colon \tm \tovm^*\tmtwo$ in $\shufcalc$ is simulated via a projection on multiplicative normal forms in $\vsubcalc$, \ie as a derivation $\mnf\tm \tovsubeq^* \mnf\tmtwo$ (for any $\vsub$-term $\tm$, its multiplicative and exponential normal forms, denoted by $\mnf{\tm}$ and $\enf{\tm}$ respectively, exist and are unique by \refprop{basic-value-substitution}).
Indeed, a $\betashuf$-step of $\shufcalc$ is simulated in $\vsubcalc$ by a $\expo$-step followed by some $\mult$-steps to reach the $\mult$-normal form.
Shuffling rules $\tosigm$ of $\shufcalc$ are simulated by structural equivalence $\eqstruct$ in $\vsubcalc$: applying $\mnf\cdot$ to $((\la\var\tm)\tmtwo)\tmthree \tosl (\la\var(\tm \tmthree))\tmtwo$ we obtain exactly an instance of the axiom $\tostructapl$ defining $\eqstruct$: $\mnf\tm\esub\var{\mnf\tmtwo}\mnf\tmthree \tostructapl (\mnf\tm\mnf\tmthree)\esub\var{\mnf\tmtwo}$ (with the side conditions matching exactly). Similarly, $\tosr$ projects to $\tostructapr$ or $\tostructes$ \mbox{(depending on whether $\val$ in $\tosr$ is a variable or an abstraction). Therefore,}

\newcounter{l:tovm-mproj} 
\addtocounter{l:tovm-mproj}{\value{lemma}}
\begin{lemma}[Projecting a $\vmsym$-Step on $\tovsubeq$ via $\mult$-NF]
\label{l:tovm-mproj}
  Let 
\NoteProof{lappendix:tovm-mproj}
  $\tm, \tmtwo \!\in\! \Lambda$.
  
  \begin{varenumerate}
    \item 
    \label{p:tovm-mproj-sigm-on-eqstruct} If $\tm \tosigm \tmtwo$ then $\mnf\tm \eqstruct \mnf\tmtwo$.
    
    \item 
    \label{p:tovm-mproj-betavm-on-tovsub} If $\tm \tobvm \tmtwo$ then $\mnf\tm \toe\tom^* \mnf\tmtwo$.
  \end{varenumerate}
\end{lemma}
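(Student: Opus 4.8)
The plan is to prove both points by analysing the \emph{root} rewrite steps and then lifting the result through the contextual (balanced) closure. Throughout I rely on three facts about the multiplicative normal form $\mnf\cdot$, all consequences of strong normalization and confluence of $\tom$ (\refpropp{basic-value-substitution}{tom-toe-terminates}): \emph{(i)} $\mnf\cdot$ factors through subterms, $\mnf{(\tm\tmtwo)} = \mnf{(\mnf\tm\,\mnf\tmtwo)}$; \emph{(ii)} an applied abstraction fires and its body is exposed, $\mnf{((\la\var\tm)\tmtwo)} = \mnf\tm\esub\var{\mnf\tmtwo}$, and more generally, for a $\mult$-normal $N$, the term $\mnf{(N\,M)}$ equals $N\,M$ when $N$ is not an abstraction up to a substitution context, and $\sctxp{\mnf\tmthree\esub\vartwo{M}}$ when $N = \sctxp{\la\vartwo\tmthree}$ (with $M$ $\mult$-normal); \emph{(iii)} $\mnf\cdot$ commutes with value substitution, $\mnf{(\tm\isub\var\val)} = \mnf{(\mnf\tm\isub\var{\mnf\val})}$, obtained by lifting $\tm\tom^*\mnf\tm$ and $\val\tom^*\mnf\val$ through the substitution and using uniqueness of the $\mult$-normal form. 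I will also use the auxiliary fact that $\tm\eqstruct\tmtwo$ implies $\mnf\tm\eqstruct\mnf\tmtwo$: since $\eqstruct$ is a strong bisimulation for $\tovsub$ (\reflemmap{eqstruct-post-and-term}{locpost}), match $\tm\tom^*\mnf\tm$ step by step on $\tmtwo$, landing on a $\mult$-normal $\eqstruct$-image of $\mnf\tm$ (normality is preserved by \reflemmap{eqstruct-post-and-term}{normal}), which must be $\mnf\tmtwo$ by uniqueness.

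For the root steps I compute $\mnf\cdot$ on both sides of each rule via \emph{(i)}--\emph{(ii)}. For $\rtosl$, the left side $((\la\var\tm)\tmtwo)\tmthree$ projects to $(\mnf\tm\esub\var{\mnf\tmtwo})\,\mnf\tmthree$ and the right side $(\la\var(\tm\tmthree))\tmtwo$ to $(\mnf\tm\,\mnf\tmthree)\esub\var{\mnf\tmtwo}$ whenever $\mnf\tm$ is not an abstraction up to a substitution context; these are precisely the two sides of $\tostructapl$, whose side condition $\var\notin\fv{\mnf\tmthree}$ is inherited from $\var\notin\fv\tmthree$. (When $\mnf\tm$ is such an abstraction, both projections coincide syntactically, so $\eqstruct$ holds by reflexivity.) Symmetrically, $\rtosr$ projects to $\tostructapr$ when its left value is a variable, and to $\tostructes$ when it is an abstraction; this settles Point~1 at the root. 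For Point~2, the root step $(\la\var\tm)\val \rtobv \tm\isub\var\val$ projects by \emph{(ii)} to $\mnf\tm\esub\var{\mnf\val}$; as $\mnf\val$ is a $\vsub$-value this is an $\toe$-redex firing to $\mnf\tm\isub\var{\mnf\val}$, which by \emph{(iii)} $\tom$-reduces to $\mnf{(\tm\isub\var\val)}$, giving $\mnf{((\la\var\tm)\val)} \toe\tom^* \mnf{(\tm\isub\var\val)}$.

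It remains to lift these root facts through a balanced context by induction on its structure $\mctx\grameq\ctxhole\mid\tm\mctx\mid\mctx\tm\mid(\la\var\mctx)\tm$. The base case is the root analysis. For the layer $(\la\var\mctx')\tm'$, fact \emph{(ii)} gives $\mnf{((\la\var\mctx'\ctxholep{c})\tm')} = \mnf{\mctx'\ctxholep{c}}\esub\var{\mnf\tm'}$, so the inductive relation between the two hole-fillings transports by placing the $\vsub$ evaluation context $\ctxhole\esub\var{\mnf\tm'}$ around it, using that $\eqstruct$, $\toe$ and $\tom$ are all closed under $\vsub$ evaluation contexts. The layer $\tm'\mctx'$ is analogous: by \emph{(ii)} the projection is $\mnf\tm'\,\mnf{\mctx'\ctxholep{c}}$ or $\sctxp{\mnf\tmthree\esub\vartwo{\mnf{\mctx'\ctxholep{c}}}}$ according to the \emph{fixed} shape of $\mnf\tm'$, and in either case the varying subterm sits inside a $\vsub$ evaluation context.

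The delicate layer is $\mctx'\tm'$, and it is the main obstacle: there the hole is in the head of an application, and the step may change the head shape—turning a neutral head into an applied abstraction via the substitution it performs—so naive eval-context transport fails. Using \emph{(i)} I reduce to $\mnf{(\mnf{\mctx'\ctxholep{c}}\,\mnf\tm')}$. For Point~1 the auxiliary fact suffices: the two heads are $\eqstruct$, hence so are the applications, hence so are their $\mult$-normal forms. For Point~2, write $p = \mnf{\mctx'\ctxholep{\tmfour}}$, $q = \mnf{\mctx'\ctxholep{\tmfive}}$ with $p\toe\tom^* q$ by induction, and $r=\mnf\tm'$. Lifting the induction hypothesis through $\ctxhole\,r$ yields $p\,r\toe p''\,r\tom^* q\,r\tom^*\mnf{(q\,r)}$, while also $p\,r\tom^*\mnf{(p\,r)}$. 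I then push the single $\toe$-step past this $\mult$-only derivation by (iterated) strong commutation of $\tom$ and $\toe$ (\refpropp{basic-value-substitution}{tom-toe-commute}), producing $w$ with $\mnf{(p\,r)}\toe w$ and $p''\,r\tom^* w$; since $\mnf{(q\,r)}$ is the $\mult$-normal form of $p''\,r$ and $w$ is a $\mult$-reduct of $p''\,r$, confluence gives $w\tom^*\mnf{(q\,r)}$, hence $\mnf{(p\,r)}\toe\tom^*\mnf{(q\,r)}$, as required. Combining the root analysis with this inductive lifting proves both points.
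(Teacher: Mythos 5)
Your proof is correct, and it takes a genuinely different route from the paper's. The paper factors the whole argument through a simulation lemma stated in $\vsubcalc$ itself (\reflemma{simul-one-shuf-step-on-vsub} in the appendix): a $\tosigm$-step is simulated as a zig-zag $\tm\tom^+\tmthree \eqstruct \tmfour \lRewp{\mult} \tmtwo$ and a $\tobvm$-step as $\tm\tom^+\toe\tmthree \lRewn{\mult} \tmtwo$, where the troublesome balanced layer $(\la\var\mctx)\tmfour$ is dispatched by firing one extra $\tom$-step that turns the balanced context into a substitution context (under which $\tom$, $\toe$ and $\eqstruct$ \emph{are} closed); the projection onto $\mult$-normal forms is then a single abstract step at the end, combining uniqueness of $\mult$-normal forms, postponement/strong bisimulation of $\eqstruct$ (\reflemmap{eqstruct-post-and-term}{globpost} and \reflemmap{eqstruct-post-and-term}{normal}), and strong commutation of $\tom$ and $\toe$ (\refpropp{basic-value-substitution}{tom-toe-commute}). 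You instead normalize first: you develop structural facts about $\mnf{\cdot}$ (factorization through subterms, firing of applied abstractions with re-normalization of the exposed body, commutation with value substitution) and push the induction through the balanced context directly at the level of normal forms, invoking commutation \emph{inside} the induction at the head layer $\mctx\tm'$ and closing the diagram with uniqueness of the $\mult$-normal form---an argument I checked and which is sound, including the iterated-commutation step producing $w$ with $\mnf{(p\,r)} \toe w \tom^* \mnf{(q\,r)}$. Both routes identify the same root correspondences ($\rtosl \mapsto \tostructapl$, $\rtosr \mapsto \tostructapr$ or $\tostructes$ according to the value, $\rtobv \mapsto \toe$ followed by $\tom$-steps, with the side conditions inherited since $\fv{\mnf\tmthree} \subseteq \fv\tmthree$). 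What the paper's factorization buys is that it never reasons about the \emph{shape} of $\mult$-normal forms---your case analysis on whether a normal head is an abstraction up to a substitution context, with its degenerate reflexivity cases, disappears---and commutation is used only once, globally; what your version buys is an explicit, self-contained description of the projection (the $\eqstruct$ axioms are matched on actual normal forms) and reusable lemmas about $\mnf{\cdot}$, at the price of heavier bookkeeping in the root and head-application cases.
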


In contrast to the simulation of $\firecalc$ in $\vsubcalc$, here the projection of a single step can be extended to derivations without problems, obtaining that the number of $\betashuf$-steps in $\shufcalc$ matches exactly the number of $\esym$-steps in $\vsubcalc$. Additionally, we  apply the postponement of $\eqstruct$ (\reflemmap{eqstruct-post-and-term}{globpost}), factoring out the use of $\eqstruct$ (\ie of shuffling rules) without affecting the number of $\expo$-steps.

To obtain the termination equivalence we also need to study normal forms. Luckily, the case of $\shufcalc$ is simpler than that of $\firecalc$, as next lemma shows.

\newcounter{l:normal-vsub-shuffling} 
\addtocounter{l:normal-vsub-shuffling}{\value{lemma}}
\begin{lemma}[Projection 
Preserves Normal Forms]
\label{l:normal-vsub-shuffling}
  Let%
\NoteProof{lappendix:normal-vsub-shuffling}
  $\tm \in \Lambda$. 
  If $\tm$ is $\vmsym$-normal then $\mnf{\tm}$ is $\vsub$-normal.
\end{lemma}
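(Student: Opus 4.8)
The plan is to characterize the $\vmsym$-normal forms syntactically and then to compute $\mnf\cdot$ on them, checking along the way that no exponential redex is ever produced at an evaluation position. Since $\mnf\tm$ is $\mult$-normal by construction, and since $\vsub$-normality means the absence of $\mult$- and $\expo$-redexes in an evaluation context $\evctx$ (which never enters an abstraction), it suffices to show that $\mnf\tm$ contains no $\expo$-redex outside of all $\lambda$'s, i.e.\ no subterm $\tmtwo\esub\var{\sctxp\val}$ occurring in an evaluation context.

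First I would establish a grammar of $\vmsym$-normal forms. The crucial point is the ``morally weak'' behaviour of $\toperm$: since balanced contexts reach under a $\lambda$ only through the clause $(\la\var\mctx)\tm$, the body of an \emph{unapplied} abstraction is frozen (hence arbitrary), whereas the body of an \emph{applied} abstraction is reducible and must therefore be normal. Moreover, the shuffling rules forbid certain configurations: $\tosl$ rules out an application whose left subterm is itself an applied abstraction, $\tosr$ rules out a value applied to an applied abstraction, and $\tobvm$ rules out an abstraction applied to a value. Carrying out the case analysis on applications $\tm_1\tm_2$ yields that the $\vmsym$-normal forms are generated by
\begin{align*}
  n &\grameq \val \mid c \mid (\la\var n)\,c &
  c &\grameq \var\,m \mid c\,n &
  m &\grameq \val \mid c
\end{align*}
where $\val$ ranges over values $\var \mid \la\var\tm$ (with frozen body $\tm$ in the abstraction case) and $c$ denotes the \emph{inert} (variable-headed, non-value) normal forms. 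The feature encoded here is that an applied abstraction $(\la\var n)\,c$ must have a normal body $n$ and an inert argument $c$.

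Next I would compute $\mnf\cdot$ on this grammar, using that $\mnf\cdot$ is well defined by \refpropp{basic-value-substitution}{tom-toe-terminates}. Since $\tom$ is weak, $\mnf{(\la\var\tm)} = \la\var\tm$ leaves frozen bodies untouched; $\mnf{(\var\,m)} = \var\,\mnf m$ and $\mnf{(c\,n)} = \mnf c\,\mnf n$ keep the variable head and create no ES; and for the applied abstraction the top $\mult$-redex fires, exposing the body, so that $\mnf{((\la\var n)\,c)} = \mnf n \esub\var{\mnf c}$. I would then prove by induction on $n$ the strengthened statement: \emph{(1)} $\mnf n$ is $\vsub$-normal, and \emph{(2)} if $n$ is inert (a $c$), then $\mnf n$ is a variable-headed application with no outermost ES, hence not of the form $\sctxp\val$. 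Claim \emph{(2)} is exactly what guarantees that the only ES created by the translation, namely $\esub\var{\mnf c}$ in the applied-abstraction case, has inert content and is therefore \emph{not} an $\expo$-redex; the ES already occurring in $\mnf n$ are handled by the induction hypothesis, and every redex sitting inside a frozen body is under a $\lambda$, hence harmless. Assembling the cases gives \emph{(1)} for arbitrary $n$, which is the statement of the lemma.

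I expect the main obstacle to be the normal-form characterization rather than the projection itself. One must correctly reconcile the reduction allowed under \emph{applied} abstractions with the blocking effect of the shuffling rules, and in particular isolate the invariant separating value-up-to-context contents (which would yield $\expo$-redexes after projection) from inert contents (which do not): the whole point is that in a $\vmsym$-normal form an abstraction can be applied only to an inert term, precisely because $\tobvm$, $\tosl$ and $\tosr$ exclude the applications to values and to answers that would otherwise project to exponential redexes. Once invariant \emph{(2)} is in place, the induction is routine.
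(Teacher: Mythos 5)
Your proposal is correct and takes essentially the same route as the paper's proof: your grammar of $\vmsym$-normal forms is exactly the characterization the paper imports from \cite{DBLP:conf/fossacs/CarraroG14} (with your $c$ and $n$ matching its $\const$ and $\shufnf$), and your computation of $\mnf{\cdot}$ with the key invariant \emph{(2)}---that inert normal forms project to variable-headed applications, never of the form $\sctxp\val$, so the created ES are not $\expo$-redexes---is precisely the paper's mutual induction showing $\mnf{\shufnf}$ lands in the grammar $\shufnf_\vsub\grameq \val \mid \const_\vsub \mid \shufnf_\vsub\esub\var{\const_\vsub}$. The only cosmetic difference is that the paper cites \cite{DBLP:conf/flops/AccattoliP12} for the $\vsub$-normality of the target grammar, whereas you prove it inline within the same induction.
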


The next theorem puts all the pieces together (for any $\shuf$-deri\-vation $\deriv$, $\sizebshuf{\deriv}$ is the number of $\betashuf$-steps in $\deriv$: this notion is well defined by \refpropp{basic-shuffling}{different}).

\newcounter{thm:sim-shuf-into-vsubeq} 
\addtocounter{thm:sim-shuf-into-vsubeq}{\value{theorem}}
\begin{theorem}[Quantitative Simulation of $\shufcalc$ in $\vsubcalc$]
\label{thm:sim-shuf-into-vsubeq}
  Let%
\NoteProof{thmappendix:sim-shuf-into-vsubeq}
  $\tm, \tmtwo \in \Lambda$.
  If $\deriv \colon \tm \toshuf^* \tmtwo$ then there are $\tmthree \in \vsubterms$ and $\derivtwo \colon \tm \tovsub^* \tmthree$ such that
  
  \begin{varenumerate}
    \item\label{p:sim-shuf-into-vsubeq-qual} \emph{Qualitative Relationship}: $\tmthree \eqstruct \mnf{\tmtwo}$;

    \item\label{p:sim-shuf-into-vsubeq-quant} \emph{Quantitative Relationship (Exponential Steps)}: $\sizebshuf{\deriv} = \sizee{\derivtwo}$;
    \item\label{p:sim-shuf-into-vsubeq-normal} \emph{Normal Form}: if $\tmtwo$ is $\shuf$-normal then $\tmthree$ and $\mnf{\tmtwo}$ are $\vsub$-normal.
  \end{varenumerate}
\end{theorem}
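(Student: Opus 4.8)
The plan is to push the entire $\shuf$-derivation $\deriv$ through the multiplicative-normal-form map $\mnf{\cdot}$ using \reflemma{tovm-mproj}, assemble the step-wise projections into one derivation in $\tovsubeq$, and then clean it up via the postponement of structural equivalence. Concretely, write $\deriv$ as $\tm = \tm_0 \toshuf \tm_1 \toshuf \dots \toshuf \tm_n = \tmtwo$ and project each step. A $\betashuf$-step $\tm_i \tobvm \tm_{i+1}$ projects, by \reflemmap{tovm-mproj}{betavm-on-tovsub}, to $\mnf{\tm_i} \toe\tom^* \mnf{\tm_{i+1}}$, contributing exactly one $\expo$-step; a $\sigm$-step $\tm_i \tosigm \tm_{i+1}$ projects, by \reflemmap{tovm-mproj}{sigm-on-eqstruct}, to $\mnf{\tm_i} \eqstruct \mnf{\tm_{i+1}}$, contributing no $\expo$-step. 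Since the target of each projected step is exactly the source of the next, these compose (by a trivial induction on $n$) into a single derivation $\mnf{\tm} \tovsubeq^* \mnf{\tmtwo}$ whose number of $\expo$-steps is the number of $\betashuf$-steps of $\deriv$, namely $\sizebshuf{\deriv}$ (well defined by \refpropp{basic-shuffling}{different}). This composability is precisely what makes the $\shuf$ case cleaner than the $\firecalc$ case: no unfolding machinery is needed, because $\mnf{\cdot}$ threads the projections together directly.

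Next I would account for the starting point. The projected derivation issues from $\mnf{\tm}$, whereas the statement asks for a derivation from $\tm$; since $\tm \in \Lambda$, I prepend its multiplicative normalization $\tm \tom^* \mnf{\tm}$, which exists and is finite by \refpropp{basic-value-substitution}{tom-toe-terminates} and uses no $\expo$-step. This produces $\derivthree \colon \tm \tovsubeq^* \mnf{\tmtwo}$ with $\sizee{\derivthree} = \sizebshuf{\deriv}$. To turn $\derivthree$, which lives in $\tovsubeq$ (i.e. modulo $\eqstruct$), into a genuine $\tovsub$-derivation, I apply the postponement of $\eqstruct$ (\reflemmap{eqstruct-post-and-term}{globpost}): it yields $\tmthree \eqstruct \mnf{\tmtwo}$ and $\derivtwo \colon \tm \tovsub^* \tmthree$ preserving the number of each kind of step, so that $\sizee{\derivtwo} = \sizee{\derivthree} = \sizebshuf{\deriv}$. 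This settles the qualitative relationship (point 1, $\tmthree \eqstruct \mnf{\tmtwo}$) and the quantitative relationship (point 2).

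For the normal-form claim (point 3), assume $\tmtwo$ is $\shuf$-normal. Then \reflemma{normal-vsub-shuffling} gives that $\mnf{\tmtwo}$ is $\vsub$-normal, and since $\tmthree \eqstruct \mnf{\tmtwo}$, preservation of normal forms under $\eqstruct$ (\reflemmap{eqstruct-post-and-term}{normal}) gives that $\tmthree$ is $\vsub$-normal too.

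I expect the only delicate aspect to be the exact bookkeeping of $\expo$-steps rather than the structure of the argument: one must check that the initial $\tm \tom^* \mnf{\tm}$ and the $\tom$-renormalizations following each $\toe$ contribute no $\expo$-step, and that postponement preserves the $\expo$-count on the nose (it preserves both $\sizeeabs$ and $\sizeevar$, hence their sum $\sizee$). Getting these right is what upgrades the relation from a mere inequality to the exact equality $\sizee{\derivtwo} = \sizebshuf{\deriv}$.
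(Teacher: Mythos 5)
Your proposal is correct and follows essentially the same route as the paper's proof: step-wise projection via \reflemma{tovm-mproj} ($\betashuf$-steps to $\toe\tom^*$, $\sigm$-steps to $\eqstruct$), composition into $\mnf\tm \tovsubeq^* \mnf\tmtwo$ with $\sizee$ equal to $\sizebshuf\deriv$, prepending $\tm \tom^* \mnf\tm$, postponement of $\eqstruct$ (\reflemmap{eqstruct-post-and-term}{globpost}), and \reflemma{normal-vsub-shuffling} plus \reflemmap{eqstruct-post-and-term}{normal} for the normal-form claim. The only difference---you prepend the multiplicative normalization before applying postponement, while the paper postpones first and then prepends---is immaterial, since the prefix $\tm \tom^* \mnf\tm$ is already a pure $\tovsub$-derivation with no $\expo$-steps.
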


\newcounter{coro:equivalence-vsub-shuf-termination}
\addtocounter{coro:equivalence-vsub-shuf-termination}{\value{corollary}}
\begin{corollary}[Termination Equivalence of $\vsubcalc$ and $\shufcalc$]
\label{coro:equivalence-vsub-shuf-termination}
  Let%
\NoteProof{coroappendix:equivalence-vsub-shuf-termination}
  $\tm \in \Lambda$. 
  There is a $\vmsym$-normalizing derivation $\deriv$ from $\tm$ iff there is a $\vsub$-normalizing derivation $\derivtwo$ from $\tm$. 
  Moreover, $\sizebshuf{\deriv} = \sizee{\derivtwo}$.
\end{corollary}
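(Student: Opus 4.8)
The plan is to derive the corollary entirely from the quantitative simulation of \refthm{sim-shuf-into-vsubeq}, together with the termination facts already collected for the two calculi, rather than from an inverse simulation. The forward implication is immediate: if $\deriv\colon\tm\toshuf^*\tmtwo$ is $\vmsym$-normalizing, then $\tmtwo$ is $\shuf$-normal, and \refthm{sim-shuf-into-vsubeq} produces a $\vsub$-derivation $\derivtwo\colon\tm\tovsub^*\tmthree$ with $\tmthree$ $\vsub$-normal (by its Normal Form clause) and $\sizebshuf{\deriv}=\sizee{\derivtwo}$ (by its Quantitative clause). Hence $\derivtwo$ witnesses $\vsub$-normalization of $\tm$, already providing the required numerical identity on this side.

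For the converse I would argue by contradiction. Assume $\tm$ is $\vsub$-normalizing but \emph{not} $\vmsym$-normalizing. By the coincidence of normalization and strong normalization in $\shufcalc$ (\refpropp{basic-shuffling}{weak-strong-normalize}), there is an \emph{infinite} $\toshuf$-derivation from $\tm$, and I would split on how many $\betashuf$-steps it contains. If it contains infinitely many, then for every $k$ a finite prefix with exactly $k$ $\betashuf$-steps can be simulated by \refthm{sim-shuf-into-vsubeq}, yielding a $\vsub$-derivation from $\tm$ with $k$ $\expo$-steps, hence of length at least $k$. But $\tm$ is $\vsub$-normalizing and $\tovsub$ is strongly confluent (\refpropp{basic-value-substitution}{strong-confluence}), so $\tm$ is strongly $\vsub$-normalizing and the lengths of its $\vsub$-derivations are uniformly bounded, contradicting arbitrarily large $k$. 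If instead the infinite derivation contains only finitely many $\betashuf$-steps, then past the last one it consists solely of $\tosigm$-steps, contradicting the strong normalization of $\tosigm$ (\refpropp{basic-shuffling}{terminates}). Either way we reach a contradiction, so $\tm$ is $\vmsym$-normalizing.

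It remains to match the step counts for arbitrary normalizing derivations $\deriv$ and $\derivtwo$. Applying \refthm{sim-shuf-into-vsubeq} to the $\vmsym$-normalizing $\deriv$ gives a \emph{particular} $\vsub$-normalizing derivation $\derivtwo'$ with $\sizee{\derivtwo'}=\sizebshuf{\deriv}$; since all $\vsub$-normalizing derivations from $\tm$ share the same number of $\expo$-steps (\refpropp{basic-value-substitution}{number-steps}), we get $\sizee{\derivtwo}=\sizee{\derivtwo'}=\sizebshuf{\deriv}$, as claimed. I expect the genuine difficulty to lie in the converse implication: because the simulation runs only from $\shufcalc$ into $\vsubcalc$, normalization of $\tm$ in $\vsubcalc$ cannot be transported back by rewriting, so divergence in $\shufcalc$ must be excluded indirectly. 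The delicate point is exactly the case analysis on the infinite $\shuf$-derivation, where the quantitative bookkeeping of \refthm{sim-shuf-into-vsubeq} (the $\betashuf$/$\expo$ correspondence) forbids infinitely many $\betashuf$-steps, while the separate termination of the shuffling rules $\tosigm$ disposes of the remaining case.
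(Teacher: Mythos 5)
Your proof is correct and follows essentially the same route as the paper: the forward direction projects the normalizing derivation through the simulation theorem, and the converse argues by contradiction, using strong normalization of $\tosigm$ to force infinitely many $\betashuf$-steps in a diverging derivation and the $\betashuf$/$\expo$ correspondence to contradict (strong) $\vsub$-normalization. You are merely more explicit than the paper on two points it leaves implicit---applying the simulation to finite prefixes of the infinite derivation, and transferring the count $\sizebshuf{\deriv}=\sizee{\derivtwo}$ to \emph{arbitrary} normalizing derivations via \refpropp{basic-value-substitution}{number-steps}, which the paper handles in a remark in \refsect{fireball-vsub} rather than in the proof itself.
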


The obtained quantitative equivalence has an interesting corollary that shows some light on why $\shufcalc$ is not strongly confluent. 
Our simulation maps $\betashuf$-steps in $\shufcalc$ to exponential steps in $\vsubcalc$, that are strongly confluent, and thus in equal number in all normalizing derivations (if any) from a given term. Therefore,

\newcounter{coro:shuffling-number-steps} 
\addtocounter{coro:shuffling-number-steps}{\value{corollary}}
\begin{corollary}[Number of $\betavm$-Steps is Invariant]
\label{coro:shuffling-number-steps}
  All%
\NoteProof{coroappendix:shuffling-number-steps}
  $\shuf$-nor\-ma\-lizing derivations from $\tm \in \Lambda$ (if any) have the same number of $\betashuf$-steps.
\end{corollary}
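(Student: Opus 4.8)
The plan is to transport the counting problem out of $\shufcalc$, which is not strongly confluent, and into $\vsubcalc$, which is. The key observation is that the quantitative simulation already established identifies the number of $\betashuf$-steps of a $\shuf$-normalizing derivation with the number of $\expo$-steps of an associated $\vsub$-normalizing derivation, and the latter count is invariant by strong confluence of $\tovsub$. So rather than reasoning about $\shuf$-derivations directly, I would push them forward and let the rigidity of $\vsubcalc$ do the work.

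Concretely, assume $\tm \in \Lambda$ admits at least one $\shuf$-normalizing derivation (otherwise the statement holds vacuously), and let $\deriv_1 \colon \tm \toshuf^* \tmtwo_1$ and $\deriv_2 \colon \tm \toshuf^* \tmtwo_2$ be two such derivations, so that $\tmtwo_1$ and $\tmtwo_2$ are $\shuf$-normal. First I would apply the quantitative simulation (\refthm{sim-shuf-into-vsubeq}) to each $\deriv_i$: this produces $\vsub$-terms $\tmthree_i$ and derivations $\derivtwo_i \colon \tm \tovsub^* \tmthree_i$ with $\sizebshuf{\deriv_i} = \sizee{\derivtwo_i}$, and, crucially, since each $\tmtwo_i$ is $\shuf$-normal, the normal-form clause of the theorem guarantees that $\tmthree_i$ is $\vsub$-normal. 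Hence $\derivtwo_1$ and $\derivtwo_2$ are genuine $\vsub$-normalizing derivations from the same term $\tm$.

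The final step would invoke the invariance of the number of exponential steps among $\vsub$-normalizing derivations, stated in \refpropp{basic-value-substitution}{number-steps} and itself a consequence of the strong confluence of $\tovsub$: thus $\sizee{\derivtwo_1} = \sizee{\derivtwo_2}$. Chaining the equalities then gives $\sizebshuf{\deriv_1} = \sizee{\derivtwo_1} = \sizee{\derivtwo_2} = \sizebshuf{\deriv_2}$, as desired. Equivalently, one can appeal directly to \refcor{equivalence-vsub-shuf-termination}, which already packages the simulation together with the preservation of normal forms into the single quantitative statement $\sizebshuf{\deriv} = \sizee{\derivtwo}$ for matching normalizing derivations.

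There is no serious obstacle remaining at this stage: all the technical weight sits in the simulation theorem and in the strong confluence of $\vsubcalc$. The one point deserving genuine care is precisely that a $\shuf$-normal endpoint must project to a $\vsub$-normal term; otherwise the $\derivtwo_i$ could be mere prefixes of normalizing derivations, and the invariance of $\expo$-step counts would not apply to them. Granting that, the result is immediate, and it neatly explains why the failure of strong confluence in $\shufcalc$ is harmless for complexity: the discrepancy is confined to the shuffling rules $\tosigm$, while the genuinely computational $\betashuf$-steps are rigidly determined in number.
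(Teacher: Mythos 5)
Your proof is correct and follows essentially the same route as the paper's: project both $\shuf$-normalizing derivations into $\vsubcalc$ via \refthm{sim-shuf-into-vsubeq} (whose normal-form clause guarantees the targets are $\vsub$-normal), then conclude by the invariance of the number of $\expo$-steps given by \refpropp{basic-value-substitution}{number-steps}. The only cosmetic difference is that the paper additionally invokes confluence of $\toshuf$ (\refpropp{basic-shuffling}{confluence}) to identify the two $\shuf$-normal endpoints, a step your argument correctly shows to be unnecessary for the quantitative conclusion.
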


Said differently, in $\shufcalc$ normalizing derivations may have different lengths but the difference is encapsulated inside the shuffling rules $\tosl$ and $\tosr$.


Concerning the cost model, things are subtler for $\shufcalc$. Note that the relationship between $\shufcalc$ and $\vsubcalc$ uses the number of $\expo$-steps, while the cost model (inherited from $\firecalc$) is the number of $\mult$-steps. Do $\esym$-steps provide a reasonable cost model? 
Probably not, because there is a family of terms that evaluate in exponentially more $\mult$-steps than $\expo$-steps. Details are left to a longer version.
\section{\texorpdfstring{Quantitative Equivalence of $\vsubcalc$ and $\vseqcalc$, via $\vsubkcalc$}{Quantitative Equivalence of lambda-vsub and lambda-vseq, via lambda-vsubk}}
\label{sect:kernel}

The quantitative termination equivalence of $\vsubcalc$ and $\vseqcalc$ is shown in two steps: first, we identify a sub-calculus $\vsubkcalc$ of $\vsubcalc$ equivalent to the whole of $\vsubcalc$, and then show that $\vsubkcalc$ and $\vseqcalc$ are equivalent (actually isomorphic). Both steps reuse the  technique of \refsect{fireball-vsub}, \ie simulation plus study of normal forms.

\subsection{\texorpdfstring{Equivalence of $\vsubkcalc$ and $\vsubcalc$}{Equivalence of vsubk and vsub}} 
\label{subsect:vsubk-vsub}
The kernel $\vsubkcalc$ of $\vsubcalc$ is the sublanguage of $\vsubcalc$ obtained by  replacing the application constructor $\tm \tmtwo$ with the restricted form $\tm\val$ where the right subterm can only be a value $\val$---\ie, $\vsubkcalc$ is the language of so-called \emph{administrative normal forms} \cite{DBLP:journals/lisp/SabryF93} of $\vsubcalc$. The rewriting rules are the same of $\vsubcalc$
. 
It is easy to see that $\vsubkcalc$ is stable by $\vsub$-reduction. For lack of space, more details about $\vsubkcalc$ \withoutproofs{are in the appendix of \cite{ourTechReport}}\withproofs{have been moved to Appendix \ref{s:kernel-proofs} (page \pageref{s:kernel-proofs})}. 

The translation $\vsubtoker{(\cdot)}$ of $\vsubcalc$ into $\vsubkcalc$, which simply places the argument of an application into an ES, is defined by \mbox{(note that $\fv{\tm} = \fv{\vsubtoker\tm\!}$ for all $\tm \!\in\! \vsubterms$):}
\begin{align*}
  \vsubtoker{\var} &\defeq \var  &
    \vsubtoker{(\tm\tmtwo)} &\defeq (\vsubtoker\tm \var)\esub\var{\vsubtoker\tmtwo} \ \textup{ where } \var \textup{ is fresh } \\
  \vsubtoker{(\la\var\tm)} &\defeq \la\var\vsubtoker\tm &
  \vsubtoker{\tm \esub\var\tmtwo } &\defeq \vsubtoker\tm\esub\var{\vsubtoker\tmtwo}
\end{align*}



\newcounter{l:vsub-to-ker-sim}
\addtocounter{l:vsub-to-ker-sim}{\value{lemma}}
\begin{lemma}[Simulation] 
\label{l:vsub-to-ker-sim} 
  Let 
\NoteProof{lappendix:vsub-to-ker-sim}
  $\tm, \tmtwo \in \vsubterms$.
\begin{varenumerate}
	\item \emph{Multiplicative}: \label{p:vsub-to-ker-sim-mult}
	if $\tm \tom \tmtwo$ then $\vsubtoker\tm \tom\toevar\eqstruct\vsubtoker\tmtwo$;
	\item \emph{Exponential}: \label{p:vsub-to-ker-sim-exp}
	if $\tm \toeabs \tmtwo$ then $\vsubtoker\tm \toeabs \vsubtoker\tmtwo$, and if $\tm \toevar\tmtwo$ then $\vsubtoker\tm \toevar \vsubtoker\tmtwo$.
	\item \emph{Structural Equivalence}: $\tm \eqstruct\tmtwo$ implies $\vsubtoker\tm \eqstruct \vsubtoker\tmtwo$.
\end{varenumerate}
\end{lemma}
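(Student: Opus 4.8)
The plan is to reduce each claim to the corresponding \emph{root} rewriting rule and then lift it through evaluation contexts. This rests on two routine facts that I would prove first, both by induction on $\tm$. First, \emph{substitution commutes with the translation}: $\vsubtoker{(\tm\isub\var\tmtwo)} = \vsubtoker\tm\isub\var{\vsubtoker\tmtwo}$ for all $\tm,\tmtwo \in \vsubterms$; in the application case the fresh name introduced by $\vsubtoker{(\cdot)}$ is never reached by the substitution, so the equality goes through up to the usual hygiene conventions. Second, \emph{contexts commute with the translation}: for every evaluation context $\evctx$ there is an evaluation context $\vsubtoker\evctx$ with $\vsubtoker{(\evctxp\tmfive)} = \vsubtoker\evctx\ctxholep{\vsubtoker\tmfive}$, and every substitution context $\sctx$ likewise translates into a substitution context $\vsubtoker\sctx$. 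Since both $\tovsub$ and $\eqstruct$ are closed under evaluation contexts, this second fact lets me fire a translated redex inside $\vsubtoker\evctx$ and conclude; hence it suffices to analyse the root rules (and, for the third point, the four generating axioms, the equivalence being preserved by reflexivity, symmetry, transitivity and, via the context lemma, by evaluation-context closure).

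For the \textbf{multiplicative} root rule $\sctxp{\la\var\tm}\tmtwo \rtom \sctxp{\tm\esub\var\tmtwo}$ I would compute $\vsubtoker{(\sctxp{\la\var\tm}\,\tmtwo)} = (\vsubtoker\sctx\ctxholep{\la\var\vsubtoker\tm}\,\varthree)\esub\varthree{\vsubtoker\tmtwo}$ with $\varthree$ fresh. The inner application is a $\mult$-redex at a distance (firing is legal since $\varthree$ is fresh), reducing to $(\vsubtoker\sctx\ctxholep{\vsubtoker\tm\esub\var\varthree})\esub\varthree{\vsubtoker\tmtwo}$, and the freshly created $\esub\var\varthree$ is an $\expovar$-redex whose firing, up to $\alpha$-renaming $\varthree$ back to $\var$, gives $\vsubtoker\sctx\ctxholep{\vsubtoker\tm}\esub\var{\vsubtoker\tmtwo}$. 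By contrast $\vsubtoker{(\sctxp{\tm\esub\var\tmtwo})}$ carries $\esub\var{\vsubtoker\tmtwo}$ innermost, right next to $\vsubtoker\tm$. The two terms differ only in the position of this substitution relative to the substitutions of $\vsubtoker\sctx$, so a sequence of $\tostructcom$ steps bridges them; the side conditions of $\tostructcom$ are exactly $\var \notin \fv{\tmtwo}$ (the at-a-distance proviso of $\rtom$) and freshness of the variables bound by $\sctx$ for $\tmtwo$ and for the entries of $\sctx$ (the Barendregt convention). This delivers $\vsubtoker\tm \tom\toevar\eqstruct\vsubtoker\tmtwo$, as required.

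The \textbf{exponential} root rules are simpler and need no structural equivalence. For $\tm\esub\var{\sctxp{\la\vartwo\tmtwo}} \rtoeabs \sctxp{\tm\isub\var{\la\vartwo\tmtwo}}$ the translation is $\vsubtoker\tm\esub\var{\vsubtoker\sctx\ctxholep{\la\vartwo\vsubtoker\tmtwo}}$, an $\expoabs$-redex at a distance; firing it yields $\vsubtoker\sctx\ctxholep{\vsubtoker\tm\isub\var{\la\vartwo\vsubtoker\tmtwo}}$, which by the substitution and context lemmas equals $\vsubtoker{(\sctxp{\tm\isub\var{\la\vartwo\tmtwo}})}$ on the nose. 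The variable rule $\rtoevar$ is identical, with $\vartwo$ in place of $\la\vartwo\tmtwo$. For \textbf{structural equivalence} I would check the four axioms: $\tostructcom$ and $\tostructes$ translate to a single instance of themselves, using $\fv{\vsubtoker\tmfive} = \fv{\tmfive}$ for the side conditions; $\tostructapr$ translates to one $\tostructes$ step and $\tostructapl$ to a $\tostructapl$ step followed by a $\tostructcom$ step, the extra work coming from the fresh ES that $\vsubtoker{(\cdot)}$ wraps around the argument of an application and that must be commuted past the floated substitution.

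I expect the \textbf{multiplicative case} to be the only real obstacle, precisely because its simulation is not exact: the fresh ES created by translating an application forces the reduct to list its substitutions in the wrong order, and recovering the canonical order requires structural equivalence together with careful tracking of the side conditions and of the $\alpha$-renaming hidden in the $\expovar$-step. Once the substitution and context commutation lemmas are in place, everything else is direct bookkeeping.
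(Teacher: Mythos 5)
Your proof is correct and follows essentially the same route as the paper's: translating substitution contexts through $\vsubtoker{(\cdot)}$, simulating a root $\tom$-step by $\tom\toevar$ (firing the freshly created ES) followed by $\tostructcom$-steps plus an $\alpha$-renaming to relocate the floated substitution, handling $\toeabs$/$\toevar$ directly via the substitution lemma since the translation maps variables to variables and abstractions to abstractions, and checking the four $\eqstruct$-axioms case by case with exactly the decomposition the paper uses (com and es to themselves, $\tostructapr$ to one $\tostructes$-step, $\tostructapl$ to an $\tostructapl$-step followed by a commutation). The only cosmetic differences are that you state the substitution-commutation lemma for arbitrary arguments where the paper needs and proves it only for values, and that you package closure under contexts as a separate commutation lemma where the paper instead performs an induction on the evaluation context in which the step takes place.
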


The translation of a $\vsub$-normal form is not $\vsubk$-normal (\eg $\vsubtoker{(\var\vartwo)} = (\var\varthree)\esub\varthree\vartwo$) but a further exponential normalization provides a $\vsubk$-normal form.

\newcounter{thm:sim-vsub-into-vsubk}
\addtocounter{thm:sim-vsub-into-vsubk}{\value{theorem}}
\begin{theorem}[Quantitative Simulation of $\vsubcalc$ in $\vsubkcalc$]
\label{thm:sim-vsub-into-vsubk} 
  Let%
\NoteProof{thmappendix:sim-vsub-into-vsubk}
  $\tm, \tmtwo \in \vsubterms$.
  If $\deriv \colon \tm \tovsub^* \tmtwo$ then there are $\tmthree \in \vsubkterms$ and $\derivtwo \colon \vsubtoker\tm \tovsubk^* \tmthree$ such that
  
  \begin{varenumerate}
    \item\label{p:sim-vsub-into-vsubk-qual} \emph{Qualitative Relationship}: $\tmthree \eqstruct \vsubtoker\tmtwo$;

    \item\label{p:sim-vsub-into-vsubk-quant} \emph{Quantitative Relationship}: 
    \begin{enumerate}
    	\item \emph{Multiplicative Steps}: $\sizem{\derivtwo} = \sizem{\deriv}$;
	\item \emph{Exponential Steps}: $\sizeeabs\derivtwo = \sizeeabs\deriv$ and $\sizeevar\derivtwo = \sizeevar\deriv + \sizem\deriv$;
    \end{enumerate}
    \item\label{p:sim-vsub-into-vsubk-normal} \emph{Normal Form}: if $\tmtwo$ is $\vsub$-normal then $\tmthree$ is $\mult$-normal \mbox{and $\enf{\!\tmthree\!}$ is $\vsubk$\!-normal.}
  \end{varenumerate}
\end{theorem}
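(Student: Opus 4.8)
The plan is to prove the statement by induction on the length of $\deriv$, simulating one $\tovsub$-step at a time with the single-step simulation (\reflemma{vsub-to-ker-sim}) and absorbing the structural equivalence it produces by using that $\eqstruct$ is a strong bisimulation with respect to $\tovsub$ (\reflemmap{eqstruct-post-and-term}{locpost}). I would strengthen the induction hypothesis so that, after simulating a prefix $\tm \tovsub^* \tmtwo'$, it maintains a derivation $\derivtwo' \colon \vsubtoker\tm \tovsubk^* \tmthree'$ with $\tmthree' \in \vsubkterms$, $\tmthree' \eqstruct \vsubtoker{\tmtwo'}$, and the three step-counts. The base case (empty derivation) is immediate, taking $\tmthree = \vsubtoker\tm$.

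For the inductive step, consider the last step $\tmtwo' \Rew{\mathsf{x}} \tmtwo$ with $\mathsf{x} \in \set{\mult,\expoabs,\expovar}$. \reflemma{vsub-to-ker-sim} yields a reduction out of $\vsubtoker{\tmtwo'}$: namely $\vsubtoker{\tmtwo'} \tom\toevar\eqstruct \vsubtoker\tmtwo$ when $\mathsf{x} = \mult$, and $\vsubtoker{\tmtwo'} \Rew{\mathsf{x}} \vsubtoker\tmtwo$ otherwise. Since $\tmthree' \eqstruct \vsubtoker{\tmtwo'}$, I transfer this reduction onto $\tmthree'$ by iterating the strong bisimulation \reflemmap{eqstruct-post-and-term}{locpost}, obtaining $\tmthree' \tovsubk^* \tmthree$ with the same number and kind of steps and, using transitivity of $\eqstruct$ together with the trailing $\eqstruct$ of the multiplicative case, $\tmthree \eqstruct \vsubtoker\tmtwo$; stability of $\vsubkterms$ under $\vsub$-reduction keeps $\tmthree$ in the kernel. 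The counts update as follows: a $\mult$-step contributes one $\mult$- and one $\expovar$-step, an $\expoabs$-step one $\expoabs$-step, and an $\expovar$-step one $\expovar$-step. Summing over $\deriv$ gives $\sizem\derivtwo = \sizem\deriv$, $\sizeeabs\derivtwo = \sizeeabs\deriv$ and $\sizeevar\derivtwo = \sizeevar\deriv + \sizem\deriv$, establishing the qualitative and quantitative relationships at once.

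It remains to treat the normal-form claim, where I expect the main difficulty. Assume $\tmtwo$ is $\vsub$-normal. For $\mult$-normality of $\tmthree$, since $\eqstruct$ preserves $\mult$-normality (\reflemmap{eqstruct-post-and-term}{normal}) and $\tmthree \eqstruct \vsubtoker\tmtwo$, it suffices to show that $\vsubtoker\tmtwo$ is $\mult$-normal. For $\enf\tmthree$, note that $\tmthree$ was built inside the kernel, so by stability of $\vsubkterms$ under $\vsub$-reduction its exponential normal form $\enf\tmthree$ (which exists by \refprop{basic-value-substitution}) lies in $\vsubkterms$ and is $\expo$-normal; hence $\enf\tmthree$ is $\vsubk$-normal as soon as it is $\mult$-normal. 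By strong bisimulation and strong confluence (\reflemmap{eqstruct-post-and-term}{locpost} and \refprop{basic-value-substitution}) one gets $\enf\tmthree \eqstruct \enf{\vsubtoker\tmtwo}$, so by \reflemmap{eqstruct-post-and-term}{normal} again it is enough to show that $\enf{\vsubtoker\tmtwo}$ is $\mult$-normal.

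Both reduced claims are facts about the translation alone, to be proved by induction on a characterization of $\vsub$-normal forms. The delicate point — and the step I expect to require the most care — is that the translation introduces fresh applied variables and that exponential normalization substitutes values, either of which could \emph{a priori} create a $\mult$-redex. Neither does: the translation commutes with substitution contexts and abstractions, so $\vsubtoker u$ is a substitution context surrounding an abstraction exactly when $u$ is. Consequently a $\mult$-redex occurring in $\vsubtoker\tmtwo$, or created while computing $\enf{\vsubtoker\tmtwo}$, would trace back to an applied abstraction already present in $\tmtwo$, contradicting its $\mult$-normality. I would isolate this preservation property as a separate lemma in the material on $\vsubkcalc$.
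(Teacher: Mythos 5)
Your proposal is correct and follows essentially the paper's own route: points 1--2 are proved, exactly as in the appendix, by induction on $\size\deriv$ combining the single-step simulation (\reflemma{vsub-to-ker-sim}) with the strong bisimulation/postponement of $\eqstruct$ (\reflemmap{eqstruct-post-and-term}{locpost}), and your reduction of point 3 to the $\mult$-normality of $\vsubtoker\tmtwo$ and its preservation along exponential normalization, transferred to $\tmthree$ and $\enf\tmthree$ via $\eqstruct$, is the paper's argument as well. The preservation property you propose to isolate is precisely the paper's \emph{\harmless} invariant---every ES whose content is of the form $\sctxp\val$ has body $\tmthree\valtwo$ with the bound variable not free in $\tmthree$, so exponential steps only substitute values inside values in argument position and cannot create $\mult$-redexes---which holds for translations of $\vsub$-normal forms and is stable under $\toe$ (\reflemma{harmless-preserve}), exactly as your tracing argument requires.
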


Unfortunately, the length of the exponential normalization in \refthmp{sim-vsub-into-vsubk}{normal} cannot be easily bounded, forbidding a precise quantitative equivalence. Note however that turning from $\vsubcalc$ to its kernel $\vsubkcalc$ does not change the number of multiplicative steps: the transformation preserves the cost model.

\newcounter{coro:equivalence-vsub-vsubk-termination}
\addtocounter{coro:equivalence-vsub-vsubk-termination}{\value{corollary}}

\begin{corollary}[Termination and Cost Equivalence of $\vsubcalc$ and $\vsubkcalc$]
\label{coro:equivalence-vsub-vsubk-termination}
  Let
\NoteProof{coroappendix:equivalence-vsub-vsubk-termination}
  $\tm \in \vsubterms$. 
  There exists a $\vsub$-normalizing derivation $\deriv$ from $\tm$ iff there exists a $\vsubk$-normalizing derivation $\derivtwo$ from $\vsubtoker\tm$. 
  Moreover, $\sizem{\deriv} = \sizem{\derivtwo}$.
\end{corollary}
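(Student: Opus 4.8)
The plan is to derive the corollary directly from the quantitative simulation of \refthm{sim-vsub-into-vsubk}, reading its two halves in opposite directions and exploiting that in both $\vsubcalc$ and $\vsubkcalc$ normalization coincides with strong normalization. As a preliminary observation I would record that, since $\vsubkcalc$ is stable under $\vsub$-reduction and shares the very same rules, $\tovsubk$ is exactly the restriction of $\tovsub$ to $\vsubkterms$; consequently a $\vsubk$-term is $\vsubk$-normal iff it is $\vsub$-normal, and $\vsubkcalc$ inherits from \refprop{basic-value-substitution} both strong confluence and the invariance of the number of $\mult$-steps across normalizing derivations. In particular strong normalization and normalization coincide in each calculus, which is what makes the two one-directional arguments suffice.

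For the left-to-right implication I would start from a $\vsub$-normalizing derivation $\deriv \colon \tm \tovsub^* \tmtwo$ and feed it to \refthm{sim-vsub-into-vsubk}, obtaining $\derivtwo' \colon \vsubtoker\tm \tovsubk^* \tmthree$ with $\tmthree$ $\mult$-normal, $\enf\tmthree$ $\vsubk$-normal, and $\sizem{\derivtwo'} = \sizem\deriv$. Prolonging $\derivtwo'$ by the exponential normalization $\tmthree \toe^* \enf\tmthree$ yields a genuinely $\vsubk$-normalizing derivation $\derivtwo$ from $\vsubtoker\tm$; since this appended segment consists of $\expo$-steps only, it adds no $\mult$-steps, so $\sizem\derivtwo = \sizem{\derivtwo'} = \sizem\deriv$. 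This settles one direction together with the cost equality witnessed on this particular pair.

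For the converse I would argue contrapositively via the simulation of \reflemma{vsub-to-ker-sim}: assuming $\vsubtoker\tm$ is $\vsubk$-normalizing, hence $\vsub$-strongly normalizing by the preliminary observation, suppose $\tm$ admitted an infinite $\vsub$-derivation $\tm = \tm_0 \tovsub \tm_1 \tovsub \cdots$, and translate each step into $\vsubtoker{\tm_i} \tovsub^{+} \cdot \eqstruct \vsubtoker{\tm_{i+1}}$. The interposed $\eqstruct$ must be eliminated: I would do so by repeatedly applying that $\eqstruct$ is a strong bisimulation with respect to $\tovsub$ (\reflemmap{eqstruct-post-and-term}{locpost}), splicing the segments into a genuine infinite $\tovsub$-derivation from $\vsubtoker\tm$ and contradicting its strong normalization; hence $\tm$ is $\vsub$-normalizing. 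Finally, the stated equality $\sizem\deriv = \sizem\derivtwo$ for \emph{arbitrary} normalizing derivations follows from the $\mult$-step invariance in each calculus, since the pair built in the first part already realizes the common value.

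The main obstacle I anticipate is precisely this bisimulation bookkeeping in the converse direction—ensuring that the structural-equivalence steps generated by the simulation do not accumulate and can be pushed past the reduction steps to produce a single honest infinite reduction—rather than the cost computation, which is immediate once the two derivations have been aligned.
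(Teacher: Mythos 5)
Your proposal is correct and takes essentially the same route as the paper: the left-to-right direction (project $\deriv$ through \refthm{sim-vsub-into-vsubk}, then append the terminating exponential normalization $\tmthree \toe^* \enf{\tmthree}$, which adds no $\mult$-steps) is literally the paper's argument, and the cost equality is handled the same way. The only difference is cosmetic, in the converse: the paper applies \refthm{sim-vsub-into-vsubk} to finite prefixes of a hypothetical diverging derivation from $\tm$ and gets absurdity from the length comparison, whereas you inline the same content by splicing the one-step simulations of \reflemma{vsub-to-ker-sim} through the strong bisimulation \reflemmap{eqstruct-post-and-term}{locpost} into one honest infinite $\tovsub$-reduction from $\vsubtoker\tm$ --- a sound re-derivation of exactly the two ingredients the theorem's own proof packages, with both versions ultimately resting on normalization coinciding with strong normalization via strong confluence.
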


\subsection{\texorpdfstring{Equivalence of $\vsubkcalc$ and $\vseqcalc$}{Equivalence of vsubk and vseq}}
\label{subsect:vsubk-vseq}
The translation $\tolbarmu{\cdot}$ of $\vsubkcalc$ into $\vseqcalc$ relies on an auxiliary translation $\tolbarmuv{(\cdot)}$ of values and it is defined as follows:
\[ \begin{array}{rcl@{\hspace{1cm}}rcl@{\hspace{1cm}}rcl}
      \tolbarmuv\var &\defeq  &\var
      &
      \tolbarmuv{(\la{\var}\tm)} &\defeq  &\la\var\tolbarmu\tm\\
      \tolbarmu\val &\defeq &\comm{\val}\stempty       
      &   
   \tolbarmu{\tm\val} &\defeq &\append{\tolbarmu\tm}{(\stacker{\tolbarmuv\val}\stempty)} 
   	  
   &\tolbarmu{\tm \esub\var\tmtwo } &\defeq &\append{\tolbarmu\tmtwo}{\mutilde\var \tolbarmu\tm }
 \end{array}\]
Note the subtle mapping of ES to $\mutildesym$: ES correspond to appendings of $\mutildesym$ to the output of the term $\tmtwo$ to be substituted, and not of the term $\tm$ where to substitute.

It is not hard to see that $\vsubkcalc$ and $\vseqcalc$ are actually isomorphic, where the converse translation $\cotransl{(\cdot)}$, that maps values and commands to terms, and environments to evaluation contexts, is given by:
 \[\begin{array}{rcl@{\hspace{1cm}}rcl@{\hspace{1cm}}rcl}
  \cotransl\var & \defeq & \var 
  &	
  \cotransl \stempty & \defeq & \ctxhole
  &
  \cotransl{\comm\val\cotm} & \defeq & \cotransl\cotm\ctxholep{\cotransl\val}
  \\
  \cotransl{(\la\var\cm)} & \defeq & \la\var\cotransl\cm 
  &
  \cotransl {(\stacker\val\cotm)} & \defeq & \cotransl\cotm\ctxholep{\ctxhole \cotransl\val}     
   &
   \cotransl{(\mutilde\var\cm)} & \defeq & \cotransl\cm \esub\var\ctxhole  
 \end{array}\]

 For the sake of uniformity, we follow the same structure of the other weaker equivalences (\ie simulation plus mapping of normal forms, here working smoothly) rather than proving the isomorphism formally. The simulation maps multiplicative steps to $\lambdabar$ steps, whose number, then, is a reasonable cost model for $\vseqcalc$.

\newcounter{l:lbarmut-simulates-vsubk}
\addtocounter{l:lbarmut-simulates-vsubk}{\value{lemma}}
\begin{lemma}[Simulation of $\tovsubk$ by $\tovseq$]
\label{l:lbarmut-simulates-vsubk}
  Let 
  \NoteProof{lappendix:lbarmut-simulates-vsubk}
  $\tm$ and $\tmtwo$ be $\vsubk$-terms.
  
  \begin{varenumerate}
    \item \emph{Multiplicative}: if $\tm \tom \tmtwo$ then 
    $\tolbarmu\tm \tobvmu \tolbarmu\tmtwo$.
    \item \emph{Exponential}: if $\tm \toe \tmtwo$ then 
    $\tolbarmu\tm \tomut \tolbarmu\tmtwo$.
  \end{varenumerate}
\end{lemma}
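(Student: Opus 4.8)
The plan is to prove both statements by a tight one-step-to-one-step simulation, splitting each into a \emph{root case} (the redex at the top level, possibly wrapped in a substitution context $\sctx$) and then lifting to an arbitrary evaluation context $\evctx$ by induction. The point that makes the simulation work is that we are in the kernel $\vsubkcalc$: there the argument of every application, hence of every $\mult$-redex $\sctxp{\la\var\tm}\val$, is a \emph{value} $\val$, which is exactly what the top-level rule $\comm{\la\var\cm}{\stacker\val\cotm}\rtobvmu\comm\val{\append{(\mutilde\var\cm)}\cotm}$ of $\vseqcalc$ requires on its stack.

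For the root case the two computations with $\sctx=\ctxhole$ are direct. Unfolding $\tolbarmu\val=\comm{\tolbarmuv\val}\stempty$ and using the append clause $\append{\comm\val\cotm}{\cotmtwo}=\comm\val{\append\cotm\cotmtwo}$ together with the unit law $\append\stempty\cotm=\cotm$, one gets $\tolbarmu{(\la\var\tm)\val}=\comm{\la\var\tolbarmu\tm}{\stacker{\tolbarmuv\val}\stempty}$, whose unique $\lambdabar$-reduct is $\comm{\tolbarmuv\val}{\mutilde\var\tolbarmu\tm}=\tolbarmu{\tm\esub\var\val}$; symmetrically $\tolbarmu{\tm\esub\var{\la\vartwo\tmtwo}}=\comm{\tolbarmuv{(\la\vartwo\tmtwo)}}{\mutilde\var\tolbarmu\tm}$, whose unique $\mutildesym$-reduct is $\tolbarmu\tm\isub\var{\tolbarmuv{(\la\vartwo\tmtwo)}}$. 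To match the latter with $\tolbarmu{\tm\isub\var{\la\vartwo\tmtwo}}$ I would first establish, by mutual induction on terms and values, a \emph{Substitution Lemma}: $\tolbarmu{\tm\isub\var\val}=\tolbarmu\tm\isub\var{\tolbarmuv\val}$ and $\tolbarmuv{\valtwo\isub\var\val}=\tolbarmuv\valtwo\isub\var{\tolbarmuv\val}$ (the $\toevar$ clause is the same with $\val$ a variable). The substitution contexts $\sctx$ of the at-a-distance rules are then handled uniformly: each enclosing $\esub\vartwo\tmthree$ translates, via $\tolbarmu{\tm\esub\vartwo\tmthree}=\append{\tolbarmu\tmthree}{\mutilde\vartwo\tolbarmu\tm}$, to the appending of a $\mutildesym$-environment into the output of $\tolbarmu\tmthree$, which places the core command (the actual $\vseq$-redex) exactly at a command evaluation context of shape $\cotctxp{\mutilde\vartwo\cmctx}$, and reducing it reproduces the translation of the reduct.

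The lifting to a general $\evctx$ proceeds by induction on $\evctx$. The productions giving $\evctx=\evctxtwo\val'$ and $\evctx=\tm'\esub\var\evctxtwo$ put $\tolbarmu{\evctxtwop{\tmthree}}$ on the left of an append, so they follow from the fact that append on the left preserves reduction, i.e.\ $\cm\tobvmu\cmtwo$ implies $\append\cm\cotm\tobvmu\append\cmtwo\cotm$ (and likewise for $\tomut$), which itself reduces to the root case plus \emph{associativity} of append. The production $\evctx=\evctxtwo\esub\var{\tm'}$ puts the reducing command under a $\mutilde\var$ at the output of $\tolbarmu{\tm'}$, i.e.\ again at a command evaluation context $\cotctxp{\mutilde\var\cmctx}$, so the step lifts there too. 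The remaining argument-position contexts $\tm'\evctx$ with a nontrivial hole do not occur in the kernel, since arguments are values and evaluation is weak; these cases therefore exhaust the contextual closure.

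The main obstacle is the append bookkeeping. The delicate invariant is that the translation of any $\vsubk$-term has a \emph{unique} output occurrence of $\stempty$ (out of all abstractions), that $\append\cm\cotm$ fills precisely that occurrence, and that plugging a $\mutildesym$-environment or a stack there yields a genuine command evaluation context of $\vseqcalc$. Making this precise, and checking that the two reducts coincide on the nose, rests entirely on the associativity and unit laws for append together with the Substitution Lemma; these are routine but must be in place before the two displayed root computations can be threaded through $\sctx$ and $\evctx$.
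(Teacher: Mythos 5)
Your proposal is correct and takes essentially the same route as the paper: the paper likewise proves the Substitution Lemma $\tolbarmu{\tm\isub\var\val}=\tolbarmu\tm\isub\var{\tolbarmuv\val}$ by mutual induction, isolates your ``unique output occurrence of $\stempty$'' invariant as the lemma $\tolbarmu\tm=\cmctxp{\cotctxp\stempty}$ (whence substitution contexts translate to command evaluation contexts, handling the at-a-distance rules), establishes that append commutes with evaluation contexts and preserves reduction steps, and then runs exactly your induction on $\evctx$ with the two displayed root computations. The only cosmetic divergence is that for the case $\evctx=\evctxtwo\esub\var{\tmtwo}$ the paper invokes a separately proved monotonicity of append in its \emph{right} argument (its Append lemma) instead of your decomposition into a command evaluation context of shape $\cotctxp{\mutilde\var\cmctx}$, but the two arguments are interchangeable.
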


\newcounter{thm:sim-vsubk-into-vseq}
\addtocounter{thm:sim-vsubk-into-vseq}{\value{theorem}}
\begin{theorem}[Quantitative Simulation of $\vsubkcalc$ in $\vseqcalc$]
\label{thm:sim-vsubk-into-vseq}
  Let %
\NoteProof{thmappendix:sim-vsubk-into-vseq}
  $\tm$ and $ \tmtwo $ be $ \vsubk$-terms.
  If $\deriv \colon \tm \tovsubk^* \tmtwo$ then there is $\derivtwo \colon \tolbarmu\tm \tolbarmut^* \tolbarmu\tmtwo$ such that
  
  \begin{varenumerate}
    	\item\label{p:sim-vsubk-into-vseq-mult} \emph{Multiplicative Steps}: $\sizem{\deriv} = \sizehole{\lambdabar}{\derivtwo}$ (the number $\lambdabar$-steps in $\derivtwo$);
	\item\label{p:sim-vsubk-into-vseq-exp} \emph{Exponential Steps}: $\sizee\deriv = \sizehole{\mutildesym}\derivtwo$ (the number $\mut$-steps in $\derivtwo$), so $\sizevsubk\deriv = \sizevseq\derivtwo$;
    \item\label{p:sim-vsubk-into-vseq-normal} \emph{Normal Form}: if $\tmtwo$ is $\vsubk$-normal then $\tolbarmu\tmtwo$ is $\vseq$-normal.
  \end{varenumerate}
\end{theorem}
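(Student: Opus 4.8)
The plan is to lift the single-step simulation of \reflemma{lbarmut-simulates-vsubk} to whole derivations by a routine induction on the length of $\deriv$, and to handle the preservation of normal forms (Point~3) separately. The single-step lemma already does the real work: it matches each $\mult$-step with exactly one $\lambdabar$-step and each $\expo$-step with exactly one $\mut$-step, and crucially the $\vseq$-step it produces lands \emph{exactly} on the translation $\tolbarmu{\tmp}$ of the $\vsubk$-reduct $\tmp$, so the translated steps concatenate without any gluing or clean-up. This is what makes the quantitative bookkeeping immediate, and is the reason why—in contrast to the simulations of \refsect{fireball-vsub}—no structural equivalence $\eqstruct$ is needed here.

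For Points~1 and~2 I would argue by induction on the number $k$ of steps in $\deriv \colon \tm \tovsubk^* \tmtwo$. If $k = 0$ then $\tm = \tmtwo$ and $\derivtwo$ is the empty derivation on $\tolbarmu\tm$, with all counters equal to $0$. If $k > 0$, decompose $\deriv$ as $\tm \Rew{\mathsf x} \tmp \tovsubk^* \tmtwo$ with $\mathsf x \in \set{\mult,\expo}$. By \reflemma{lbarmut-simulates-vsubk} the first step becomes a single $\vseq$-step $\tolbarmu\tm \tolbarmut \tolbarmu\tmp$, which is a $\lambdabar$-step when $\mathsf x = \mult$ and a $\mut$-step when $\mathsf x = \expo$; the inductive hypothesis applied to $\tmp \tovsubk^* \tmtwo$ yields $\derivthree \colon \tolbarmu\tmp \tolbarmut^* \tolbarmu\tmtwo$ with the claimed counts, and prepending the translated first step to $\derivthree$ gives the desired $\derivtwo$. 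Since the correspondence is one step to one step and preserves the sort of the step, $\sizem\deriv = \sizehole{\lambdabar}{\derivtwo}$ and $\sizee\deriv = \sizehole{\mutildesym}{\derivtwo}$ follow directly, and then $\sizevsubk\deriv = \sizem\deriv + \sizee\deriv = \sizehole{\lambdabar}{\derivtwo} + \sizehole{\mutildesym}{\derivtwo} = \sizevseq\derivtwo$.

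The only genuinely new ingredient is Point~3, the preservation of normal forms, since a one-directional simulation does not by itself forbid $\tolbarmu\tmtwo$ from containing a $\vseq$-redex that $\tmtwo$ lacks. I would settle it through the converse translation $\cotransl{(\cdot)}$, which is inverse to $\tolbarmu{(\cdot)}$ (so $\cotransl{\tolbarmu\tmtwo} = \tmtwo$) and which, by an argument symmetric to \reflemma{lbarmut-simulates-vsubk}, sends every $\lambdabar$-step to a $\mult$-step and every $\mut$-step to an $\expo$-step; hence a $\vseq$-redex in $\tolbarmu\tmtwo$ would project to a $\vsub$-redex in $\tmtwo = \cotransl{\tolbarmu\tmtwo}$, contradicting $\vsubk$-normality. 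Equivalently, and staying within the paper's ``simulation plus normal forms'' template, one can verify Point~3 directly by a short induction on the syntactic shape of a $\vsubk$-normal form—no function part of an application is an abstraction up to a substitution context, and no explicit substitution holds a value up to a substitution context—checking that the translation of each such shape exposes neither a $\lambdabar$- nor a $\mut$-redex. This is the ``smooth'' case alluded to in the text, so beyond correctly stating the matching normal-form characterizations on the two sides I do not expect any real obstacle.
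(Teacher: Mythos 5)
Your proposal is correct and, in substance, it is the paper's own proof. For Points~1--2 the paper simply observes that they are immediate consequences of \reflemma{lbarmut-simulates-vsubk}---your induction on the length of $\deriv$ just makes that iteration explicit, and your bookkeeping ($\sizem\deriv = \sizehole{\lambdabar}{\derivtwo}$, $\sizee\deriv = \sizehole{\mutildesym}{\derivtwo}$, hence $\sizevsubk\deriv = \sizevseq\derivtwo$) is exactly right, since the lemma sends each step \emph{onto} the translation of the reduct, with no $\eqstruct$-gluing needed. For Point~3, your second route is precisely what the paper does: a direct case analysis on the shape of a $\vsubk$-normal form (value; inert term $\var\,\val_1\ldots\val_k$, whose translation is $\comm\var{\stacker{\stacker{\stacker{\tolbarmuv{\val_1}}{\ldots}}{\tolbarmuv{\val_k}}}{\stempty}}$; substitution $\tmthree\esub\var\gconst$ with $\gconst$ inert), checking in each case that the translation exposes no $\lambdabar$- or $\mut$-redex. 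Your first route---via the converse translation $\cotransl{(\cdot)}$---is the isomorphism argument that the paper explicitly mentions and deliberately declines to formalize: as you state it, it still rests on two unproved facts, namely $\cotransl{\tolbarmu\tmtwo} = \tmtwo$ and that $\cotransl{(\cdot)}$ is itself a backward simulation mapping $\lambdabar$-steps to $\mult$-steps and $\mut$-steps to $\expo$-steps, each requiring an induction comparable to \reflemma{lbarmut-simulates-vsubk}. So the direct shape analysis is the cheaper of your two options and the one to keep; the inverse-translation route buys the stronger statement (the isomorphism of the two calculi) at the cost of those extra lemmas.
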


\newcounter{coro:equivalence-vsubk-vseq-termination}
\addtocounter{coro:equivalence-vsubk-vseq-termination}{\value{corollary}}
\begin{corollary}[Linear Termination Equivalence of $\vsubkcalc$ and $\vseqcalc$]
\label{coro:equivalence-vsubk-vseq-termination}
  Let
\NoteProof{coroappendix:equivalence-vsubk-vseq-termination}
  $\tm $ be a $ \vsubk$-term. 
  There is a $\vsubk$-norma\-lizing derivation $\deriv$ from $\tm$ iff there is a $\vseqsym$-norma\-lizing derivation $\derivtwo$ from $\tolbarmu\tm$. 
  Moreover, $\sizevsubk\deriv = \sizevseq\derivtwo$, $\sizee\deriv = \sizemut\derivtwo$ and $\sizem\deriv = \sizelbar\derivtwo$.
\end{corollary}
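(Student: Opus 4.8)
The plan is to derive this corollary from the quantitative simulation of \refthm{sim-vsubk-into-vseq} together with the fact that in both $\vsubkcalc$ and $\vseqcalc$ weak normalization coincides with strong normalization. The latter coincidence follows from strong confluence: in $\vseqcalc$ it is \refpropp{basic-lambdamu}{strong-confluence}, while in $\vsubkcalc$ it is inherited from $\vsubcalc$ (\refpropp{basic-value-substitution}{strong-confluence}), since $\vsubkterms$ is stable under $\tovsub$ and the rules are the same, so the restriction of a strongly confluent relation to a reduction-closed subset is again strongly confluent. A consequence of strong confluence that I shall use repeatedly is that all normalizing derivations from a fixed source have the same length and the same number of steps of each kind: for $\vsubk$-derivations the quantities $\sizevsubk\deriv$, $\sizem\deriv$, $\sizee\deriv$ are thus functions of the source term alone, and likewise $\sizevseq\derivtwo$, $\sizelbar\derivtwo$, $\sizemut\derivtwo$ for $\vseq$-derivations.

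For the left-to-right implication I would start from a $\vsubk$-normalizing derivation $\deriv\colon \tm \tovsubk^* \tmtwo$ with $\tmtwo$ in $\vsubk$-normal form. Feeding $\deriv$ to \refthm{sim-vsubk-into-vseq} yields a derivation $\derivtwo\colon \tolbarmu\tm \tolbarmut^* \tolbarmu\tmtwo$; by \refthmp{sim-vsubk-into-vseq}{normal} the target $\tolbarmu\tmtwo$ is $\vseq$-normal, so $\derivtwo$ is a $\vseq$-normalizing derivation from $\tolbarmu\tm$. The same theorem already delivers the quantitative equalities $\sizem\deriv = \sizelbar\derivtwo$ and $\sizee\deriv = \sizemut\derivtwo$, whence $\sizevsubk\deriv = \sizem\deriv + \sizee\deriv = \sizelbar\derivtwo + \sizemut\derivtwo = \sizevseq\derivtwo$. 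By the invariance of step-counts noted above, these equalities do not depend on the chosen normalizing derivations and hence hold for the arbitrary $\deriv$ and $\derivtwo$ of the statement.

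The right-to-left implication is where I expect the real work, since there is no reverse simulation at hand; I would argue by contraposition and lift divergence along the translation. Suppose $\tm$ is not $\vsubk$-normalizing. Since weak and strong normalization coincide in $\vsubkcalc$, the term $\tm$ admits an infinite derivation $\tm = \tm_0 \tovsubk \tm_1 \tovsubk \cdots$. Applying \reflemma{lbarmut-simulates-vsubk} to each single step turns every $\tm_i \tovsubk \tm_{i+1}$ into exactly one step $\tolbarmu{\tm_i} \tolbarmut \tolbarmu{\tm_{i+1}}$ (a $\lambdabar$-step for a multiplicative step, a $\mut$-step for an exponential one). Concatenating these, I obtain an infinite $\vseq$-derivation from $\tolbarmu\tm$, so $\tolbarmu\tm$ is not strongly $\vseq$-normalizing, and therefore, again by the coincidence of the two notions but now in $\vseqcalc$, not $\vseq$-normalizing. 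Contraposing gives exactly the missing implication.

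The main obstacle is precisely this backward direction: the equivalence is set up as a one-directional simulation plus mapping of normal forms rather than as the isomorphism, so termination in $\vseqcalc$ cannot be transported back by a direct simulation. The two ingredients that make the contrapositive argument go through are the \emph{per-step} character of \reflemma{lbarmut-simulates-vsubk}, which allows an infinite $\vsubk$-derivation to be lifted step by step rather than only on finite prefixes, and the coincidence of weak and strong normalization in both calculi, which converts the one-sided preservation of infinite reductions into a genuine termination equivalence. Once these are in place, the quantitative equalities $\sizevsubk\deriv = \sizevseq\derivtwo$, $\sizee\deriv = \sizemut\derivtwo$ and $\sizem\deriv = \sizelbar\derivtwo$ are immediate from \refthm{sim-vsubk-into-vseq} and the invariance of step-counts.
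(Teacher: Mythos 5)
Your proof is correct and follows essentially the same route as the paper: the forward direction is the quantitative simulation of \refthm{sim-vsubk-into-vseq} together with its normal-form clause, and the backward direction lifts a diverging $\vsubk$-derivation step by step via \reflemma{lbarmut-simulates-vsubk} to contradict the $\vseq$-normalizability of $\tolbarmu\tm$ (the paper phrases this as a contradiction using \refpropp{basic-lambdamu}{strong-confluence}, you as a contraposition---logically the same argument). Your explicit justification that $\tovsubk$ inherits strong confluence from $\tovsub$, and that step-counts of normalizing derivations are therefore invariant in both calculi, only spells out what the paper leaves implicit.
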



\paragraph{Structural Equivalence for $\vseqcalc$.} The equivalence of $\vsubcalc$ and $\vsubkcalc$ relies on the structural equivalence $\eqstruct$ of $\vsubcalc$, so it is natural to wonder how does $\eqstruct$ look on $\vseqcalc$.
The structural equivalence $\seqbar$ of $\vseqcalc$ is defined as the closure by evaluation contexts of the following axiom
\begin{align*}
\cotctxp{\mutilde\var{\cotctxtwop{\mutilde\vartwo{\cm}}}}
&\seqbar_{\mut\mut} \cotctxtwop{\mutilde\vartwo{\cotctxp{\mutilde\var{\cm}}}}
&\textup{where } \var \notin \fv\cotctxtwo \textup{ and } \vartwo \notin \fv\cotctx.
\end{align*}


As expected, $\seqbar$ has, with respect to $\vseqcalc$, all the properties of $\eqstruct$ (see \reflemma{eqstruct-post-and-term}). 
They are formally stated in \withproofs{Appendix \ref{s:kernel-proofs}, \refprops{seq-to-seqbar}{seqbar-post-and-term}.}\withoutproofs{the appendix of \cite{ourTechReport}, for lack of space.}
%
%
\newcounter{l:seqbar-post-and-term}
\addtocounter{l:seqbar-post-and-term}{\value{lemma}}
\section{Conclusions and Future Work}
\label{conclusions}

This paper proposes Open CBV as a setting halfway between Closed CBV, the simple framework used to model programming languages such as OCaml, and Strong CBV, the less simple setting underling proof assistants such as Coq. 
Open CBV is a good balance: its rewriting theory is simple---in particular it is strongly confluent, as the one of Closed CBV---and it can be iterated under abstractions to recover Strong CBV, which is not possible with Closed CBV.

We compared four representative calculi for Open CBV, developed with different motivations, and showed that they share the same qualitative (termination/divergence) and quantitative (number of steps) properties with respect to termination. 
Therefore, they can be considered as different incarnations of the same immaterial setting, justifying the slogan \emph{Open  CBV}.

The qualitative equivalences carry semantical consequences: \emph{the adequacy of relational semantics} for the shuffling calculus proved in \cite{DBLP:conf/fossacs/CarraroG14} actually gives a semantic (and type-theoretical, since the relational model can be seen as a non-idempotent intersection type system) characterization of normalizable terms for Open CBV, \ie it extends to the other three calculi.
Similarly, the notion of \emph{potential valuability} for Plotkin's CBV $\lambda$-calculus, well-studied in \cite{DBLP:journals/ita/PaoliniR99,DBLP:conf/ictcs/Paolini01,parametricBook,paolini04itrs,paolini11tcs} and recalled at the end of  \refsect{bird}, becomes a robust notion characterizing the same terms in Open (and Strong) CBV.

Quantitatively, we showed that in three out of four calculi for Open CBV, namely $\firecalc$, $\vsubcalc$ and $\vseqcalc$, evaluation takes exactly the same number of $\betaf$-steps, $\mult$-steps and $\lambdabar$-steps, respectively. Since such a number is known to be a reasonable time cost model for $\firecalc$  \cite{fireballs}, the cost model lifts to $\vsubcalc$ and $\vseqcalc$, showing that the cost model is robust, \ie incarnation-independent. 
For the shuffling calculus $\shufcalc$ we obtain a weaker quantitative relationship that does not allow to transfer the cost model. The $\betashuf$-steps in $\shufcalc$, indeed, match $\expo$-steps in $\vsubcalc$, but not $\mult$-steps. Unfortunately, the two quantities are not necessarily polynomially related, since there is a family of terms that evaluate in exponentially more $\mult$-steps than $\expo$-steps (details are left to a longer version). Consequently, $\shufcalc$ is an incarnation more apt to semantical investigations rather than complexity analyses.


	 \paragraph{Future Work.}
This paper is just the first step towards a new, finer understanding of CBV. We plan to purse at the least the following research directions:

\begin{varenumerate}
  \item \emph{Equational Theories.} The four incarnations are termination equivalent but their rewriting rules do not induce the same equational theory. In particular, $\firecalc$ equates more than the others, and probably too much because its theory is not a congruence, \ie it is not stable by context closure. The goal is to establish the relationships between the theories and understand how to smooth the calculi as to make them both \emph{equational} and \emph{termination} equivalent.

  \item \emph{Abstract Machines.} 
  Accattoli and Sacerdoti Coen introduce in \cite{fireballs} \emph{reasonable} abstract machines for Open CBV, that is, implementation schemas whose overhead is proven to be polynomial, and even linear. Such machines are quite complex, especially the linear one. Starting from a fine analysis of the overhead, we are currently working on a simpler approach providing cost equivalent but much simpler abstract machines.

  \item \emph{From Open CBV to Strong CBV.} We repeatedly said that Strong CBV can be seen as an iteration of Open CBV under abstractions. 
  This is strictly true for $\vsubcalc$, $\shufcalc$, and $\vseqcalc$, for which the simulations studied here lift to the strong setting.   On the contrary, the definition of a good strong $\firecalc$ is a subtle open issue. The natural candidate, indeed, is not confluent (but enjoys uniqueness of normal forms) and normalizes more terms than the other calculi for Strong CBV.
  Another delicate point is the design and the analysis of abstract machines for Strong CBV, of which there are no examples in the literature (both \gregoire and Leroy's \cite{DBLP:conf/icfp/GregoireL02} and Accattoli and Sacerdoti Coen's \cite{fireballs} study machines for Open CBV only).
  
  \item \emph{Open Bisimulations.} In \cite{DBLP:conf/lics/Lassen05} Lassen studies open (or normal form) bisimulations for CBV. He points out that his bisimilarity is not fully abstract with respect to contextual equivalence, and his counterexamples are all based on stuck $\beta$-redexes in Na\"ive Open CBV. An interesting research direction is to recast his study in Open CBV and see whether full abstraction holds or not.
\end{varenumerate}



\paragraph{Acknowledgment}
This work has been partially supported by the A*MIDEX project 
ANR-11-IDEX-0001-02 
funded by the ``Investissements d'Avenir''  French
Government program, managed by the French National Research Agency (ANR).

\phantomsection
\addcontentsline{toc}{section}{References}
\bibliographystyle{splncs03}
\bibliography{biblio} 

\withproofs{
\newpage
\appendix
\chapter*{Technical Appendix}
\newcounter{appendix}
\setcounter{appendix}{\value{theorem}}

\section{Rewriting Theory: Definitions, Notations, and Basic Results}
\label{app:rewriting}

Given 
a binary relation $\to_\mathsf{r}$ on a set $I$, the reflexive-transitive (resp.~reflexive; transitive; reflexive-transitive and symmetric) closure of $\to_\mathsf{r}$ is denoted by $\to^*$ (resp.~$\to_\mathsf{r}^=$; $\to_\mathsf{r}^+$; $\simeq_\mathsf{r}$).
The transpose of $\to_\mathsf{r}$ is denoted by \!\!$\lRew{\mathsf{r}}$\!\!.
A (\emph{$\mathsf{r}$-})\emph{derivation $\deriv$ from $\tm$ to~$\tmtwo$}, denoted by $\deriv \colon \tm \to_\mathsf{r}^* \tmtwo$, is a finite sequence $(\tm_i)_{0 \leq i \leq n}$ of elements of $I$ (with $n \in \nat$) s.t.~$\tm = \tm_0$, $\tmtwo = \tm_n$ and $\tm_i \to_\mathsf{r} \tm_{i+1}$ for all $1 \leq i < n$;

The \emph{number of $\mathsf{r}$-steps of a derivation $\deriv$}, \ie its \emph{length}, is denoted by $\size{\deriv}_\mathsf{r} \defeq n$, or simply $\size\deriv$. If $\to_\mathsf{r} \, = \, \to_1 \!\cup \to_2$ with $\to_1 \!\cap \to_2 \, = \emptyset$, $\size{\deriv}_i$ is the number of $\to_i$-steps in $\deriv$, for $i = 1,2$.
We say that:
\begin{itemize}
  \item $\tm \!\in\! I$ is \emph{$\mathsf{r}$-normal} or a \emph{$\mathsf{r}$-normal form} if 
  $\tm \not\to_\mathsf{r} \tmtwo$ for all $\tmtwo \!\in\! I$;
  $\tmtwo \in I$ is a \emph{$\mathsf{r}$-normal form of} $\tm 
  $ if $\tmtwo$ is $\mathsf{r}$-normal and $\tm \to_\mathsf{r}^* \tmtwo$; 
  \item $\tm \in I$ is \emph{$\mathsf{r}$-normalizable} if there is a $\mathsf{r}$-normal $\tmtwo \in I$ s.t. $\tm \to_\mathsf{r}^* \tmtwo$;
  $\tm$ is \emph{strongly $\mathsf{r}$-normalizable} if there is no infinite sequence $(\tm_i)_{i \in \nat}$ s.t. $\tm_0 = \tm$ and $\tm_i \to_\mathsf{r} \tm_{i+1}$;
  \item a $\mathsf{r}$-derivation $\deriv \colon \tm \to_\mathsf{r}^* \tmtwo$ is (\emph{$\mathsf{r}$-})\emph{normalizing} if $\tmtwo$ is $\mathsf{r}$-normal;
  \item $\to_\mathsf{r}$ is \emph{strongly normalizing} if all $\tm \!\in\! I$ is strongly $\mathsf{r}$-normalizable;
  \item $\to_\mathsf{r}$ is \emph{strongly confluent} if, for all $\tm, \tmtwo, \tmthree \in\! I$ s.t.~$\tmthree \lRew{\mathsf{r}} \tm \to_\mathsf{r} \tmtwo$ and $\tmtwo \neq \tmthree$, there is $\tmfour \in I$ s.t. $\tmthree \to_\mathsf{r} \tmfour \lRew{\mathsf{r}} \tmtwo$;
  $\to_\mathsf{r}$ is \emph{confluent} \mbox{if $\to_\mathsf{r}^*\!$ is strongly confluent.}
\end{itemize}

Let $\to_1, \to_2 \, \subseteq I \times I$. 
Composition of relations is denoted by juxtaposition: for instance, $\tm \to_1\to_2 \tmtwo$ means that there is $\tmthree \in I$ s.t. $\tm \to_1 \tmthree \to_2 \tmtwo$; for any $n \in \nat$, $\tm \to_1^n \tmtwo$ means that there is a $\to_1$-derivation with length $n$ ($\tm = \tmtwo$ for $n = 0$).
We say that $\to_1$ and $\to_2$ \emph{strongly commute} if, for any $\tm, \tmtwo, \tmthree \in I$ s.t. $\tmtwo \lRew{1} \tm \to_2 \tmthree$, one has $\tmtwo \neq \tmthree$ and there is $\tmfour \in I$ s.t. $\tmtwo \to_2 \tmfour \lRew{1} \tmthree$.
Note that if $\to_1$ and $\to_2$ strongly commute and $\to \, = \, \to_1 \!\cup \to_2$, then for any derivation $\deriv \colon \tm \to^* \tmtwo$ the sizes $\size{\deriv}_1$ and $\size{\deriv}_2$ are uniquely determined.

The following proposition collects some basic and well-known results of rewriting theory.

\begin{proposition}\label{prop:basic-confluence}
  Let $\to_\mathsf{r}$ be a binary relation on a set $I$.
  \begin{enumerate}
    \item\label{p:basic-confluence-unique-normal} If $\to_\mathsf{r}$ is confluent then: 
    \hfill
    \begin{enumerate}
      \item every $\mathsf{r}$-normalizable term has a unique $\mathsf{r}$-normal form;
      \item\label{p:basic-confluence-equivalent} for all $\tm, \tmtwo \in I$, $\tm \simeq_\mathsf{r} \tmtwo$ iff there is $\tmthree \in I$ s.t.~$\tm \to_\mathsf{r}^* \tmthree \lRewn{\mathsf{r}} \tmtwo$.
    \end{enumerate}
    \item\label{p:basic-confluence-strong} If $\to_\mathsf{r}$ is strongly confluent then $\to_\mathsf{r}$ is confluent and, for any $\tm \in I$, one has:
    \begin{enumerate}
      \item\label{p:basic-confluence-strong-length} all normalizing $\mathsf{r}$-derivations from $\tm$ have the same length;
      \item\label{p:basic-confluence-strong-normalizable} $\tm$ is strongly $\mathsf{r}$-normalizable if and only if $\tm$ is $\mathsf{r}$-normalizable.
    \end{enumerate}
  \end{enumerate}
\end{proposition}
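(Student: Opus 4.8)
The plan is to treat the two halves separately, since the first uses only plain confluence whereas the second relies on the one-step (diamond) form of strong confluence as its engine. For Part~1 I would first record the obvious fact that a normal form reduces only to itself: if $\tm$ is $\mathsf{r}$-normal and $\tm \to_\mathsf{r}^* \tmtwo$, then $\tm = \tmtwo$ (induct on the length; a single step out of a normal form is impossible). For uniqueness of normal forms, given two normal forms $\tmtwo, \tmthree$ of $\tm$, confluence applied to the peak $\tmtwo \lRewn{\mathsf{r}} \tm \to_\mathsf{r}^* \tmthree$ yields a common reduct $\tmfour$ with $\tmtwo \to_\mathsf{r}^* \tmfour$ and $\tmthree \to_\mathsf{r}^* \tmfour$; by the previous remark $\tmtwo = \tmfour = \tmthree$. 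For the characterization of $\simeq_\mathsf{r}$, the right-to-left direction is immediate, since $\simeq_\mathsf{r}$ contains $\to_\mathsf{r}^*$ and its transpose and is transitive. The left-to-right direction is the Church--Rosser argument: expand $\tm \simeq_\mathsf{r} \tmtwo$ as a finite zig-zag $\tm = \tmfive_0, \dots, \tmfive_k = \tmtwo$ of forward and backward steps and induct on $k$, at each extension using confluence to merge the new peak into the common reduct obtained so far (backward steps are absorbed directly, forward steps require one application of the diamond on $\to_\mathsf{r}^*$).

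For Part~2 I would begin by deriving confluence (the diamond on $\to_\mathsf{r}^*$) from strong confluence (the diamond on single steps). This is the classical tiling argument: first a strip lemma stating that a peak $\tmtwo \lRew{\mathsf{r}} \tm \to_\mathsf{r}^* \tmthree$ can be closed, proved by induction on the length of the right leg and closing each elementary square by strong confluence (the side condition $\tmtwo \neq \tmthree$ in the hypothesis is harmless, since coinciding one-step reducts are already joined); confluence then follows by a second induction, on the length of the left leg, stacking strips. With confluence in hand, the uniqueness of normal forms from Part~1 becomes available.

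Next I would prove that all normalizing derivations from $\tm$ have the same length, by induction on the length $n$ of one such derivation $\tm \to_\mathsf{r}^n \tmtwo$ with $\tmtwo$ normal. If $n = 0$ then $\tm$ is normal and the only normalizing derivation is empty. If $n \geq 1$, write $\tm \to_\mathsf{r} \tmfour \to_\mathsf{r}^{n-1} \tmtwo$ and take any normalizing derivation $\tm \to_\mathsf{r} \tmfour' \to_\mathsf{r}^{m-1} \tmthree$; when $\tmfour = \tmfour'$ the induction hypothesis at $\tmfour$ gives $m = n$, and when $\tmfour \neq \tmfour'$, strong confluence produces $\tmfive$ with $\tmfour \to_\mathsf{r} \tmfive \lRew{\mathsf{r}} \tmfour'$, while confluence guarantees $\tmfive$ reduces to the common normal form, so the induction hypothesis applied at $\tmfour$ and then at $\tmfour'$ forces both legs to have length $n$. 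Finally, for the equivalence of normalization and strong normalization, the forward direction is general (any greedily chosen reduction out of a strongly normalizing term is finite and ends in a normal form). For the converse, assuming $\tm$ normalizable, I would argue by infinite descent: every term on a putative infinite reduction $\tm = \tmfive_0 \to_\mathsf{r} \tmfive_1 \to_\mathsf{r} \cdots$ is normalizable by confluence, and the equal-length result just proved shows that the common length of its normalizing derivations strictly decreases along the sequence, which is impossible in $\nat$.

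The main obstacle is concentrated in Part~2: the passage from the one-step diamond to full confluence (the strip lemma and its tiling) and, above all, the equal-length statement, where one must carefully handle the case of two distinct first steps and invoke both strong confluence and the already-established uniqueness of normal forms to keep the nested inductions aligned. The remaining items are routine once these are in place.
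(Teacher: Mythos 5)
Your proposal is correct, but there is nothing in the paper to compare it against: the paper states this proposition as collecting ``basic and well-known results of rewriting theory'' and gives no proof at all (it is used as background, with the proof of the related Lemma~\ref{l:hindley-rosen} also only referenced to Barendregt). So your attempt effectively supplies the folklore proof the paper omits, and it does so along the canonical lines: normal forms reduce only to themselves, hence uniqueness from confluence; the Church--Rosser zig-zag induction for the characterization of $\simeq_\mathsf{r}$; a strip lemma plus tiling to pass from the one-step diamond to confluence; a nested induction for the equal-length claim; and infinite descent for the equivalence of normalization and strong normalization. Two points deserve emphasis, and you handle both correctly. First, the paper's definition of strong confluence is the \emph{strict} diamond --- when the one-step reducts $\tmtwo$ and $\tmthree$ differ they must be joined in \emph{exactly one} step on each side --- and this strictness is precisely what makes Part~2(a) true (with a diamond allowing zero-step closings the equal-length property fails); your remark that the side condition $\tmtwo \neq \tmthree$ is harmless, and your use of the one-step join in the case of two distinct first steps, exploit this correctly. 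Second, in that same case analysis there is a small edge case you gloss over: when $n = 1$ the first reduct $\tmfour$ is already normal, so the diamond cannot actually be applied --- but then its conclusion would force a step out of a normal form, a contradiction showing that a distinct first step $\tmfour'$ cannot exist, which is consistent with (indeed implied by) the contradiction your induction hypothesis produces. With that caveat noted, the argument is complete: in particular your chain --- IH at $\tmfour$ gives that $\tmfive$ normalizes in $n-2$ steps, hence $\tmfour'$ has a normalizing derivation of length $n-1$, hence IH at $\tmfour'$ yields $m = n$ --- is exactly right, as is the observation in 2(b) that every term along a putative infinite reduction inherits the unique normal form by confluence, so the common length of its normalizing derivations strictly decreases in $\nat$.
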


As all incarnations of Open CBV we consider are confluent, the use of \refpropp{basic-confluence}{unique-normal} is left implicit.

For $\firecalc$ and $\vsubcalc$, we use \refpropp{basic-confluence}{strong} and the following more informative version of Hindley--Rosen Lemma, whose proof is just a more accurate reading of the proof in \cite[Prop.\,3.3.5.(i)]{Barendregt84}:

\begin{lemma}[Strong Hindley--Rosen]\label{l:hindley-rosen}
  Let $\to \, = \, \to_1 \cup \to_2$ be a binary relation on a set $I$ s.t. $\to_1$ and $\to_2$ are strongly confluent.
  If $\to_1$ and $\to_2$ strongly com\-mute, then $\to$ is strongly confluent and, 
 for any $\tm \in I$ and any normalizing derivations $\deriv$ and $\derivtwo$ from $\tm$, one has $\size\deriv = \size\derivtwo$, $\size{\deriv}_1 = \size{\derivtwo}_1$ and $\size{\deriv}_2 = \size{\derivtwo}_2$.
\end{lemma}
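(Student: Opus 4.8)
The plan is to first show that $\to$ is strongly confluent by a local case analysis on the types of the two diverging steps, and then to read off the two counting statements from the shape of the closing diagrams: the total length from the general rewriting machinery, and the finer per-type counts from a diagram-tiling argument whose elementary tiles preserve step-types.

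For strong confluence, I would take a peak $\tmthree \lRew{} \tm \to \tmtwo$ with $\tmtwo \neq \tmthree$ and split on whether each step is a $\to_1$- or a $\to_2$-step. If both are $\to_1$-steps (resp.\ both $\to_2$-steps), then, since $\tmtwo \neq \tmthree$, the strong confluence of $\to_1$ (resp.\ $\to_2$) supplies a $\tmfour$ with $\tmthree \to_1 \tmfour \lRew{1} \tmtwo$ (resp.\ with $\to_2$), and these are $\to$-steps. If the two steps have different types, say $\tmthree \lRew{1} \tm \to_2 \tmtwo$, then strong commutation applied to this span yields a $\tmfour$ with $\tmthree \to_2 \tmfour \lRew{1} \tmtwo$; the symmetric mixed case is identical. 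In every case the peak closes with exactly one $\to$-step on each side, so $\to$ is strongly confluent.

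The key observation for the counting is that every closing diamond preserves, around the square, the multiset of step-types. In the monochromatic cases the closing step on each side has the type of the original steps. In the mixed case the step issued from $\tmthree$ has the type of the step reaching $\tmtwo$ and the step issued from $\tmtwo$ has the type of the step reaching $\tmthree$: the types are swapped, so the two routes $\tm \to_1 \to_2 \tmfour$ and $\tm \to_2 \to_1 \tmfour$ each carry one step of each type. Moreover strong commutation forbids a mixed span from being degenerate, as it asserts $\tmtwo \neq \tmthree$; hence a degenerate peak $\tmtwo = \tmthree$ is necessarily monochromatic, and in particular $\to_1 \cap \to_2 = \emptyset$, which is what makes $\size{\deriv}_1$ and $\size{\deriv}_2$ well defined on a single derivation. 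The equality $\size{\deriv} = \size{\derivtwo}$ for two normalizing derivations from $\tm$ is then immediate from the strong confluence of $\to$ together with \refproppp{basic-confluence}{strong}{length}.

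For the refined equalities $\size{\deriv}_1 = \size{\derivtwo}_1$ and $\size{\deriv}_2 = \size{\derivtwo}_2$ I would tile the confluence diagram of $\deriv$ and $\derivtwo$ by elementary closing diamonds, taking $\deriv$ as the top edge and $\derivtwo$ as the left edge and filling the rectangle one tile at a time; the standard induction ordered by the lengths of the two derivations guarantees termination, degenerate tiles merely flattening the rectangle. The invariant maintained is that the two boundary routes of every completed sub-rectangle carry the same number of $\to_1$-steps and of $\to_2$-steps, and by the type-preservation of the elementary diamonds noted above, gluing a new tile preserves this invariant. Since $\deriv$ and $\derivtwo$ are normalizing and the normal form is unique, the bottom and right edges of the final rectangle are empty, so the top route $\deriv$ and the left route $\derivtwo$ already have equal per-type counts. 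The main obstacle I expect is exactly the bookkeeping around degenerate diamonds $\tmtwo = \tmthree$: one must check that such collapses---necessarily monochromatic---still respect the per-type invariant and do not desynchronize the tiling, which is precisely the point where the more accurate reading of the Hindley--Rosen proof is required.
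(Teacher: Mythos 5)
Your proof is correct and is precisely the ``more accurate reading'' of Barendregt's Hindley--Rosen argument that the paper invokes by citation without spelling out: strong confluence of $\to_1 \cup \to_2$ via the monochromatic/mixed case split, and the per-type step counts via tiling with elementary diamonds that preserve the multiset of step types (same types in the monochromatic tiles, swapped types in the mixed ones). Your side observations also match the paper: strong commutation makes mixed peaks non-degenerate, hence degenerate tiles are monochromatic and $\to_1 \cap \to_2 = \emptyset$, which is exactly the remark the paper states just before the lemma to justify that $\size{\deriv}_1$ and $\size{\deriv}_2$ are well defined.
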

\section{Omitted Proofs}
\label{app:omitted}

\subsection{Proofs of Section~\ref{sect:bird} (Incarnations of Open Call-by-Value)}

\paragraph{Na\"ive Open CBV: Plotkin's Calculus $\plotcalc$}

\begin{remark}\label{rmk:value-betav-normal}
  Since $\tobv$ does not reduce under $\l$'s, any value is $\betav$-normal, and so $\betavar$-normal and $\betaabs$-normal, as $\tobvar, \tobabs \, \subseteq \, \tobv$.
\end{remark}

\setcounter{propositionAppendix}{\value{prop:basic-plotkin-strong-confluence}}
\begin{propositionAppendix}
\label{propappendix:basic-plotkin-strong-confluence}
  $\tobvar$, %
  \NoteState{prop:basic-plotkin}
  $\tobabs$ and $\tobv$ are strongly confluent.
\end{propositionAppendix}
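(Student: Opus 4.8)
The plan is to prove the diamond property directly, exploiting the weakness of evaluation. The crucial structural observation is that \emph{two distinct redexes contracted from the same term can never be nested}. Indeed, any redex has the shape $\evctxp{(\la\var\tm_0)\val}$, so locally it is an application $(\la\var\tm_0)\val$ whose argument is a value $\val$. The function side $\la\var\tm_0$ is an abstraction, hence not itself a redex, and its body $\tm_0$ lies under the binder $\la\var$; the argument side $\val$ is a value and therefore does not reduce, evaluation being weak. Hence the only evaluation-redex position inside $(\la\var\tm_0)\val$ is its root, so a redex strictly contained in another one is impossible. Moreover, the subterm occurring at a given position determines at most one redex, together with its kind ($\betaabs$ if the argument is an abstraction, $\betavar$ if it is a variable).

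Given $\tmthree \lRew{\mathsf{r}} \tm \to_\mathsf{r} \tmtwo$ with $\tmtwo \neq \tmthree$ (for $\mathsf{r} \in \set{\betaabs,\betavar,\betav}$), the two contracted redexes $R_1$ and $R_2$ therefore sit at \emph{incomparable} positions: they cannot coincide (else $\tmtwo = \tmthree$) and, by the observation above, cannot be nested. I would then close the diamond for disjoint redexes in the usual way. Since $R_2$ lies off the path to $R_1$, contracting $R_1$ rewrites a subterm disjoint from $R_2$, so $R_2$ survives unchanged in $\tmtwo$, still as a redex of the \emph{same kind}, and still reachable by an evaluation context --- the off-path subterm, now the contractum of $R_1$, is again an arbitrary term, which the grammar $\evctx \grameq \ctxhole \mid \tm\evctx \mid \evctx\tm$ freely allows. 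Symmetrically $R_1$ survives in $\tmthree$. Contracting the surviving redex on each side yields a common $\tmfour$, because capture-avoiding substitutions at incomparable positions commute (working up to $\alpha$-equivalence). Taking $R_1, R_2$ of the same kind gives strong confluence of $\tobabs$ and of $\tobvar$; allowing both kinds gives strong confluence of $\tobv$; and taking $R_1$ a $\betaabs$-step and $R_2$ a $\betavar$-step gives their \emph{strong commutation}, so strong confluence of $\tobv$ also follows from \reflemma{hindley-rosen}.

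The only delicate point is this disjoint-redex step, where one must verify that contracting the off-path redex neither destroys the surviving redex, nor changes its kind, nor its reachability by an evaluation context. This is exactly where weakness of evaluation is used a second time. A clean way to discharge it is a substitution lemma: for a value $\val$ and incomparable positions, contraction commutes with the meta-level substitution $\isub\var\val$, proved by a routine induction on the term structure. The value restriction on arguments is what keeps this routine, since a substituted argument is never decomposed so as to expose new evaluation redexes along the path to the other redex.
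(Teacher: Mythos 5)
Your proof is correct, but it takes a different formal route from the paper's. The paper proceeds by induction on $\tm$: the key observation is that neither of the two steps can be at the root, because in a redex $(\la\var\tmfour)\val$ both $\la\var\tmfour$ and $\val$ are $\betav$-normal (values are normal since evaluation is weak), so a root step would force $\tmtwo = \tmthree$; the remaining four application-left/right cases are then closed by the induction hypothesis (same side) or by simply contracting the two independent redexes (opposite sides). You globalize the very same observation into the claim that evaluation redexes are never nested, conclude that the two contracted redexes occupy disjoint positions, and close the diamond by the standard orthogonal-residual argument, with no induction. What each buys: the paper's induction is elementary and self-contained, needing no apparatus of positions and residuals, whereas your argument is conceptually shorter and makes the absence of critical pairs explicit --- but note that your ``delicate point'' dissolves entirely once disjointness is established: contracting $R_1$ rewrites a subtree disjoint from $R_2$, so the substitution $\tm_0\isub\var\val$ it generates never reaches $R_2$, and the substitution lemma you propose is only ever needed for \emph{nested} redexes, which you have already excluded. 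One genuine caveat concerns your closing remark on strong commutation of $\tobabs$ and $\tobvar$: the paper's definition of strong commutation additionally requires proving $\tmtwo \neq \tmthree$, which your argument does not supply (and which is not entirely free, since a contractum can equal its redex, as $\delta\delta \tobabs \delta\delta$ shows; here one needs that a $\betavar$-step strictly decreases size). This does not affect the statement at hand, where $\tmtwo \neq \tmthree$ is a hypothesis of strong confluence.
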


\begin{proof}\hfill
    We prove that $\tobv$ is strongly confluent.
    The proofs that $\tobvar$ and $\tobabs$ are strongly confluent are perfectly analogous.
    
    So, we prove, by induction on $\tm$, that if $\tm \tobv \tmtwo$ and $\tm \tobv \tmthree$ with $\tmtwo \neq \tmthree$, then there exists $\tm'$ such that $\tmtwo \tobv \tm'$ and $\tmthree \tobv \tm'$.
    
    Observe that neither $\tm \tobv\tmtwo$ nor $\tm \tobv \tmthree$ can be a step at the root: 
    indeed, if $\tm \defeq (\la\var\tmfour)\val \tobv \tmfour\isub\var\val \eqdef \tmtwo$ and $\tm \tobv \tmthree$ (or if $\tm \defeq (\la\var\tmfour)\val \tobv \tmfour\isub\var\val \eqdef \tmthree$ and $\tm \tobv \tmtwo$), then $\tmtwo = \tmthree$ since $\la\var\tmfour$ and $\val$ are $\betav$-normal by \refrmk{value-betav-normal};
    but this contradicts the hypothesis $\tmtwo \neq \tmthree$.
    So, according to the definition of $\tm \tobv \tmtwo$ and $\tm \tobv \tmthree$, there are only four cases.
    \begin{itemize}      
      \item \emph{Application Left for $\tm \tobv \tmtwo$ and $\tm \tobv \tmthree$}, \ie $\tm = \tmfour\tmfive \tobv \tmsix\tmfive = \tmtwo$ and $\tm = \tmfour\tmfive \tobv \tmseven\tmfive = \tmthree$ with $\tmfour \tobv \tmsix$ and $\tmfour \tobv \tmseven$. 
      By the hypothesis $\tmtwo \neq \tmthree$ it follows that $\tmsix \neq \tmseven$.
      By \ih, there exists $\tmfour'$ such that $\tmsix \tobv \tmfour'$ and $\tmseven \tobv \tmfour'$.
      So, setting $\tm' = \tmfour'\tmfive$, one has $\tmtwo = \tmsix\tmfive \tobv \tm'$ and $\tmthree = \tmseven\tmfive \tobv \tm'$.

      \item \emph{Application Right for $\tm \tobv \tmtwo$ and $\tm \tobv \tmthree$}%
      , \ie $\tm = \tmfour\tmfive \tobv \tmfour\tmsix = \tmtwo$ and $\tm = \tmfour\tmfive \tobv \tmfour\tmseven = \tmthree$ with $\tmfive \tobv \tmsix$ and $\tmfive \tobv \tmseven$. 
      From the hypothesis $\tmtwo \neq \tmthree$ it follows that $\tmsix \neq \tmseven$.
      By \ih, there exists $\tmfive'$ such that $\tmsix \tobv \tmfive'$ and $\tmseven \tobv \tmfive'$.
      So, setting $\tm' = \tmfour\tmfive'$, one has $\tmtwo = \tmfour\tmsix \tobv \tm'$ and $\tmthree = \tmfour\tmseven \tobv \tm'$.

      \item \emph{Application Left for $\tm \tobv \tmtwo$} and \emph{Application Right for $\tm \tobv \tmthree$}, \ie $\tm = \tmfour\tmfive \tobv \tmsix\tmfive = \tmtwo$ and $\tm = \tmfour\tmfive \tobv \tmfour\tmseven = \tmthree$ with $\tmfour \tobv \tmsix$ and $\tmfive \tobv \tmseven$. 
      So, setting $\tm' = \tmsix\tmseven$, one has $\tmtwo = \tmsix\tmfive \tobv \tm'$ and $\tmthree = \tmfour\tmseven \tobv \tm'$.

      \item \emph{Application Right for $\tm \tobv \tmtwo$} and \emph{Application Left for $\tm \tobv \tmthree$}%
      , \ie $\tm = \tmfour\tmfive \tobv \tmfour\tmsix = \tmtwo$ and $\tm = \tmfour\tmfive \tobv \tmseven\tmfive = \tmthree$ with $\tmfive \tobv \tmsix$ and $\tmfour \tobv \tmseven$. 
      So, setting $\tm' = \tmseven\tmsix$, one has $\tmtwo = \tmfour\tmsix \tobv \tm'$ and $\tmthree = \tmseven\tmfive \tobv \tm'$.
      \qedhere
    \end{itemize}
%
\end{proof}

\subsubsection{Proofs of Subsection \ref{subsect:fireball} (\texorpdfstring{Open CBV 1: the Fireball Calculus $\firecalc$}{Open CBV 1: the Fireball Calculus})}

\begin{remark}
  Inert terms can be equivalently defined as $\gconst \grameq  \var \mid \gconst\fire$\withproofs{---such a definition is used in the proofs here
}.
\end{remark}
Inert terms that are not variables are referred to as \emph{compound inert terms}.

\begin{lemma}[Values and inert terms are $\betaf$-normal]
\label{l:fnormal}
\hfill
  \begin{enumerate}
    \item\label{p:fnormal-value} Every value is $\betaf$-normal.
    \item\label{p:fnormal-inert} Every inert term is $\betaf$-normal.
  \end{enumerate}
\end{lemma}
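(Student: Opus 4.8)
The plan is to reduce both points to a single structural observation about evaluation contexts. Since the grammar $\evctx \grameq \ctxhole\mid \tm\evctx\mid \evctx\tm$ never places the hole under a $\lambda$, any term of the form $\evctxp{\tmthree}$ with $\evctx\neq\ctxhole$ is an \emph{application}. Both root rules $\rtobabs$ and $\rtoin$ rewrite applications (their left-hand sides are $(\la\var\tm)(\la\vartwo\tmtwo)$ and $(\la\var\tm)\gconst$), so any root redex is an application as well. Combining the two facts, every $\betaf$-reducible term is an application: if $\tm\tof\tmtwo$ then $\tm=\evctxp{\tmthree}$ for some root redex $\tmthree$, and $\tm$ is an application whether $\evctx=\ctxhole$ or not.

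For the first point, a value is by definition a variable $\var$ or an abstraction $\la\var\tm$, and neither is an application; by the observation above no value is $\betaf$-reducible, hence every value is $\betaf$-normal. In particular every abstraction is $\betaf$-normal, a fact I reuse below.

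For the second point I would argue by mutual induction on the mutually inductive definition of fireballs and inert terms, proving simultaneously that every fireball and every inert term is $\betaf$-normal, and using the equivalent presentation of inert terms $\gconst \grameq \var \mid \gconst\fire$. The cases split as follows: an abstraction fireball $\la\var\tm$ and a variable inert term $\var$ are values, hence normal by the first point; a fireball that is an inert term is normal by the induction hypothesis; and the only real case is an applied inert term $\gconst\fire$. Here the induction hypothesis gives that both $\gconst$ and $\fire$ are $\betaf$-normal, so a step $\gconst\fire\tof\tmtwo$ could only be a root step, which requires the left component $\gconst$ to be an abstraction $\la\var\tm$. But an inert term is a variable or an application, never an abstraction, so no root step exists and $\gconst\fire$ is normal.

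The main (and only minor) obstacle is making the mutual induction precise at this last case: one must observe that a root $\betaf$-redex has an abstraction on its left, and that inert terms are syntactically never abstractions (immediate from the grammar), which is exactly what rules out the dangerous root reduction. Everything else is routine bookkeeping on the three possible positions of a redex inside the application $\gconst\fire$---at the root, in the left subterm $\gconst$, or in the right subterm $\fire$---each discharged either by the observation or by the induction hypothesis.
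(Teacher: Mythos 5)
Your proof is correct and follows essentially the same route as the paper's: point 1 by the elementary observation that weak evaluation never reduces under $\lambda$ (your reformulation that every $\betaf$-reducible term is an application is just a more explicit rendering of this), and point 2 by induction on the alternative grammar $\gconst \grameq \var \mid \gconst\fire$, splitting the fireball $\fire$ into the abstraction case (a value, normal by point 1) and the inert case (normal by the induction hypothesis), with the decisive remark that an inert term is never an abstraction and so cannot head a root redex. The bookkeeping on the three positions of a potential redex in $\gconst\fire$ matches the paper's case analysis exactly.
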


\begin{proof}\hfill
  \begin{enumerate}
    \item Immediate, since $\tof$ does not reduce under $\l$'s.
    
    \item By induction on the definition of inert term $\gconst$.
  
    \begin{itemize}
%
      \item If $\gconst = \var$ then $\gconst$ is obviously $\betaf$-normal.
      \item If $\gconst = \gconsttwo\la\var\tm$ then $\gconsttwo$ and $\la\var\tm$ are $\betaf$-normal by \ih and \reflemmap{fnormal}{value} re\-spectively, besides $\gconsttwo$\! is not an abstraction, \mbox{so $\gconst$ is $\betaf$-normal.}

      \item Finally, if $\gconst = \gconsttwo\gconstthree$ then $\gconsttwo$ and $\gconstthree$ are $\betaf$-normal by \ih, moreover $\gconsttwo$ is not an abstraction, hence $\gconst$ is $\betaf$-normal.
      \qedhere
    \end{itemize}
  \end{enumerate}
\end{proof}

\setcounter{propositionAppendix}{\value{prop:open-harmony}}
\begin{propositionAppendix}[Open Harmony]\label{propappendix:open-harmony}
  Let $\tm \in \Lambda$:
  \NoteState{prop:open-harmony}
  $\tm$ is $\betaf$-normal iff $\tm$ is a fireball.
\end{propositionAppendix}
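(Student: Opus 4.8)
The plan is to prove the two implications separately. The right-to-left direction (fireball $\Rightarrow$ $\betaf$-normal) is essentially already contained in \reflemma{fnormal}: a fireball is by definition either an abstraction $\la\var\tmtwo$ --- which is a value, hence $\betaf$-normal by \reflemmap{fnormal}{value} --- or an inert term $\gconst$, which is $\betaf$-normal by \reflemmap{fnormal}{inert}. So for this direction I would simply invoke the lemma, with no further work.

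The substantive direction is left-to-right ($\betaf$-normal $\Rightarrow$ fireball), which I would prove by induction on the structure of $\tm$. For the variable case, $\var$ is an inert term (taking $n = 0$ in $\var\fire_1\ldots\fire_n$), hence a fireball. For the abstraction case, $\la\var\tmtwo$ is a value and thus a fireball, regardless of its body (and it is always $\betaf$-normal, since $\tof$ does not reduce under $\lambda$). The interesting case is the application $\tm = \tmtwo\tmthree$: assuming $\tm$ is $\betaf$-normal, I first observe that both $\tmtwo$ and $\tmthree$ must be $\betaf$-normal, since otherwise a reduction inside $\tmtwo$ or $\tmthree$ would lift to a reduction of $\tm$ through the evaluation contexts $\evctx\tm$ and $\tm\evctx$. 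By the induction hypothesis $\tmtwo$ and $\tmthree$ are therefore fireballs.

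I would then do a case analysis on the form of the fireball $\tmtwo$. If $\tmtwo$ is an abstraction $\la\var\tmfour$, then $\tm = (\la\var\tmfour)\tmthree$ with $\tmthree$ a fireball; but then $\tm$ is either a $\betaabs$-redex (when $\tmthree$ is an abstraction) or a $\betain$-redex (when $\tmthree$ is an inert term), contradicting the $\betaf$-normality of $\tm$ --- so this sub-case is impossible. Hence $\tmtwo$ must be an inert term $\gconst$, and then $\tm = \gconst\tmthree$ with $\tmthree$ a fireball is again an inert term, using the equivalent left-recursive characterization $\gconst \grameq \var \mid \gconst\fire$ of inert terms noted in the excerpt. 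Thus $\tm$ is a fireball, closing the induction.

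The crux is exactly this application case, and specifically the remark that the two firing conditions of $\firecalc$ --- argument an abstraction for $\tobabs$, argument an inert term for $\toin$ --- together cover precisely the situation ``abstraction applied to a fireball''. This is what forces the head of a normal application to be inert rather than an abstraction, and it is what makes the induction go through. The left-recursive characterization of inert terms ($\gconst\fire$ is inert) is the small technical ingredient that renders ``inert applied to a fireball is inert'' immediate; with the original tail-recursive form $\var\fire_1\ldots\fire_n$ one would instead append $\tmthree$ as an $(n{+}1)$-th argument, which works equally well but is slightly less direct.
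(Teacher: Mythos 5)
Your proof is correct and follows essentially the same route as the paper's: the right-to-left direction is delegated to \reflemma{fnormal}, and the left-to-right direction is the same structural induction whose application case forces the head to be inert because an abstraction head together with a fireball argument (abstraction or inert) would yield a $\betaabs$- or $\betain$-redex. The only cosmetic difference is that you phrase this step as an explicit contradiction and use the left-recursive characterization $\gconst \grameq \var \mid \gconst\fire$, whereas the paper states directly that the head is not an abstraction and appends the argument in the form $\var\fire_1\ldots\fire_n$.
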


\begin{proof}\hfill
  \begin{description}
    \item [$\Rightarrow$:] Proof by induction on $\tm \in \Lambda$.
    If $\tm$ is a value then $\tm$ is a fireball.
    
    Otherwise $\tm = \tmtwo\tmthree$ for some terms $\tmtwo$ and $\tmthree$.
    Since $\tm$ is $\betaf$-normal, then $\tmtwo$ and $ \tmthree$ are $\betaf$-normal, and either $\tmtwo$ is not an abstraction or $\tmthree$ is not a fireball.
    By induction hypothesis, $\tmtwo$ and $\tmthree$ are fireballs. 
    Summing up, $\tmtwo$ is either a variable or an inert term, and $\tmthree$ is a fireball, therefore $\tm = \tmtwo\tmthree$ is an inert term and hence a fireball.
    
    \item [$\Leftarrow$:] By hypothesis, $\tm$ is either a value or an inert term. 
    If $\tm$ is a value, then it is $\betaf$-normal by \reflemmap{fnormal}{value}. 
    Otherwise $\tm$ is an inert term and then it is $\betaf$-normal by \reflemmap{fnormal}{inert}.
    \qedhere
  \end{description}
\end{proof}

\begin{lemma}
\label{l:toin-different}
  For every $\tm, \tm' \in \Lambda$, if $\tm \toin \tm'$ then $\tm \neq \tm'$. 
\end{lemma}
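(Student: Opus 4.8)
The plan is to reduce the statement to the case of a step at the root via the injectivity of evaluation-context plugging, and to settle the root case with a simple numerical invariant that is strictly changed by an inert root step. First I would unfold $\tm \toin \tm'$: by definition of the contextual closure, $\tm = \evctxp{(\la\var\tmfour)\gconst}$ and $\tm' = \evctxp{\tmfour\isub\var\gconst}$ for some evaluation context $\evctx$, term $\tmfour$, variable $\var$, and inert term $\gconst$. Since plugging into a fixed one-hole context is injective (an immediate induction on $\evctx$ gives that $\evctxp{\tmthree} = \evctxp{\tmfour}$ implies $\tmthree = \tmfour$), it suffices to prove the root case $(\la\var\tmfour)\gconst \neq \tmfour\isub\var\gconst$.

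For the root case I introduce a measure $\ell(\tm) \in \nat$ counting the abstractions met while descending the leftmost branch of $\tm$ (into the left child at applications, into the body at abstractions, stopping at the head variable), defined by $\ell(\var) = 0$, $\ell(\la\var\tmtwo) = 1 + \ell(\tmtwo)$, and $\ell(\tmtwo\tmthree) = \ell(\tmtwo)$; note $\ell$ is invariant under $\alpha$-equivalence. The two facts I need are: (i) $\ell(\gconst) = 0$ for every inert $\gconst$, because inert terms are variable-headed, which is immediate from the equivalent definition $\gconst \grameq \var \mid \gconst\fire$ (indeed $\ell(\var) = 0$ and $\ell(\gconst\fire) = \ell(\gconst)$); and (ii) if $\ell(\tmtwo) = 0$ then $\ell(\tmthree\isub\var\tmtwo) = \ell(\tmthree)$ for all $\tmthree$, by a routine induction on $\tmthree$ (the only interesting case is $\tmthree = \var$, where $\ell(\tmtwo) = 0 = \ell(\var)$, the abstraction case being handled up to $\alpha$-equivalence). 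Combining them, $\ell((\la\var\tmfour)\gconst) = \ell(\la\var\tmfour) = 1 + \ell(\tmfour)$, whereas $\ell(\tmfour\isub\var\gconst) = \ell(\tmfour)$ by (i) and (ii); the two terms thus have different measures and cannot be equal, closing the root case and hence the lemma.

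The genuinely delicate point is the \emph{choice of invariant}: the more naive candidates fail. Term size does not work, and neither does counting $\lambda$-nodes, application-nodes, or variable-nodes separately, since an inert root step can preserve all of them simultaneously (for instance with $\gconst$ of the shape $\var(\la\vartwo\varthree)$ and the bound variable occurring exactly twice, a direct computation shows every such count is conserved), and a structural comparison of the two terms regresses without terminating. The leftmost-branch abstraction count $\ell$ works precisely because inert arguments are \emph{variable-headed}: this is exactly the property that separates an inert step from a self-reproducing step, and it is essential that the argument be inert rather than an arbitrary value. Indeed the analogous statement is \emph{false} for $\tobabs$, as witnessed by the self-reproducing step $\Omega \tobabs \Omega$ with $\Omega = \delta\delta$ and $\delta = \la\var\var\var$, where the argument is an abstraction and $\ell$ is not decreased. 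The argument above isolates this difference and turns it into a one-line invariant computation.
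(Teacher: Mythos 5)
Your proof is correct, and its heart is genuinely different from the paper's. The paper argues by induction on $\tm$: the \emph{application left/right} cases are exactly your injectivity-of-plugging step in inductive guise (harmless repackaging, and your observation that weak evaluation contexts never put the hole under a binder is what makes the injectivity-on-$\alpha$-classes claim immediate), but the root case $(\la\var\tmtwo)\gconst \rtoin \tmtwo\isub\var\gconst$ is dismissed in one line, ``since $\gconst$ is not an abstraction''. Read literally, that remark only rules out the configuration where the abstraction $\la\var\tmtwo$ would arise by substituting $\gconst$ for a head variable; pushing the structural comparison further (say $\tmtwo = \tmfour\tmfive$ with $\tmfour\isub\var\gconst = \la\var\tmtwo$) produces exactly the non-terminating regress you identify, so the paper's root case is terse where yours is fully detailed. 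Your invariant $\ell$ closes it cleanly: inert terms are variable-headed, hence $\ell(\gconst)=0$, substitution of a measure-zero term preserves $\ell$, and the redex and its reduct differ by exactly one. Your negative claims also check out: with $[\gconst]_\lambda = 1$ and the bound variable occurring twice, the $\lambda$-, application- and variable-occurrence counts (and hence the size) are all simultaneously conserved, so no naive count works; and the statement indeed fails for $\tobabs$ via $\Omega \tobabs \Omega$, which is precisely why the paper proves strong commutation of $\tobabs$ with $\toin$ rather than anything stronger. Methodologically your argument is in the same family as the occurrence-counting the paper itself deploys for the analogous fact in the shuffling calculus (\reflemmap{different}{sub}, \ie $\tm\isub\var\val \neq \la\var\tm\val$), but your leftmost-branch abstraction count is simpler there than juggling several counts. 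In short: the paper's proof buys brevity at the cost of an under-justified root case; yours costs a small auxiliary measure and buys a rigorous root case together with a reusable insight, namely that variable-headedness---not mere non-abstraction-ness---is the operative property of inert arguments.
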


\begin{proof}
  By induction on $\tm \in \Lambda$.
  According to the definition of $\tm \toin \tm'$, there are three cases.
  \begin{itemize}
    \item \emph{Step at the root}, \ie $\tm = (\la\var\tmtwo)\gconst \toin \tmtwo\isub\var\gconst = \tm'$: then, since $\gconst$ is not an abstraction, necessarily $\tm = (\la\var\tmtwo)\gconst \neq \tmtwo\isub\var\gconst = \tm'$.
    
    \item \emph{Application Left}, \ie $\tm = \tmtwo\tmthree \toin \tmtwo'\tmthree = \tm'$ with $\tmtwo \toin \tmtwo'$: by \ih, $\tmtwo \neq \tmtwo'$ and hence $\tm = \tmtwo\tmthree \neq \tmtwo'\tmthree = \tm'$.

    \item \emph{Application Right}, \ie $\tm = \tmtwo\tmthree \toin \tmtwo\tmthree' = \tm'$ with $\tmthree \toin \tmthree'$: by \ih, $\tmthree \neq \tmthree'$ and hence $\tm = \tmtwo\tmthree \neq \tmtwo\tmthree' = \tm'$.
    \qedhere
  \end{itemize}
\end{proof}

\setcounter{propositionAppendix}{\value{prop:basic-fireball}} 
\begin{propositionAppendix}[Basic Properties of $\firecalc$]
\label{propappendix:basic-fireball}\hfill
  \NoteState{prop:basic-fireball}
  \begin{enumerate}
    \item\label{pappendix:basic-fireball-toin-strong-normalization}\label{pappendix:basic-fireball-toin-strong-confluence} $\toin$ is strongly normalizing and strongly confluent.
    \item\label{pappendix:basic-fireball-tobv-toin-strong-commutation} $\tobabs$ and $\toin$ strongly commute.	  
    \item\label{pappendix:basic-fireball-strong-confluent}\label{pappendix:basic-fireball-number-steps} $\tof$ is strongly confluent, and all $\betaf$-normalizing derivations $\deriv$ from $\tm \in \Lambda$ (if any) have the same length $\sizef{\deriv}$, the same number $\sizebabs{\deriv}$ of $\betaabs$-steps, and the same number $\sizein{\deriv}$ of $\betain$-steps.
  \end{enumerate}
\end{propositionAppendix}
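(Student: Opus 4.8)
The plan is to prove the three points bottom-up: first establish termination and strong confluence of $\toin$ (Point~1) and the strong commutation of $\tobabs$ and $\toin$ (Point~2) by hand, and then obtain Point~3 for free from the Strong Hindley--Rosen lemma (\reflemma{hindley-rosen}), exactly as one assembles the analogous facts for Plotkin's calculus.

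For the strong normalization of $\toin$ I would take as measure the number of $\betaf$-redexes of a term. The only delicate point is that a root step $(\la\var\tm)\gconst \toin \tm\isub\var\gconst$ substitutes the inert term $\gconst$ for $\var$, and I must check that this neither creates nor duplicates redexes. I would first prove the auxiliary closure lemma that inert terms (resp.\ fireballs) are stable under the substitution of an inert term for a variable, by mutual induction on their definition. Given this, substituting $\gconst$ for $\var$ maps the $\betaf$-redexes of $\tm$ bijectively onto those of $\tm\isub\var\gconst$: no redex can be created, since $\var$ is replaced by $\gconst$, which is not an abstraction and hence cannot turn an application into a redex (the key property of inert terms, already exploited in \reflemma{fnormal}), and the copies of $\gconst$ contribute no redex because $\gconst$ is $\betaf$-normal. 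Thus every $\toin$-step strictly decreases the number of $\betaf$-redexes, so $\toin$ is strongly normalizing.

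For the strong confluence of $\toin$ and the strong commutation of $\tobabs$ and $\toin$ the crucial observation is that, evaluation being weak, two distinct surface redexes always occupy disjoint positions. Indeed, a redex $(\la\var\tm)\tmtwo$ has its body $\tm$ under a $\lambda$, hence out of reach of the weak evaluation contexts, while its argument $\tmtwo$ is either an abstraction (whose body is again under a $\lambda$) or an inert term (which is $\betaf$-normal); in every case no surface redex can sit strictly inside another. Consequently the situation is exactly as in the proof of strong confluence for Plotkin's $\plotcalc$ (\refpropp{basic-plotkin}{strong-confluence}): I would argue by induction on $\tm$, noting that if either step is at the root then the whole term is a single redex with no other surface position, so when $\tmtwo \neq \tmthree$ the two redexes must lie in disjoint subterms and are joined in one step each, in either order yielding the same term. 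The inequality $\tmtwo \neq \tmthree$ required by strong commutation holds because the contracted redexes sit at disjoint positions and contracting the $\toin$-redex genuinely changes its subterm (\reflemma{toin-different}).

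Finally, Point~3 is a direct application of \reflemma{hindley-rosen} to $\tof = \tobabs \cup \toin$: its hypotheses are the strong confluence of $\toin$ (Point~1), the strong confluence of $\tobabs$ (which coincides with Plotkin's, \refpropp{basic-plotkin}{strong-confluence}), and their strong commutation (Point~2). This yields at once that $\tof$ is strongly confluent and that all $\betaf$-normalizing derivations from a fixed $\tm$ share the same length $\sizef\deriv$, the same number $\sizebabs\deriv$ of $\betaabs$-steps and the same number $\sizein\deriv$ of $\betain$-steps. The only part that requires genuine work is the strong normalization of $\toin$, and within it the inert-substitution closure lemma; the confluence and commutation claims reduce to the weak-reduction disjointness argument, and Point~3 is pure bookkeeping through \reflemma{hindley-rosen}.
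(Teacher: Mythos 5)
Your assembly of the result is structurally the same as the paper's, but the one part you yourself identify as requiring ``genuine work''---the strong normalization of $\toin$---is where your argument breaks. The claimed measure (number of $\betaf$-redexes, strictly decreasing along $\toin$) is false, for two independent reasons. First, $\betain$-steps \emph{do} create redexes: they cause \levy's creations of type 1, as in $((\la\var\la\vartwo\tm)\gconst)\fire \toin (\la\vartwo\tm\isub\var\gconst)\fire$, where a fresh $\betaf$-redex appears at the interface between the contracted redex and its context; your bijection claim only rules out creations \emph{inside} the substituted positions, not this one. Second, inert terms are $\betaf$-normal but not redex-free---as the paper stresses, they are not neutral terms and may contain redexes under abstractions---so duplication can strictly \emph{increase} the count: with $\gconst \defeq \varthree\,(\la\vartwo((\la\var\var)(\la\var\var)))$, which is inert and contains one redex under $\la\vartwo$, the step $(\la\var(\varfour\,\var\,\var\,\var))\,\gconst \toin \varfour\,\gconst\,\gconst\,\gconst$ takes the total redex count from $2$ to $3$. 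Nor does restricting the measure to \emph{surface} redexes help: firing a root redex surfaces the redexes of the body, e.g.\ $(\la\var\tm)\varthree \toin \tm$ with $\tm \defeq ((\la\vartwo\vartwo)\varthree)((\la\vartwo\vartwo)\varthree)$ and $\var\notin\fv\tm$ goes from $1$ surface redex to $2$. This is precisely why the paper does \emph{not} use a redex count: it observes that $\betain$-steps never substitute abstractions, hence can only cause type-1 creations, so that $\betain$-derivations are special cases of $\mult$-developments (in turn of superdevelopments), which are known to terminate---a genuinely nontrivial import, not a decreasing measure. A minor additional slip: for non-creation at the substitution boundary you need the \emph{anti}-substitution direction (if $\tm\isub\var\gconst$ is a fireball then $\tm$ is one, \reflemmap{inert-anti-sub-bis}{fire}), not the forward closure of fireballs under substitution that you state; that part, unlike the measure, is easily repaired.

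The remainder of your proposal is sound and essentially the paper's proof. Your disjointness observation---that under weak evaluation no surface redex sits strictly inside another, so root overlaps force equal reducts---is exactly how the paper's inductions for the strong confluence of $\toin$ and the strong commutation of $\tobabs$ and $\toin$ dispose of the root cases (using \reflemma{toin-different}, as you do, to get $\tmtwo\neq\tmthree$ in the mixed cases, and \refpropp{basic-plotkin}{strong-confluence} for $\tobabs$), and Point~3 is indeed pure bookkeeping through the strong Hindley--Rosen lemma (\reflemma{hindley-rosen}), as in the paper.
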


\begin{proof}\hfill
  \begin{enumerate}
    \item Strong normalization of $\toin$ follows from general termination properties in the ordinary (\ie pure, strong, and call-by-name) $\l$-calculus, as we now explain. 
    Since $\betain$-steps do not substitute abstractions, they can only cause creations of type 1, according to \levy's classification of creations of $\beta$-redexes \cite{thesislevy}. 
    Then 
    $\betain$-derivations can be seen as special cases of \emph{$\mult$-developments} (see Accattoli, B., Kesner, D., \emph{The Permutative $\l$-Calculus}. In: LPAR. pp. 23-36, 2012), in turn a special case of more famous  \emph{superdevelopments}, \ie reduction sequences reducing only (residuals of) redexes in the original term plus creations of type 1 ($\mult$-developments) or type 1 and 2 (superdevelopments). 
    Both $\mult$-developments and superdevelopments always terminate. 
    Therefore, $\toin$ is strongly normalizing.
    
    \smallskip
    Now, we prove that $\toin$ is strongly confluent, that is if $\tm \toin \tmtwo$ and $\tm \toin \tmthree$ with $\tmtwo \neq \tmthree$, then there exists $\tm' \in \Lambda$ such that $\tmtwo \toin \tm'$ and $\tmthree \toin \tm'$. The proof is by induction on $\tm \in \Lambda$.
    
    Observe that neither $\tm \toin\tmtwo$ nor $\tm \toin \tmthree$ can be a step at the root: indeed, if $\tm \defeq (\la\var\tmfour)\gconst \rtoin \tmfour\isub\var\gconst \defeq \tmtwo$ and $\tm \toin \tmthree$ (or if $\tm \defeq (\la\var\tmfour)\gconst \rtoin \tmfour\isub\var\gconst \eqdef \tmthree$ and $\tm \toin \tmtwo$), then $\tmtwo = \tmthree$ since $\la\var\tmfour$ and $\gconst$ are $\betain$-normal by \reflemmasps{fnormal}{value}{inert} (as $\toin \, \subseteq \, \tof$);
    but this contradicts the hypothesis $\tmtwo \neq \tmthree$.
    So, according to the definition of $\tm \toin \tmtwo$ and $\tm \toin \tmthree$, there are only four cases.
    \begin{itemize}      
      \item \emph{Application Left for $\tm \toin \tmtwo$ and $\tm \toin \tmthree$}, \ie $\tm = \tmfour\tmfive \toin \tmsix\tmfive = \tmtwo$ and $\tm = \tmfour\tmfive \toin \tmseven\tmfive = \tmthree$ with $\tmfour \toin \tmsix$ and $\tmfour \toin \tmseven$. 
      By the hypothesis $\tmtwo \neq \tmthree$ it follows that $\tmsix \neq \tmseven$.
      By \ih, there exists $\tmfour'$ such that $\tmsix \toin \tmfour'$ and $\tmseven \toin \tmfour'$.
      So, setting $\tm' = \tmfour'\tmfive$, one has $\tmtwo = \tmsix\tmfive \toin \tm'$ and $\tmthree = \tmseven\tmfive \toin \tm'$.

      \item \emph{Application Right for $\tm \toin \tmtwo$ and $\tm \toin \tmthree$}, \ie $\tm = \tmfour\tmfive \toin \tmfour\tmsix = \tmtwo$ and $\tm = \tmfour\tmfive \toin \tmfour\tmseven = \tmthree$ with $\tmfive \toin \tmsix$ and $\tmfive \toin \tmseven$. 
      By the hypothesis $\tmtwo \neq \tmthree$ it follows that $\tmsix \neq \tmseven$.
      By \ih, there exists $\tmfive'$ such that $\tmsix \toin \tmfive'$ and $\tmseven \toin \tmfive'$.
      So, setting $\tm' = \tmfour\tmfive'$, one has $\tmtwo = \tmfour\tmsix \toin \tm'$ and $\tmthree = \tmfour\tmseven \toin \tm'$.

      \item \emph{Application Left for $\tm \toin \tmtwo$} and \emph{Application Right for $\tm \toin \tmthree$}, \ie $\tm = \tmfour\tmfive \toin \tmsix\tmfive = \tmtwo$ and $\tm = \tmfour\tmfive \toin \tmfour\tmseven = \tmthree$ with $\tmfour \toin \tmsix$ and $\tmfive \toin \tmseven$. 
      So, setting $\tm' = \tmsix\tmseven$, one has $\tmtwo = \tmsix\tmfive \toin \tm'$ and $\tmthree = \tmfour\tmseven \toin \tm'$.

      \item \emph{Application Right for $\tm \toin \tmtwo$} and \emph{Application Left for $\tm \toin \tmthree$}%
      , \ie $\tm = \tmfour\tmfive \toin \tmfour\tmsix = \tmtwo$ and $\tm = \tmfour\tmfive \toin \tmseven\tmfive = \tmthree$ with $\tmfive \toin \tmsix$ and $\tmfour \toin \tmseven$. 
      So, setting $\tm' = \tmseven\tmsix$, one has $\tmtwo = \tmfour\tmsix \toin \tm'$ and $\tmthree = \tmseven\tmfive \toin \tm'$.
    \end{itemize}
    
    \item We prove, by induction on $\tm \in \Lambda$, that if $\tm \tobabs \tmtwo$ and $\tm \toin \tmthree$, then $\tmtwo \neq \tmthree$ and there is $\tmp \!\in \Lambda$ such that $\tmtwo \toin \tmp$ and $\tmthree \tobabs \tmp$.
    
    Observe that neither $\tm \tobabs \tmtwo$ nor $\tm \toin \tmthree$ can be a step at the root: 
    indeed, if $\tm \defeq (\la\var\tmfour)\la\vartwo\tmfive \rtobabs\! \tmfour\isub\var{\la\vartwo\tmfive} \eqdef \tmtwo$ (resp.~$\tm \defeq (\la\var\tmfour)\gconst \rtoin\! \tmfour\isub\var\gconst \eqdef \tmthree$) then $\la\vartwo\tmfive$ is not a inert term (resp.~$\gconst$ is not an abstraction), moreover $\la\var\tmfour$ and $\la\vartwo\tmfive$ (resp.~$\gconst$) are $\betain$-normal (resp.~$\betaabs$-normal) by \refprop{characterize-fnormal}, as $\toin \, \subseteq \, \tof$ (resp.~$\tobabs \, \subseteq \, \tof$);
    therefore, $\tm$ is $\betain$-normal (resp.~$\betaabs$-normal) but this contradicts the hypothesis $\tm \toin \tmthree$ (resp.~$\tm \tobabs \tmtwo$).
    So, according to the definitions of $\tm \tobabs \tmtwo$ and $\tm \toin \tmthree$, there are only four cases.
    \begin{itemize}      
      \item \emph{Application Left for both $\tm \tobabs \tmtwo$ and $\tm \toin \tmthree$}, \ie $\tm \defeq \tmfour\tmfive \tobabs \tmsix\tmfive \eqdef \tmtwo$ and $\tm \defeq \tmfour\tmfive \toin \tmseven\tmfive \eqdef \tmthree$ with $\tmfour \tobabs \tmsix$ and $\tmfour \toin \tmseven$. 
      By \ih, $\tmsix \neq \tmseven$ and there exists $\tmfourp$ such that $\tmsix \toin \tmfour'$ and $\tmseven \tobabs \tmfour'$.
      So, $\tmtwo \neq \tmthree$ and, setting $\tmp \defeq \tmfourp\tmfive$, one has $\tmtwo = \tmsix\tmfive \toin \tmp \lRew{\betaabs\!} \tmseven\tmfive = \tmthree$.

      \item \emph{Application Right for both $\tm \tobabs \tmtwo$ and $\tm \toin \tmthree$}%
      , \ie $\tm \defeq \tmfour\tmfive \tobabs \tmfour\tmsix \eqdef \tmtwo$ and $\tm \defeq \tmfour\tmfive \toin \tmfour\tmseven \eqdef \tmthree$ with $\tmfive \tobabs \tmsix$ and $\tmfive \toin \tmseven$. 
      By \ih, $\tmsix \neq \tmseven$ and there exists $\tmfivep$ such that $\tmsix \toin \tmfivep$ and $\tmseven \tobabs \tmfivep$.
      So, $\tmtwo \neq \tmthree$ and, setting $\tmp \defeq \tmfour\tmfivep$, one has $\tmtwo = \tmfour\tmsix \toin \tmp \lRew{\betaabs\!} \tmfour\tmseven = \tmthree$.

      \item \emph{Application Left for $\tm \tobabs \tmtwo$} and \emph{Application Right for $\tm \toin \tmthree$}, \ie $\tm \defeq \tmfour\tmfive \tobabs \tmsix\tmfive = \tmtwo$ and $\tm = \tmfour\tmfive \toin \tmfour\tmseven \eqdef \tmthree$ with $\tmfour \tobabs \tmsix$ and $\tmfive \toin \tmseven$.
      By \reflemma{toin-different}, $\tmfive \neq \tmseven$ and hence $\tmtwo = \tmsix\tmfive \neq \tmfour\tmseven = \tmthree$.
      Setting $\tmp \defeq \tmsix\tmseven$, one has $\tmtwo = \tmsix\tmfive \toin \tmp \lRew{\betaabs\!} \tmfour\tmseven = \tmthree$.

      \item \emph{Application Right for $\tm \tobabs \tmtwo$} and \emph{Application Left for $\tm \toin \tmthree$}
      , \ie $\tm \defeq \tmfour\tmfive \tobabs \tmfour\tmsix \eqdef \tmtwo$ and $\tm = \tmfour\tmfive \toin \tmseven\tmfive = \tmthree$ with $\tmfive \tobabs \tmsix$ and $\tmfour \toin \tmseven$.
      By \reflemma{toin-different}, $\tmfour \neq \tmseven$ and hence $\tmtwo = \tmfour\tmsix \neq \tmseven\tmfive = \tmthree$.
      Setting $\tmp \defeq \tmseven\tmsix$, one has $\tmtwo = \tmfour\tmsix \toin \tmp \lRew{\betaabs} \tmseven\tmfive = \tmthree$.
    \end{itemize}

    \item It follows immediately from strong confluence of $\tobabs$\! (\refpropp{basic-plotkin}{strong-confluence}) and $\toin$ (\refpropp{basic-fireball}{toin-strong-confluence}), the strong commutation of $\tobabs$ and $\toin$ (\refpropp{basic-fireball}{tobv-toin-strong-commutation}), and Hindley-Rosen (\reflemma{hindley-rosen}).
    \qedhere
  \end{enumerate}
\end{proof}

\subsubsection{Proofs of Subsection \ref{subsect:vsub} (\texorpdfstring{Open CBV 2: the Value Substitution Calculus $\vsubcalc$}{Open CBV 2: the Value Substitution Calculus})}

Note that $\vsub$-values are $\vsub$-normal (since $\tovsub$ does not reduce under $\lambda$'s) and closed under substitution (\ie $\val\isub\var\valtwo$ is a $\vsub$-value, for any $\vsub$-values $\val$ and $\valtwo$).

\setcounter{propositionAppendix}{\value{prop:basic-value-substitution}}
\begin{propositionAppendix}[Basic Properties of $\vsubcalc$, \cite{DBLP:conf/flops/AccattoliP12}]
\label{propappendix:basic-value-substitution}
  {\ }\NoteState{prop:basic-value-substitution}
  \hfill
  \begin{enumerate}
    \item\label{pappendix:basic-value-substitution-tom-toe-terminates} $\tom$ and $\toe$ are strongly normalizing (separately).
    \item\label{pappendix:basic-tom-toe-strong-confluence} $\tom$ and $\toe$ are strongly confluent (separately).
    
    \item\label{pappendix:basic-value-substitution-tom-toe-commute}  $\tom$ and $\toe$ strongly commute.
    \item\label{pappendix:basic-value-substitution-strong-confluence} $\tovsub$ is strongly confluent, and all $\vsub$-normalizing derivations $\deriv$ from $\tm \!\in\! \vsubterms$ (if any) have the same length $\sizevsub{\deriv}$, the same number $\sizee{\deriv}$ of $\expo$-steps, and the same number $\sizee{\deriv}$ of $\mult$-steps.
    \item\label{pappendix:basic-value-substtution-expo-less-than-mult} Let $\tm \!\in\! \Lambda$. For any $\vsub$-derivation $\deriv$ from $\tm$, $\sizee{d} \leq \sizem{\deriv}$.
  \end{enumerate}
\end{propositionAppendix}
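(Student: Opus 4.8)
The plan is to establish Points 1--3 directly, to obtain Point 4 as an instance of the Strong Hindley--Rosen Lemma (\reflemma{hindley-rosen}), and to prove Point 5 by a simple invariant on the number of explicit substitutions (ES). The structural fact that makes everything smooth is that the evaluation contexts of \reffig{valuesub-calculus} never enter the body of an abstraction: no $\tovsub$-step ever fires a redex under a $\l$, and, crucially, no step ever duplicates a \emph{fireable} redex, since the only material that substituting a value can copy lies under the $\l$ of that value, hence out of reach of evaluation. This is exactly what tames the duplication inherent to explicit substitutions.

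For Point 1, termination of $\tom$ follows from the measure given by the number of applications in the term: a step $\sctxp{\l\var.\tm}\tmtwo \tom \sctxp{\tm\esub\var\tmtwo}$ erases one application and creates none, so the measure strictly decreases. For $\toe$ I would use the measure given by the number of ES occurrences that are \emph{not} in the scope of any $\l$. A $\toe$-step fires exactly one such ES (evaluation is weak), removing it; the ES of the surrounding substitution context $\sctx$ are moved outward but not duplicated, so they are still counted once; and every ES introduced by substituting an abstraction $\la\vartwo\tmtwo$ sits inside $\tmtwo$, i.e.\ under the binder $\vartwo$, hence is not counted. Thus the measure drops by exactly one at each step, so $\toe$ is strongly normalizing.

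Points 2 and 3 are proved by induction on the term, exactly in the style of \refprop{basic-plotkin}: given two distinct one-step reducts $\tmtwo \neq \tmthree$ of $\tm$, one inspects the relative positions of the two redexes (disjoint, nested, or at the root) and closes the diamond. Neither step can be at the root against a step of the same shape, by the usual normal-form remark; in all remaining cases firing one redex leaves the other as a single residual, precisely because fireable redexes are never duplicated. For the $\toe$/$\toe$ and $\tom$/$\toe$ overlaps the only delicate point is that firing an exponential redex may enlarge, but never split, the content of another ES, and the two orders of reduction meet thanks to the value-substitution lemma $\tmthree\isub\var\val\isub\varthree\valtwo = \tmthree\isub\varthree\valtwo\isub\var{\val\isub\varthree\valtwo}$; since the residuals are single steps this yields strong confluence and strong commutation rather than mere local confluence.

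Point 4 is then immediate: $\tom$ and $\toe$ are strongly confluent (Point 2) and strongly commute (Point 3), so the Strong Hindley--Rosen Lemma (\reflemma{hindley-rosen}) applied to $\tovsub = \tom \cup \toe$ gives both strong confluence of $\tovsub$ and the equality of the length, of $\sizem{\cdot}$ and of $\sizee{\cdot}$ across all normalizing derivations from a fixed term. For Point 5, let $\tm \in \Lambda$, so $\tm$ has no ES. Weak evaluation guarantees that no ES ever appears under a $\l$ along a derivation from $\tm$ (the initial term has none, and substitution only places ES-free-bodied values where variables used to be); hence each $\expo$-step removes exactly one ES and duplicates none, while each $\mult$-step creates exactly one ES. This yields the invariant $(\text{number of ES in the current term}) = \sizem{\deriv'} - \sizee{\deriv'}$ for every prefix $\deriv'$ of a derivation $\deriv$, and since this quantity is nonnegative we get $\sizee{\deriv} \leq \sizem{\deriv}$. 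The main obstacle is not any single measure but the recurring need to verify, at every step, that the duplication caused by explicit substitutions touches only inert, under-$\l$ material; this is precisely what weak evaluation secures, and it is what must be checked with care in Points 1--3.
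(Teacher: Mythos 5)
Your proposal is correct and, for Points 2--5, follows essentially the same route as the paper: the exhaustive case analysis on the relative positions of the two redexes (with the key observation that weak evaluation never duplicates a fireable redex, since substituted values only carry material under their own $\l$), the Strong Hindley--Rosen Lemma for Point 4, and for Point 5 the invariant that the number of ES in the current term equals $\sizem{\deriv'} - \sizee{\deriv'}$, maintained via the auxiliary invariant that abstraction bodies (equivalently, value subterms) stay ES-free along any derivation from a pure term --- this is exactly the paper's argument, which phrases the auxiliary invariant as ``any $\vsub$-value that is a subterm is a value without ES''. The one genuine divergence is Point 1: the paper simply inherits strong normalization of $\tom$ and $\toe$ from the strong versions $\todb$ and $\tovs$ proved in \cite{DBLP:conf/flops/AccattoliP12}, whereas you give direct decreasing measures; both work, and your measure for $\toe$ (counting only ES \emph{not} under any $\l$) is the right refinement --- a naive total-ES count would fail on $\vsubterms$, since substituting an abstraction whose body contains ES duplicates those ES, but only under a $\l$ where your measure ignores them. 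Two small points to tighten: the remark that root steps cannot clash ``against a step of the same shape'' does not dispense you from the root-versus-inner cases (the argument of a $\tom$-redex is arbitrary, unlike in $\plotcalc$ where it is a normal value), though your nested case covers them; and strong commutation as defined in the paper also requires $\tmtwo \neq \tmthree$, which you, like the paper, leave implicit.
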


\begin{proof} The statements of \refprop{basic-value-substitution} are a refinement of some results proved in \cite{DBLP:conf/flops/AccattoliP12}, where $\tovsub$ is denoted by $\tow$.
  \begin{enumerate}
    \item In \cite[Lemma~3]{DBLP:conf/flops/AccattoliP12} it has been proved that $\todb$ and $\tovs$ are strongly normalizing, separately.
    Since $\tom \, \subseteq \, \todb$ and $\toe \, \subseteq \, \tovs$ ($\todb$ and $\tovs$ are just the extensions of $\tom$ and $\toe$, respectively, obtained by allowing reductions under $\l$'s), one has that $\tom$ and $\toe$ are strongly normalizing, separately.
    
    \item We prove that $\tom$ is strongly confluent, \ie if $\tmtwo \lRew{\mult} \tm \tom \tmthree$ with $\tmtwo \neq \tmthree$ then there exists $\tmp \in \Lambda_\vsub$ such that $\tmtwo \tom \tmp \lRew{\mult} \tmthree$.
    The proof is by induction on the definition of $\tom$. Since there $\tm \tom \tmthree \neq \tmtwo$ and the reduction $\tom$ is weak, there are only eight cases:
    \begin{itemize}
      \item \emph{Step at the Root for $\tm \!\tom\! \tmtwo$ and Application Right for $\tm \!\tom\! \tmthree$}, \ie $\tm \defeq \sctxp{\la\var\tmfive}\tmfour \rtom \sctxp{\tmfive\esub{\var}{\tmfour}} \eqdef \tmtwo$ and $\tm \!\rtom\! \sctxp{\la\var\tmfive}\tmfourp\! \eqdef \tmthree$ with $\tmfour \!\tom\! \tmfourp$: then, $\tmtwo \!\tom\! \sctxp{\tmfive\esub{\var}{\tmfourp}} \!\lRew{\mult}\! \tmthree$;
      
      \item \emph{Step at the Root for $\tm \tom \tmtwo$ and Application Left for $\tm \tom \tmthree$}, \ie, for some $n > 0$, $\tm \defeq (\la\var\tmfive)\esub{\var_1}{\tm_1}\dots\esub{\var_n}{\tm_n}\tmfour \allowbreak\rtom \tmfive\esub{\var}{\tmfour}\esub{\var_1}{\tm_1}\dots\esub{\var_n}{\tm_n} \eqdef \tmtwo$ whereas $\tm \tom \allowbreak (\la\var\tmfive)\esub{\var_1}{\tm_1}\dots\esub{\var_j}{\tmp_j}\dots\esub{\var_n}{\tm_n}\tmfour \eqdef \tmthree$ with $\tm_j \tom \tmp_j$ for some $1 \leq j \leq n$: then, 
      \begin{align*}
        \tmtwo \tom \allowbreak \tmfive\esub{\var}{\tmfour}\esub{\var_1}{\tm_1}\dots\esub{\var_j}{\tmp_j}\dots\esub{\var_n}{\tm_n} \lRew{\mult} \tmthree;
      \end{align*}
      \item \emph{Application Left for $\tm \tom \tmtwo$ and Application Right for $\tm \tom \tmthree$}, \ie $\tm \defeq \tmfour\tmfive \tom \tmfourp\tmfive \eqdef \tmtwo$ and $\tm \tom \tmfour\tmfivep \eqdef \tmthree$ with $\tmfour \tom \tmfourp$ and $\tmfive \tom \tmfivep$: then, $\tmtwo \tom \tmfourp\tmfivep\! \lRew{\mult} \tmthree$;
      \item \emph{Application Left for both $\tm \tom \tmtwo$ and $\tm \tom \tmthree$}, \ie $\tm \defeq \tmfour\tmfive \tom \tmfourp\tmfive \eqdef \tmtwo$ and $\tm \tom \tmfour''\tmfive \eqdef \tmthree$ with $\tmfourp \lRew{\mult} \tmfour \tom \tmfour''$: by \ih, there exists $\tmfour_0 \in \Lambda_\vsub$ such that $\tmfourp \tom \tmfour_0 \lRew{\mult} \tmfour''$, hence $\tmtwo \tom \tmfour_0\tmfive \lRew{\mult} \tmthree$;
      \item \emph{Application Right for both $\tm \tom \tmtwo$ and $\tm \tom \tmthree$}, \ie $\tm \defeq \tmfive\tmfour \tom \tmfive\tmfourp \eqdef \tmtwo$ and $\tm \tom \tmfive\tmfour'' \eqdef \tmthree$ with $\tmfourp \lRew{\mult} \tmfour \tom \tmfour''$: by \ih, there exists $\tmfour_0 \in \Lambda_\vsub$ such that $\tmfourp \tom \tmfour_0 \lRew{\mult} \tmfour''$, hence $\tmtwo \tom \tmfive\tmfour_0 \lRew{\mult} \tmthree$;
      \item \emph{ES Left for $\tm \tom \tmtwo$ and ES Right for $\tm \tom \tmthree$}, \ie $\tm \defeq \tmfour\esub\var\tmfive \tom \tmfourp\esub\var\tmfive \eqdef \tmtwo$ and $\tm \tom \tmfour\esub\var\tmfivep \eqdef \tmthree$ with $\tmfour \tom \tmfourp$ and $\tmfive \tom \tmfivep$: then, $\tmtwo \tom \tmfourp\esub\var\tmfivep\! \lRew{\mult} \tmthree$;
      \item \emph{ES Left for both $\tm \tom \tmtwo$ and $\tm \tom \tmthree$}, \ie $\tm \defeq \tmfour\esub\var\tmfive \tom \tmfourp\esub\var\tmfive \eqdef \tmtwo$ and $\tm \tom \tmfour''\esub\var\tmfive \eqdef \tmthree$ with $\tmfourp \lRew{\mult} \tmfour \tom \tmfour''$: by \ih, there exists $\tmfour_0 \in \Lambda_\vsub$ such that $\tmfourp \tom \tmfour_0 \lRew{\mult} \tmfour''$, hence $\tmtwo \tom \tmfour_0\esub\var\tmfive \lRew{\mult} \tmthree$;
      \item \emph{ES Right for both $\tm \tom \tmtwo$ and $\tm \tom \tmthree$}, \ie $\tm \defeq \tmfive\esub\var\tmfour \tom \tmfive\esub\var\tmfourp \eqdef \tmtwo$ and $\tm \tom \tmfive\esub\var{\tmfour''} \eqdef \tmthree$ with $\tmfourp \lRew{\mult} \tmfour \tom \tmfour''$: by \ih, there exists $\tmfour_0 \in \Lambda_\vsub$ such that $\tmfourp \tom \tmfour_0 \lRew{\mult} \tmfour''$, hence $\tmtwo \tom \tmfive\esub\var{\tmfour_0} \lRew{\mult} \tmthree$.
    \end{itemize}
    
    \smallskip
    We prove that $\toe$ is strongly confluent, \ie if $\tmtwo \lRew{\expo} \tm \toe \tmthree$ with $\tmtwo \neq \tmthree$ then there exists $\tmfour \in \Lambda_\vsub$ such that $\tmtwo \toe \tmp \lRew{\expo} \tmthree$.
    The proof is by induction on the definition of $\toe$. Since there $\tm \toe \tmthree \neq \tmtwo$ and the reduction $\toe$ is weak, there are only eight cases:
    \begin{itemize}
      \item \emph{Step at the Root for $\tm \!\toe\! \tmtwo$} and \emph{ES Left for $\tm \!\toe\! \tmthree$}, \ie $\tm \defeq \tmfour\esub\var{\sctxp{\val}} \rtoe \sctxp{\tmfour\isub{\var}{\val}} \eqdef \tmtwo$ and $\tm \!\rtoe\! \tmfourp\esub\var{\sctxp{\val}}\! \eqdef \tmthree$ with $\tmfour \!\toe\! \tmfourp$: then, $\tmtwo \!\toe\! \sctxp{\tmfourp\esub{\var}{\val}} \!\lRew{\expo}\! \tmthree$;
      
      \item \emph{Step at the Root for $\tm \toe \tmtwo$} and \emph{ES Right for $\tm \toe \tmthree$}, \ie, for some $n > 0$, $\tm \defeq \tmfour\esub\var{\val\esub{\var_1}{\tm_1}\dots\esub{\var_n}{\tm_n}} \allowbreak\rtoe \tmfour\isub{\var}{\val}\esub{\var_1}{\tm_1}\dots\esub{\var_n}{\tm_n} \eqdef \tmtwo$ whereas $\tm \toe \allowbreak \tmfour\esub{\var}{\val\esub{\var_1}{\tm_1}\dots\esub{\var_j}{\tmp_j}\dots\esub{\var_n}{\tm_n}} \eqdef \tmthree$ with $\tm_j \toe \tmp_j$ for some $1 \leq j \leq n$: then, 
      \begin{align*}
        \tmtwo \toe \allowbreak \tmfour\isub{\var}{\val}\esub{\var_1}{\tm_1}\dots\esub{\var_j}{\tmp_j}\dots\esub{\var_n}{\tm_n} \lRew{\expo} \tmthree;
      \end{align*}
      \item \emph{Application Left for $\tm \toe \tmtwo$} and \emph{Application Right for $\tm \toe \tmthree$}, \ie $\tm \defeq \tmfour\tmfive \toe \tmfourp\tmfive \eqdef \tmtwo$ and $\tm \toe \tmfour\tmfivep \eqdef \tmthree$ with $\tmfour \toe \tmfourp$ and $\tmfive \toe \tmfivep$: then, $\tmtwo \toe \tmfourp\tmfivep\! \lRew{\expo} \tmthree$;
      \item \emph{Application Left for both $\tm \toe \tmtwo$ and $\tm \toe \tmthree$}, \ie $\tm \defeq \tmfour\tmfive \toe \tmfourp\tmfive \eqdef \tmtwo$ and $\tm \toe \tmfour''\tmfive \eqdef \tmthree$ with $\tmfourp \lRew{\expo} \tmfour \toe \tmfour''$: by \ih, there exists $\tmfour_0 \in \Lambda_\vsub$ such that $\tmfourp \toe \tmfour_0 \lRew{\expo} \tmfour''$, hence $\tmtwo \toe \tmfour_0\tmfive \lRew{\expo} \tmthree$;
      \item \emph{Application Right for both $\tm \toe \tmtwo$ and $\tm \toe \tmthree$}, \ie $\tm \defeq \tmfive\tmfour \toe \tmfive\tmfourp \eqdef \tmtwo$ and $\tm \toe \tmfive\tmfour'' \eqdef \tmthree$ with $\tmfourp \lRew{\expo} \tmfour \toe \tmfour''$: by \ih, there exists $\tmfour_0 \in \Lambda_\vsub$ such that $\tmfourp \toe \tmfour_0 \lRew{\expo} \tmfour''$, hence $\tmtwo \toe \tmfive\tmfour_0 \lRew{\expo} \tmthree$;
      \item \emph{ES Left for $\tm \toe \tmtwo$} and \emph{ES Right for $\tm \toe \tmthree$}, \ie $\tm \defeq \tmfour\esub\var\tmfive \toe \tmfourp\esub\var\tmfive \eqdef \tmtwo$ and $\tm \toe \tmfour\esub\var\tmfivep \eqdef \tmthree$ with $\tmfour \toe \tmfourp$ and $\tmfive \toe \tmfivep$: then, $\tmtwo \toe \tmfourp\esub\var\tmfivep\! \lRew{\expo} \tmthree$;
      \item \emph{ES Left for both $\tm \toe \tmtwo$ and $\tm \toe \tmthree$}, \ie $\tm \defeq \tmfour\esub\var\tmfive \toe \tmfourp\esub\var\tmfive \eqdef \tmtwo$ and $\tm \toe \tmfour''\esub\var\tmfive \eqdef \tmthree$ with $\tmfourp \lRew{\expo} \tmfour \toe \tmfour''$: by \ih, there exists $\tmfour_0 \in \Lambda_\vsub$ such that $\tmfourp \toe \tmfour_0 \lRew{\expo} \tmfour''$, hence $\tmtwo \toe \tmfour_0\esub\var\tmfive \lRew{\expo} \tmthree$;
      \item \emph{ES Right for both $\tm \toe \tmtwo$ and $\tm \toe \tmthree$}, \ie $\tm \defeq \tmfive\esub\var\tmfour \toe \tmfive\esub\var\tmfourp \eqdef \tmtwo$ and $\tm \toe \tmfive\esub\var{\tmfour''} \eqdef \tmthree$ with $\tmfourp \lRew{\expo} \tmfour \toe \tmfour''$: by \ih, there exists $\tmfour_0 \in \Lambda_\vsub$ such that $\tmfourp \toe \tmfour_0 \lRew{\expo} \tmfour''$, hence $\tmtwo \toe \tmfive\esub\var{\tmfour_0} \lRew{\expo} \tmthree$.
    \end{itemize}
    
    \smallskip
    
    Note that in \cite[Lemma~11]{DBLP:conf/flops/AccattoliP12} it has just been proved the strong confluence of $\tovsub$, not of $\tom$ or $\toe$.
 
    \item We show that $\toe$ and $\tom$ strongly commute, \ie if $\tmtwo \lRew{\expo} \tm \tom \tmthree$, then $\tmtwo \neq \tmthree$ and there is $\tmp \in \Lambda_\vsub$ such that $\tmtwo \tom \tmp \lRew{\expo} \tmthree$. The proof is by induction on the definition of $\tm \toe \tmtwo$. The proof that $\tmtwo \neq \tmthree$ is left to the reader.
    Since the $\toe$ and $\tom$ cannot reduce under $\l$'s, all $\vsub$-values are $\mult$-normal and $\expo$-normal. So, there are the following cases.
    \begin{itemize}
      \item \emph{Step at the Root for $\tm \toe \tmtwo$} and \emph{ES Left for $\tm \tom \tmthree$}, \ie $\tm \defeq \tmfour\esub\varthree{\sctxp\val} \toe \sctxp{\tmfour\isub\varthree{\val}} \eqdef \tmtwo$ and $\tm \tom \tmfourp\esub\varthree{\sctxp{\val}} \eqdef \tmthree$ with $\tmfour \tom \tmfourp$: then $\tmtwo \tom \sctxp{\tmfourp\isub\varthree{\val}} \lRew{\expo} \tmtwo$;
      \item \emph{Step at the Root for $\tm \toe \tmtwo$} and \emph{ES Right for $\tm \tom \tmthree$}, \ie 
      \begin{align*}
        \tm &\defeq \tmfour\esub\varthree{\val\esub{\var_1}{\tm_1}\dots\esub{\var_n}{\tm_n}} \\
	    &\toe \tmfour\isub\varthree{\val}\esub{\var_1}{\tm_1}\dots\esub{\var_n}{\tm_n} \eqdef \tmtwo
      \end{align*}
      and $\tm \tom \tmfour\esub\varthree{\val\esub{\var_1}{\tm_1}\dots\esub{\var_j}{\tmp_j}\dots\esub{\var_n}{\tm_n}} \eqdef \tmthree$
      for some $n > 0$, and $\tm_j \tom \tmp_j$ for some $1 \leq j \leq n$: then, $\tmtwo \tom \tmfour\isub\varthree{\val}\esub{\var_1}{\tm_1}\dots\esub{\var_j}{\tmp_j}\dots\esub{\var_n}{\tm_n} \lRew{\expo} \tmthree$;

      \item \emph{Application Left for $\tm \toe \tmtwo$} and \emph{Application Right for $\tm \tom \tmthree$}, \ie $\tm \defeq \tmfour\tmfive \toe \tmfourp\tmfive \eqdef \tmtwo$ and $\tm \tom \tmfour\tmfivep \eqdef \tmthree$ with $\tmfour \toe \tmfourp$ and $\tmfive \tom \tmfivep$: then, $\tm \tom \tmfourp\tmfivep \lRew{\expo} \tmtwo$;
      \item \emph{Application Left for both $\tm \toe \tmtwo$ and $\tm \tom \tmthree$}, \ie $\tm \defeq \tmfour\tmfive \toe \tmfourp\tmfive \eqdef \tmtwo$ and $\tm \tom \tmfour''\tmfive \eqdef \tmthree$ with $\tmfourp \lRew{\expo} \tmfour \tom \tmfour''$: by \ih, there exists $\tmsix \in \Lambda_\vsub$ such that $\tmfourp \tom \tmsix \lRew{\expo} \tmfour''$, hence $\tmtwo \tom \tmsix\tmfive \lRew{\expo} \tmthree$;
      \item \emph{Application Left for $\tm \toe \tmtwo$} and \emph{Step at the Root for $\tm \tom \tmthree$}, \ie $\tm \defeq (\la\var\tmfive){\esub{\var_1}{\tm_1}\dots\esub{\var_n}{\tm_n}}\tmfour \toe (\la\var\tmfive){\esub{\var_1}{\tm_1}\dots\esub{\var_j}{\tmp_j}\dots\esub{\var_n}{\tm_n}}\tmfour \eqdef \tmtwo$ with $n > 0$ and $\tm_j \toe \tmp_j$ for some $1 \leq j \leq n$, and $\tm \tom \allowbreak \tmfive\esub\var\tmfour{\esub{\var_1}{\tm_1}\dots\esub{\var_n}{\tm_n}} \eqdef \tmthree$: then,
      \begin{equation*}
        \tmtwo \tom \tmfive\esub\var\tmfour{\esub{\var_1}{\tm_1}\dots\esub{\var_j}{\tmp_j}\dots\esub{\var_n}{\tm_n}} \lRew{\expo} \tmthree;
      \end{equation*}
      \item \emph{Application Right for $\tm \toe \tmtwo$} and \emph{Application Left for $\tm \tom \tmthree$}, \ie $\tm \defeq \tmfive\tmfour \toe \tmfive\tmfourp \eqdef \tmtwo$ and $\tm \tom \tmfivep\tmfour \eqdef \tmthree$ with $\tmfour \toe \tmfourp$ and $\tmfive \tom \tmfivep$: then, $\tmtwo \tom \tmfivep\tmfourp \lRew{\expo} \tmthree$;
      \item \emph{Application Right for both $\tm \toe \tmtwo$ and $\tm \tom \tmthree$}, \ie $\tm \defeq \tmfive\tmfour \toe \tmfive\tmfourp \eqdef \tmtwo$ and $\tm \tom \tmfive\tmfour'' \eqdef \tmthree$ with $\tmfourp \lRew{\expo} \tmfour \tom \tmfour''$: by \ih, there exists $\tmsix \in \Lambda_\vsub$ such that $\tmfourp \tom \tmsix \lRew{\expo} \tmfour''$, hence $\tmtwo \tom \tmfive\tmsix \lRew{\expo} \tmthree$;
      \item \emph{Application Right for $\tm \toe \tmtwo$} and \emph{Step at the Root for $\tm \tom \tmthree$}, \ie $\tm \defeq \sctxp{\la\var\tmfive}\tmfour \toe \sctxp{\la\var\tmfive}\tmfourp \eqdef \tmtwo$ with $\tmfour \toe \tmfourp$\!, and $\tm \tom \allowbreak \sctxp{\tmfive\esub\var\tmfour} \eqdef \tmthree$: then, $\tmtwo \tom \sctxp{\tmfive\esub\var\tmfourp} \lRew{\expo} \tmthree$;
      \item \emph{ES Left for $\tm \toe \tmtwo$} and \emph{ES Right for $\tm \tom \tmthree$}, \ie $\tm \defeq \tmfour\esub\var\tmfive \toe \tmfourp\esub\var\tmfive \eqdef \tmtwo$ and $\tm \tom \tmfour\esub\var\tmfivep \eqdef \tmthree$ with $\tmfour \toe \tmfourp$ and $\tmfive \tom \tmfivep$: then, $\tmtwo \tom \tmfourp\esub\var\tmfivep \lRew{\expo} \tmthree$;
      \item \emph{ES Left for both $\tm \toe \tmtwo$ and $\tm \tom \tmthree$}, \ie $\tm \defeq \tmfour\esub\var\tmfive \toe \tmfourp\esub\var\tmfive \eqdef \tmtwo$ and $\tm \tom \tmfour''\esub\var\tmfive \eqdef \tmthree$ with $\tmfourp \lRew{\expo} \tmfour \tom \tmfour''$: by \ih,there exists $\tmsix \in \Lambda_\vsub$ such that $\tmfourp \tom \tmsix \lRew{\expo} \tmfour''$, hence $\tmtwo \tom \tmsix\esub\var\tmfive \lRew{\expo} \tmthree$;
      \item \emph{ES Right for $\tm \toe \tmtwo$} and \emph{ES Left for $\tm \tom \tmthree$}, \ie $\tm \defeq \tmfive\esub\var\tmfour \toe \tmfive\esub\var\tmfourp \eqdef \tmtwo$ and $\tm \tom \tmfivep\esub\var\tmfour \eqdef \tmthree$ with $\tmfour \toe \tmfourp$ and $\tmfive \tom \tmfivep$: then, $\tmtwo \tom \tmfivep\esub\var\tmfourp \lRew{\expo} \tmthree$;
      \item \emph{ES Right for both $\tm \toe \tmtwo$ and $\tm \tom \tmthree$}, \ie $\tm \defeq \tmfive\esub\var\tmfour \toe \tmfive\esub\var{\tmfourp} \eqdef \tmtwo$ and $\tm \tom \tmfive\esub\var{\tmfour''} \eqdef \tmthree$ with $\tmfour \lRew{\expo} \tmfourp \tom \tmfour''$: by \ih, there exists $\tmsix \in \Lambda_\vsub$ such that $\tmfour \tom \tmsix \lRew{\expo} \tmfour''$, hence $\tmtwo \tom \tmfive\esub\var\tmsix \lRew{\expo} \tmthree$.
    \end{itemize}

    \item It follows immediately from strong confluence of $\tom$ and $\toe$ (\refpropp{basic-value-substitution}{tom-toe-strong-confluence}), strong commutation of $\tom$ and $\toe$ (\refpropp{basic-value-substitution}{tom-toe-commute}) and Hindley-Rosen (\reflemma{hindley-rosen}).
    
    A different proof of the strong confluence of $\tovsub$ (without information about the number of steps) is in \cite[Lemma~11]{DBLP:conf/flops/AccattoliP12}.
    \item The intuition behind the proof is that any $\mult$-step creates a new ES, any $\expo$-step erases an ES.
    Formally, let $\tmtwo \!\in\! \Lambda_\vsub$ such that $\deriv \colon \tm \tovsub^* \tmtwo$. 
    We prove by induction on $\sizevsub{\deriv} \in \nat$ that $\sizee{\deriv} \!=\! \sizem{\deriv} \!-\! \nes{\tmtwo}$ (where $\nes{\tmtwo}$ is the number of ES in $\tmtwo$) and any $\vsub$-value that is a subterm of $\tmtwo$ is a value (without ES).
    
    If $\sizevsub{\deriv} = 0$, then $\tmtwo = \tm \in \Lambda$, then we can conclude. 
    
    Suppose $\sizevsub{\deriv} > 0$: then, $\deriv$ is the concatenation of $\deriv' \colon \tm \tovsub^* \tmthree$ and $\tmthree \tovsub \tmtwo$, for some $\tmthree \in \Lambda_\vsub$.
    By \ih, $\sizee{\deriv'} = \sizem{\deriv'} - \nes{\tmthree}$ and that every $\vsub$-value that is a subterm of $\tmthree$ is a value (without ES). There are two cases:
    \begin{itemize}
      \item $\tmthree \defeq \evctxp{\tmfour\esub\var{\sctxp\val}} \toe \evctxp{\sctxp{\tmfour\isub\var{\val}}} \eqdef \tmtwo$, then $\sizem{\deriv} = \sizem{\deriv'}$ and $\nes{\tmthree} = \nes{\tmtwo} + 1$, since $\nes{\val} = 0$ by \ih; therefore $\sizee{\deriv} = \sizee{\deriv'} +1 = \sizem{\deriv'} - \nes{\tmthree} +1 = \sizem{\deriv} - \nes{\tmtwo}$ and any $\vsub$-value that is a subterm of $\tmtwo$ is a value (without ES).
      \item $\tmthree \defeq \evctxp{\sctxp{\la\var\tmfour}\tmfive} \tom \evctxp{\sctxp{\tmfour\esub\var\tmfive}} \eqdef \tmtwo$, then $\nes{\tmtwo} = \nes{\tmthree} + 1$ and $\sizem{\deriv} = \sizem{\deriv'}+1$, therefore $\sizee{\deriv} = \sizee{\deriv'} = \sizem{\deriv'} - \nes{\tmthree} = \sizem{\deriv} - \nes{\tmtwo}$. 
      Moreover, the new occurrence of ES $\esub\var\tmfive$ in $\tmtwo$ cannot be under the scope of a $\lambda$, otherwise the redex in $\tmthree$ which is fired in the $\mult$-step would be under the scope of a $\lambda$, but this is impossible since $\tom$ is a weak reduction.
      So, any $\vsub$-value that is a subterm of $\tmtwo$ is a value (without ES).
    \end{itemize}
    \qedhere
  \end{enumerate}
\end{proof}

\subsubsection{Proof of Subsection \ref{subsect:shuffling} (\texorpdfstring{Open CBV 3: the Shuffling Calculus $\shufcalc$}{Open CBV 3: the Shuffling Calculus})}

\begin{definition}[Occurrences]
\label{def:occurrences}
  For all $\tm \in \Lambda$, let $[\tm]_\lambda$ be the number of occurrences of $\lambda$ in $\tm$, and $[\tm]_\var$ be the number of free occurrences of the variable $\var$ in $\tm$, and $\mathsf{sub}_\tmtwo(\tm)$ be the number of occurrences in $\tm$ of the term $\tmtwo$
  .
\end{definition}

\begin{remark}
\label{rmk:value-sigm-normal}
  Since $\tobvm$ and $\tosigm$ do not reduce under $\l$'s without argument, every value is {$\betavm$-normal} and $\sigm$-normal, and hence $\vmsym$-normal.
\end{remark}

\begin{remark}
\label{rmk:alternativedef-sigm}
  The reduction $\tosigm$ is just the closure under balanced contexts of the binary relation $\rtosigma \, = \, \rtosl \!\cup \rtosr$ on $\Lambda$ (see definitions in \reffig{shuffling-calculus}).
\end{remark}

\begin{lemma}
\label{l:different}
  Let $\tm, \tm' \in \Lambda$.
  \begin{enumerate}
    \item\label{p:different-tosigm-sub} For every value $\val$, if $\tm \tosigm \tm'$ then $(\la\var\tm')\val \neq \tm\isub\var\val$.
    \item\label{p:different-tosigm} If $\tm \tosigm \tm'$ then $\tm \neq \tm'$.
    \item\label{p:different-sub} For every value $\val$, one has $\tm\isub\var\val \neq \la\var\tm\val$.
  \end{enumerate}
\end{lemma}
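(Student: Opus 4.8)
The three points say that a shuffling step, a $\betavm$-redex and its contractum can never \emph{accidentally} coincide. The unifying strategy I would adopt is to reduce each putative equality to one of two impossibilities: a clash of head constructors (an application can equal neither an abstraction nor a value), or the fact that no term is a proper subterm of itself (a size argument via $\size{\cdot}$). A fact I would use throughout is that, by \refrmk{value-sigm-normal}, values are $\sigm$-normal, so every $\sigm$-reducible term is an application; hence $\tm \tosigm \tm'$ forces $\tm = \tmfour\tmfive$ for some $\tmfour,\tmfive$, and then $\tm\isub\var\val = (\tmfour\isub\var\val)(\tmfive\isub\var\val)$ is again an application.

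For the second point I would argue by induction on the balanced context witnessing $\tm \tosigm \tm'$, using the reformulation of $\tosigm$ as the balanced-context closure of $\rtosl \cup \rtosr$ (\refrmk{alternativedef-sigm}). At the root, the $\rtosl$-redex $((\la\var\tm)\tmtwo)\tmthree$ and its contractum $(\la\var(\tm\tmthree))\tmtwo$ are both applications whose left components are respectively an application and an abstraction, hence distinct; for $\rtosr$ the two sides are applications whose right components are $(\la\var\tmthree)\tmtwo$ and $\tmtwo$, which differ by the proper-subterm size argument. The inductive step is immediate, since plugging two distinct terms into the same context position yields distinct terms. The first point follows a finer version of the same pattern: writing $\tm = \tmfour\tmfive$ as above and assuming $(\la\var\tm')\val = \tm\isub\var\val$, I would equate the two applications componentwise, obtaining $\tmfour\isub\var\val = \la\var\tm'$ and $\tmfive\isub\var\val = \val$; a case analysis on the position of the shuffling redex inside $\tm$ (root $\tosl$/$\tosr$, or internal to $\tmfour$ or to $\tmfive$) then exhibits in each case a component forced to be an application on one side but an abstraction or a value on the other, the internal cases using crucially that the reduced subterm, being $\sigm$-reducible, is itself an application.

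The genuinely delicate statement is the third point, $\tm\isub\var\val \neq \la\var(\tm\val)$, because here all crude measures fail: the two terms have the same size and the same numbers of abstractions, applications and variable occurrences, so no global count can separate them (indeed, dropping the shuffling hypothesis, the analogous equality between a $\betavm$-redex and its contractum \emph{does} hold, as witnessed by $\Omega = \delta\delta$). I would first dispatch the easy cases by free-variable and size reasoning: if $\var\notin\fv\tm$ then $\tm = \la\var(\tm\val)$ makes $\tm$ a proper subterm of itself; if $\var\in\fv\val$ then $\var$ is free on the left but bound (hence not free) on the right; and if $\tm$ is a variable or an application, then the left-hand side is respectively not an abstraction or an application, clashing with the abstraction on the right. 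This leaves the case $\tm = \la\vartwo\tm_0$ with $\var\in\fv{\tm_0}$ and $\var\notin\fv\val$, which I would attack by induction on $\size\tm$.

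In this remaining case both sides are abstractions, so the plan is to rename the two outer binders to a common fresh variable $u$ and compare the bodies: the equation becomes one whose right-hand side is an application with right child $\val$ and whose left child is an abstraction \emph{containing the fresh variable $u$} precisely at the positions where $\tm_0$ had free occurrences of $\var$ — whereas on the left-hand side those same positions are occupied by copies of $\val$. Since $u$ is fresh it cannot occur in $\val$, and this mismatch is what forbids the equality, with further nested abstractions absorbed by the induction. I expect this abstraction case to be the main obstacle, precisely because the self-referential occurrence of $\tm$ on the right forces one to track the interaction between capture-avoiding renaming and substitution rather than rely on a one-line counting argument.
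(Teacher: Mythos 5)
Your Point 2 is correct and in fact more self-contained than the paper's treatment, which simply invokes the decreasing measure of \cite{DBLP:conf/fossacs/CarraroG14}: your root-case clashes (left components application vs.\ abstraction for $\rtosl$, proper-subterm size for $\rtosr$) and the injectivity of plugging into a fixed one-hole context are all sound. Point 1, however, has a genuine gap. Balanced contexts include the production $(\la\vartwo\mctx)\tmtwo$, so writing $\tm = \tmfour\tmfive$, the $\sigm$-step may take place \emph{under the binder} of $\tmfour = \la\vartwo\tmsix$, with $\tmsix \tosigm \tmsix'$, even though $\tmfour$ itself is not $\sigm$-reducible (bare abstraction contexts are not balanced). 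Your enumeration ``root, internal to $\tmfour$, internal to $\tmfive$'' either omits this case or, if ``internal to $\tmfour$'' is meant to include it, your key fact fails there: the reduced subterm is $\tmsix$, not $\tmfour$, and $\tmfour\isub\var\val = \la\vartwo(\tmsix\isub\var\val)$ is an abstraction exactly like $\la\var\tm'$, while $\tmfive\isub\var\val = \val$ is perfectly satisfiable (take $\tmfive = \var$). No constructor clash is available, and one is forced into an inductive comparison of the two abstractions' bodies --- which is precisely why the paper proves Point 1 by induction on the definition of $\tm \tosigm \tm'$; your sketch never appeals to an induction hypothesis for this point.

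For Point 3 your guiding premise --- that ``no global count can separate them'' --- is false, and it steers you away from the argument that works: the paper's proof \emph{is} a counting proof. Equating the numbers of $\lambda$'s gives $[\val]_\lambda \cdot [\tm]_\var = 1 + [\val]_\lambda$, whose only solution in $\nat$ is $[\val]_\lambda = 1$ and $[\tm]_\var = 2$; equating free occurrences of $\var$ then forces $\var \notin \fv\val$; and counting occurrences of $\val$ as a subterm yields $\mathsf{sub}_\val(\tm\isub\var\val) = \mathsf{sub}_\val(\tm) + 2$ against $\mathsf{sub}_\val(\la\var\tm\val) = \mathsf{sub}_\val(\tm) + 1$, a contradiction. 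Even in your $\Omega$-style critical instance ($\tm = \var\var$, $\val = \delta$, where $\lambda$-counts and free-variable counts do coincide) the subterm count separates the terms, $2$ vs.\ $1$. Also note your $\Omega$ remark concerns the pair $(\la\var\tm)\val$ vs.\ $\tm\isub\var\val$, i.e.\ Point 1's shape without the $\sigm$-hypothesis, not Point 3's pair. Meanwhile your replacement route is unfinished exactly where you predicted: in the case $\tm = \la\vartwo\tm_0$, after renaming both outer binders to a fresh $u$ the equation becomes $\tm_0\isub\var\val\isub\vartwo{u} = (\la\vartwo(\tm_0\isub\var{u}))\val$, which is no longer an instance of the statement being proved (a double substitution with a permuted binder), so induction on $\size\tm$ with the statement as given does not close it; the ``positional mismatch'' intuition is not a proof, since the two sides must be compared as whole terms rather than position-wise. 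As written, Points 1 and 3 are therefore not established.
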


\begin{proof}\hfill
  \begin{enumerate}

    \item 
%
      By induction on the definition of $\tm \tosigm \tm'$, using \refrmk{alternativedef-sigm}.
    
    \item 
    In \cite[Proposition~2]{DBLP:conf/fossacs/CarraroG14} it has been proved that there exists a size $\# \colon \Lambda \to \nat$ such that if $\tm \tosig \tm'$ then $\#(\tm) > \#(\tm')$, where $\tosig$ is just the extension of $\tosigm$ obtained by allowing reductions under $\l$'s.
    Therefore, $\tosigm \, \subseteq \, \tosig$ and hence if $\tm \tosigm \tm'$ then $\#(\tm) > \#(\tm')$, in particular $\tm \neq \tm'$.
    
    \item 
%
%
%
%
%
%
    According to \refdef{occurrences}, $[\tm\isub\var\val]_\lambda = [\tm]_\lambda + [\val]_\lambda \!\cdot\! [\tm]_\var$ and $[\la\var\tm\val]_\lambda = 1 + [\tm]_\lambda + [\val]_\lambda$, and $[\tm\isub\var\val]_\var = [\tm]_\var \!\cdot\! [\val]_\var$ and $[\la\var\tm\val]_\var = 0$.
    
    Suppose $\tm\isub\var\val = \la\var\tm\val$: then, $[\tm\isub\var\val]_\lambda = [\la\var\tm\val]_\lambda$ and $[\tm\isub\var\val]_\var \allowbreak= [\la\var\tm\val]_\var$, thus
    \begin{align}
    \label{eq:occurrences}
      [\val]_\lambda \!\cdot\! [\tm]_\var &= 1 + [\val]_\lambda 
      & [\tm]_\var \!\cdot\! [\val]_\var &= 0.
    \end{align}
    The only solution to the first equation of \refeq{occurrences} is $[\val]_\lambda = 1$ and $[\tm]_\var = 2$, whence $[\val]_\var = 0$ according to the second equation of \refeq{occurrences}.
    As $\var \notin \fv{\val}$, one has $\mathsf{sub}_\val(\la\var\tm\val) = 1 + \mathsf{sub}_\val(\tm)$ and $\mathsf{sub}_\val(\tm\isub\var\val) = \mathsf{sub}_\val(\tm) + [\tm]_\var = \mathsf{sub}_\val(\tm) + 2$, therefore $\mathsf{sub}_\val(\la\var\tm\val) \neq \mathsf{sub}_\val(\tm\isub\var\val)$ and hence $\la\var\tm\val \neq \tm\isub\var\val$.
    Contradiction.
    \qedhere
  \end{enumerate}
\end{proof}

\setcounter{propositionAppendix}{\value{prop:basic-shuffling}}
\begin{propositionAppendix}[Basic Properties of $\shufcalc$, \cite{DBLP:conf/fossacs/CarraroG14}]
\label{propappendix:basic-shuffling}\hfill
\NoteState{prop:basic-shuffling}
  \begin{enumerate}
    \item\label{pappendix:basic-shuffling-different} Let $\tm, \tmtwo, \tmthree \in \Lambda$: if $\tm \tobvm \tmtwo$ and $\tm \tosigm \tmthree$ then $\tmtwo \neq\tmthree$.
    \item\label{pappendix:basic-shuffling-terminates} $\tosigm$ is strongly normalizing and (not strongly) confluent.
    \item  $\toshuf$ is (not strongly) confluent.
    \item  Let $\tm \in \Lambda$: $\tm$ is strongly $\vmsym$-normalizable iff $\tm$ is $\vmsym$-normalizable.
  \end{enumerate}
\end{propositionAppendix}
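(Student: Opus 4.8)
The plan is to prove the four points in order, reusing \reflemma{different} and \refrmk{value-sigm-normal}. The first two points admit self-contained combinatorial arguments, while the confluence of $\toshuf$ and the coincidence of weak and strong $\vmsym$-normalization are the substantial statements, for which I would adapt the techniques of \cite{DBLP:conf/fossacs/CarraroG14}. For Point~\ref{pappendix:basic-shuffling-different} I would proceed by induction on $\tm$, according to the relative positions of the contracted $\betavm$-redex and $\sigm$-redex. The two steps cannot both be at the root: a root $\betavm$-redex has shape $(\la\var\tmfour)\val$, whereas a root $\sigm$-redex has an application as left immediate subterm (for $\slsym$) or as right immediate subterm (for $\srsym$), incompatible with $(\la\var\tmfour)\val$ (abstraction on the left, value on the right). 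If the two redexes lie in disjoint subterms, then $\tmtwo = \tmthree$ would force the $\sigm$-step to fix its subterm, against \reflemmap{different}{tosigm}; if they lie in the same immediate subterm, the claim follows by the induction hypothesis. The delicate cases are those with exactly one step at the root. If $\betavm$ is at the root, the $\sigm$-step can only occur in the abstraction body (reached through a balanced context $(\la\var\mctx)\val$), and $\tmtwo \neq \tmthree$ is precisely \reflemmap{different}{tosigm-sub}. If instead $\sigm$ is at the root, I would compare the top shapes of $\tmtwo$ and $\tmthree$: they already differ (application vs.\ abstraction to the left of the outermost application) in every subcase but the $\slsym$-one in which the inner abstraction of $((\la\var\tmfour)\tmfive)\tmsix$ is itself fired as a $\betavm$-redex, which is settled by \reflemmap{different}{sub}, and the symmetric $\srsym$-one, ruled out by the size inequality $\size\val < \size{\la\var(\val\tmsix)}$.

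For Point~\ref{pappendix:basic-shuffling-terminates}, strong normalization of $\tosigm$ follows from that of its strong extension: by \cite[Proposition~2]{DBLP:conf/fossacs/CarraroG14} there is a measure decreasing along the strong shuffling reduction, hence along $\tosigm$ (the very fact used in the proof of \reflemmap{different}{tosigm}). Being strongly normalizing, $\tosigm$ is confluent as soon as it is locally confluent (Newman's lemma), and local confluence I would obtain by checking that the finitely many critical pairs of $\slsym$ and $\srsym$ are joinable. That $\tosigm$ is \emph{not} strongly confluent is witnessed by the nested $\srsym/\srsym$ overlap $\val((\la\vartwo\tm)((\la\var\tmtwo)\tmthree))$, whose outer and inner redexes share the abstraction $\la\vartwo\tm$: the two one-step reducts reach a common term only by one further $\srsym$-step on one side and two on the other, so the peak cannot be closed with a single step per side.

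The third and fourth points are where the real work lies. Since $\toshuf = \tovm$ is not strongly normalizing ($\Omega = \delta\delta$ diverges), confluence cannot go through Newman's lemma and must be established directly. I would use the Tait--Martin-L\"of method, as in \cite{DBLP:conf/fossacs/CarraroG14}: define a parallel reduction $\Rrightarrow$ contracting simultaneously a chosen set of $\betavm$- and $\sigm$-redexes under balanced contexts, verify ${\tovm} \subseteq {\Rrightarrow} \subseteq {\tovm^*}$, and prove the diamond property of $\Rrightarrow$, whence confluence of $\tovm$. Passing from the strong calculus of \cite{DBLP:conf/fossacs/CarraroG14} to the balanced-context (Open CBV) one only restricts the redexes that $\Rrightarrow$ may fire, so the diamond argument transfers. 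Non-strong confluence is witnessed by $(\la\vartwo\varthree)(\delta(\varthree\varthree))\delta$, which has $\vmsym$-normalizing derivations of different lengths, impossible under strong confluence by \refpropp{basic-confluence}{strong-length}.

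Finally, for the equivalence of weak and strong $\vmsym$-normalization, the direction from strong to weak is immediate. For the converse, assume $\tm$ has a $\vmsym$-normal form and suppose, for contradiction, an infinite $\tovm$-reduction from $\tm$; since $\tosigm$ is strongly normalizing, it must contain infinitely many $\betavm$-steps, so it suffices to bound the number of $\betavm$-steps issuing from $\tm$. This bound is the heart of the matter, and I would import it from the semantic characterization of $\vmsym$-normalizable terms via the relational model (equivalently, non-idempotent intersection types) of \cite{DBLP:conf/fossacs/CarraroG14}, whose type derivations furnish a quantity strictly decreasing at each $\betavm$-step; together with strong normalization of $\tosigm$ this makes every $\tovm$-reduction from $\tm$ finite. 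The main obstacle is exactly this last point together with the confluence of $\tovm$: because $\tovm$ is non-terminating, both require global arguments (parallel reduction and a type-theoretic bound on $\betavm$-steps) rather than the local, measure-based reasoning that already suffices for $\tosigm$ in isolation.
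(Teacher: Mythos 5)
Your proposal is correct and follows essentially the paper's own route: Point~\ref{pappendix:basic-shuffling-different} reproduces the paper's induction case-for-case --- the same impossible root/root configuration, and the same three auxiliary facts (\reflemmap{different}{tosigm-sub}, \reflemmap{different}{tosigm}, \reflemmap{different}{sub}, plus a size argument for the root-$\srsym$ overlaps) deployed at exactly the delicate spots --- and your derivation of strong normalization of $\tosigm$ by restricting the decreasing measure of \cite[Proposition~2]{DBLP:conf/fossacs/CarraroG14} is literally the paper's argument. For the remaining claims (confluence of $\tosigm$ and of $\toshuf$, and weak-iff-strong $\vmsym$-normalization) the paper simply cites \cite{DBLP:conf/fossacs/CarraroG14} (Lemma~9.ii, Proposition~10, Theorem~24), so your Newman-plus-critical-pairs sketch, Tait--Martin-L\"of parallel reduction, and type-theoretic bound on $\betavm$-steps are plausible reconstructions of the cited material rather than a divergence, with your explicit non-strong-confluence witnesses (the nested $\srsym/\srsym$ peak, and the paper's own term $(\la\vartwo\varthree)(\delta(\varthree\varthree))\delta$ via \refpropp{basic-confluence}{strong-length}) being a small welcome addition.
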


\begin{proof}\hfill
  \begin{enumerate}
    \item By induction on $\tm \in \Lambda$.
    According to the definition of $\tm \tosigm \tmthree$ and \refrmk{alternativedef-sigm}, the following cases are impossible.
    \begin{itemize}
      \item \emph{Step at the root for $\tm \tobvm \tmtwo$} and either the \emph{Step at the root} or the \emph{Application Left} or the \emph{Application Right for $\tm \tosigm \tmthree$}.
      Indeed, if $\tm = (\la\var\tmfour)\val \rtobv \tmfour\isub\var\val = \tmtwo$ then $\la\var\tmfour$ and $\val$ are $\sigm$-normal by \refrmk{value-sigm-normal};
      moreover $\tm$ is neither a $\slsym$-redex nor a $\srsym$-redex, because $\la\var\tmfour$ and $\val$, respectively, are not applications.
     \item \emph{Application Left for $\tm \tobvm \tmtwo$} and \emph{Step inside a $\beta$-context for $\tm \tosigm \tmthree$}, \ie $\tm = \tmfour\tmfive \tobvm \tmsix\tmfive = \tmtwo$ with $\tmfour \tobvm \tmsix$, and $\tm = (\la\var\tmfour')\tmfive \tosigm (\la\var\tmseven)\tmfive = \tmthree$ with $\tmfour = \la\var\tmfour'$ and $\tmfour' \tosigm \tmseven$.
     Indeed $\tmfour$ is $\betavm$-normal by \refrmk{value-sigm-normal}. 
     \item \emph{Step inside a $\beta$-context for $\tm \tobvm \tmtwo$} and \emph{Application Left for $\tm \tosigm \tmthree$}, \ie $\tm = \tmfour\tmfive \tosigm \tmsix\tmfive = \tmthree$ with $\tmfour \tosigm \tmsix$, and $\tm = (\la\var\tmfour')\tmfive \tobvm (\la\var\tmseven)\tmfive = \tmtwo$ with $\tmfour = \la\var\tmfour'$ and $\tmfour' \tobvm \tmseven$.
     Indeed $\tmfour$ is $\sigm$-normal by \refrmk{value-sigm-normal}. 
    \end{itemize}

    Therefore, according to the definition of $\tm \tosigm \tmthree$ and \refrmk{alternativedef-sigm}, there are ``only'' eleven cases.
    \begin{itemize}
      \item \emph{Step at the root for $\tm \tobvm \tmtwo$} and \emph{Step inside a $\beta$-context for $\tm \tosigm \tmthree$}, \ie $\tm = (\la\var\tmfour)\val \rtobv \tmfour\isub\var\val = \tmtwo$ and $\tm = (\la\var\tmfour)\val \tosigm (\la\var\tmfour')\val = \tmthree$ with $\tmfour \tosigm \tmfour'$. 
      By \reflemmap{different}{tosigm-sub}, $\tmtwo \neq \tmthree$.

      \item \emph{Application Left for $\tm \tobvm \tmtwo$} and \emph{Step at the root for $\tm \tosigm \tmthree$}, \ie $\tm = \tmfour\tmfive \tobvm \tmsix\tmfive = \tmtwo$ with $\tmfour \tobvm \tmsix$, and $\tm \rtosigma \tmthree$ (see \refrmk{alternativedef-sigm}).
      It is impossible that $\tm \rtosr \tmthree$, otherwise $\tmfour$ would be a value and hence $\betavm$-normal by \refrmk{value-sigm-normal}, but this contradicts that $\tmfour \tobvm \tmsix$.
      Thus, $\tm = (\la\var\tmfour')\tmfour''\tmfive \rtosl (\la\var\tmfour'\tmfive)\tmfour'' = \tmthree$ with $\var \notin \fv{\tmfive}$ and $\tmfour = (\la\var\tmfour')\tmfour''$.
      We claim that $\tmtwo \neq \tmthree$.
      Indeed, if $\tmtwo = \tmthree$ then $\tmfive = \tmfour''$ and $\tmsix = \la\var\tmfour'\tmfive$ with $\tmfour = (\la\var\tmfour')\tmfive \tobvm \la\var\tmfour'\tmfive = \tmsix$, hence necessarily $\tmfour \rtobv \tmsix$ (\ie $\tmfour \tobvm \tmsix$ by a step at the root) and thus $\tmfive$ is a value and $\la\var\tmfour'\tmfive = \tmsix = \tmfour'\isub\var\tmfive$, but this is impossible by \reflemmap{different}{sub}.

      \item \emph{Application Left for $\tm \tobvm \tmtwo$ and $\tm \tosigm \tmthree$}, \ie $\tm = \tmfour\tmfive \tobvm \tmsix\tmfive = \tmtwo$ and $\tm = \tmfour\tmfive \tosigm \tmseven\tmfive = \tmthree$ with $\tmfour \tobvm \tmsix$ and $\tmfour \tosigm \tmseven$.
      By \ih, $\tmsix \neq \tmseven$ and hence $\tmtwo = \tmsix\tmfive \neq \tmseven\tmfive = \tmthree$.

      \item \emph{Application Left for $\tm \tobvm \tmtwo$} and \emph{Application Right for $\tm \tosigm \tmthree$}, \ie $\tm = \tmfour\tmfive \tobvm \tmsix\tmfive = \tmtwo$ and $\tm = \tmfour\tmfive \tosigm \tmfour\tmseven = \tmthree$, with $\tmfour \tobvm \tmsix$ and $\tmfive \tosigm \tmseven$.
      By \reflemmap{different}{tosigm}, $\tmfive \neq \tmseven$ and hence $\tmtwo = \tmsix\tmfive \neq \tmfour\tmseven = \tmthree$.

      \item \emph{Application Right for $\tm \tobvm \tmtwo$} and \emph{Step at the root for $\tm \tosigm \tmthree$}, \ie $\tm = \tmfour\tmfive \tobvm \tmfour\tmsix = \tmtwo$ with $\tmfive \tobvm \tmsix$, and $\tm \rtosigma \tmthree$ 
      (see \refrmk{alternativedef-sigm}).
      \begin{itemize}
        \item If $\tm \rtosl \tmthree$ then $\tm = (\la\var\tmfour')\tmfour''\tmfive \rtosl (\la\var\tmfour'\tmfive)\tmfour'' = \tmthree$ with $\var \notin \fv{\tmfive}$ and $\tmfour = (\la\var\tmfour')\tmfour''$.
	We claim that $\tmtwo \neq \tmthree$.
	Indeed, if $\tmtwo = \tmthree$ then $\tmsix = \tmfour''$ and $\tmfour = \la\var\tmfour'\tmfive$, therefore $(\la\var\tmfour')\tmsix = \tmfour = \la\var\tmfour'\tmfive$ which is impossible.

	\item If $\tm \rtosr \tmthree$ then $\tm = \tmfour ((\la\var\tmfive')\tmfive'') \rtosr (\la\var\tmfour\tmfive')\tmfive'' = \tmthree$ where $\tmfour$ is a value, $\var \notin \fv{\tmfour}$ and $\tmfive = (\la\var\tmfive')\tmfive''$.
	We claim that $\tmtwo \neq \tmthree$.
	Indeed, if $\tmtwo = \tmthree$ then $\tmfour = \la\var\tmfour\tmfive'$ which is impossible.
      \end{itemize}

      \item \emph{Application Right for $\tm \tobvm \tmtwo$ and $\tm \tosigm \tmthree$}, \ie $\tm = \tmfour\tmfive \tobvm \tmsix\tmfive = \tmtwo$ and $\tm = \tmfour\tmfive \tosigm \tmseven\tmfive = \tmthree$ with $\tmfive \tobvm \tmsix$ and $\tmfive \tosigm \tmseven$.
      By \ih, $\tmsix \neq \tmseven$ and hence $\tmtwo = \tmfour\tmsix \neq \tmfour\tmseven = \tmthree$.

      \item \emph{Application Right for $\tm \tobvm \tmtwo$} and \emph{Application Left for $\tm \tosigm \tmthree$}, \ie $\tm = \tmfour\tmfive \tobvm \tmfour\tmsix = \tmtwo$ and $\tm = \tmfour\tmfive \tosigm \tmseven\tmfive = \tmthree$, with $\tmfive \tobvm \tmsix$ and $\tmfour \tosigm \tmseven$.
      By \reflemmap{different}{tosigm}, $\tmfour \neq \tmseven$ and hence $\tmtwo = \tmfour\tmsix \neq \tmseven\tmfive = \tmthree$.
      
     \item \emph{Application Right for $\tm \tobvm \tmtwo$} and \emph{Step inside a $\beta$-context for $\tm \tosigm \tmthree$}, \ie $\tm = \tmfour\tmfive \tobvm \tmfour\tmsix = \tmtwo$ with $\tmfive \tobvm \tmsix$, and $\tm = (\la\var\tmfour')\tmfive \tosigm (\la\var\tmseven)\tmfive = \tmthree$ with $\tmfour = \la\var\tmfour'$ and $\tmfour' \tosigm \tmseven$.
     By \reflemmap{different}{tosigm}, $\tmfour' \neq \tmseven$ whence $\tmfour = \la\var\tmfour' \neq \la\var\tmseven$ and thus $\tmtwo \neq \tmthree$.
     
     \item \emph{Step inside a $\beta$-context for $\tm \tobvm \tmtwo$} and \emph{Step at the root for $\tm \tosigm \tmthree$}, \ie $\tm = (\la\var\tmfour)\tmfive \tobvm (\la\var\tmfour')\tmfive = \tmtwo$ with $\tmfour \tobvm \tmfour'$, and $\tm \rtosigma \tmthree$ (see \refrmk{alternativedef-sigm}).
      It is impossible that $\tm = (\la\var\tmfour)\tmfive \rtosl \tmthree$ because $\la\var\tmfour$ is not an application.
      Thus, $\tm = (\la\var\tmfour)((\la\vartwo\tmfive')\tmfive'') \rtosr (\la\vartwo(\la\var\tmfour)\tmfive')\tmfive'' = \tmthree$ with $\tmfive = (\la\vartwo\tmfive')\tmfive''$ and  $\vartwo \notin \fv{\la\var\tmfour}$, therefore $\tmfive \neq \tmfive''$ and hence $\tmtwo \neq \tmthree$.

     \item \emph{Step inside a $\beta$-context for $\tm \tobvm \tmtwo$} and \emph{Application Right for $\tm \tosigm \tmthree$}, \ie $\tm = \tmfour\tmfive \tosigm \tmfour\tmsix = \tmthree$ with $\tmfive \tosigm \tmsix$, and $\tm = (\la\var\tmfour')\tmfive \tobvm (\la\var\tmseven)\tmfive = \tmtwo$ with $\tmfour = \la\var\tmfour'$ and $\tmfour' \tobvm \tmseven$.
     By \reflemmap{different}{tosigm}, $\tmfive \neq \tmsix$ whence $\tmtwo \neq \tmthree$. 

      \item \emph{Step inside a $\beta$-context for $\tm \tobvm \tmtwo$ and $\tm \tosigm \tmthree$}, \ie $\tm = (\la\var\tmfour)\tmfive \tobv (\la\var\tmsix)\tmfive = \tmtwo$ and $\tm = (\la\var\tmfour)\tmfive \tosigm (\la\var\tmseven)\tmfive = \tmthree$ with $\tmfour \tobvm \tmsix$ and $\tmfour \tosigm \tmseven$. 
      By \ih, $\tmsix \neq \tmseven$ and hence $\tmtwo \neq \tmthree$.
    \end{itemize}
    
    \item In \cite[Proposition~2]{DBLP:conf/fossacs/CarraroG14} it has been proved that $\tosig$ is strongly normalizing, where $\tosig$ is just the extension of $\tosigm$ obtained by allowing reductions under $\l$'s.
    Therefore, $\tosigm \, \subseteq \, \tosig$ and hence $\tosigm$ is strongly normalizing.
    
    The (not strong) confluence of $\tosigm$ has been proved in \cite[Lemma 9.ii]{DBLP:conf/fossacs/CarraroG14}, where $\tosigm$ is denoted by $\to_{\wsym[\sigma]}$.
    \item See \cite[Proposition~10]{DBLP:conf/fossacs/CarraroG14}, where $\toperm$ is denoted by $\tow$.
    \item See \cite[Theorem~24]{DBLP:conf/fossacs/CarraroG14}, where $\toperm$ is denoted by $\tow$.
    \qedhere
  \end{enumerate}  
\end{proof}

\subsubsection{Proofs of Subsection \ref{subsect:vseq} (\texorpdfstring{Open CBV 4: the Value Sequent Calculus $\vseqcalc$}{Open CBV 4: the Value Sequent Calculus})}

We aim to prove the strong confluence of $\tolbarmut$.

Note that values are closed under substitution: for all values $\val, \valtwo$, $\val\isub{\var}{\valtwo}$ is a value.
Moreover, values are $\lambdabar$-, $\mut$- and $\vseq$-normal. 


\begin{definition}\label{def:env-reduction}
  For any $\Rule \in \{\lambdabar, \mut, \vseq\}$, given two environments $\cotm$ and $\cotmtwo$, we define $\cotm \torule \cotmtwo$ by induction on $\cotm$. 
  If $\cotm = \stempty$ then there is no $\cotmtwo$ such that $\cotm \torule \cotmtwo$.
  If $\cotm = \mutilde{\var}{\cm}$ then $\cotmtwo = \mutilde{\var}{\cmtwo}$ and $\cm \torule \cmtwo$.
  If $\cotm = \stacker{\val}{\cotm_0}$ then $\cotmtwo = \stacker{\val}{\cotmtwo_0}$ and $\cotm_0 \torule \cotmtwo_0$.
\end{definition}

\begin{remark}\label{rmk:env-reduction}
  Let $\cm$ and $\cmtwo$ be commands and $\Rule \in \{\lambdabar, \mut, \vseq\}$. 
  One has $\cm \torule \cmtwo$ iff 
  \begin{itemize}
    \item either $\cm = \comm{\la{\var}{\cm_0}}{\stacker{\val}{\cotm}}$ and $\cmtwo = \comm{\val}{\append{(\mutilde{\var}{\cm_0})}{\cotm}}$,
    \item or $\cm = \comm{\val}{\mutilde\var\cm_0}$ and $\cmtwo = \cm_0\isub{\var}\val$,
    \item or $\cm = \comm{\val}{\cotm}$, $\cmtwo = \comm{\val}{\cotmtwo}$ and $\cotm \torule \cotmtwo$.
  \end{itemize}
\end{remark}

\begin{lemma}[Substitution]
\label{l:substitution}
  Let $\cm$ and $\cm'$ be commands, $\env$ and $\env'$ be environments, $\val$ be a value and $\var$ be a variable.
  Let $\Rule \in \{\lambdabar, \mut, \vseq\}$.
  \begin{varenumerate}
    \item  If $\env \torule \env'$ then $\env\isub{\var}{\val} \torule \env'\isub{\var}{\val}$;
    \item If $\cm \torule \cm'$ then $\cm\isub{\var}{\val} \torule \cm'\isub{\var}{\val}$.
  \end{varenumerate}
\end{lemma}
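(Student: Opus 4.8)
The plan is to prove the two points \textbf{simultaneously}, by mutual induction following the mutual definition of $\torule$ on commands and environments given in \refdef{env-reduction} and recalled in \refrmk{env-reduction}. Since $\vseqcalc$-expressions are identified up to $\alpha$-equivalence, I would first rename every bound variable occurring in the expressions at hand so that it is fresh with respect to $\var$ and $\fv\val$; this makes all the substitutions below capture-avoiding by construction, so that $\isub\var\val$ may be pushed through $\mut$- and $\lambda$-binders with no side condition. Throughout I will freely use that values are closed under substitution (stated just before \refdef{env-reduction}), so that $\valtwo\isub\var\val$ is again a value whenever $\valtwo$ is.

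For point~(1) I would argue by induction on the structure of $\env$, as in \refdef{env-reduction}. The case $\env = \stempty$ is vacuous, since $\stempty$ has no $\torule$-reduct. If $\env = \mutilde\vartwo{\cm}$ then $\env' = \mutilde\vartwo{\cmp}$ with $\cm \torule \cmp$; the induction hypothesis for point~(2) yields $\cm\isub\var\val \torule \cmp\isub\var\val$, and freshness of $\vartwo$ gives $\env\isub\var\val = \mutilde\vartwo{\cm\isub\var\val} \torule \mutilde\vartwo{\cmp\isub\var\val} = \env'\isub\var\val$. The case $\env = \stacker\valtwo{\env_0}$ is analogous, applying the induction hypothesis for point~(1) to $\env_0 \torule \envtwo_0$ and prepending $\valtwo\isub\var\val$ to the stack.

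For point~(2) I would use the three-case characterization of \refrmk{env-reduction}. The contextual case $\cm = \comm\valtwo\env \torule \comm\valtwo{\env'} = \cmp$ (with $\env \torule \env'$) follows immediately from point~(1), since $(\comm\valtwo\env)\isub\var\val = \comm{\valtwo\isub\var\val}{\env\isub\var\val}$ and $\valtwo\isub\var\val$ is a value. In the $\mut$ root case $\comm\valtwo{\mutilde\vartwo{\cm_0}} \rtomut \cm_0\isub\vartwo\valtwo$, I would invoke the substitution-composition identity $(\cm_0\isub\vartwo\valtwo)\isub\var\val = (\cm_0\isub\var\val)\isub\vartwo{\valtwo\isub\var\val}$, valid because $\vartwo \notin \fv\val$ by freshness: its right-hand side is exactly the $\mut$-reduct of $\comm{\valtwo\isub\var\val}{\mutilde\vartwo{\cm_0\isub\var\val}} = \cm\isub\var\val$. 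In the $\lambdabar$ root case $\comm{\la\vartwo{\cm_0}}{\stacker\valtwo\env} \rtobvmu \comm\valtwo{\append{(\mutilde\vartwo{\cm_0})}\env}$, the point reduces to checking that $\isub\var\val$ distributes over the append operation.

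The main obstacle is precisely this last step, namely the auxiliary commutation $(\append\cm\env)\isub\var\val = \append{(\cm\isub\var\val)}{(\env\isub\var\val)}$ together with its analogue for environment appends. I would establish it by a dedicated mutual induction on commands and environments, following the defining clauses of $\append\cdot\cdot$. All clauses are routine rewritings of substitutions except $\append{(\mutilde\vartwo{\cm_0})}\env = \mutilde\varthree{\append{\cm_0\isub\vartwo\varthree}\env}$, where the implicit $\alpha$-renaming to a fresh $\varthree$ must be reconciled with the outer $\isub\var\val$; choosing $\varthree$ fresh for $\var$, $\fv\val$, $\cm_0$ and $\env$ makes the two renamings independent, and one last appeal to substitution-composition closes the case. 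Granting this commutation lemma, the $\lambdabar$ root case of point~(2) identifies $\cm\isub\var\val$'s reduct with the reduct of the substituted redex, completing the mutual induction.
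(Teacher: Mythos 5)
Your proof is correct and follows essentially the same route as the paper's: a simultaneous mutual induction on commands and environments over the clauses of \refdef{env-reduction} and \refrmk{env-reduction}, with on-the-fly freshness of bound variables and the standard substitution-composition identity in the $\mut$ root case. The only difference is presentational: you isolate the commutation of $\isub\var\val$ with the append operation as an explicit auxiliary lemma, whereas the paper uses it silently in the $\lambdabar$ root case under its wlog freshness assumption---a defensible extra precaution, not a different argument.
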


\begin{proof}
  Both points are proved simultaneously by mutual induction on $\cm$ and $\env$.
  Cases:
  \begin{enumerate}
    \item \emph{Step at the root for $\cm \tobvmu \cm'$}, \ie $\cm \defeq \comm{\la{\vartwo}{\cm_0}}{\stacker{\valtwo}{\env_0}} \rtobvmu \comm{\valtwo}{\append{(\mutilde{\vartwo}{\cm_0})}{\env_0}} \eqdef \cm'$. 
    We can suppose without loss of generality that $\vartwo \notin \fv{\val} \cup \{\var\}$. 
    So, $\cm\isub{\var}{\val} = \comm{\la{\vartwo}{\cm_0\isub{\var}{\val}}}{\stacker{\valtwo\isub{\var}{\val}}{\env_0\isub{\var}{\val}}} \tobvmu \comm{\valtwo\isub{\var}{\val}}{\append{(\mutilde{\vartwo}{\cm_0\isub{\var}{\val}})}{\env_0\isub{\var}{\val}}} = \cm'\isub{\var}{\val}$.
    \item \emph{Step at the root for $\cm \tomut \cm'$}, \ie $\cm \defeq \comm{\valtwo}{\mutilde{\vartwo}{\cm_0}} \rtomut \cm_0\isub{\vartwo}{\valtwo} \eqdef \cm'$. 
    We can suppose without loss of generality that $\vartwo \notin \fv{\val} \cup \{\var\}$. 
    So, $\cm\isub{\var}{\val} = \comm{\valtwo\isub{\var}{\val}}{\mutilde{\vartwo}{\cm_0\isub{\var}{\val}}} \tomut \cm_0\isub{\var}{\val} \isub{\vartwo}{\valtwo\isub{\var}{\val}} = \cm'\isub{\var}{\val}$.
    \item \emph{Environment step for $\cm \torule \cm'$}, \ie $\cm \defeq \comm{\valtwo}{\env} \torule \comm{\valtwo}{\env'} \eqdef \cm' $ with $\env \torule \env'$: by \ih, $\env\isub{\var}{\val} \torule \env'\isub{\var}{\val}$ and hence $\cm\isub{\var}{\val} = \comm{\valtwo\isub{\var}{\val}}{\env\isub{\var}{\val}} \torule \comm{\valtwo\isub{\var}{\val}}{\env'\isub{\var}{\val}} = \cm'\isub{\var}{\val}$ according to \refrmk{env-reduction}.
    \item \emph{$\tilde{\mu}$-environment step for $\env \torule \env'$}, \ie $\env \defeq \mutilde{\vartwo}{\cm} \torule \mutilde{\vartwo}{\cmtwo} \eqdef \env'$ with $\cm \torule \cmtwo$. 
    We can suppose without loss of generality that $\vartwo \notin \fv{\val} \cup \{\var\}$.
    By \ih $\cm\isub{\var}{\val} \torule \cmtwo\isub{\var}{\val}$, and thus $\env\isub{\var}{\val} = \mutilde{\vartwo}{\cm\isub{\var}{\val}} \torule \mutilde{\vartwo}{\cmtwo\isub{\var}{\val}} = \env'\isub{\var}{\val}$ according to \refdef{env-reduction}.
    \item \emph{Environment step for $\env \torule \env'$}, \ie $\env \defeq \stacker{\valtwo}{\env_0} \torule \stacker{\valtwo}{\env_0'} \eqdef \env' $ with $\env_0 \torule \env_0'$: by \ih, $\env_0\isub{\var}{\val} \torule \env_0'\isub{\var}{\val}$ and hence $\env\isub{\var}{\val} = \stacker{\valtwo\isub{\var}{\val}}{\env_0\isub{\var}{\val}} \torule \stacker{\valtwo\isub{\var}{\val}}{\env_0'\isub{\var}{\val}} = \env'\isub{\var}{\val}$ according to \refdef{env-reduction}.
    \qedhere
  \end{enumerate}

\end{proof}

\begin{lemma}[Append]
\label{l:cosubstitution}
  Let $\Rule \in \{\lambdabar, \mut, \vseq\}$, $\cm$ be a command and $\env_0$, $\env$ and $\envtwo$ be environments. 
  If $\env \torule \env'$ then $\append{\env_0}{\env} \torule \append{\env_0}{\envtwo}$ and $\append{\cm}{\env} \torule \append{\cm}{\envtwo}$.
\end{lemma}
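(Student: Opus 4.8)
The plan is to prove the two claims simultaneously by mutual induction on the structure of the first argument of the append operation: the environment $\env_0$ for the statement $\append{\env_0}{\env} \torule \append{\env_0}{\env'}$, and the command $\cm$ for the statement $\append{\cm}{\env} \torule \append{\cm}{\env'}$. The reduction $\env \torule \env'$ is fixed throughout; only the object appended onto decreases, and each case simply unfolds the corresponding defining clause of $\append{\cdot}{\cdot}$ and then re-applies \refdef{env-reduction} or \refrmk{env-reduction} to repackage the reduction obtained from the inductive hypothesis. It is also convenient to record at the outset that $\torule$ never creates free variables, so $\fv{\env'} \subseteq \fv{\env}$.

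First I would dispatch the base case $\env_0 = \stempty$: here $\append{\stempty}{\env} = \env \torule \env' = \append{\stempty}{\env'}$ holds directly by hypothesis. For $\env_0 = \stacker\val\cotmtwo$ the definition gives $\append{\env_0}{\env} = \stacker\val{(\append{\cotmtwo}{\env})}$; the inductive hypothesis on $\cotmtwo$ yields $\append{\cotmtwo}{\env} \torule \append{\cotmtwo}{\env'}$, and one concludes $\stacker\val{(\append{\cotmtwo}{\env})} \torule \stacker\val{(\append{\cotmtwo}{\env'})}$ by the $\stacker$-clause of \refdef{env-reduction}. Symmetrically, for a command $\cm = \comm\val\cotmtwo$ we have $\append{\cm}{\env} = \comm\val{(\append{\cotmtwo}{\env})}$, and the environment reduction delivered by the inductive hypothesis on $\cotmtwo$ is lifted to a command reduction via the third case of \refrmk{env-reduction}.

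The only clause requiring attention is $\env_0 = \mutilde\var{\cm_0}$, where $\append{\env_0}{\env} = \mutilde\vartwo{(\append{\cm_0\isub\var\vartwo}{\env})}$ with $\vartwo$ chosen fresh. Here I would apply the inductive hypothesis for commands to $\cm_0\isub\var\vartwo$ rather than to $\cm_0$ itself: this is legitimate because the renaming $\isub\var\vartwo$ leaves the structural size unchanged, so the induction is well-founded once read modulo $\alpha$-renaming. It gives $\append{\cm_0\isub\var\vartwo}{\env} \torule \append{\cm_0\isub\var\vartwo}{\env'}$, whence $\mutilde\vartwo{(\append{\cm_0\isub\var\vartwo}{\env})} \torule \mutilde\vartwo{(\append{\cm_0\isub\var\vartwo}{\env'})}$ by the $\mutildesym$-clause of \refdef{env-reduction}. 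To let both appends use the same binder $\vartwo$, the inclusion $\fv{\env'} \subseteq \fv{\env}$ noted above guarantees that a single $\vartwo$ fresh for $\env$ is automatically fresh for $\env'$, so the freshness side-condition of $\append{\cdot}{\cdot}$ is satisfied in both cases.

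The main obstacle, such as it is, is precisely this interaction between the binder-renaming built into the definition of append and the selection of a common fresh variable: the genuine content of the lemma is the bookkeeping ensuring that the same $\vartwo$ works before and after the step and that the induction descends correctly through $\cm_0\isub\var\vartwo$. Everything else is a direct unfolding of the defining equations of $\append{\cdot}{\cdot}$ together with \refdef{env-reduction} and \refrmk{env-reduction}.
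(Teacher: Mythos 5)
Your proof is correct and follows essentially the same route as the paper's: a mutual induction on $\env_0$ and $\cm$, unfolding the defining clauses of the append and repackaging the step via \refdef{env-reduction} and \refrmk{env-reduction}, with the $\mutildesym$-case handled by choosing the bound variable fresh for $\env$ (and hence, since reduction creates no free variables, for $\envtwo$). The paper just phrases the binder bookkeeping as a ``without loss of generality'' $\alpha$-renaming of $\env_0$ so that the IH applies to $\cm_0$ directly, whereas you apply it to $\cm_0\isub\var\vartwo$ with a size-based well-foundedness remark --- the same argument, stated more explicitly.
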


\begin{proof}
  We prove simultaneously that $\append{\env_0}{\env} \torule \append{\env_0}{\envtwo}$ and $\append{\cm}{\env} \torule \append{\cm}{\envtwo}$ by mutual induction on $\cm$ and $\env_0$.
  Cases:
  \begin{enumerate}
    \item $\cm = \comm{\val}{\env_0}$: by \ih, $\append{\env_0}{\env} \allowbreak\torule \append{\env_0}{\envtwo}$. 
    Thus, $\append{\cm}{\env} = \comm{\val}{\append{\env_0}{\env}} \torule \comm{\val}{\append{\env_0}{\envtwo}} = \append{\cm}{\envtwo}$ according to \refrmk{env-reduction}.
    
    \item $\env_0 = \stempty$: then, $\append{\env_0}{\env} = \env \torule \envtwo = \append{\env_0}{\envtwo}$.
        
    \item $\env_0 = \stacker{\val_0}{\envtwo_0}$: by \ih, $\append{\envtwo_0}{\env} \torule \append{\envtwo_0}{\envtwo}$. 
    Hence, $\append{\env_0}{\env} = \stacker{\val}{(\append{\envtwo_0}{\env})} \torule \stacker{\val}{(\append{\envtwo_0}{\envtwo})} = \append{\env_0}{\envtwo}$ according to \refdef{env-reduction}.
    
    \item $\env_0 = \mutilde{\vartwo}{\cm}$: we can suppose without loss of generality that $\vartwo \notin \fv{\val} \cup \fv{\env} \cup \{\var\}$, whence $\vartwo \notin \fv{\envtwo}$.
    By \ih, $\append{\cm}{\env} \torule \append{\cm}{\envtwo}$.
    Hence, $\append{\env_0}{\env} = \mutilde{\vartwo}{\append{\cm}{\env}} \torule \mutilde{\vartwo}{\append{\cm}{\envtwo}} = \append{\env_0}{\envtwo}$ according to \refdef{env-reduction}.
    \qedhere
   \end{enumerate}

\end{proof}

%
%

\begin{lemma}[Append Commutes]\hfill
\label{l:append-commutes} 
\begin{enumerate}
\item \emph{Evaluation Contexts}: \label{p:append-commutes-ev-ctxs}
$\append{\cmctxp\cm}\cotm = \cmctxp{\append\cm\cotm}$ and $\append{\cotctxp\cotmtwo}\cotm = \cotctxp{\append\cotmtwo\cotm}$.
\item \emph{Rewriting Steps in Commands and Environments}: \label{p:append-commutes-cm-steps}
if $\cm \to_{\lambdamucalc} \cmtwo$ (resp.~$\cotm \to_{\lambdamucalc} \cotmtwo$) then $\append\cm\cotm_0  \tovseq \append\cmtwo\cotm_0$ (resp.~$\append\cotm\cotm_0  \tovseq \append\cotmtwo\cotm_0$).
\end{enumerate}
\end{lemma}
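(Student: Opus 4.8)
The plan is to establish the two items separately, each by an induction that follows the mutually inductive definitions of the syntactic objects involved. I read the two arrows as one and the same relation: $\to_{\lambdamucalc}$ and $\tovseq$ both denote $\tobvmu \cup \tomut$, for which \refrmk{env-reduction} and \refdef{env-reduction} provide an inductive characterisation on commands and on environments. Since terms are taken up to $\alpha$-equivalence, in any expression $\append{(\mutilde\var\cm)}\cotm$ I may choose the bound variable fresh for $\cotm$, so that the defining clause of append reads simply $\append{(\mutilde\var\cm)}\cotm = \mutilde\var{(\append\cm\cotm)}$; I use this normalised form of the clause silently.

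For the \emph{Evaluation Contexts} item I would proceed by mutual induction on $\cmctx$ and $\cotctx$. The base cases $\cmctx = \ctxhole$ and $\cotctx = \comm\val\ctxhole$ are immediate from the definition of append. For $\cotctx = \cotctxtwop{\stacker\val\ctxhole}$ one unfolds append on the cons constructor and applies the induction hypothesis for $\cotctxtwo$. The one interesting case is $\cmctx = \cotctxp{\mutilde\var\cmctxtwo}$: here $\cmctxp\cm = \cotctxp{\mutilde\var{\cmctxtwop\cm}}$, and I would rewrite $\append{\cmctxp\cm}\cotm$ by first applying the environment-context induction hypothesis for $\cotctx$ (to the environment $\mutilde\var{\cmctxtwop\cm}$ filling its hole), then the $\mutildesym$-clause of append in the normalised form above, and finally the command-context induction hypothesis for $\cmctxtwo$; this yields $\cotctxp{\mutilde\var{\cmctxtwop{\append\cm\cotm}}} = \cmctxp{\append\cm\cotm}$, as required.

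For the \emph{Rewriting Steps} item I would first record two auxiliary equalities, each proved by a short mutual induction on commands and environments: \textbf{associativity}, $\append{(\append\cotm\cotmtwo)}{\cotmthree} = \append\cotm{(\append\cotmtwo\cotmthree)}$ (and its command variant), and the \textbf{substitution} law $(\append\cm\cotm)\isub\var\val = \append{(\cm\isub\var\val)}{(\cotm\isub\var\val)}$ (and its environment variant). With these in hand I would argue by mutual induction on the derivation of $\cm \tovseq \cmtwo$ and $\cotm \tovseq \cotmtwo$, using the case analysis of \refrmk{env-reduction} and \refdef{env-reduction}. If $\cm = \comm{\la\var\cmthree}{\stacker\val\cotm}$ fires the $\lambdabar$-rule to $\cmtwo = \comm\val{\append{(\mutilde\var\cmthree)}\cotm}$, then $\append\cm{\cotm_0} = \comm{\la\var\cmthree}{\stacker\val{(\append\cotm{\cotm_0})}} \tobvmu \comm\val{\append{(\mutilde\var\cmthree)}{(\append\cotm{\cotm_0})}}$, and associativity identifies the reduct with $\append{\cmtwo}{\cotm_0}$. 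If $\cm = \comm\val{\mutilde\var\cmthree}$ fires the $\mutildesym$-rule to $\cmtwo = \cmthree\isub\var\val$, then, taking $\var$ fresh for $\cotm_0$, $\append\cm{\cotm_0} = \comm\val{\mutilde\var{(\append\cmthree{\cotm_0})}} \tomut (\append\cmthree{\cotm_0})\isub\var\val$, and the substitution law together with $\var \notin \fv{\cotm_0}$ rewrites this as $\append{(\cmthree\isub\var\val)}{\cotm_0} = \append{\cmtwo}{\cotm_0}$. The remaining, purely contextual cases (reduction inside the environment of a command, inside the tail of a cons $\stacker\val\cotm$, or under a $\mutildesym$) follow from the induction hypothesis together with the fact that $\tovseq$ is closed under the corresponding constructor, which is exactly what \refrmk{env-reduction} and \refdef{env-reduction} provide.

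The main obstacle I expect is not conceptual but bookkeeping: the correct formulation and proof of the associativity law for append, whose $\mutilde\var\cm$ case again requires choosing bound variables fresh, and the discipline of keeping the freshness side conditions coherent across the nested inductions. Once associativity and the substitution law are in place, the root cases collapse to one-line computations and the contextual cases are mechanical.
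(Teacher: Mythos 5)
Your proof is correct and follows the same overall route as the paper's: point 1 by mutual induction on $\cmctx$ and $\cotctx$ (with the same four cases, the $\cotctxp{\mutilde\var\cmctxtwo}$ case using both induction hypotheses exactly as you describe), and point 2 by mutual induction on the step, following the case analysis of \refrmk{env-reduction} and \refdef{env-reduction}. The one genuine difference is how the two root cases are discharged. Where you isolate an \emph{associativity} law $\append{(\append\cotm\cotmtwo)}{\cotmthree} = \append\cotm{(\append\cotmtwo\cotmthree)}$ and a \emph{substitution} law $(\append\cm\cotm)\isub\var\val = \append{(\cm\isub\var\val)}{(\cotm\isub\var\val)}$ as standalone lemmas, the paper handles the $\lambdabar$ root case by invoking its own point 1 with the context $\cotctx = \comm{\val}{\mutilde\var{\append{\cm_0}{\ctxhole}}}$, thereby obtaining exactly the associativity instance $\append{\cm_0}{(\append{\cotm}{\cotm_0})} = \append{(\append{\cm_0}{\cotm})}{\cotm_0}$, and in the $\mut$ root case it asserts $(\append{\cm_0}{\cotm_0})\isub\var\val = \append{(\cm_0\isub\var\val)}{\cotm_0}$ (for $\var\notin\fv{\cotm_0}$) without further comment. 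Your variant is, if anything, slightly more careful: a context of the shape $\comm{\val}{\mutilde\var{\append{\cm_0}{\ctxhole}}}$ is not literally generated by the grammar of environment evaluation contexts in \reffig{lambdamu-calculus} (contexts of the form $\append{\cm_0}{\ctxhole}$ interleave $\mutildesym$-binders with cons cells, i.e.\ they are composites $\cmctxp{\cotctxp{\ctxhole}}$ rather than a single $\cotctx$), so the paper's appeal to point 1 really amounts to an iterated application or, equivalently, to your explicit associativity lemma; likewise your substitution law makes precise what the paper leaves implicit (note its Substitution Lemma concerns stability of reduction under substitution, not the commutation of substitution with append). Both of your auxiliary inductions are routine, your freshness bookkeeping via $\alpha$-renaming matches the paper's ``without loss of generality'' conventions, and the contextual cases go through exactly as in the paper, so the proposal is sound and buys a bit of extra rigor at the cost of two short extra lemmas.
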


\begin{proof}\hfill
\begin{enumerate}
\item By mutual induction on $\cmctx$ and $\cotctx$ (see \reffig{lambdamu-calculus}). Cases:
\begin{itemize}
  \item $\cmctx = \ctxhole$: then, $\append{\cmctxp\cm}\cotm = \append{\cm}{\cotm} = \cmctxp{\append\cm\cotm}$.
  \item $\cmctx = \cotctxp{\mutilde\var\cmctxtwo}$: we can suppose without loss of generality that $\var \notin \fv{\cotm}$.
  So, $\append{\cmctxp\cm}\cotm = \append{\cotctxp{\mutilde\var\cmctxtwop{\cm}}}{\cotm} \overset{\ih}{=} \cotctxp{\append{(\mutilde\var\cmctxtwop{\cm})}{\cotm}} = \cotctxp{\mutilde\var{(\append{\cmctxtwop{\cm}}{\cotm})}} \allowbreak \overset{\ih}{=} \cotctxp{\mutilde\var{\cmctxtwop{\append{\cm}{\cotm}}}} \allowbreak=  \cmctxp{\append{\cm}\cotm}$, with we have applied the \ih the first time to $\cotctx$, the second time to $\cmctxtwo$.
  \item $\cotctx = \comm{\val}{\ctxhole}$: then,  $\append{\cotctxp\cotmtwo}\cotm = \append{\comm{\val}{\cotmtwo}}{\cotm} = \comm{\val}{\append{\cotmtwo}{\cotm}} = \cotctxp{\append\cotmtwo\cotm}$.
  \item $\cotctx = \cotctxtwop{\stacker{\val}{\ctxhole}}$: one has $\append{\cotctxp{\cotmtwo}}{\cotm} = \append{\cotctxtwop{\stacker{\val}{\cotmtwo}}}{\cotm} \overset{\ih}{=} \cotctxtwop{\append{(\stacker{\val}{\cotmtwo})}{\cotm}} = \cotctxtwop{\stacker{\val}(\append{\cotmtwo}{\cotm})}$. 
\end{itemize}

\item 
  By mutual induction on $\cm$ and $\cotm$.
  According to \refrmk{env-reduction}, there are three cases for $\cm \tolbarmut \cmtwo$:
  \begin{itemize}
    \item either $\cm = \comm{\la{\var}{\cm_0}}{\stacker{\val}{\cotm}} \tolbarmut \comm{\val}{\append{(\mutilde{\var}{\cm_0})}{\env}} = \cmtwo$:
    then, $\append\cm\cotm_0 = \comm{\la{\var}{\cm_0}}{\stacker{\val}{(\append{\cotm}{\cotm_0})}} \tolbarmut \comm{\val}{\mutilde\var{\append{\cm_0}{(\append{\cotm}{\cotm_0})}}} = \comm{\val}{\mutilde\var{\append{(\append{\cm_0}{\cotm})}{\cotm_0}}} = \append{\cmtwo}{\cotm_0}$, where the next-to-last identity holds by \reflemmap{append-commutes}{ev-ctxs} taking $\cotctx = \comm{\val}{\mutilde\var{\append{\cm_0}{\ctxhole}}}$;
    \item or $\cm = \comm{\val}{\mutilde\var\cm_0} \tolbarmut \cm_0\isub{\var}\val = \cmtwo$:
    we can suppose without loss of generality that $\var \notin \fv{\cotm_0}$;
    so, $\append{\cm}{\cotm_0} = \comm{\val}{\mutilde\var{(\append{\cm_0}{\cotm_0})}} \tolbarmut (\append{\cm_0}{\cotm_0})\isub\var\val = \append{\cmtwo}{\cotm_0} $;
    \item or $\cm = \comm{\val}{\env} \tolbarmut \comm{\val}{\envtwo} = \cmtwo$ with $\env \tolbarmut \envtwo$: then, $\append{\cm}{\cotm_0} = \comm{\val}{\append{\cotm}{\cotm_0}} \tolbarmut \comm{\val}{\append{\cotmtwo}{\cotm_0}} = \append{\cmtwo}{\cotm_0}$ by \refrmk{env-reduction}.
  \end{itemize}

  According to \refdef{env-reduction}, there are only two cases for $\cotm \tolbarmut \cotmtwo$:  
  \begin{itemize}
    \item either $\env = \mutilde{\var}{\cm}$ and $\envtwo = \mutilde{\var}{\cmtwo}$ and $\cm \tolbarmut \cmtwo$:
    we can suppose without loss of generality that $\var \notin \fv{\cotm_0}$;
    by \ih, $\append\cm\cotm_0 \tolbarmut \append\cmtwo\cotm_0$ and hence $\append \cotm \cotm_0 = \mutilde\var{(\append{\cm}{\cotm_0})} \tolbarmut \mutilde\var{(\append{\cmtwo}{\cotm_0})} = \append\cotmtwo\cotm_0$;
    \item or $\cotm = \stacker{\val}{\cotm_1}$ and $\cotmtwo = \stacker{\val}{\cotmtwo_1}$ with $\cotm_1 \tolbarmut \cotmtwo_1$;
    by \ih, $\append\cotm\cotm_1  \to_{\lambdamucalc} \append\cotmtwo\cotm_1$ and hence $\append\cotm\cotm_0 = \stacker{\val}{(\append{\cotm_1}{\cotm_0})} \to_{\lambdamucalc} \stacker{\val}{(\append{\cotmtwo_1}{\cotm_0})} = \append\cotmtwo\cotm_0$.
    \qedhere
  \end{itemize}

\end{enumerate}

\end{proof}

\setcounter{propositionAppendix}{\value{prop:basic-lambdamu}}
\begin{propositionAppendix}[Basic properties of $\lambdamucalc$]
\label{propAppendix:basic-lambdamu}\hfill
\NoteState{prop:basic-lambdamu}
  \begin{enumerate}
    \item\label{pappendix:basic-lambdamu-tobvmu-strong-confluence} $\tobvmu$ is strongly normalizing and strongly confluent.
    \item\label{pappendix:basic-lambdamu-tomut-strong-confluence} $\tomut$ is strongly normalizing and strongly confluent.
    \item\label{pappendix:basic-lambdamu-strong-commutation} $\tobvmu$ and $\tomut$ strongly commute.
    \item $\tovseq$ is strongly confluent, and all $\vseq$-normalizing derivations $\deriv$ from a command $\cm$ or an environment $\env$ (if any) have the same length $\sizevseq{\deriv}$, the same number $\sizemut{\deriv}$ of $\tilde{\mu}$-steps, and the same number $\sizelbar{\deriv}$ of $\lambdabar$-steps.
  \end{enumerate}
\end{propositionAppendix}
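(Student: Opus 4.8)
The plan is to mirror the Hindley--Rosen strategy already applied to $\plotcalc$, $\firecalc$, and $\vsubcalc$: prove that each of the two rules $\tobvmu$ and $\tomut$ is, separately, strongly normalizing and strongly confluent (points~1 and~2), prove that the two rules strongly commute (point~3), and then obtain point~4 mechanically from the Strong Hindley--Rosen Lemma (\reflemma{hindley-rosen}). All the case analyses are carried out by mutual induction on commands and environments, since reduction on environments is defined in terms of reduction on commands (\refdef{env-reduction}). Two preliminary facts that make the inductions go through are already available: values are $\lambdabar$-, $\mut$- and $\vseqsym$-normal and stable under value-substitution, and reduction is preserved by value-substitution and by appending (the Substitution, Append and Append Commutes lemmas, \reflemma{substitution}, \reflemma{cosubstitution} and \reflemma{append-commutes}).

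For the strong normalization parts I would exhibit a strictly decreasing measure for each rule, exploiting the fact that both reductions are morally weak: the evaluation contexts of \reffig{lambdamu-calculus} never enter an abstraction $\la\var\cm$, so every constructor sitting under a $\lambda$ is frozen. For $\tobvmu$, the total number of stack constructors $\stacker\val\cotm$ occurring in a command strictly decreases at each step: firing $\comm{\la\var\cm}{\stacker\val\cotm}$ consumes the topmost $\stacker\val{}$, and by the definition of $\append{}{}$ it only renames a bound variable, so no stack constructor is created. For $\tomut$, the number of $\mutildesym$-binders \emph{not} occurring under any $\lambda$ strictly decreases: firing $\comm\val{\mutilde\var\cm}$ erases exactly one such binder, and substituting the value $\val$ cannot create new active ones, because the body of $\val$ (when $\val$ is an abstraction) lies under a $\lambda$, while a variable carries no $\mutildesym$ at all.

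The genuinely delicate part is the strong confluence of each rule and the strong commutation of the two. I would run the standard critical-pair analysis: since values are normal, the only overlaps are the controlled ones between a root step and a step inside a subcommand, subenvironment, or argument, plus the purely internal (context-closure) overlaps, which close by the induction hypothesis and, when a value gets substituted, by \reflemma{substitution}. The main obstacle is the append operation: a root $\lambdabar$-step rewrites $\comm{\la\var\cm}{\stacker\val\cotm}$ to a command built with $\append{(\mutilde\var\cm)}\cotm$, so to close a diagram in which the second redex lies inside $\cotm$ or inside the body $\cm$ I must commute that residual step past the append, which is precisely the content of \reflemma{append-commutes}, together with \reflemma{cosubstitution} for steps located in the appended environment. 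In the commutation case I must moreover verify the distinctness $\tm_1 \neq \tm_2$ that the definition of strong commutation demands (and that the definition of strong confluence assumes); this is handled exactly as the distinctness lemmas \reflemma{toin-different} and \reflemma{different} for $\firecalc$ and $\shufcalc$.

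Finally, with points~1--3 established, point~4 follows immediately from \reflemma{hindley-rosen} applied to $\tolbarmut \, = \, \tobvmu \cup \tomut$: it gives strong confluence of $\tolbarmut$ and, taking $\to_1 \, = \, \tobvmu$ and $\to_2 \, = \, \tomut$, the invariance across all normalizing derivations $\deriv$ from a given command (or environment) of the total length $\sizevseq\deriv$, of the number $\sizelbar{\deriv}$ of $\lambdabar$-steps, and of the number $\sizemut{\deriv}$ of $\tilde\mu$-steps.
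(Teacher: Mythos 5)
Your proposal is correct and follows the paper's overall architecture exactly: separate strong normalization and strong confluence for $\tobvmu$ and $\tomut$, strong commutation of the two, a mutual induction on commands and environments using \reflemma{substitution} for the $\tilde\mu$-root overlaps and \reflemma{cosubstitution}/\reflemma{append-commutes} for the $\lambdabar$-root overlaps, and finally point~4 via \reflemma{hindley-rosen}. The genuine divergence is in the termination sub-arguments, and there your version is arguably better: the paper proves SN of $\tobvmu$ by observing that each step strictly decreases the number of $\lambda$'s (your stack-constructor count works equally well, since $\append{(\mutilde\var\cm)}\cotm$ only relocates $\cotm$ and renames a bound variable), but for SN of $\tomut$ the paper simply cites Herbelin, whereas your measure---the number of $\mutildesym$-binders not under any $\lambda$---gives a correct self-contained proof: the fired binder is never $\lambda$-guarded because evaluation contexts do not enter abstractions, and duplicating a value $\val$ creates no new active binders since any $\mutildesym$ in $\val$ sits under $\val$'s own $\lambda$ (erasure is likewise harmless, as it only removes $\lambda$-guarded binders). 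One small over-inclusion: in your critical-pair discussion for a root $\lambdabar$-step on $\comm{\la\var\cm}{\stacker\val\cotm}$, the case of a second redex ``inside the body $\cm$'' is vacuous---$\la\var\cm$ is a value, reduction is weak, so the source term has no redex there; the only real overlap is with a step inside $\cotm$, which is exactly the case the paper closes with \reflemma{cosubstitution}. This is harmless, but removing it would match the actual case analysis (the paper's confluence proofs for $\tobvmu$ and $\tomut$ each have exactly four cases, and the commutation proof five).
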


\begin{proof}\hfill
  \begin{enumerate}
    \item Note that if $\cm \tobvmu \cmtwo$ then the number of occurrences of $\lambda$ in $\cmtwo$ is strictly less than in $\cm$: this is enough to prove that $\tobvmu$ is strongly normalizing.
    
    Concerning the strong confluence of $\tobvmu$, we prove that 
    \begin{enumerate}
      \item (commands) if $\cm \tobvmu \cm_1$ and $\cm \tobvmu \cm_2$ with $\cm_1 \neq \cm_2$, then there exists $\cm'$ such that $\cm_1 \tobvmu \cm'$ and $\cm_2 \tobvmu \cm'$; 
      \item (environments) if $\env \tobvmu \env_1$ and $\env \tobvmu \env_2$ with $\env_1 \neq \env_2$, then there exists $\cm'$ such that $\env_1 \tobvmu \env'$ and $\env_2 \tobvmu \env'$. 
    \end{enumerate}
    
    The proof is by mutual induction on $\cm$ and $\env$. Cases:
  
    \begin{itemize}
      \item \emph{Step at the root for $\cm \tobvmu \cm_1$ and Step on a $\val$-environment for $\cm \tobvmu \cm_2$}, \ie $\cm \defeq \comm{\la{\var}{\cm_0}}{\stacker{\val}{\env}} \tobvmu \comm{\val}{\append{(\mutilde{\var}{\cm_0})}{\env}} \eqdef \cm_1$ and $\cm \tobvmu \comm{\la{\var}{\cm_0}}{\stacker{\val}{\envtwo}} \eqdef \cm_2$ with $\env \tobvmu \envtwo$.
      Then, $\cm_2 \tobvmu \comm{\val}{\append{(\mutilde{\var}{\cm_0})}{\envtwo}} \eqdef \cm'$. 
      According to the co-substitution lemma (\reflemma{cosubstitution}), $\cm_1 \tobvmu \cm'$.

      \item \emph{Step on an environment for both $\cm \tobvmu \cm_1$ and  $\cm \tobvmu \cm_2$}, \ie $\cm \defeq \comm{\val}{\env} \tobvmu \comm{\val}{\env_1} \eqdef \cm_1$ and $\cm \tobvmu \comm{\val}{\env_2}\eqdef \cm_2$ with $\env_1 \lRew{\lambdabar} \env \tobvmu \env_2$.
      By \ih, there is an environment $\env'$ such that $\env_1 \tobvmu \env' \lRew{\lambdabar} \env_2$. 
      According to ~\refrmk{env-reduction}, $\cm_1 \tobvmu \cm' \lRew{\lambdabar} \cm_2$ by taking $\cm' \defeq \comm{\val}{\env'}$.
      
      \item \emph{Step on a $\tilde{\mu}$-environment for both $\env \tobvmu \env_1$ and $\env \tobvmu \env_2$}, \ie $\env \defeq \mutilde{\alpha}{\cm} \tobvmu \mutilde{\alpha}{\cm_1} \eqdef \env_1$ and $\env \tobvmu \mutilde{\alpha}{\cm_2} \eqdef \env_2$ with $\cm_1 \lRew{\lambdabar} \cm \tobvmu \cm_2$.       
      By \ih, there is a command $\cm'$ such that $\cm_1 \tobvmu \cm' \lRew{\lambdabar} \cm_2$. 
      So, $\env_1 \tobvmu \env' \lRew{\lambdabar} \env_2$ by taking $\env' \defeq \mutilde{\alpha}{\cm'}$, according to \refdef{env-reduction}.
      
      \item \emph{Step on an environment for both $\env \tobvmu \env_1$ and  $\env \tobvmu \env_2$}, \ie $\env \defeq \stacker{\val}{\env'} \tobvmu \stacker{\val}{\env_1'} \eqdef \env_1$ and $\env \tobvmu \stacker{\val}{\env_2'}\eqdef \env_2$ with $\env_1' \lRew{\lambdabar} \env' \tobvmu \env_2'$.
      By \ih, there is an environment $\env_0'$ such that $\env_1' \tobvmu \env_0' \lRew{\lambdabar} \env_2'$. 
      According to ~\refrmk{env-reduction}, $\env_1 \tobvmu \env_0 \lRew{\lambdabar} \env_2$ by taking $\env_0 \defeq \stacker{\val}{\env_0'}$.
    \end{itemize}

    \item The proof of strong normalization of $\tomut$ is in \cite{Herbelin05}.
    
    Concerning the proof of strong confluence of $\tomut$, we prove that:
    \begin{enumerate}
      \item (commands) if $\cm \tomut \cm_1$ and $\cm \tomut \cm_2$ with $\cm_1 \neq \cm_2$, then there exists $\cm'$ such that $\cm_1 \tomut \cm'$ and $\cm_2 \tomut \cm'$; 
      \item (environments) if $\env \tomut \env_1$ and $\env \tomut \env_2$ with $\env_1 \neq \env_2$, then there exists $\cm'$ such that $\env_1 \tomut \env'$ and $\env_2 \tomut \env'$. 
    \end{enumerate}
    
    The proof is by mutual induction on $\cm$ and $\env$. Cases:
  
    \begin{itemize}
      \item \emph{Step at the root for $\cm \tomut \cm_1$ and Step on a $\tilde{\mu}$-environment for $\cm \tomut \cm_2$}, \ie $\cm \defeq \comm{\val}{\mutilde{\var}{\cm_0}} \tomut \cm_0\isub{\var}{\val} \eqdef \cm_1$ and $\cm \tomut \comm{\val}{\mutilde{\var}{\cmtwo'}} \eqdef \cm_2$ with $\cm_0 \tomut \cmtwo'$.
      Then, $\cm_2 \tomut \cmtwo'\isub{\var}{\val} \eqdef \cm'$. 
      According to the substitution lemma (\reflemma{substitution}), $\cm_1 \tomut \cm'$.

      \item \emph{Step on an environment for both $\cm \tomut \cm_1$ and  $\cm \tomut \cm_2$}, \ie $\cm \defeq \comm{\val}{\env} \tomut \comm{\val}{\env_1} \eqdef \cm_1$ and $\cm \tomut \comm{\val}{\env_2}\eqdef \cm_2$ with $\env_1 \lRew{\tilde\mu} \env \tomut \env_2$.
      By \ih, there is an environment $\env'$ such that $\env_1 \tomut \env' \lRew{\tilde\mu} \env_2$. 
      According to ~\refrmk{env-reduction}, $\cm_1 \tomut \cm' \lRew{\tilde\mu} \cm_2$ by taking $\cm' \defeq \comm{\val}{\env'}$.
      
      \item \emph{Step on a $\tilde{\mu}$-environment for both $\env \tomut \env_1$ and $\env \tomut \env_2$}, \ie $\env \defeq \mutilde{\alpha}{\cm} \tomut \mutilde{\alpha}{\cm_1} \eqdef \env_1$ and $\env \tomut \mutilde{\alpha}{\cm_2} \eqdef \env_2$ with $\cm_1 \lRew{\tilde\mu} \cm \tomut \cm_2$.       
      By \ih, there is a command $\cm'$ such that $\cm_1 \tomut \cm' \lRew{\tilde\mu} \cm_2$. 
      So, $\env_1 \tomut \env' \lRew{\tilde\mu} \env_2$ by taking $\env' \defeq \mutilde{\alpha}{\cm'}$.
      
      \item \emph{Step on a environment for both $\env \tomut \env_1$ and  $\env \tomut \env_2$}, \ie $\env \defeq \stacker{\val}{\env'} \tomut \stacker{\val}{\env_1'} \eqdef \env_1$ and $\env \tomut \stacker{\val}{\env_2'}\eqdef \env_2$ with $\env_1' \lRew{\tilde\mu} \env' \tomut \env_2'$.
      By \ih, there is an environment $\env_0'$ such that $\env_1' \tomut \env_0' \lRew{\tilde\mu} \env_2'$. 
      According to ~\refrmk{env-reduction}, $\env_1 \tomut \env_0 \lRew{\tilde\mu} \env_2$ by taking $\env_0 \defeq \stacker{\val}{\env_0'}$.
    \end{itemize}

    \item We prove that 
    \begin{enumerate}
      \item (commands) if $\cm \tomut \cm_1$ and $\cm \tobvmu \cm_2$ then $\cm_1 \neq \cm_2$ and there exists $\cm'$ such that $\cm_1 \tobvmu \cm'$ and $\cm_2 \tomut \cm'$; 
      \item (environments) if $\env \tomut \env_1$ and $\env \tobvmu \env_2$ then $\env_1 \neq \env_2$ and there exists $\cm'$ such that $\env_1 \tobvmu \env'$ and $\env_2 \tomut \env'$. 
    \end{enumerate}
    
    The proof is by mutual induction on $\cm$ and $\env$ (the proof that $\cm_1 \neq \cm_2$ and $\env_1 \neq \env_2$ is left to the reader). Cases:
    \begin{itemize}
      \item \emph{Step at the root for $\cm \tomut \cm_1$ and Step on a $\tilde{\mu}$-environment for $\cm \tobvmu \cm_2$}, \ie $\cm \defeq \comm{\val}{\mutilde{\var}{\cm_0}} \tomut \cm_0\isub{\var}{\val} \eqdef \cm_1$ and $\cm \tobvmu \comm{\val}{\mutilde{\var}{\cmtwo'}} \eqdef \cm_2$ with $\cm_0 \tobvmu \cmtwo'$. 
      Then, $\cm_2 \tomut \cmtwo'\isub{\var}{\val} \eqdef \cm'$. 
      By  substitution lemma (\reflemma{substitution}), $\cm_1 \tobvmu \cm'$.
      
      \item \emph{Step on a $\val$-environment for $\cm \tomut \cm_1$ and Step at the root for $\cm \tobvmu \cm_2$}, \ie $\cm \defeq \comm{\la{\var}{\cm_0}}{\stacker{\val}{\env}} \tomut \comm{\la{\var}{\cm_0}}{\stacker{\val}{\envtwo}} \eqdef \cm_1$ with $\env \tomut \envtwo$, and $\cm \rtobvmu \comm{\val}{\append{(\mutilde{\var}{\cm_0})}{\env}} \eqdef \cm_2$.
      Then, $\cm_1 \tobvmu \comm{\val}{\append{(\mutilde{\var}{\cm_0})}{\envtwo}} \eqdef \cm'$ and, by append lemma (\reflemma{cosubstitution}), $\cm_2 \tomut \cm'$.
      
      \item \emph{Step on an environment for both $\cm \tomut \cm_1$ and $\cm \tobvmu \cm_2$}, \ie $\cm \defeq \comm{\val}{\env} \tomut \comm{\val}{\env'} \eqdef \cm_1$ and $\cm \tobvmu \comm{\val}{\env''} \eqdef \cm_2$, with $\env \tomut \env'$ and $\env \tobvmu \env''$. 
      By \ih, there exists an environment $\env_0$ such that $\env' \tobvmu \env_0 \lRew{\tilde{\mu}} \env''$, and hence $\cm_1 \tobvmu \cm' \lRew{\tilde{\mu}} \cm_2$ by taking $\cm' \defeq \comm{\val}{\env_0}$, according to \refrmk{env-reduction}.
      
      \item \emph{Step on a $\tilde{\mu}$-environment for both $\env \tomut \env_1$ and $\env \tobvmu \env_2$}, \ie $\env \defeq \mutilde{\var}{\cm} \tomut \mutilde{\var}{\cm_1} \eqdef \env_1$ and $\env \tobvmu \mutilde{\var}{\cm_2} \eqdef \env_2$, with $\cm \tomut \cm_1$ and $\cm \tobvmu \cm_2$. 
      By \ih, there exists a command $\cm_0$ such that $\cm_1 \tobvmu \cm_0 \lRew{\tilde{\mu}} \cm_2$, and hence $\env_1 \tobvmu \env' \lRew{\tilde{\mu}} \env_2$ by taking $\env' \defeq \comm{\val}{\cm_0}$, according to \refdef{env-reduction}.

      \item \emph{Step on a $\val$-environment for both $\env \tomut \env_1$ and $\env \tobvmu \env_2$}, \ie $\env \defeq \stacker{\val}{\env_{0}} \tomut \stacker{\val}{\env_{01}} \eqdef \env_1$ and $\env \tobvmu \stacker{\val}{\env_{02}} \eqdef \env_2$, with $\env_0 \tomut \env_{01}$ and $\env_0 \tobvmu \env_{02}$. 
      By \ih, there exists an environment $\env_0'$ such that $\env_{01} \tobvmu \env_0' \lRew{\tilde{\mu}} \env_{02}$, and hence $\env_1 \tobvmu \env' \lRew{\tilde{\mu}} \env_2$ by taking $\env' \defeq \stacker{\val}{\env_0'}$, according to \refdef{env-reduction}. 
    \end{itemize}

    \item It follows immediately from strong confluence of $\tobvmu$ and $\tomut$ 
    (\refpropp{basic-lambdamu}{tobvmu-strong-confluence}), strong commutation of $\tobvmu$ and $\tomut$ (\refpropp{basic-lambdamu}{strong-commutation}) and Hindley-Rosen (\reflemma{hindley-rosen}).
    \qedhere

  \end{enumerate}

\end{proof}

\subsection{Proofs of Section~\ref{sect:fireball-vsub} \texorpdfstring{(Quantitative Equivalences of $\firecalc$, $\shufcalc$ and $\vsubcalc$)}{(Quantitative equivalence of Fireball, Shuffling and Value Substitution Calculi)}}

\subsubsection{Proofs of Subsection \ref{subsect:fire-vsub} (\texorpdfstring{Equivalence of $\firecalc$ and $\vsubcalc$}{Equivalence of Fireball and Value Substitution Calculi})}

\begin{remark}
\label{rmk:eqstruct}
  Let $\tm, \tmtwo \in \vsubterms$.
  \begin{enumerate}
    \item\label{p:eqstruct-same-unfolding} If $\tm \eqstruct \tmtwo$ then $\unf{\tm} = \unf{\tmtwo}$.
    \item\label{p:eqstruct-no-vsub} If $\tm \eqstruct \tmtwo$ then $\tm \not\tovsub \tmtwo$ (in particular, $\tm \not\tom \tmtwo$ and $\tm \not\toe \tmtwo$).
  \end{enumerate}
\end{remark}

\setcounter{lemmaAppendix}{\value{l:proj-tof-on-vsub}}
\begin{lemmaAppendix}[Simulation of a $\tof$-Step by $\tovsub$]
\label{lappendix:proj-tof-on-vsub}
  Let $\tm, \tmtwo \in \Lambda$.
\NoteState{l:proj-tof-on-vsub}
  \begin{enumerate}
    \item\label{pappendix:proj-tof-on-vsub-tobv} If $\tm \tobabs \tmtwo$ then $\tm \tom\toe \tmtwo$.

    \item\label{pappendix:proj-tof-on-vsub-toin} If $\tm \toin\tmtwo$ then $\tm \tom\eqstruct \tmthree$, 
  with $\tmthree \!\in\! \vsubterms$ \proper and $\unf\tmthree = \tmtwo$.
    
  \end{enumerate}
\end{lemmaAppendix}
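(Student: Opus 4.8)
The plan is to treat the two items separately, relying throughout on the observation that every $\firecalc$-evaluation context $\evctx$ (generated only by the clauses $\tm\evctx$ and $\evctx\tm$) is also a $\vsubcalc$-evaluation context and contains no ES. Consequently, choosing the relevant bound variable $\var$ fresh so that $\var \notin \fv{\evctx}$, both unfolding and capture-avoiding substitution commute with $\evctx$: one has $\unf{\evctxp{\tmfive}} = \evctxp{\unf{\tmfive}}$ for every $\vsub$-term $\tmfive$, and $\evctxp{\tmfive}\isub\var\tmsix = \evctxp{\tmfive\isub\var\tmsix}$ for every term $\tmsix$, both by a trivial induction on $\evctx$. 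I would record these two facts first.

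For item~1 I would decompose the step as $\tm = \evctxp{(\la\var\tmfour)(\la\vartwo\tmfive)} \tobabs \evctxp{\tmfour\isub\var{\la\vartwo\tmfive}} = \tmtwo$. The displayed top-level redex is a $\mult$-redex with empty substitution context, firing $(\la\var\tmfour)(\la\vartwo\tmfive) \rtom \tmfour\esub\var{\la\vartwo\tmfive}$, and the outcome is an $\expoabs$-redex (again with empty substitution context), firing $\tmfour\esub\var{\la\vartwo\tmfive} \rtoeabs \tmfour\isub\var{\la\vartwo\tmfive}$. Closing both under the common context $\evctx$ yields $\tm \tom \evctxp{\tmfour\esub\var{\la\vartwo\tmfive}} \toeabs \tmtwo$, which is the required $\tom\toe$-sequence.

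For item~2 I would decompose the inert step as $\tm = \evctxp{(\la\var\tmfour)\gconst} \toin \evctxp{\tmfour\isub\var\gconst} = \tmtwo$, with $\gconst$ inert and, after $\alpha$-renaming, $\var \notin \fv{\evctx} \cup \fv{\gconst}$. The redex is again a $\mult$-redex, so $\tm \tom \evctxp{\tmfour\esub\var\gconst}$, creating a single ES whose content is the inert term $\gconst$; but since this ES may sit deep inside $\evctx$, the target is in general not \proper. To repair this I would float the ES outward using $\eqstruct$: a short induction on $\evctx$ establishes $\evctxp{\tmfive\esub\var\gconst} \eqstruct \evctxp{\tmfive}\esub\var\gconst$ for every $\vsub$-term $\tmfive$ and fresh $\var$, the base case $\evctx = \ctxhole$ being an identity and the two application cases applying, after invoking the inductive hypothesis under the application, the axiom $\tostructapl$ (hole on the left) or $\tostructapr$ (hole on the right), whose side conditions hold by freshness of $\var$. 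Setting $\tmthree \defeq \evctxp{\tmfour}\esub\var\gconst$ then gives $\tm \tom \eqstruct \tmthree$ with $\tmthree$ \proper, since its body $\evctxp{\tmfour} \in \Lambda$ is ES-free and its unique ES carries the inert $\gconst$. Finally $\unf{\tmthree} = \unf{\evctxp{\tmfour\esub\var\gconst}}$ by $\eqstruct$-invariance of unfolding (\refrmkp{eqstruct}{same-unfolding}), and the commutations from the first paragraph rewrite this as $\evctxp{\tmfour\isub\var\gconst} = \tmtwo$.

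The genuinely routine parts are the decomposition of each rewrite step through its evaluation context and the two-step firing of the redexes in $\vsubcalc$. The delicate point, and the one I expect to require the most care, is the structural-equivalence manipulation in item~2: one must push the freshly created ES out of an arbitrary evaluation context while respecting the freshness side conditions of the axioms $\tostructapl$ and $\tostructapr$, and then verify that the outcome exactly matches the definition of a \proper term and that its unfolding recovers $\tmtwo$. The conceptual justification behind leaving $\gconst$ inside an ES—and erasing it only at the level of unfolding—is precisely the defining feature of inert terms, namely that they are normal and can neither create nor carry redexes, which is what makes this \emph{up-to-unfolding} simulation sound.
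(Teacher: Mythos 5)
Your proof is correct and takes essentially the same route as the paper's: one $\tom$-step creating a single ES at the redex position (plus a $\toeabs$-step in the abstraction case), followed by floating that ES to the top with the axioms $\tostructapl$/$\tostructapr$, with inertness and $\eqstruct$-invariance of unfolding (\refrmkp{eqstruct}{same-unfolding}) justifying the simulation up to unfolding. The paper merely packages your explicit evaluation-context decomposition and floating lemma as an induction on the rewriting step, with the $\eqstruct$-manipulation folded into the application cases, so the content is identical.
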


\begin{proof}
  Both proofs are by induction on the rewriting step.
  \begin{enumerate}
    \item According to the definition of  $\tm \tobabs \tmtwo$, there are three cases:
    \begin{itemize}
      \item \emph{Step at the root}, \ie $\tm = (\la\var\tmthree)(\la\vartwo\tmfour) \rtobabs \tmthree \isub\var{\la\vartwo\tmfour} = \tmtwo$: so, $\tm \tom \tmthree \esub\var{\la\vartwo\tmfour} \toe \tmtwo$.
      \item \emph{Application Left}, \ie $\tm = \tmthree\tmfour \tobabs \tmthreep\tmfour = \tmtwo$ with $\tmthree \tobabs \tmthreep$: by \ih, $\tmthree \tom\toe \tmthreep$ and hence $\tm = \tmthree\tmfour \tom\toe \tmthreep\tmfour = \tmtwo$.
      \item \emph{Application Right}, \ie $\tm = \tmthree\tmfour \tobabs \tmthree\tmfourp = \tmtwo$ with $\tmfour \tobabs \tmfourp$: by \ih, $\tmfour \tom\toe \tmfourp$ and hence $\tm = \tmthree\tmfour \tom\toe \tmthree\tmfourp = \tmtwo$.  
    \end{itemize}

    \item According to the definition of  $\tm \toin \tmtwo$, there are three cases:
    \begin{itemize}
      \item \emph{Step at the root}, \ie $\tm = (\la\var\tmthree)\gconst \rtoin \tmthree \isub\var\gconst = \tmtwo$: then, $\tm \tom \tmthree \esub\var\gconst $ where $\tmthree \esub\var\gconst$ is \proper (since $\tmthree \in \Lambda$) and $\unf{\tmthree\esub\var\gconst} = \unf{\tmthree} \isub\var{\unf{\gconst}} = \tmtwo$ ($\unf{\tmthree} = \tmthree$ and $\unf{\gconst} = \gconst$ because $\tmthree, \gconst \in \Lambda$).
      We conclude since $\eqstruct$ is reflexive.
      \item \emph{Application Left}, \ie $\tm = \tmthree \tmfour \toin \tmthreep \tmfour = \tmtwo$ with $\tmthree \toin \tmthreep$: by \ih, $\tmthree \tom\eqstruct \tmfive$ where $\tmfive$ is a \proper $\vsub$-term such that $\unf{\tmfive} = \tmthreep$.
      So, $\tmfive = \tmfive_0\esub{\var_1}{\gconst_1} \dots \esub{\var_n}{\gconst_n}$ where $\tmfive_0 \in \Lambda$ and $\gconst_1, \dots, \gconst_n$ are inert terms (for some $n \in \nat$), moreover we can suppose without loss of generality that $\{\var_1, \dots, \var_n\} \cap \fv{\tmfour} = \emptyset$.
      Let $\tmtwop = (\tmfive_0\tmfour) \esub{\var_1}{\gconst_1} \dots \esub{\var_n}{\gconst_n}$: then, $\tmtwop$ is a \proper $\vsub$-term such that $\tmfive\tmfour \eqstruct \tmtwop$ and, according to \refrmkp{eqstruct}{same-unfolding}, 
      $\unf{\tmtwop} = \unf{(\tmfive\tmfour)} = \unf{\tmfive}\unf{\tmfour} = \tmthreep\tmfour = \tmtwo$. 
      Hence, $\tm = \tmthree\tmfour \tom\eqstruct \tmfive\tmfour \eqstruct \tmtwo'$ and we conclude since $\eqstruct$ is transitive.
      \item \emph{Application Right}, \ie $\tm = \tmthree \tmfour \toin \tmthree \tmfourp = \tmtwo$ with $\tmfour \toin \tmfourp$. Identical to the \emph{application left} case, just switch left and right.
      \qedhere
    \end{itemize}
    
%
  \end{enumerate}
\end{proof}

\begin{lemma}[Fireballs are Closed Under Anti-Substitution of \Quiet Terms]
\label{l:inert-anti-sub-bis} 
  Let $\tm$ be a $\vsub$-term and $\gconst$ be an inert term.
  \begin{enumerate}
    \item \label{p:inert-anti-sub-bis-abs} If $\tm\isub\var\gconst$ is an abstraction then $\tm$ is an abstraction. 
    \item \label{p:inert-anti-sub-bis-inert} If $\tm\isub\var\gconst$ is an inert term then $\tm$ is an inert term;
    \item\label{p:inert-anti-sub-bis-fire} If $\tm\isub\var\gconst$ is a fireball then $\tm$ is a fireball.
  \end{enumerate}
\end{lemma}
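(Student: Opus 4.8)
The plan is to proceed by structural induction on the $\vsub$-term $\tm$, proving the three points essentially together, since they are mutually entangled through the inductive definitions of inert terms and fireballs. The only external facts needed are the grammars of fireballs and inert terms and the observation that $\gconst$, being inert, is never an abstraction.

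First I would dispatch \reflemmap{inert-anti-sub-bis}{abs} by a direct case analysis on $\tm$, with no induction required. The point to check is that substituting $\gconst$ cannot create an abstraction at the root. Indeed, if $\tm$ is a variable then $\tm\isub\var\gconst$ is either $\gconst$ (an inert term, hence not an abstraction) or another variable; if $\tm$ is an application or carries an ES at the root, then so does $\tm\isub\var\gconst$, which is therefore not an abstraction; and if $\tm$ is already an abstraction there is nothing to prove. Hence $\tm\isub\var\gconst$ is an abstraction only when $\tm$ is.

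Next I would prove \reflemmap{inert-anti-sub-bis}{inert} and \reflemmap{inert-anti-sub-bis}{fire} simultaneously by induction on $\tm$. The fireball point reduces immediately to the other two: a fireball is either an abstraction or an inert term, so if $\tm\isub\var\gconst$ is a fireball I split on these two cases and apply the abstraction point or the inert point to conclude that $\tm$ is an abstraction or an inert term, hence a fireball. The real work is the inert point. Using the alternative grammar $\gconst \grameq \var \mid \gconst\fire$, I analyze $\tm$: the abstraction and ES cases are impossible, because $\tm\isub\var\gconst$ would then be an abstraction or carry an ES, neither of which is inert; the variable case is immediate ($\var\isub\var\gconst = \gconst$ is inert and $\var$ is inert, while for $\vartwo \neq \var$ both sides are the variable $\vartwo$). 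The crucial case is $\tm = \tmtwo\tmthree$: then $\tm\isub\var\gconst = (\tmtwo\isub\var\gconst)(\tmthree\isub\var\gconst)$ is an application, so being inert it must decompose as $\gconsttwo\fire$ with $\tmtwo\isub\var\gconst$ inert and $\tmthree\isub\var\gconst$ a fireball; the induction hypothesis (the inert point on $\tmtwo$ and the fireball point on $\tmthree$) then gives that $\tmtwo$ is inert and $\tmthree$ is a fireball, whence $\tm = \tmtwo\tmthree$ is inert.

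I expect the main obstacle to be organizational rather than mathematical: setting up the mutual induction so that the application case of the inert point may legitimately invoke the fireball point on the strict subterm $\tmthree$ (and so that the fireball point may rely on the inert point). The two enabling facts, both already at hand, are that $\gconst$ is an inert term and hence never an abstraction—so substitution never creates a fresh abstraction head—and that meta-level substitution commutes with the constructors and in particular never erases an explicit substitution, which is precisely what rules out all the ES cases cleanly.
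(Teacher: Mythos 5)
Your proof is correct and takes essentially the same approach as the paper: point 1 by a root-level case analysis exploiting that inert terms are never abstractions, point 2 by induction whose application case splits the argument of the inert application into the abstraction and inert subcases (handled via point 1 and the inductive hypothesis respectively), and point 3 as an immediate consequence of points 1 and 2. The only, immaterial, difference is the induction measure—you induct on $\tm$ with points 2 and 3 proved by a well-ordered mutual induction, whereas the paper inducts on the inert structure of $\tm\isub\var\gconst$ and derives point 3 afterwards—and your explicit dispatch of the vacuous abstraction and ES cases is if anything slightly more careful than the paper's.
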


\begin{proof}\hfill
  \begin{enumerate}
    \item If $\tm\isub\var\gconst = \la\vartwo\tmthree$ then there is $\tmfour$ such that $\tmthree = \tmfour\isub\var\gconst$, that is  $\tm\isub\var\gconst = \la\vartwo(\tmfour\isub\var\gconst) = (\la\vartwo\tmfour)\isub\var\gconst$ and so $\tm = \la\vartwo\tmfour$ is an abstraction;      
    
    \item By induction on the inert structure of $\tm\isub\var\gconst$. Cases:
    \begin{itemize}
      \item \emph{Variable}, \ie $\tm\isub\var\gconst = \vartwo$, possibly with $\var = \vartwo$. Then $\tm = \var$ or $\tm = \vartwo$, and in both cases $\tm$ is inert.
      \item \emph{Compound Inert}, \ie $\tm\isub\var\gconst = \gconsttwo \fire$. If $\tm$ is a variable then it is inert. Otherwise it is an application $\tm = \tmtwo \tmthree$, and so $\tmtwo\isub\var\gconst = \gconsttwo$ and $\tmthree\isub\var\gconst = \fire$. By \ih, $\tmtwo$ is an inert term. Consider $\fire$. Two cases:
      
      \begin{enumerate}
	\item $\fire$ is an abstraction. Then by \refpoint{inert-anti-sub-bis-abs} $\tmthree$ is an abstraction.
	\item $\fire$ is an inert term. Then by \ih $\tmthree$ is an inert term.
      \end{enumerate}
      
      In both cases $\tmthree$ is a fireball, and so $\tm = \tmtwo \tmthree$ is an inert term.      
      \end{itemize}
    
    \item Immediate consequence of \reflemmasps{inert-anti-sub-bis}{abs}{inert}, since every fireball is either an abstraction or an inert term.
    \qedhere
  \end{enumerate}
\end{proof}

\begin{lemma}[Substitution of \Quiet Terms Does Not Create $\betaf$-Redexes]
\label{l:inert-anti-red-bis} 
  Let $\tm, \tmtwo$~be terms and $\gconst$ be an inert term.
  There is $\tmthree \in \Lambda$ such that:
  \begin{enumerate}
    \item \label{p:inert-anti-red-bis-betav} if $\tm\isub\var\gconst \tobabs \tmtwo$ then 
    $\tm \tobabs \tmthree$ and $\tmthree\isub\var\gconst = \tmtwo$;

    \item \label{p:inert-anti-red-bis-inert} if $\tm\isub\var\gconst \toin \tmtwo$ then   $\tm \toin \tmthree$ and $\tmthree\isub\var\gconst = \tmtwo$.
  \end{enumerate}
\end{lemma}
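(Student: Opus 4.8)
The plan is to prove both points by induction on the structure of $\tm$, in each case locating the fired redex inside $\tm\isub\var\gconst$ and pulling it back to a redex of $\tm$ via the anti-substitution lemma \reflemma{inert-anti-sub-bis}. The guiding intuition, which the statement makes precise, is that replacing $\var$ by the inert term $\gconst$ can neither turn a non-abstraction into an abstraction (inert terms are never abstractions) nor turn a non-fireball argument into a fireball; hence every $\betaf$-redex that appears after the substitution must already be present, in pulled-back form, before it. I would prove the two points simultaneously, since they share the same case analysis and differ only in the shape of the argument of the root redex.

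First I would dispose of the base and binder cases. If $\tm$ is a variable, then $\tm\isub\var\gconst$ is either $\gconst$ or a variable, hence $\betaf$-normal by \reflemmasps{fnormal}{value}{inert}, so the premises of both points are vacuously false. If $\tm$ is an abstraction, then so is $\tm\isub\var\gconst$, again $\betaf$-normal by \reflemmap{fnormal}{value}; both cases are vacuous because neither $\tobabs$ nor $\toin$ reduces under $\l$'s or fires a normal term.

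The interesting case is $\tm = \tmfour\tmfive$, where $\tm\isub\var\gconst = (\tmfour\isub\var\gconst)(\tmfive\isub\var\gconst)$ and the step is either internal or at the root. The internal cases (a step in $\tmfour\isub\var\gconst$, resp.\ in $\tmfive\isub\var\gconst$) follow directly from the induction hypothesis, reassembling the application and using that substitution commutes with the application constructor. For the root case of point~\ref{p:inert-anti-red-bis-betav}, both $\tmfour\isub\var\gconst$ and $\tmfive\isub\var\gconst$ are abstractions, so by \reflemmap{inert-anti-sub-bis}{abs} both $\tmfour$ and $\tmfive$ are already abstractions, say $\tmfour = \la\varthree\tmfourp$ and $\tmfive = \la\varfour\tmfivep$ (choosing $\varthree,\varfour \notin \fv\gconst \cup \set\var$ by $\alpha$-renaming); then $\tm$ is itself a $\betaabs$-redex, I set $\tmthree \defeq \tmfourp\isub\varthree{\la\varfour\tmfivep}$ and check $\tmthree\isub\var\gconst = \tmtwo$. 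For the root case of point~\ref{p:inert-anti-red-bis-inert}, the function $\tmfour$ is again an abstraction by \reflemmap{inert-anti-sub-bis}{abs}, while the argument $\tmfive\isub\var\gconst$ is inert, so \reflemmap{inert-anti-sub-bis}{inert} gives that $\tmfive$ is inert; the identical reasoning then produces a $\toin$-step of $\tm$.

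The main obstacle, and the only genuinely delicate point, is the final equation $\tmthree\isub\var\gconst = \tmtwo$ in the two root cases: it rests on the standard substitution (commutation) lemma, i.e.\ $(\tmfourp\isub\varthree\tmfive)\isub\var\gconst = (\tmfourp\isub\var\gconst)\isub\varthree{(\tmfive\isub\var\gconst)}$, which is valid precisely because the $\alpha$-renaming guarantees $\varthree \neq \var$ and $\varthree \notin \fv\gconst$. Keeping these side conditions coherent — and applying them consistently when pushing $\isub\var\gconst$ past the binder extracted by \reflemma{inert-anti-sub-bis} — is where care is needed; everything else is a routine reassembly driven by the induction hypothesis.
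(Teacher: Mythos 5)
Your proof is correct and follows essentially the same route as the paper's: an inductive pull-back of the fired redex using \reflemma{inert-anti-sub-bis} for the root cases (point~\ref{p:inert-anti-sub-bis-abs} for the function and, respectively, the abstraction or inert argument) and the substitution commutation identity $(\tmfourp\isub\varthree\tmfive)\isub\var\gconst = (\tmfourp\isub\var\gconst)\isub\varthree{(\tmfive\isub\var\gconst)}$ with the side conditions $\varthree \neq \var$, $\varthree \notin \fv\gconst$, which the paper uses implicitly. The only cosmetic difference is that you induct on the structure of $\tm$ (making the vacuous variable and abstraction cases explicit, justified since $\gconst$ and values are $\betaf$-normal and evaluation is weak) while the paper inducts on the evaluation context closing the redex; the case analyses coincide.
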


\begin{proof}
  We prove the two points by induction on the evaluation context closing the root redex. Cases:  
    \begin{itemize}
      \item \emph{Step at the root}:
      \begin{enumerate}
      \item \emph{Abstraction Step}, \ie $\tm\isub\var\gconst \defeq (\la\vartwo\tmfour\isub\var\gconst) \tmfive\isub\var\gconst \allowbreak\rtobabs\allowbreak \tmfour\isub\var\gconst\isub\vartwo{\tmfive\isub\var\gconst} \eqdef \tmtwo$.
      By \reflemmap{inert-anti-sub-bis}{abs}, $\tmfive$ is an abstraction, since $\tmfive\isub\var\gconst$ is an abstraction by hypothesis. Then $\tm = (\la\vartwo\tmfour) \tmfive \rtobabs \tmfour\isub\vartwo{\tmfive}$. Then $\tmthree \defeq \tmfour\isub\var{\tmfive}$ verifies the statement, as $\tmthree\isub\var\gconst = (\tmfour\isub\vartwo{\tmfive})\isub\var\gconst = \tmfour\isub\var\gconst\isub\vartwo{\tmfive\isub\var\gconst} = \tmtwo$.     
      
      \item \emph{Inert Step}, identical to the abstraction subcase, just replace \emph{abstraction} with \emph{inert term} and the use of  \reflemmap{inert-anti-sub-bis}{abs} with the use of \reflemmap{inert-anti-sub-bis}{inert}.
      \end{enumerate}      
      
      \item \emph{Application Left}, \ie $\tm = \tmfour\tmfive$ and reduction takes place in $\tmfour$:
      \begin{enumerate}
      \item \emph{Abstraction Step}, \ie $\tm\isub\var\gconst \defeq \tmfour\isub\var\gconst \tmfive\isub\var\gconst \tobabs \tmsix \tmfive\isub\var\gconst \eqdef \tmtwo$. 
      By \ih there exists $\tmthreep \in \Lambda$ such that $\tmsix = \tmthreep\isub\var\gconst$ and $\tmfour \tobabs \tmthreep$. Then $\tmthree \defeq \tmthreep \tmfive$ satisfies the statement, as $\tmthree\isub\var\gconst = (\tmthreep \tmfive)\isub\var\gconst = \tmthreep\isub\var\gconst \tmfive\isub\var\gconst = \tmtwo$.
      
      \item \emph{Inert Step}, identical to the abstraction subcase.
      \end{enumerate}

      \item \emph{Application Right}, \ie $\tm = \tmfour\tmfive$ and reduction takes place in $\tmfive$. Identical to the \emph{application left} case, just switch left and right.
      \qedhere
    \end{itemize}
      
\end{proof}

\setcounter{lemmaAppendix}{\value{l:proj-via-unfold}}
\begin{lemmaAppendix}[Projection of a $\betaf$-Step on $\tovsub$ via Unfolding]
\label{lappendix:proj-via-unfold}
  Let%
\NoteState{l:proj-via-unfold}
  $\tm$ be a \proper $\vsub$-term and $\tmtwo$ be a term. 
  \begin{enumerate}
    \item \label{pappendix:proj-via-unfold-tobv} If $\unf{\tm} \tobabs\! \tmtwo$ then $\tm \tom\toe \tmthree$, 
    with $\tmthree \!\in\! \vsubterms$ \proper s.t.~$\unf\tmthree \!= \tmtwo$.

    \item\label{pappendix:proj-via-unfold-toin} If $\unf{\tm} \toin\! \tmtwo$ then $\tm \tom \eqstruct \tmthree$, 
    with $\tmthree \!\in\! \vsubterms$ \proper s.t.~$\unf\tmthree \!= \tmtwo$.
  \end{enumerate}
\end{lemmaAppendix}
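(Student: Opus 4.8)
The plan is to reduce the claim about $\unf\tm$ to a claim about the body of the \proper term $\tm$, simulate the single step there with \reflemma{proj-tof-on-vsub}, and then transport the result back through the explicit substitutions separating $\tm$ from its body. First I would make the shape of $\tm$ explicit: being \proper, $\tm = \tmfour\esub{\var_1}{\gconst_1}\cdots\esub{\var_n}{\gconst_n}$ with body $\tmfour \in \Lambda$ and $\gconst_1,\dots,\gconst_n$ inert. Since $\tmfour,\gconst_1,\dots,\gconst_n \in \Lambda$, each unfolds to itself, so a straightforward induction on $n$ gives $\unf\tm = \tmfour\isub{\var_1}{\gconst_1}\cdots\isub{\var_n}{\gconst_n}$, \ie the unfolding is exactly the iterated meta-level substitution of the $\gconst_i$ into $\tmfour$.

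The core step is an anti-substitution argument extracting a single $\betaf$-step on the body $\tmfour$ from the given step on $\unf\tm$. This is obtained by iterating \reflemma{inert-anti-red-bis} (in its $\tobabs$-variant for point~1, its $\toin$-variant for point~2) over the $n$ inert substitutions: peeling off the outermost substitution $\isub{\var_n}{\gconst_n}$, a step $\unf\tm \tobabs \tmtwo$ (resp.\ $\unf\tm \toin \tmtwo$) is pulled back to a step on $\tmfour\isub{\var_1}{\gconst_1}\cdots\isub{\var_{n-1}}{\gconst_{n-1}}$ whose contractum, re-substituted by $\gconst_n$, equals $\tmtwo$; iterating down to $\tmfour$ yields $\tmfourp \in \Lambda$ with $\tmfour \tobabs \tmfourp$ (resp.\ $\tmfour \toin \tmfourp$) and $\tmfourp\isub{\var_1}{\gconst_1}\cdots\isub{\var_n}{\gconst_n} = \tmtwo$. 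I would phrase this iteration as an auxiliary induction on $n$, each level being a direct application of \reflemma{inert-anti-red-bis}.

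Finally I would simulate this single body step and reinsert the substitutions. For point~1, \reflemmap{proj-tof-on-vsub}{tobv} gives $\tmfour \tom\toe \tmfourp$; closing under the evaluation context $\ctxhole\esub{\var_1}{\gconst_1}\cdots\esub{\var_n}{\gconst_n}$, which is legal since $\evctx\esub\var\tmtwo$ is an evaluation context of $\vsubcalc$, produces $\tm \tom\toe \tmthree$ with $\tmthree \defeq \tmfourp\esub{\var_1}{\gconst_1}\cdots\esub{\var_n}{\gconst_n}$; this $\tmthree$ is \proper (body $\tmfourp \in \Lambda$, each ES carrying an inert term) and $\unf\tmthree = \tmfourp\isub{\var_1}{\gconst_1}\cdots\isub{\var_n}{\gconst_n} = \tmtwo$. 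For point~2, \reflemmap{proj-tof-on-vsub}{toin} gives $\tmfour \tom \tmfivep \eqstruct \tmfive$ with $\tmfive$ \proper and $\unf\tmfive = \tmfourp$; inserting the substitutions and using that both $\tom$ and $\eqstruct$ are closed under evaluation contexts yields $\tm \tom \tmfivep\esub{\var_1}{\gconst_1}\cdots\esub{\var_n}{\gconst_n} \eqstruct \tmfive\esub{\var_1}{\gconst_1}\cdots\esub{\var_n}{\gconst_n} \eqdef \tmthree$, where $\tmthree$ is again \proper (a \proper term extended by inert ES) and $\unf\tmthree = \unf\tmfive\isub{\var_1}{\gconst_1}\cdots\isub{\var_n}{\gconst_n} = \tmtwo$.

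The step I expect to be the main obstacle is the anti-substitution iteration: \reflemma{inert-anti-red-bis} is stated for a single inert substitution, so the induction needs care with $\alpha$-renaming to keep the bound variables $\var_1,\dots,\var_n$ out of the $\gconst_i$ and out of the redex being pulled back, and with checking that each intermediate term is genuinely the unfolding of a \proper subterm so that the inductive hypothesis applies. The context-closure and the properness/unfolding bookkeeping at the end are then routine.
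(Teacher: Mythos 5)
Your proposal is correct and follows essentially the same route as the paper's proof: the paper argues by induction on the number $n$ of explicit substitutions of the \proper term, peeling one inert substitution per inductive step with \reflemma{inert-anti-red-bis} and invoking \reflemma{proj-tof-on-vsub} at the base case $n=0$, which is exactly your anti-substitution iteration and single body-step simulation, merely interleaved into one induction rather than performed in two separate phases. The bookkeeping you flag at the end---closure of $\tom$, $\toe$ and $\eqstruct$ under the ES evaluation context $\ctxhole\esub{\var_n}{\gconst_n}$, and preservation of properness and of the unfolding equation---is handled in the paper exactly as you describe.
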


\begin{proof}
  Since $\tm$ is \proper, there are a $\l$-term $\tmfive$ and some inert $\l$-terms $\gconst_1, \dots, \gconst_n$ (with $n \in \nat$) such that $\tm = \tmfive \esub{\var_1}{\gconst_1}\dots \esub{\var_n}{\gconst_n}$. 
  We prove both points by induction on $n \in \nat$. 
  The base case (\ie $n=0$) is given by the simulation of one-step reductions given by \reflemma{proj-tof-on-vsub}, since $\tm = \tmfive \in \Lambda$ and hence $\unf{\tm} = \tm$ (recall that, when applying \reflemmap{proj-tof-on-vsub}{tobv}, $\tmtwo \in \Lambda$ implies that $\tmtwo$ is \proper and $\unf{\tmtwo} = \tmtwo$).

  Consider now 
  $n>0$. Let $\tm_{n-1} \defeq \tmfive \esub{\var_1}{\gconst_1} \dots \esub{\var_{n-1}}{\gconst_{n-1}}$: so, $\tm = \tm_{n-1} \esub{\var_n}{\gconst_n}$ and $\unf{\tm} = \unf{\tm_{n-1}}\isub{\var_n}{\gconst_n}$. 
  Both points rely on the fact that the substitution of \quiet terms cannot create redexes (\reflemma{inert-anti-red-bis}). Namely,
  
  \begin{enumerate}
    \item \emph{$\betaabs$-step}: the application of \reflemmap{inert-anti-red-bis}{betav} to $\unf{\tm} = \unf{\tm_{n-1}}\isub{\var_n}{\gconst_n} \allowbreak\tobabs \tmtwo$ (since $\unf{\tm_{n-1}}\in \Lambda$, \ie it has no ES) provides $\tmfour \in \Lambda$ such that $\unf{\tm_{n-1}} \tobabs \tmfour$ and $\tmfour \isub{\var_n}{\gconst_n} = \tmtwo$. 
    By \ih, $\tm_n \tom\toe \tmthree$ where $\tmthree$ is a \proper $\vsub$-term such that $\unf{\tmthree} = \tmfour$, and thus $\tm = \tm_{n-1} \esub{\var_n}{\gconst_n}  \tom\toe \tmthree\esub{\var_n}{\gconst_n}$.
    Moreover, $\tmthree\esub{\var_n}{\gconst_n}$ is \proper and $\unf{\tmthree\esub{\var_n}{\gconst_n}} = \unf{\tmthree}\isub{\var_n}{\gconst_n} = \tmfour\isub{\var_n}{\gconst_n} = \tmtwo$.

    \item \emph{$\betain$-step}: the application of \reflemmap{inert-anti-red-bis}{inert} to $\unf{\tm} = \unf{\tm_{n-1}}\isub{\var_n}{\gconst_n} \allowbreak\toin \tmtwo$ provides $\tmfour \in \Lambda$ such that $\unf{\tm_{n-1}} \toin\! \tmfour$ and $\tmfour \isub{\var_n}{\gconst_n} \!= \tmtwo$. 
   By \ih, $\tm_{n-1} \tom\eqstruct \tmthree$ where $\tmthree$ is a \proper $\vsub$-term such that $\unf{\tmthree} = \tmfour$;
      thus, $\tm = \tm_{n-1} \esub{\var_n}{\gconst_n}  \tom\eqstruct \tmthree\esub{\var_n}{\gconst_n}$. Moreover, $\tmthree\esub{\var_n}{\gconst_n}$ is \proper and $\unf{\tmthree\esub{\var_n}{\gconst_n}} = \unf{\tmthree}\isub{\var_n}{\gconst_n} = \tmfour\isub{\var_n}{\gconst_n} = \tmtwo$.
      \qedhere
  \end{enumerate}
\end{proof}

\begin{lemmaAppendix}
\label{lappendix:normal-anti-unfold}
  Let $\tm$
  \NoteState{l:normal-anti-unfold}
  be a \proper $\vsub$-term.
  If $\unf{\tm}$ is a fireball, then $\tm$ is $\set{\msym,\expoabs}$-normal and its body is a fireball.
\end{lemmaAppendix}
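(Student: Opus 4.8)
The plan is to use the \proper shape of $\tm$: write $\tm = \tmfive\esub{\var_1}{\gconst_1}\dots\esub{\var_n}{\gconst_n}$ with body $\tmfive \in \Lambda$ and inert terms $\gconst_1,\dots,\gconst_n \in \Lambda$, so that $\unf\tm = \tmfive\isub{\var_1}{\gconst_1}\cdots\isub{\var_n}{\gconst_n}$. I would prove both conclusions simultaneously by induction on $n$. Two facts drive the argument: first, unfolding an inert explicit substitution cannot create a fireball out of a non-fireball, which is exactly the content of \reflemmap{inert-anti-sub-bis}{fire}; second, fireballs and inert terms never contain a weak $\beta$-redex, because by the grammar $\gconst \grameq \var \mid \gconst\fire$ their applicative spine is variable-headed (or the whole term is an abstraction), and a pure term has no explicit substitution at all.

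For the body, the base case $n=0$ is immediate, since then $\tm = \tmfive = \unf\tm$ is a fireball. In the inductive step I set $\tm_{n-1} := \tmfive\esub{\var_1}{\gconst_1}\dots\esub{\var_{n-1}}{\gconst_{n-1}}$, which is \proper with the same body $\tmfive$, and observe $\unf\tm = \unf{\tm_{n-1}}\isub{\var_n}{\gconst_n}$. A single application of \reflemmap{inert-anti-sub-bis}{fire} gives that $\unf{\tm_{n-1}}$ is a fireball, so the induction hypothesis applies to $\tm_{n-1}$ and yields both that $\tm_{n-1}$ is $\set{\msym,\expoabs}$-normal and that $\tmfive$ is a fireball; since $\tm$ and $\tm_{n-1}$ share the body $\tmfive$, the second conclusion is settled.

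It remains to propagate $\set{\msym,\expoabs}$-normality. In the base case $\tm = \tmfive$ is a pure fireball: it has no explicit substitution, hence no $\toe$-redex, and by the second driving fact it has no weak $\beta$-redex, hence no $\tom$-redex. In the inductive step I add the outermost substitution $\esub{\var_n}{\gconst_n}$ to the $\set{\msym,\expoabs}$-normal term $\tm_{n-1}$. This substitution is not a $\toeabs$-redex, because its content $\gconst_n$ is inert---so it is not an abstraction, and being explicit-substitution-free no substitution context $\sctx$ can expose one inside it; moreover it creates no new application node. Consequently every evaluation-position subterm of $\tm$ sits either inside $\tm_{n-1}$, where there is no $\set{\msym,\expoabs}$-redex by the induction hypothesis, or inside $\gconst_n$, which is inert and hence $\betaf$-normal by \reflemmap{fnormal}{inert} and carries no weak $\beta$-redex. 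Thus $\tm$ is $\set{\msym,\expoabs}$-normal.

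The main obstacle is conceptual rather than computational. One cannot shortcut $\tom$-normality by saying ``$\betaf$-normal implies $\tom$-normal'': a weak $\beta$-redex $(\la\var\tmtwo)\tmthree$ with a non-fireball argument $\tmthree$ is not a $\betaf$-redex, so $\betaf$-normality alone does not forbid $\tom$-redexes; the absence of $\tom$-redexes has to be read directly off the variable-headed structure of inert terms. The second delicate point is the at-a-distance character of $\tom$, whereby a redex $\sctxp{\la\var\tmtwo}\tmthree$ could in principle straddle the body and the top-level list of explicit substitutions. Organizing the proof as an induction on $n$ sidesteps this: peeling only the outermost explicit substitution introduces no application whose left-hand side is an abstraction wrapped in a substitution context, so the only genuinely new syntactic material is the inert, redex-free substitution itself.
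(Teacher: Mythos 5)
Your proof is correct and takes essentially the same route as the paper's: the same decomposition of the \proper term $\tm = \tmfive\esub{\var_1}{\gconst_1}\dots\esub{\var_n}{\gconst_n}$, the same key use of \reflemmap{inert-anti-sub-bis}{fire} in an induction on $n$ to get the body, and the same closing facts that fireballs and inert terms are pure and contain no weak $\beta$-redex while inert substitution contents exclude $\toeabs$-redexes, leaving only possible $\toevar$-redexes. The only difference is organizational---you thread $\set{\msym,\expoabs}$-normality through the induction, whereas the paper first establishes by induction that the body is a fireball and then reads normality off the resulting shape in one step---and your cautionary remarks (that $\betaf$-normality does not imply $\mult$-normality, and that a distance $\mult$-redex cannot straddle the top-level substitution list) correctly make explicit what the paper leaves implicit as an easy adaptation of \reflemma{fnormal}.
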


\begin{proof}
 First, we prove that if $\unf{\tm}$ is a fireball then for some fireball $\fire$ and inert terms $\gconst_1, \dots, \gconst_n$ one has $\tm = \fire \esub{\var_1}{\gconst_1} \dots \esub{\var_n}{\gconst_n}$. Since $\tm$ is \proper, there are a term $\tmtwo$ and some inerts terms $\gconst_1, \dots, \gconst_n$ (with $n \in \nat$) such that $\tm = \tmtwo \esub{\var_1}{\gconst_1}\dots \esub{\var_n}{\gconst_n}$. 
  We prove by induction on $n \in \nat$ that $\tmtwo$ is a fireball.

  If $n = 0$, then $\tm = \tmtwo \in \Lambda$, thus $\tmtwo = \unf{\tm}$ and hence $\tmtwo$ is a fireball by hypothesis.

  Suppose $n > 0$ and let $\tmthree \defeq \tmtwo \esub{\var_1}{\gconst_1} \dots \esub{\var_{n-1}}{\gconst_{n-1}} $, which is a \proper $\vsub$-term: then, $\tm = \tmthree\esub{\var_n}{\gconst_n}$ and hence $\unf{\tm} = \unf{\tmthree} \isub{\var_n}{\gconst_n}$ (as $\unf{\gconst_n} = \gconst_n$ because $\gconst_n \in \Lambda$).
  By \reflemmap{inert-anti-sub-bis}{fire}, $\unf{\tmthree}$ is a fireball.
  By \ih, $\tmtwo$ is a fireball.
  
  So, we have just proved that $\tm = \fire \esub{\var_1}{\gconst_1} \dots, \esub{\var_n}{\gconst_n}$ for some fireball $\fire$ and inert terms $\gconst_1, \dots \gconst_n$.
  Now, fireballs (in particular, inert terms) are $\vsub$-normal. 
  Indeed, fireballs are without ES and hence without $\expo$-redexes, moreover it is easy to prove that fireballs are $\mult$-normal (by simply adapting the proof of \reflemma{fnormal}).

  Therefore, $\tm = \fire \esub{\var_1}{\gconst_1} \dots \esub{\var_n}{\gconst_n}$ can only have $\expovar$-redexes (when some $\gconst_k$ is a variable).
\end{proof}

\setcounter{lemmaAppendix}{\value{l:toevar-post}}
\begin{lemmaAppendix}[Linear Postponement of $\toevar$]
\label{lappendix:toevar-post}
  Let%
\NoteState{l:toevar-post}
  $\tm, \tmtwo, \tmthree \in \Lambda_\vsub$.
  \begin{enumerate}
      \item\label{pappendix:toevar-post-tom} If $\tm \toevar \tmthree \tom \tmtwo$ then $\tm\tom\toevar\tmtwo$. 
      \item\label{pappendix:toevar-post-toeabs} If $\tm\toevar\toeabs\tmtwo$ then 
      $\tm \toeabs\toe \tmtwo$.
      \item\label{pappendix:toevar-post-global} If $\deriv \colon \tm\tovsub^*\tmtwo$ then $\derivtwo \colon \tm\Rew{\msym,\expoabs}^*\toevar^*\tmtwo$ with $\sizevsub{\derivtwo} = \sizevsub{\deriv}$, $\sizem{\derivtwo} = \sizem{\deriv}$, $\sizee{\derivtwo} = \sizee{\deriv}$, and $\sizeeabs\derivtwo \geq \sizeeabs\deriv$. 
  \end{enumerate}
\end{lemmaAppendix}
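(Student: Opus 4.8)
The plan is to prove the two local swap statements (the first two points) and then bootstrap them to the global postponement (the third point) by a terminating rewriting on derivations. For the first point, the key observation is that a $\toevar$-step cannot create a $\tom$-redex: a $\tom$-redex has the shape $\evctxp{\sctxp{\la\var\tm}\tmtwo}$, and replacing one variable by another neither introduces an application node nor an abstraction, so the $\tom$-redex fired in $\tmthree$ already occurs in $\tm$. Symmetrically, firing that $\tom$-redex in $\tm$ cannot destroy the $\toevar$-redex $\tmfour\esub\var{\sctxp\vartwo}$, since $\tom$ merely turns an application into an ES and the content of the $\toevar$-ES stays a substitution context around the variable $\vartwo$; weakness of both reductions excludes residuals under a $\lambda$. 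Hence the two steps commute and the postponed step is still a $\toevar$-step, yielding $\tm \tom \toevar \tmtwo$. Concretely this goes by induction on the evaluation context of the $\toevar$-step, the only delicate base cases being those where the two redexes overlap at a common ES, disjoint and nested redexes commuting immediately.

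The second point is analogous but exhibits one genuinely new phenomenon. Again $\toevar$ creates no $\toeabs$-redex, so the $\toeabs$-redex fired in $\tmthree$ pre-exists in $\tm$ and can be fired first. The subtle case is when the content of the $\toevar$-ES is exactly the variable $\vartwo$ bound by the $\toeabs$-ES, i.e.\ when $\tm$ contains $(\tmfour\esub\var\vartwo)\esub\vartwo{\sctxp{\la\varthree\tmfive}}$: firing the $\toeabs$-step first substitutes the abstraction for $\vartwo$ in the body, turning the former $\toevar$-redex $\esub\var\vartwo$ into a $\toeabs$-redex $\esub\var{\sctxp{\la\varthree\tmfive}}$. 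This is precisely why the postponed step must be allowed to be an arbitrary $\toe$-step rather than just a $\toevar$-step: in all other configurations it remains a $\toevar$-step, but in this overlap it becomes a $\toeabs$-step. In either case one obtains $\tm \toeabs\toe \tmtwo$.

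For the third point I would turn the two swaps into a terminating procedure on derivations. Given $\deriv \colon \tm \tovsub^* \tmtwo$, I repeatedly locate a $\toevar$-step immediately followed by a non-$\toevar$-step (a $\tom$- or $\toeabs$-step) and rewrite it via the first or second point; if no such adjacency exists, all $\toevar$-steps already follow all non-$\toevar$-steps, so $\deriv$ has the shape $\tm \Rew{\msym,\expoabs}^*\toevar^*\tmtwo$. Termination is witnessed by the lexicographic measure $(\sizeevar\deriv,\, I(\deriv))$, where $I(\deriv)$ counts the inversions, namely pairs of positions $i<j$ carrying a $\toevar$-step at $i$ and a non-$\toevar$-step at $j$. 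Every swap preserves the total length $\sizevsub\deriv$, the number $\sizem\deriv$ of $\mult$-steps and the number $\sizee\deriv$ of $\expo$-steps: the first point trades $\toevar\tom$ for $\tom\toevar$, and the second trades two $\toe$-steps for two $\toe$-steps. When the postponed step stays a $\toevar$-step, $\sizeevar\deriv$ is unchanged and $I(\deriv)$ strictly decreases by one; when the second point converts a $\toevar$-step into a $\toeabs$-step, $\sizeevar\deriv$ strictly decreases while $\sizeeabs\deriv$ grows by one, which is exactly why the statement only asserts $\sizeeabs\derivtwo \geq \sizeeabs\deriv$. In all cases the lexicographic measure strictly decreases, and the invariants on $\sizevsub,\sizem,\sizee$ transfer to the resulting $\derivtwo$.

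The main obstacle is the local swap in the second point together with its effect on the termination argument: plain inversion counting is not monotone, because converting a $\toevar$-step into a $\toeabs$-step can create fresh inversions with earlier $\toevar$-steps, so the lexicographic refinement with $\sizeevar$ as dominant component is genuinely needed. The remaining difficulty is the residual bookkeeping in the overlapping-ES cases of the first two points—checking that the relevant redexes survive, that the reducts coincide, and that weakness rules out residuals under $\lambda$—which is delicate but routine.
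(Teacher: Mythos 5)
Your proof is correct, and for the two local commutations it is essentially the paper's own argument: the paper also proceeds by induction on the position of the $\toevar$-step, justified by the observation that substituting a variable for a variable creates neither $\mult$- nor $\expoabs$-redexes, and it isolates exactly the overlap you identify (an ES $\esub\var\vartwo$ whose content $\vartwo$ is the variable bound by the $\toeabs$-ES), which is the unique reason the postponed step in point 2 must be $\toeabs\toe$ rather than $\toeabs\toevar$. The only divergence is in how you bootstrap point 3. You run a global bubble-sort on the whole derivation and, correctly noting that a type-2 swap can create fresh inversions with earlier $\toevar$-steps, you invoke the lexicographic measure $(\sizeevar{\deriv}, I(\deriv))$; this is sound, and your diagnosis that plain inversion counting fails is accurate. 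The paper instead does a plain induction on $\sizevsub{\deriv}$: apply the induction hypothesis to the tail to sort it into the shape $\Rew{\msym,\expoabs}^*\toevar^*$, and only then push a single leading $\toevar$-step rightwards through the sorted $\Rew{\msym,\expoabs}$-prefix. Under this organization no lexicographic refinement is needed: each swap strictly shortens the prefix still to be traversed, and if a type-2 swap converts the pushed step into a $\toeabs$-step, that step simply joins the $\Rew{\msym,\expoabs}$-prefix and the pushing stops immediately, since the tail is already sorted. So your measure buys a termination proof for the unstructured procedure, while the paper's incremental organization makes termination trivial; both yield the same quantitative invariants ($\sizevsub{}$, $\sizem{}$, $\sizee{}$ preserved, $\sizeeabs{}$ weakly increasing, hence the inequality $\sizeeabs\derivtwo \geq \sizeeabs\deriv$ in the statement).
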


\begin{proof}
  \begin{enumerate}
    \item By induction on the definition of $\tm \toevar \tmthree$. 
    Since the $\expovar$-step cannot create in $\tmthree$ new $\mult$-redexes not occurring in $\tm$, the $\mult$-redex fired in $\tmthree \tom \tmtwo$ is (a residual of a $\mult$-redex) already occurring in $\tm$. So, there are the following cases.
    \begin{itemize}
      \item \emph{Step at the Root for $\tm \toevar \tmthree$ and ES Left for $\tmthree \tom \tmtwo$}, \ie $\tm \defeq \tmfour\esub\varthree{\sctxp\var} \toevar \sctxp{\tmfour\isub\varthree{\var}} \eqdef \tmthree$ and $\tmthree \tom \sctxp{\tmfourp\isub\varthree{\var}} \eqdef \tmtwo$ with $\tmfour \tom \tmfourp$: then $\tm \tom \tmfourp\esub\varthree{\sctxp{\var}} \toevar \tmtwo$;
      \item \emph{Step at the Root for $\tm \toevar \tmthree$ and ES ``quasi-Right'' for $\tmthree \tom \tmtwo$}, \ie $\tm \defeq \tmfour\esub\varthree{\var\esub{\var_1}{\tm_1}\dots\esub{\var_n}{\tm_n}} \toevar \tmfour\isub\varthree{\var}\esub{\var_1}{\tm_1}\dots\esub{\var_n}{\tm_n} \eqdef \tmthree$ and 
      \begin{equation*}
        \tm \tom \tmfour\esub\varthree{\var\esub{\var_1}{\tm_1}\dots\esub{\var_j}{\tmp_j}\dots\esub{\var_n}{\tm_n}} \eqdef \tmtwo
      \end{equation*}
      for some $n > 0$, and $\tm_j \tom \tmp_j$ for some $1 \leq j \leq n$: then, $\tm \tom \tmfour\esub\varthree{\var\esub{\var_1}{\tm_1}\dots\esub{\var_j}{\tmp_j}\dots\esub{\var_n}{\tm_n}} \toevar \tmtwo$;

      \item \emph{Application Left for $\tm \toevar \tmthree$} and \emph{Application Right for $\tmthree \tom \tmtwo$}, \ie $\tm \defeq \tmfour\tmfive \toevar \tmfourp\tmfive \eqdef \tmthree$ and $\tmthree \tom \tmfourp\tmfivep \eqdef \tmtwo$ with $\tmfour \toevar \tmfourp$ and $\tmfive \tom \tmfivep$: then, $\tm \tom \tmfour\tmfivep \toevar \tmtwo$;
      \item \emph{Application Left for both $\tm \toevar \tmthree$ and $\tmthree \tom \tmtwo$}, \ie $\tm \defeq \tmfour\tmfive \toevar \tmfourp\tmfive \eqdef \tmthree$ and $\tmthree \tom \tmfour''\tmfive \eqdef \tmtwo$ with $\tmfour \toevar \tmfourp$ and $\tmfourp \tom \tmfour''$: by \ih, $\tmfour \tom\toevar \tmfour''$, hence $\tm \tom\toevar \tmtwo$;
      \item \emph{Application Left for $\tm \toevar \tmthree$} and \emph{Step at the Root for $\tmthree \tom \tmtwo$}, \ie $\tm \defeq (\la\var\tmfive){\esub{\var_1}{\tm_1}\dots\esub{\var_n}{\tm_n}}\tmfour \toevar (\la\var\tmfive){\esub{\var_1}{\tm_1}\dots\esub{\var_j}{\tmp_j}\dots\esub{\var_n}{\tm_n}}\tmfour \eqdef \tmthree$ with $n > 0$ and $\tm_j \toevar \tmp_j$ for some $1 \leq j \leq n$, and 
      \begin{equation*}
        \tmthree \tom \allowbreak \tmfive\esub\var\tmfour{\esub{\var_1}{\tm_1}\dots\esub{\var_j}{\tmp_j}\dots\esub{\var_n}{\tm_n}} \eqdef \tmtwo
      \end{equation*}
      then,
      \begin{equation*}
        \tm \tom \tmfive\esub\var\tmfour{\esub{\var_1}{\tm_1}\dots\esub{\var_n}{\tm_n}} \toevar \tmtwo;
      \end{equation*}
      \item \emph{Application Right for $\tm \toevar \tmthree$} and \emph{Application Left for $\tmthree \tom \tmtwo$}, \ie $\tm \defeq \tmfive\tmfour \toevar \tmfive\tmfourp \eqdef \tmthree$ and $\tmthree \tom \tmfivep\tmfourp \eqdef \tmtwo$ with $\tmfour \toevar \tmfourp$ and $\tmfive \tom \tmfivep$: then, $\tm \tom \tmfivep\tmfour \toevar \tmtwo$;
      \item \emph{Application Right for both $\tm \toevar \tmthree$ and $\tmthree \tom \tmtwo$}, \ie $\tm \defeq \tmfive\tmfour \toevar \tmfive\tmfourp \eqdef \tmthree$ and $\tmthree \tom \tmfive\tmfour'' \eqdef \tmtwo$ with $\tmfour \toevar \tmfourp$ and $\tmfourp \tom \tmfour''$: by \ih, $\tmfour \tom\toevar \tmfour''$, hence $\tm \tom\toevar \tmtwo$;
      \item \emph{Application Right for $\tm \toevar \tmthree$} and \emph{Step at the Root for $\tmthree \tom \tmtwo$}, \ie $\tm \defeq \sctx{\la\var\tmfive}\tmfour \toevar \sctxp{\la\var\tmfive}\tmfourp \eqdef \tmthree$ with $\tmfour \toevar \tmfourp$, and $\tmthree \tom \allowbreak \sctxp{\tmfive\esub\var\tmfourp} \eqdef \tmtwo$: then, $\tm \tom \sctxp{\tmfive\esub\var\tmfour} \toevar \tmtwo$;
      \item \emph{ES Left for $\tm \toevar \tmthree$} and \emph{ES Right for $\tmthree \tom \tmtwo$}, \ie $\tm \defeq \tmfour\esub\var\tmfive \toevar \tmfourp\esub\var\tmfive \eqdef \tmthree$ and $\tmthree \tom \tmfourp\esub\var\tmfivep \eqdef \tmtwo$ with $\tmfour \toevar \tmfourp$ and $\tmfive \tom \tmfivep$: then, $\tm \tom \tmfour\esub\var\tmfivep \toevar \tmtwo$;
      \item \emph{ES Left for both $\tm \toevar \tmthree$ and $\tmthree \tom \tmtwo$}, \ie $\tm \defeq \tmfour\esub\var\tmfive \toevar \tmfourp\esub\var\tmfive \eqdef \tmthree$ and $\tmthree \tom \tmfour''\esub\var\tmfive \eqdef \tmtwo$ with $\tmfour \toevar \tmfourp$ and $\tmfourp \tom \tmfour''$: by \ih, $\tmfour \tom\toevar \tmfour''$, hence $\tm \tom\toevar \tmtwo$;
      \item \emph{ES Right for $\tm \toevar \tmthree$} and \emph{ES Left for $\tmthree \tom \tmtwo$}, \ie $\tm \defeq \tmfive\esub\var\tmfour \toevar \tmfive\esub\var\tmfourp \eqdef \tmthree$ and $\tmthree \tom \tmfivep\esub\var\tmfourp \eqdef \tmtwo$ with $\tmfour \toevar \tmfourp$ and $\tmfive \tom \tmfivep$: then, $\tm \tom \tmfivep\esub\var\tmfour \toevar \tmtwo$;
      \item \emph{ES Right for both $\tm \toevar \tmthree$ and $\tmthree \tom \tmtwo$}, \ie $\tm \defeq \tmfive\esub\var\tmfour \toevar \tmfive\esub\var{\tmfourp} \eqdef \tmthree$ and $\tmthree \tom \tmfive\esub\var{\tmfour''} \eqdef \tmtwo$ with $\tmfour \toevar \tmfourp$ and $\tmfourp \tom \tmfour''$: by \ih, $\tmfour \tom\toevar \tmfour''$, hence $\tm \tom\toevar \tmtwo$.
    \end{itemize}

    \item By induction on the definition of $\tm \toevar \tmthree$. 
    Since the $\expovar$-step cannot create in $\tmthree$ new $\expoabs$-redexes not occurring in $\tm$, the $\expoabs$-redex fired in $\tmthree \toeabs \tmtwo$ is (a residual of a $\expoabs$-redex) already occurring in $\tm$. So, there are the following cases.
    \begin{itemize}
      \item \emph{Step at the Root for both $\tm \toevar \tmthree$ and $\tmthree \toeabs \tmtwo$}, \ie 
      \begin{equation*}
        \tm \defeq \tmfour\esub\var{\sctxtwop{\varthree}\esub\vartwo{\sctxp{\la\var\tmfive}}} \toevar \sctxtwop{\tmfour\isub\var\varthree}\esub\vartwo{\sctxp{\la\var\tmfive}} \allowbreak\eqdef \tmthree
      \end{equation*}
      and $\tmthree \toeabs \sctxp{\sctxtwop{\tmfour\isub\var\varthree}\isub\vartwo{\la\var\tmfive}} \eqdef \tmtwo$ (with possibly $\vartwo = \varthree$). 
      We set $\sctxthree \defeq \sctxtwo\isub\vartwo{\la\var\tmfive}$ \ie $\sctxthree$ is the substitution context obtained from $\sctxtwo$ by the capture-avoiding substitution of $\la\var\tmfive$ for each free occurrence of $\vartwo$ in $\sctxtwo$.
      We can suppose without loss of generality that $\vartwo \notin \fv{\sctx} \cup \fv{\tmfour}$.
      There are two sub-cases:
      \begin{itemize}
        \item either $\vartwo = \varthree$ and then $\tm \toeabs \tmfour \esub\var{\sctxp{\sctxthreep{\la\var\tmfive}}} \toeabs \sctxp{\sctxthreep{\tmfour \isub\var{\la\var\tmfive}}} \allowbreak= \tmtwo$,
        \item or $\vartwo \neq \varthree$ and then $\tm \toeabs\! \tmfour \esub\var{\sctxp{\sctxthreep\varthree}} \toevar\! \sctxp{\sctxthreep{\tmfour \isub\var{\varthree}}} \allowbreak= \tmtwo$.
      \end{itemize}

      \item \emph{Step at the Root for $\tm \toevar \tmthree$ and ES Left for $\tmthree \toeabs \tmtwo$}, \ie $\tm \defeq \tmfour\esub\varthree{\sctxp\var} \toevar \sctxp{\tmfour\isub\varthree{\var}} \eqdef \tmthree$ and $\tmthree \toeabs \sctxp{\tmfourp\isub\varthree{\var}} \eqdef \tmtwo$ with $\tmfour \toeabs \tmfourp$: then $\tm \toeabs \tmfourp\esub\varthree{\sctxp{\var}} \toevar \tmtwo$;
      \item \emph{Step at the Root for $\tm \toevar \tmthree$ and ES ``quasi-Right'' for $\tmthree \toeabs \tmtwo$}, \ie $\tm \defeq \tmfour\esub\varthree{\var\esub{\var_1}{\tm_1}\dots\esub{\var_n}{\tm_n}} \toevar \tmfour\isub\varthree{\var}\esub{\var_1}{\tm_1}\dots\esub{\var_n}{\tm_n} \eqdef \tmthree$ for some $n > 0$, and $\tm_j \toeabs \tmp_j$ for some $1 \leq j \leq n$, and 
      \begin{equation*}
        \tmthree \toeabs \tmfour\isub\varthree\var \esub{\var_1}{\tm_1}\dots\esub{\var_j}{\tmp_j}\dots\esub{\var_n}{\tm_n} \eqdef \tmtwo \: 
      \end{equation*}
      then, $\tm \toeabs \tmfour\esub\varthree{\var\esub{\var_1}{\tm_1}\dots\esub{\var_j}{\tmp_j}\dots\esub{\var_n}{\tm_n}} \toevar \tmtwo$;

      \item \emph{Application Left for $\tm \toevar \tmthree$ and Application Right for $\tmthree \toeabs \tmtwo$}, \ie $\tm \defeq \tmfour\tmfive \toevar \tmfourp\tmfive \eqdef \tmthree$ and $\tmthree \toeabs \tmfourp\tmfivep \eqdef \tmtwo$ with $\tmfour \toevar \tmfourp$ and $\tmfive \toeabs \tmfivep$: then, $\tm \toeabs \tmfour\tmfivep \toevar \tmtwo$;
      \item \emph{Application Left for both $\tm \toevar \tmthree$ and $\tmthree \toeabs \tmtwo$}, \ie $\tm \defeq \tmfour\tmfive \toevar \tmfourp\tmfive \eqdef \tmthree$ and $\tmthree \toeabs \tmfour''\tmfive \eqdef \tmtwo$ with $\tmfour \toevar \tmfourp$ and $\tmfourp \toeabs \tmfour''$: by \ih, $\tmfour \toeabs\toe \tmfour''$, hence $\tm \toeabs\toe \tmtwo$;
      \item \emph{Application Right for $\tm \toevar \tmthree$ and Application Left for $\tmthree \toeabs \tmtwo$}, \ie $\tm \defeq \tmfive\tmfour \toevar \tmfive\tmfourp \eqdef \tmthree$ and $\tmthree \toeabs \tmfivep\tmfourp \eqdef \tmtwo$ with $\tmfour \toevar \tmfourp$ and $\tmfive \toeabs \tmfivep$: then, $\tm \toeabs \tmfivep\tmfour \toevar \tmtwo$;
      \item \emph{Application Right for both $\tm \toevar \tmthree$ and $\tmthree \toeabs \tmtwo$}, \ie $\tm \defeq \tmfive\tmfour \toevar \tmfive\tmfourp \eqdef \tmthree$ and $\tmthree \toeabs \tmfive\tmfour'' \eqdef \tmtwo$ with $\tmfour \toevar \tmfourp$ and $\tmfourp \toeabs \tmfour''$: by \ih, $\tmfour \toeabs\toe \tmfour''$, hence $\tm \toeabs\toe \tmtwo$;

      \item \emph{ES Left for $\tm \toevar \tmthree$ and Step at the Root for $\tmthree \toeabs \tmtwo$}, \ie $\tm \defeq \tmfour\esub\varthree{\sctxp{\la\vartwo\tmfive}} \toevar \tmfourp\esub\varthree{\sctxp{\la\vartwo\tmfive}} \eqdef \tmthree$ and $\tmthree \toeabs \sctxp{\tmfourp\isub\varthree{\la\vartwo\tmfive}} \eqdef \tmtwo$ with $\tmfour \toevar \tmfourp$: 
      this means that in $\tmfour$ there is an ES of the form $\esub\vartwo\var$ (possibly $\var = \varthree$) which is fired in $\tmfour \toevar \tmfourp$; 
      then, $\tm \toeabs \sctxp{\tmfour\isub\varthree{\la\vartwo\tmfive}} \toe \tmtwo$, where the last $\expo$-step is a $\expoabs$-step if $\var = \varthree$, otherwise it is a $\expovar$-step;
      
      \item \emph{ES Left for $\tm \toevar \tmthree$ and ES Right for $\tmthree \toeabs \tmtwo$}, \ie $\tm \defeq \tmfour\esub\var\tmfive \toevar \tmfourp\esub\var\tmfive \eqdef \tmthree$ and $\tmthree \toeabs \tmfourp\esub\var\tmfivep \eqdef \tmtwo$ with $\tmfour \toevar \tmfourp$ and $\tmfive \toeabs \tmfivep$: then, $\tm \toeabs \tmfour\esub\var\tmfivep \toevar \tmtwo$;
      \item \emph{ES Left for both $\tm \toevar \tmthree$ and $\tmthree \toeabs \tmtwo$}, \ie $\tm \defeq \tmfour\esub\var\tmfive \toevar \tmfourp\esub\var\tmfive \eqdef \tmthree$ and $\tmthree \toeabs \tmfour''\esub\var\tmfive \eqdef \tmtwo$ with $\tmfour \toevar \tmfourp$ and $\tmfourp \toeabs \tmfour''$: by \ih, $\tmfour \toeabs\toe \tmfour''$, so $\tm \toeabs\toe \tmtwo$;
      
      \item \emph{ES Right for $\tm \toevar \tmthree$ and Step at the Root for $\tmthree \toeabs \tmtwo$}, \ie 
      \begin{align*}
        \tm &\defeq \tmfour\esub\varthree{(\la\vartwo\tmfive)\esub{\var_1}{\tm_1}\dots\esub{\var_n}{\tm_n}} \\
	    &\toevar \tmfour\esub\varthree{(\la\vartwo\tmfive) \esub{\var_1}{\tm_1}\dots\esub{\var_j}{\tmp_j}\dots\esub{\var_n}{\tm_n}} \eqdef \tmthree
      \end{align*}
      for some $n > 0$, and $\tm_j \toevar \tmp_j$ for some $1 \leq j \leq n$, and $\tmthree \toeabs \tmfour\isub\varthree{\la\vartwo\tmfive} \esub{\var_1}{\tm_1}\dots\esub{\var_j}{\tmp_j}\dots\esub{\var_n}{\tm_n} \eqdef \tmtwo$: 
      then, 
      \begin{equation*}
        \tm \toeabs \tmfour\isub\varthree{\la\vartwo\tmfive}\esub{\var_1}{\tm_1}\dots\esub{\var_n}{\tm_n}  \allowbreak\toevar \tmtwo\,;
      \end{equation*}

      \item \emph{ES Right for $\tm \toevar \tmthree$ and ES Left for $\tmthree \toeabs \tmtwo$}, \ie $\tm \defeq \tmfive\esub\var\tmfour \toevar \tmfive\esub\var\tmfourp \eqdef \tmthree$ and $\tmthree \toeabs \tmfivep\esub\var\tmfourp \eqdef \tmtwo$ with $\tmfour \toevar \tmfourp$ and $\tmfive \toeabs \tmfivep$: then, $\tm \toeabs \tmfivep\esub\var\tmfour \toevar \tmtwo$;
      \item \emph{ES Right for both $\tm \toevar \tmthree$ and $\tmthree \toeabs \tmtwo$}, \ie $\tm \defeq \tmfive\esub\var\tmfour \toevar \tmfive\esub\var{\tmfourp} \eqdef \tmthree$ and $\tmthree \toeabs \tmfive\esub\var{\tmfour''} \eqdef \tmtwo$ with $\tmfour \toevar \tmfourp$ and $\tmfourp \toeabs \tmfour''$: by \ih, $\tmfour \toeabs\toe \tmfour''$, hence $\tm \toeabs\toe \tmtwo$.
    \end{itemize}
    
    \item By induction on $\sizevsub{\deriv} \in \nat$, using Lemmas~\ref{lappendix:toevar-post}.\ref{pappendix:toevar-post-tom}-\ref{pappendix:toevar-post-toeabs} in the inductive case.
    \qedhere
  \end{enumerate}

\end{proof}

\setcounter{theoremAppendix}{\value{thm:sim-f-into-vsubeq}}
\begin{theoremAppendix}[Quantitative Simulation of $\firecalc$ in $\vsubcalc$]
\label{thmappendix:sim-f-into-vsubeq}
  Let%
\NoteState{thm:sim-f-into-vsubeq}
  $\tm, \tmtwo \!\in\! \Lambda$.
  If $\deriv \colon \tm \tof^*\! \tmtwo$ then 
  there are $\tmthree, \tmfour \!\in\! \vsubterms$ and $\derivtwo \colon \tm \tovsub^*\! \tmfour$ such that
  \begin{enumerate}
  \item\label{pappendix:sim-f-into-vsubeq-qual} \emph{Qualitative Relationship}: 
  $\tmfour \eqstruct \tmthree$, $\tmtwo = \unf{\tmthree} = \unf{\tmfour}$ and $\tmthree$ is \proper;

  \item \emph{Quantitative Relationship}: \label{pappendix:sim-f-into-vsubeq-quant} 
    \begin{enumerate}
    \item \label{pappendix:sim-f-into-vsubeq-quant-mult} \emph{Multiplicative Steps:} $\sizef{\deriv} = \sizem{\derivtwo}$;
    \item \label{pappendix:sim-f-into-vsubeq-quant-exp} \emph{Exponential (Abstraction) Steps:} $\sizebabs{\deriv} = \sizeeabs{\derivtwo} = \sizee{\derivtwo}$.
    
    \end{enumerate}
    
    \item \emph{Normal Forms}: if $\tmtwo$ is $\betaf$-normal then there exists $\derivthree\colon\tmfour \toevar^* \tmfive$ such that $\tmfive$ is a $\vsub$-normal form and $\sizeevar{\derivthree} \leq \sizem{\derivtwo} - \sizeeabs{\derivtwo}$.
    \end{enumerate}   
\end{theoremAppendix}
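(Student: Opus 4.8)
The plan is to establish the qualitative and quantitative relationships (Points~1 and~2) by induction on the length $\sizef{\deriv}$ of the fireball derivation, and then to derive the statement about normal forms (Point~3) from the terms produced by the induction. First I would treat the base case $\sizef{\deriv} = 0$: here $\tmtwo = \tm \in \Lambda$, so I take $\tmthree = \tmfour = \tm$ (which is \proper, with $\unf{\tm} = \tm$) and let $\derivtwo$ be the empty derivation; all counts are $0$ and every claimed equality holds trivially. For the inductive step I would factor $\deriv$ as $\deriv' \colon \tm \tof^* \tmtwo'$ followed by one step $\tmtwo' \tof \tmtwo$, apply the induction hypothesis to $\deriv'$ to obtain $\tmthree', \tmfour' \in \vsubterms$ and $\derivtwo' \colon \tm \tovsub^* \tmfour'$ with $\tmfour' \eqstruct \tmthree'$, $\unf{\tmthree'} = \unf{\tmfour'} = \tmtwo'$ and $\tmthree'$ \proper, and then project the last fireball step through the \proper term $\tmthree'$ using \reflemma{proj-via-unfold}.

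Concretely, since $\unf{\tmthree'} = \tmtwo'$ and $\tmthree'$ is \proper, \reflemma{proj-via-unfold} gives, in the $\betaabs$ case, $\tmthree' \tom\toeabs \tmthree$ and, in the $\betain$ case, $\tmthree' \tom\eqstruct \tmthree$, with $\tmthree$ \proper and $\unf{\tmthree} = \tmtwo$ in both cases. These steps start from $\tmthree'$, whereas the actual reduction $\derivtwo'$ ends at $\tmfour' \eqstruct \tmthree'$; so I would transport them onto $\tmfour'$ by repeatedly applying the strong bisimulation of $\eqstruct$ with respect to $\tovsub$ (\reflemmap{eqstruct-post-and-term}{locpost}), which also preserves the kind of each step. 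This yields a prolongation $\derivtwo \colon \tm \tovsub^* \tmfour$ of $\derivtwo'$ with $\tmfour \eqstruct \tmthree$ (using transitivity of $\eqstruct$ in the $\betain$ case), hence $\unf{\tmfour} = \unf{\tmthree} = \tmtwo$. Since both \reflemma{proj-via-unfold} and the transport produce only $\tom$- and $\toeabs$-steps, the whole of $\derivtwo$ is free of $\toevar$-steps, so $\sizee{\derivtwo} = \sizeeabs{\derivtwo}$; moreover each fireball step adds exactly one $\tom$-step (giving $\sizef{\deriv} = \sizem{\derivtwo}$) and each $\betaabs$-step adds exactly one $\toeabs$-step while $\betain$-steps add none (giving $\sizebabs{\deriv} = \sizeeabs{\derivtwo}$). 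This settles Points~1 and~2.

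For Point~3, assume $\tmtwo$ is $\betaf$-normal, \ie a fireball by \refprop{open-harmony}. Applying \reflemma{normal-anti-unfold} to the \proper term $\tmthree$ (for which $\unf{\tmthree} = \tmtwo$) shows $\tmthree$ is $\set{\msym,\expoabs}$-normal, and since $\tmfour \eqstruct \tmthree$, the normal-form preservation of $\eqstruct$ (\reflemmap{eqstruct-post-and-term}{normal}, used for $\msym$ and for $\expoabs$) shows $\tmfour$ is $\set{\msym,\expoabs}$-normal as well. As $\toevar \subseteq \toe$ is strongly normalizing (\refpropp{basic-value-substitution}{tom-toe-terminates}), there is a $\toevar$-normal $\tmfive$ with $\derivthree \colon \tmfour \toevar^* \tmfive$; and since the local postponement of $\toevar$ in \reflemma{toevar-post} forbids a $\toevar$-step from enabling a fresh $\msym$- or $\expoabs$-redex, $\tmfive$ stays $\set{\msym,\expoabs}$-normal and, being $\expovar$-normal, is $\vsub$-normal. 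Finally, to bound $\sizeevar{\derivthree}$ I would apply \refpropp{basic-value-substitution}{expo-less-than-mult} to the concatenation $\derivtwo \cdot \derivthree \colon \tm \tovsub^* \tmfive$ (legitimate since $\tm \in \Lambda$): this gives $\sizee{\derivtwo} + \sizeevar{\derivthree} = \sizee{\derivtwo \cdot \derivthree} \leq \sizem{\derivtwo \cdot \derivthree} = \sizem{\derivtwo}$, and substituting $\sizee{\derivtwo} = \sizeeabs{\derivtwo}$ yields $\sizeevar{\derivthree} \leq \sizem{\derivtwo} - \sizeeabs{\derivtwo}$.

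The main obstacle I anticipate is the bookkeeping forced by the gap between the clean term $\tmthree$, on which \reflemma{proj-via-unfold} operates, and the endpoint $\tmfour$ of the genuine reduction: the induction must carry both terms and rely crucially on the strong bisimulation of $\eqstruct$ to keep the actual derivation $\derivtwo$ moving while the clean witness $\tmthree$ is used to control unfoldings and normal forms. The other delicate point is the count inequality in Point~3, which is not local to $\derivthree$ but emerges only from viewing $\derivtwo \cdot \derivthree$ globally as a $\vsub$-derivation starting from an ES-free term and invoking $\sizee{\cdot} \leq \sizem{\cdot}$.
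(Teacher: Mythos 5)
Your proof is correct and takes essentially the same route as the paper's: induction on $\sizef{\deriv}$ projecting each fireball step through \reflemma{proj-via-unfold}, with the strong bisimulation/postponement of $\eqstruct$ (\reflemma{eqstruct-post-and-term}) bridging the clean witness $\tmthree$ and the actual derivation ending in $\tmfour$, and Point~3 obtained via \refprop{open-harmony}, \reflemma{normal-anti-unfold}, termination and postponement of $\toevar$ (\reflemma{toevar-post}), and the global bound $\sizee{\cdot} \leq \sizem{\cdot}$ of \refpropp{basic-value-substitution}{expo-less-than-mult} applied to the concatenation $\derivtwo\cdot\derivthree$. The only differences are harmless bookkeeping choices: the paper maintains the invariant $\tm \tovsub^* \eqstruct \tmthree$ and postpones $\eqstruct$ via \reflemmap{eqstruct-post-and-term}{globpost}, whereas you carry the pair $\tmfour' \eqstruct \tmthree'$ and transport steps locally with \reflemmap{eqstruct-post-and-term}{locpost}; likewise in Point~3 you transfer $\set{\msym,\expoabs}$-normality across $\eqstruct$ to $\tmfour$ and $\toevar$-normalize it directly, slightly streamlining the paper's final $\eqstruct$-postponement.
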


\begin{proof}
The first two points are proved together.
\begin{enumerate}
 \item[1-2.] By the remark at the beginning of this section of the Appendix (\refrmksps{eqstruct}{same-unfolding}{no-vsub}), it is sufficient to show that there exists $\derivtwo \colon \tm \tovsub^* \eqstruct \tmthree \in \vsubterms$ such that $\tmtwo = \unf{\tmthree}$ with $\tmthree$ \proper, and $\sizef{\deriv} = \sizem{\derivtwo}$ and $\sizebabs{\deriv} = \sizeeabs{\derivtwo}$ (the fact that $\sizeeabs{\derivtwo} = \sizee{\derivtwo}$ is immediate, since the simulation obtained by iterating the projection in \reflemma{proj-via-unfold} never uses $\toevar$).
  We proceed by induction on $\sizef{\deriv} \in \nat$.  
  Cases:
  \begin{itemize}
    \item \emph{Empty derivation}, \ie $\sizef{\deriv} = 0$ then $\tm = \tmtwo$ and $\sizebabs{\deriv} = 0$, so we conclude taking $\tmthree \defeq \tmtwo$ and $\derivtwo$ as the empty derivation.
   \item \emph{Non-empty derivation}, \ie $\sizef{\deriv} > 0$: then, $\deriv \colon \tm \tof^* \tmfour \tof \tmtwo$ and let $\deriv' \colon \tm \tof^* \tmfour$ be the derivation obtained from $\deriv$ by removing its last step $\tmfour \tof \tmtwo$.
    By \ih, there is $\derivtwo' \colon \tm \tovsub^*\eqstruct \tmfive$ such that $\tmfour = \unf{\tmfive}$, $\tmfive$ is \proper, $\sizef{\deriv'} = \sizem{\derivtwo'}$, and $\sizebabs{\deriv'} = \sizeeabs{\derivtwo'}$.
    By applying \reflemma{proj-via-unfold} to the last step $\tmfour \tof \tmtwo$ of $\deriv$, we obtain $\tmthree$ such that either $\tmfive \tom\toe \tmthree$, if $\unf{\tmfive} \tobv \tmtwo$, or $\tmfive \tom \eqstruct \tmthree$, if $\unf{\tmfive} \toin \tmtwo$, and in both cases $\tmthree$ is a \proper $\vsub$-term such that $\unf{\tmthreep} = \tmtwo$. Note that both cases can be summed up with $\tmfive \tom \overset{\eqstruct\ }{\toe} \tmthree$.
    Composing the two obtained derivations $\derivtwo' \colon \tm \tovsub^*\eqstruct \tmfive$ and $\tmfive \tom \overset{\eqstruct\ }{\toe} \tmthree$, we obtain the derivation $\derivtwo'' \colon \tm \tovsub^*\eqstruct \tmfive \tom \overset{\eqstruct\ }{\toe} \tmthree$ that satisfies the quantitative relationships but not yet the qualitative one, as $\eqstruct$ appears between two steps of $\derivtwo''$. It is then enough to apply the strong bisimulation property of $\eqstruct$ (\reflemmap{eqstruct-post-and-term}{globpost}), that provides a derivation $\derivtwo: \tm \tovsub^* \tom \overset{\eqstruct\ }{\toe} \eqstruct \tmthree$ with the same quantitative properties of $\derivtwo''$.
    \end{itemize}
 \item[3.] If $\tmtwo$ is $\betaf$-normal then it is a fireball (by open harmony, \refprop{open-harmony}) and so $\tmthree$ is $\set{\msym,\expoabs}$-normal by \reflemma{normal-anti-unfold}. By \refpropp{basic-value-substitution}{tom-toe-terminates}, $\toevar$ terminates and so there are $\tmsix$ and a derivation $\derivfour\colon\tmthree \toevar^* \tmsix$ such that $\tmsix$ is a $\expovar$-normal form. 
 If $\tmsix$ is not a $\vsub$-normal form, then it has a $\set{\msym,\expoabs}$-redex, but by postponement of $\toevar$ (\reflemma{toevar-post}) such a redex was already in $\tmthree$, against hypothesis. So $\tmsix$ is a $\vsub$-normal form. Then we have $\tmfour \eqstruct \tmthree \toevar^* \tmsix$. 
 Postponing $\eqstruct$ (\reflemmasps{eqstruct-post-and-term}{globpost}{normal}), we obtain that there exists a $\vsub$-normal form $\tmfive$ and a derivation $\derivthree\colon\tmfour \toevar^* \tmfive \eqstruct \tmsix$.
 
 To estimate the length of $\derivthree$ consider $\derivtwo$ followed by $\derivthree$, \ie $\derivtwo;\derivthree \colon \tm \Rewn{\msym,\expoabs} \tmfour \toevar^* \tmfive$. By \refpropp{basic-value-substitution}{expo-less-than-mult}, $\sizee{\derivtwo;\derivthree} \leq \sizem{\derivtwo;\derivthree} = \sizem{\derivtwo}$, and since  $\sizee{\derivtwo;\derivthree} = \sizeeabs{\derivtwo;\derivthree} + \sizeevar{\derivtwo;\derivthree} = \sizeeabs{\derivtwo} + \sizeevar\derivthree$ we obtain $\sizeeabs\derivtwo + \sizeevar{\derivthree} \leq \sizem{\derivtwo}$, \ie $\sizeevar{\derivthree} \leq \sizem{\derivtwo} - \sizeeabs\derivtwo$.\qedhere
\end{enumerate}

\end{proof}

\setcounter{corollaryAppendix}{\value{coro:equivalence-vsub-fire-termination}}
\begin{corollaryAppendix}[Linear Termination Equivalence of $\vsubcalc$ and $\firecalc$]
\label{coroAppendix:equivalence-vsub-fire-termination}
  Let
\NoteState{coro:equivalence-vsub-fire-termination}
  $\tm \in \Lambda$. There exists a $\betaf$-normalizing derivation $\deriv$ from $\tm$ iff there exists a $\vsub$-normalizing derivation $\derivtwo$ from $\tm$. Moreover, $\sizef{\deriv} \leq \sizevsub{\derivtwo} \leq 2\sizef{\deriv}$, \ie they are linearly related.
\end{corollaryAppendix}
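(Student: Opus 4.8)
The plan is to derive the corollary purely from the quantitative simulation \refthm{sim-f-into-vsubeq}, combined with the two structural facts that in both $\firecalc$ and $\vsubcalc$ normalization and strong normalization coincide and the step counts of a normalizing derivation are invariant (\refpropp{basic-fireball}{number-steps} and \refpropp{basic-value-substitution}{number-steps}). In particular, no inverse simulation is needed.

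First I would prove the left-to-right implication. Suppose $\deriv \colon \tm \tof^* \tmtwo$ is $\betaf$-normalizing, so $\tmtwo$ is a fireball by Open Harmony (\refprop{open-harmony}). Feeding $\deriv$ to \refthm{sim-f-into-vsubeq} yields $\derivtwo \colon \tm \tovsub^* \tmfour$, and its third point (Normal Forms) provides a further $\derivthree \colon \tmfour \toevar^* \tmfive$ with $\tmfive$ a genuine $\vsub$-normal form; concatenating gives a $\vsub$-normalizing derivation from $\tm$. For the converse I would argue contrapositively. If $\tm$ is not $\betaf$-normalizable then, by strong confluence of $\tof$ (\refpropp{basic-fireball}{strong-confluence}), it is not strongly $\betaf$-normalizable, so there is an infinite derivation $\tm \tof \tm_1 \tof \tm_2 \tof \cdots$. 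Applying the quantitative equality $\sizef{\cdot} = \sizem{\cdot}$ of \refthm{sim-f-into-vsubeq} to the length-$n$ prefix produces, for every $n$, a $\vsub$-derivation from $\tm$ with $n$ multiplicative steps, hence of length at least $n$; thus $\tm$ is not strongly $\vsub$-normalizable, and by coincidence of normalization and strong normalization in $\vsubcalc$ it is not $\vsub$-normalizable. Contraposition yields the missing direction.

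For the quantitative bounds, invariance of step counts lets me compute with convenient representatives. Take $\deriv$ $\betaf$-normalizing and let $\derivtwo \colon \tm \tovsub^* \tmfour$ and $\derivthree \colon \tmfour \toevar^* \tmfive$ be as above; set $\bar\derivtwo \defeq \derivtwo;\derivthree$, a $\vsub$-normalizing derivation. Since $\derivthree$ consists only of $\expovar$-steps, $\sizem{\bar\derivtwo} = \sizem{\derivtwo} = \sizef{\deriv}$ by point (2a), giving the lower bound $\sizevsub{\bar\derivtwo} = \sizem{\bar\derivtwo} + \sizee{\bar\derivtwo} \geq \sizef{\deriv}$. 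For the upper bound, as $\tm \in \Lambda$ has no ES, \refpropp{basic-value-substitution}{expo-less-than-mult} gives $\sizee{\bar\derivtwo} \leq \sizem{\bar\derivtwo} = \sizef{\deriv}$, whence $\sizevsub{\bar\derivtwo} \leq 2\sizef{\deriv}$. Finally, by invariance (\refpropp{basic-value-substitution}{number-steps}) every $\vsub$-normalizing derivation $\derivtwo$ from $\tm$ has $\sizevsub{\derivtwo} = \sizevsub{\bar\derivtwo}$, and likewise $\sizef{\deriv}$ is the same for all $\betaf$-normalizing derivations, so $\sizef{\deriv} \leq \sizevsub{\derivtwo} \leq 2\sizef{\deriv}$ holds as stated.

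The main obstacle is the converse termination implication carried out without an inverse simulation: it works only because strong confluence forces normalization and strong normalization to coincide on both sides, so an infinite $\firecalc$-derivation can be converted into arbitrarily long $\vsub$-derivations through the multiplicative step-count equality rather than into a single infinite one. A secondary subtlety, already isolated by \refthm{sim-f-into-vsubeq}, is that the derivation delivered by its points 1--2 need not reach a true $\vsub$-normal form — its target merely unfolds to the fireball $\tmtwo$ and may still carry $\expovar$-redexes coming from inert terms that happen to be variables — which is exactly why the extra $\toevar$-normalization of point 3 is required, and why the inequality $\sizee{\cdot} \leq \sizem{\cdot}$ must be invoked on the full concatenated derivation $\bar\derivtwo$ rather than on $\derivtwo$ alone.
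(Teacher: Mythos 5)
Your proof is correct and follows essentially the same route as the paper: project $\deriv$ via \refthm{sim-f-into-vsubeq}, extend with the $\toevar$-normalization of its third point, rule out a diverging $\betaf$-derivation through the equality $\sizef{\cdot}=\sizem{\cdot}$ together with the coincidence of normalization and strong normalization in $\vsubcalc$, and obtain the bounds from $\sizem{\bar\derivtwo}=\sizef{\deriv}$ and \refpropp{basic-value-substitution}{expo-less-than-mult} applied to the concatenated derivation. Your contrapositive phrasing of the converse and your explicit appeal to step-count invariance to cover arbitrary normalizing derivations are only cosmetic refinements of the paper's contradiction argument.
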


\begin{proof}\hfill
  \begin{description}
       \item[$\Rightarrow$:] Let $\deriv \colon \tm \tof^* \tmtwo$ be a $\betaf$-normalizing derivation and $\derivtwo \colon \tm \tovsub^*\allowbreak\toevar^* \tmfive$ be the composition of its projection in $\vsubcalc$ with the extension to a $\expovar$-derivation with $\tmfive$ $\vsub$-normal, according to \refthm{sim-f-into-vsubeq}. 
       Then $\derivtwo$ is a $\vsub$-normalizing derivation from $\tm$.
       
       \item[$\Leftarrow$:] By contradiction, suppose that there is a diverging $\betaf$-derivation from $\tm$ in $\firecalc$. By \refthm{sim-f-into-vsubeq} it projects to a $\vsub$-derivation in $\vsubcalc$ that is at least as long as the one in $\firecalc$, absurd.
  \end{description}
  
  About lengths, 
  $\sizef{\deriv} \!\leq \sizevsub{\derivtwo}$ since $\sizem{\derivtwo} \!=\! \sizef{\deriv}$ (\refthmp{sim-f-into-vsubeq}{quant}).
  By \refpropp{basic-value-substitution}{expo-less-than-mult}, $\sizee{\derivtwo} \leq \sizem{\derivtwo}$ and so $\sizevsub{\derivtwo} = \sizem{\derivtwo} + \sizee{\derivtwo} \leq 2\sizef{\deriv}$.
  \qedhere
\end{proof}
\subsubsection{Proofs of Subsection \ref{subsect:shuf-vsub} (\texorpdfstring{Equivalence of $\shufcalc$ and $\vsubcalc$}{Equivalence of Shuffling and Value Substitution Calculi})}

\begin{lemma}[Simulation of a $\vmsym$-Step on $\vsubcalc$]
\label{l:simul-one-shuf-step-on-vsub}
  Let $\tm, \tmtwo \in \Lambda$.
  \begin{enumerate}
    \item\label{p:simul-one-shuf-step-on-vsub-sigm} If $\tm \tosigm \tmtwo$ then there exist $\tmthree, \tmfour \in \Lambda_\vsub$ s.t. $\tm\tom^+\tmthree \eqstruct \tmfour\lRewp{\mult}\tmtwo$.
    
    \item\label{p:simul-one-shuf-step-on-vsub-betavm} If $\tm \tobvm \tmtwo$ then there exists $\tmthree \in \Lambda_\vsub$ s.t. $\tm\tom^+ \toe \tmthree \lRewn{\mult}\tmtwo$.
  \end{enumerate}
\end{lemma}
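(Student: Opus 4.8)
The plan is to prove both points simultaneously by induction on the balanced context $\mctx$ witnessing the step. Recall (\refrmk{alternativedef-sigm}) that $\tosigm$ is the closure under balanced contexts of the root relation $\rtosigma \, = \, \rtosl \cup \rtosr$, and likewise $\tobvm$ is the closure of $\rtobv$; so I may write $\tm = \mctx\ctxholep{\tmeight}$ and $\tmtwo = \mctx\ctxholep{\tmnine}$ where $\tmeight$ reduces to $\tmnine$ by a root step, and induct on the four productions $\mctx \grameq \ctxhole \mid \tmfive\mctx' \mid \mctx'\tmfive \mid (\l\var.\mctx')\tmfive$. The leading idea is that one $\mult$-step on each side of a shuffling redex exposes the two sides of a structural-equivalence axiom, whereas a $\mult$-step followed by a $\toe$-step realises a $\betavm$-step; the contextual cases then go through because $\vsub$-reduction and $\eqstruct$ are stable under $\vsub$ evaluation contexts.

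For the base case $\mctx = \ctxhole$ of the first point, take a root $\slsym$-step $((\l\var.\tmfive)\tmsix)\tmseven \rtosl (\l\var.\tmfive\tmseven)\tmsix$ with $\var \notin \fv\tmseven$. Firing the left inner $\mult$-redex of the source and the unique $\mult$-redex of the target yields $((\l\var.\tmfive)\tmsix)\tmseven \tom (\tmfive\esub\var\tmsix)\tmseven$ and $(\l\var.\tmfive\tmseven)\tmsix \tom (\tmfive\tmseven)\esub\var\tmsix$, and $(\tmfive\esub\var\tmsix)\tmseven \tostructapl (\tmfive\tmseven)\esub\var\tmsix$ is exactly an instance of the axiom $\tostructapl$, whose side condition $\var \notin \fv\tmseven$ is the one of $\slsym$. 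The root $\srsym$-step $\val((\l\var.\tmsix)\tmseven) \rtosr (\l\var.\val\tmsix)\tmseven$ (with $\var \notin \fv\val$) is symmetric, giving $\val(\tmsix\esub\var\tmseven) \tostructapr (\val\tmsix)\esub\var\tmseven$; since we never pass to $\mult$-normal forms as in \reflemma{tovm-mproj}, a single use of $\tostructapr$ suffices here uniformly, whether $\val$ is a variable or an abstraction. For the second point, a root step $(\l\var.\tmfive)\val \rtobv \tmfive\isub\var\val$ is simulated by $(\l\var.\tmfive)\val \tom \tmfive\esub\var\val \toe \tmfive\isub\var\val$ — one $\mult$-step then one $\toe$-step, the latter firing because $\val$ is a $\vsub$-value — so the witness $\tmthree$ is the target itself and the trailing $\lRewn{\mult}$ is empty.

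In the inductive cases $\mctx = \tmfive\mctx'$ and $\mctx = \mctx'\tmfive$ the active position sits inside the $\vsub$ evaluation context $\tmfive\ctxhole$, respectively $\ctxhole\tmfive$; I apply the induction hypothesis to the inner step $\tmeight \to \tmnine$ and close the resulting $\mult$-reductions, the $\toe$-step, and the relation $\eqstruct$ under that evaluation context, which is legitimate because all $\vsub$-rules and $\eqstruct$ are stable by evaluation contexts. The last production $\mctx = (\l\var.\mctx')\tmfive$ is the crux, since the inner step occurs \emph{under} the $\lambda$ — a position no $\vsub$ evaluation context reaches. Writing source and target as $(\l\var.\tmeight)\tmfive$ and $(\l\var.\tmnine)\tmfive$ with $\tmeight \to \tmnine$ the inner step, I first fire the enclosing $\mult$-redex, $(\l\var.\tmeight)\tmfive \tom \tmeight\esub\var\tmfive$ and $(\l\var.\tmnine)\tmfive \tom \tmnine\esub\var\tmfive$, which transports the hole into the legitimate evaluation context $\ctxhole\esub\var\tmfive$. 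Applying the induction hypothesis inside this context, and prepending these initial $\mult$-steps — absorbed into the $\tom^+$ prefix on the source side and into the backward $\mult$-reduction on the target side — yields the required shape in both points (the single $\toe$-step, if present, staying the last step reaching $\tmthree$).

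The main obstacle is exactly this last case: it is where the \emph{morally weak} reductions of $\shufcalc$, which may act under an applied $\lambda$, must be reconciled with the genuinely weak evaluation of $\vsubcalc$. The resolution is the observation that such a $\lambda$ is always applied, so its $\mult$-redex can be fired first to expose the admissible explicit-substitution context $\ctxhole\esub\var\tmfive$, after which the induction is routine. The only further bookkeeping is the standard $\alpha$-renaming needed to keep the side conditions of $\tostructapl$, $\tostructapr$ and of the freshly created ES intact, which is harmless.
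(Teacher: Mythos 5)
Your proof is correct and follows essentially the same route as the paper's: root $\sigma$-steps are simulated by one $\tom$-step on each side joined by the corresponding $\eqstruct$-axiom ($\tostructapl$ resp.\ $\tostructapr$), root $\betavm$-steps by $\tom\toe$, applicative cases by closure under evaluation contexts, and the applied-$\lambda$ case by first firing the enclosing $\mult$-redex on both sides so that the hole lands in the substitution context $\ctxhole\esub\var\tmfive$, exactly as in the paper. Your observation that no variable/abstraction case split is needed for $\rtosr$ here (unlike in the projection via $\mult$-normal forms) is also accurate.
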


\begin{proof}
  \begin{enumerate}
    \item 
    By induction on the definition of $\tm \tosigm \tmthree$, following \refrmk{alternativedef-sigm}. There are four cases:
    
    \begin{enumerate}
      \item \emph{Step at the root}, 
      \ie $\tm \rtosigma \tmtwo$. 
      \begin{enumerate}
      	\item 
      	either $\tm \defeq (\la\var\tmfive)\tmthree\tmfour \rtosl (\la\var\tmfive \tmfour)\tmthree \eqdef \tmtwo$ with $\var \notin \fv{\tmfour}$, and then $\tm = (\la\var\tmfive)\tmthree\tmfour \tom \tmfive\esub\var\tmthree \tmfour \eqstruct (\tmfive\tmfour)\esub\var\tmthree \lRew{\mult} (\la\var{\tmfive \tmfour})\tmthree = \tmtwo$;
	\item 
	or $\tm \defeq \val ((\la\var\tmthree)\tmfour) \rtosr (\la\var\val \tmthree)\tmfour \eqdef \tmtwo$ with $\var \notin \fv{\val}$ and then $\tm = \val ((\la\var\tmthree)\tmfour) \tom \val (\tmthree\esub\var\tmfour) \eqstruct (\val \tmthree)\esub\var\tmfour \lRew\mult (\la\var\val \tmthree)\tmfour = \tmtwo$.
      \end{enumerate}
      
      \item \emph{Application Left}, \ie $\tm \defeq \tmthree\tmfour \tosigm \tmfive\tmfour \eqdef \tmtwo$ with $\tmthree \tosigm \tmfive$. 
      The result follows by the \ih, as $\tom$ and $\eqstruct$ are closed by applicative contexts.
      
      \item \emph{Application Right}, \ie $\tm \defeq \tmthree\tmfour \tosigm \tmthree\tmfive \eqdef \tmtwo$ with $\tmfour \tosigm \tmfive$.
      The result follows by the \ih, as $\tom$ and $\eqstruct$ are closed by applicative contexts.
      
      \item \emph{Inside a $\beta$-context}, 
      \ie $\tm \defeq (\la\var\tmthree)\tmfour \tosigm (\la\var\tmfive)\tmfour \eqdef \tmtwo$ with $\tmthree \tosigm \tmfive$. 
      By \ih, $\tmthree\tom^+\tmthree' \eqstruct \tmfive'\lRewp{\mult}\tmfive$. 
      Now, $\tom$ and $\eqstruct$ are not closed by balanced contexts, but it is enough to apply a further $\tom$ step to the balanced context (as $\tom$ and $\eqstruct$ are instead closed by substitution contexts), obtaining $\tm = (\la\var\tmthree)\tmfour \tom \tmthree\esub\var\tmfour \tom^+ \tmthree'\esub\var\tmfour \eqstruct \tmfive'\esub\var\tmfour \lRewp\mult \tmfive\esub\var\tmfour \lRew\mult (\la\var\tmfive)\tmfour = \tmtwo$.
    \end{enumerate}
    \item By induction on the definition of $\tm \tobvm \tmtwo$, there are four cases:

    \begin{enumerate}
      \item \emph{Step at the root}, \ie $\tm = (\la\var\tmfour)\val \rtobv \tmfour\isub\var\val = \tmtwo$. So, $\tm \tom \tmfour\esub\var\val \toe \tmtwo$.
      
      \item \emph{Application Left}. It follows by the \ih, as $\tom$ and $\toe$ are closed by applicative contexts.
      
      \item \emph{Application Right}. It follows by the \ih, as $\tom$ and $\toe$ are closed by applicative contexts.
      
      \item \emph{Step inside a $\beta$-context}, \ie $\tm = (\la\var\tmthree)\tmfour \tobvm (\la\var\tmfive)\tmfour = \tmtwo$ with $\tmthree \tobvm \tmfive$. 
      By \ih, $\tmthree\tom^+\toe \tmsix\lRewn{\mult}\tmfive$. 
      Now, $\tom$ and $\toe$ are not closed by balanced contexts, but it is enough to apply a further $\tom$ step to the balanced context (as $\tom$ and $\toe$ are instead closed by substitution contexts), obtaining $(\la\var\tmthree)\tmfour \tom \tmthree\esub\var\tmfour \tom^+ \toe \tmsix\esub\var\tmfour \lRewn\mult \tmfive\esub\var\tmfour \lRew\mult (\la\var\tmfive)\tmfour$.      
    \qedhere
    \end{enumerate}
  \end{enumerate}

\end{proof}

\setcounter{lemmaAppendix}{\value{l:tovm-mproj}}
\begin{lemmaAppendix}[Projecting a $\vmsym$-Step on $\tovsubeq$ via $\mult$-nf]
\label{lappendix:tovm-mproj}
  Let 
\NoteState{l:tovm-mproj}
  $\tm, \tmtwo \!\in\! \Lambda$.
  \begin{enumerate}
    \item \label{pappendix:tosig-mproj-on-eqstruct-mdev} If $\tm \tosigm \tmtwo$ then $\mnf\tm \eqstruct \mnf\tmtwo$.
    
    \item \label{pappendix:tobvm-mproj-on-toe-mdev} If $\tm \tobvm \tmtwo$ then $\mnf\tm \toe\tom^* \mnf\tmtwo$.
  \end{enumerate}
\end{lemmaAppendix}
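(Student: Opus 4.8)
The plan is to lift the single-step simulation of \reflemma{simul-one-shuf-step-on-vsub} to the level of multiplicative normal forms, using only the good rewriting properties of $\vsubcalc$. The background fact I shall use repeatedly is that $\mnf{\cdot}$ is well defined and monotone: since $\tom$ is strongly normalizing and strongly confluent (\refpropp{basic-value-substitution}{tom-toe-terminates}), every $\vsub$-term has a unique $\mult$-normal form, and in particular $\tm \tom^* \tmtwo$ entails $\mnf{\tm} = \mnf{\tmtwo}$.

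For the first point, I start from \reflemmap{simul-one-shuf-step-on-vsub}{sigm}, which turns $\tm \tosigm \tmtwo$ into $\tm \tom^+ \tmthree \eqstruct \tmfour \lRewp{\mult} \tmtwo$, i.e. $\tm \tom^+ \tmthree$, $\tmthree \eqstruct \tmfour$, and $\tmtwo \tom^+ \tmfour$. By uniqueness of $\mult$-normal forms this already gives $\mnf{\tm} = \mnf{\tmthree}$ and $\mnf{\tmtwo} = \mnf{\tmfour}$, so it remains to prove $\mnf{\tmthree} \eqstruct \mnf{\tmfour}$ from $\tmthree \eqstruct \tmfour$. For this I would iterate the strong bisimulation property of $\eqstruct$ with respect to $\tovsub$ (\reflemmap{eqstruct-post-and-term}{locpost}), mirroring the reduction $\tmthree \tom^* \mnf{\tmthree}$ from $\tmfour$ to obtain $\tmfour \tom^* \tmfour'$ with $\mnf{\tmthree} \eqstruct \tmfour'$; since $\mnf{\tmthree}$ is $\mult$-normal, \reflemmap{eqstruct-post-and-term}{normal} makes $\tmfour'$ $\mult$-normal too, whence $\tmfour' = \mnf{\tmfour}$ by uniqueness. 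Chaining the equalities yields $\mnf{\tm} \eqstruct \mnf{\tmtwo}$.

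For the second point, \reflemmap{simul-one-shuf-step-on-vsub}{betavm} rewrites $\tm \tobvm \tmtwo$ as $\tm \tom^+ \tmsix \toe \tmthree$ together with $\tmtwo \tom^* \tmthree$. As before, $\mnf{\tm} = \mnf{\tmsix}$ and $\mnf{\tmtwo} = \mnf{\tmthree}$, so $\tmsix \tom^* \mnf{\tm}$. The crux is then to commute the isolated $\toe$-step $\tmsix \toe \tmthree$ past the multiplicative normalization $\tmsix \tom^* \mnf{\tm}$. I would do this by induction on the length of $\tmsix \tom^* \mnf{\tm}$, applying at each step the strong commutation of $\tom$ and $\toe$ (\refpropp{basic-value-substitution}{tom-toe-commute}); since that property swaps a $\toe$/$\tom$ peak for a $\tom$/$\toe$ valley preserving exactly one $\toe$-step, the net effect is a term $\tmseven$ with $\mnf{\tm} \toe \tmseven$ and $\tmthree \tom^* \tmseven$. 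Finally $\mult$-normalizing, $\tmseven \tom^* \mnf{\tmseven} = \mnf{\tmthree} = \mnf{\tmtwo}$, so $\mnf{\tm} \toe \tmseven \tom^* \mnf{\tmtwo}$, which is exactly $\mnf{\tm} \toe\tom^* \mnf{\tmtwo}$.

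I expect the main obstacle to be precisely this commutation in the second point: the simulation delivers the $\toe$-step sandwiched between $\tom$-steps, whereas the statement requires it to appear first, right after $\mnf{\tm}$; reaching that form cleanly depends on the single-step commutation extending to the one-versus-many case without duplicating the exponential step, which is guaranteed only because $\tom$ and $\toe$ \emph{strongly} commute. The first point is comparatively routine once one realizes that the structural equivalence must be transported through $\mult$-normalization via its strong bisimulation property.
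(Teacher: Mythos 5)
Your proof is correct and follows essentially the same route as the paper's: both points start from \reflemma{simul-one-shuf-step-on-vsub}, then for point 1 transport the residual $\eqstruct$ through $\mult$-normalization via its strong bisimulation property together with preservation of normal forms (\reflemma{eqstruct-post-and-term}) and uniqueness of $\mult$-normal forms, and for point 2 commute the single $\toe$-step past $\tom^*$ by iterating the strong commutation of $\tom$ and $\toe$ (\refpropp{basic-value-substitution}{tom-toe-commute}) before re-normalizing. The only cosmetic difference is that you iterate the one-step bisimulation directly where the paper invokes its packaged postponement formulation.
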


\begin{proof}\hfill
  \begin{enumerate}
    \item  By \reflemmap{simul-one-shuf-step-on-vsub}{sigm} there exist $\tmthree, \tmfour \in \Lambda_\vsub$ s.t. $\tm\tom^+\tmthree \eqstruct \tmfour\lRewp{\mult}\tmtwo$. 
    By existence and uniqueness of the $\mult$-normal form (\refpropsps{basic-value-substitution}{tom-toe-terminates}{tom-toe-strong-confluence} and \refpropp{basic-confluence}{unique-normal}), $\tmthree \tom^+ \mnf\tmthree = \mnf\tm$. By \reflemmap{eqstruct-post-and-term}{globpost}, there is $\tmfive \in \Lambda_\vsub$ s.t. $\tmfour \tom^+\tmfive \eqstruct \mnf\tm$. 
    By \reflemmap{eqstruct-post-and-term}{normal}, $\tmfive$ is $\mult$-normal;
    in particular, $\tmfive = \mnf\tmfour = \mnf\tmtwo$ according to \refpropp{basic-confluence}{unique-normal}.
    Thus, $\mnf{\tm} \eqstruct q = \mnf{\tmtwo}$.

    \item  By \reflemmap{simul-one-shuf-step-on-vsub}{betavm} there are $\tmthree, \tmfour \in \Lambda_\vsub$ such that $\tm\tom^+ \tmthree \toe \tmfour\lRewn{\mult}\tmtwo$. 
    By existence and uniqueness of the $\mult$-normal form (\refpropsps{basic-value-substitution}{tom-toe-terminates}{tom-toe-strong-confluence} and \refpropp{basic-confluence}{unique-normal}), $\mnf\tmthree = \mnf\tm$. 
    As $\mnf\tm \lRewn{\mult\!} \tmthree \toe \tmfour$, there is $\tmfive \in \Lambda_\vsub$ s.t. $\mnf\tm \toe\tmfive \lRewn\mult \tmfour$ according to strong commutation of $\tom$ and $\toe$ (\refpropp{basic-value-substitution}{tom-toe-commute}). 
    Thus, $\mnf\tm \toe \tmfive \lRewn\mult \tmtwo$ and hence $\mnf\tm \toe \tom^*\mnf \tmtwo$ since $\mnf\tmtwo = \mnf \tmfive$ by \refpropp{basic-confluence}{unique-normal}.
    \qedhere
  \end{enumerate}
\end{proof}

\setcounter{lemmaAppendix}{\value{l:normal-vsub-shuffling}}
\begin{lemmaAppendix}[Projection 
Preserves Normal Forms]
\label{lappendix:normal-vsub-shuffling}
  Let%
\NoteState{l:normal-vsub-shuffling}
  $\tm \in \Lambda$. 
  If $\tm$ is $\vmsym$-normal then $\mnf{\tm}$ is $\vsub$-normal.
\end{lemmaAppendix}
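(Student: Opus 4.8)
The plan is to proceed by structural induction on the $\vmsym$-normal term $\tm$, proving a statement slightly stronger than the one required, so as to control the shape of the explicit substitutions created along the $\mult$-normalization. The strengthening records not only that $\mnf\tm$ is $\vsub$-normal, but also \emph{when} $\mnf\tm$ has the form $\sctxp\val$ of a $\vsub$-value wrapped in a substitution context: I claim that if $\tm$ is a \emph{neutral application}, i.e.\ an application $\tmtwo\tmthree$ whose left subterm $\tmtwo$ is not an abstraction, then $\mnf\tm$ is \emph{not} of the form $\sctxp\val$ (its core, obtained by peeling the outer substitutions, is an application). This dichotomy is exactly what separates the shapes producing $\mult$-redexes (a value-core $\la\var\tmthree$ under a substitution context on the left of an application) from those producing $\expo$-redexes (a value-core inside an explicit substitution), and $\mnf\cdot$ is well defined by strong normalization and confluence of $\tom$ and $\toe$ (\refprop{basic-value-substitution}).

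First I would record two facts about $\vmsym$-normal forms. Since $\tosl$, $\tosr$ and $\tobvm$ are morally weak—they never reduce inside an unapplied $\lambda$—every abstraction $\la\var\tmthree$ is $\vmsym$-normal for an arbitrary body, and it is also $\vsub$-normal with $\mnf{(\la\var\tmthree)} = \la\var\tmthree$ (as $\tovsub$ and $\tom$ do not enter abstractions, and $\vsub$-values are $\vsub$-normal). Second, in a $\vmsym$-normal application $\tmtwo\tmthree$ the left subterm $\tmtwo$ cannot be an applied abstraction $(\la\var\tmfour)\tmfive$ (otherwise $\tm$ would be a $\tosl$-redex); and if $\tmtwo$ is an abstraction $\la\var\tmfour$, then $\tmthree$ is neither a value (otherwise $\tobvm$) nor an applied abstraction (otherwise $\tosr$), hence $\tmthree$ is a neutral application. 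Moreover, whenever $\tm = (\la\var\tmfour)\tmthree$ is $\vmsym$-normal, the body $\tmfour$ is itself $\vmsym$-normal, because a redex in $\tmfour$ would lift to a redex of $\tm$ through the balanced context $(\la\var\ctxhole)\tmthree$.

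With these facts the induction splits into the expected cases. Variables and abstractions are immediate by the first preliminary. For an application $\tm = \tmtwo\tmthree$ whose left subterm $\tmtwo$ is a variable or (by the strengthened induction hypothesis) a neutral application, no root $\mult$-redex is created, so $\mnf\tm = \mnf\tmtwo\,\mnf\tmthree$, which is $\vsub$-normal by the induction hypothesis and has an application core; thus $\tm$ is again neutral and the strengthened claim is preserved. The only remaining case is the applied abstraction $\tm = (\la\var\tmfour)\tmthree$: here normalization fires the root $\mult$-redex of $(\la\var\tmfour)\mnf\tmthree$ and gives $\mnf\tm = \mnf\tmfour\,\esub\var{\mnf\tmthree}$, with $\mnf\tmfour$ and $\mnf\tmthree$ both $\vsub$-normal by the induction hypothesis (applied to $\tmfour$ and to $\tmthree$, which are $\vmsym$-normal by the second preliminary). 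The created explicit substitution is an $\expo$-redex only if its content $\mnf\tmthree$ has the form $\sctxp\val$; but $\tmthree$ is a neutral application, so the strengthened hypothesis guarantees $\mnf\tmthree$ is not of that form, and hence $\mnf\tm$ is $\vsub$-normal.

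The main obstacle is precisely the bookkeeping of the strengthened invariant through the applied-abstraction case: firing the top abstraction exposes the body $\tmfour$, which was syntactically under a $\lambda$ yet is nonetheless $\vmsym$-normal, so the induction must be structural (with $\tmfour$ a proper subterm of $(\la\var\tmfour)\tmthree$) and the neutral/value-core dichotomy must propagate exactly, so that value-cores end up only on the left of applications—creating the harmless $\mult$-redexes already consumed by $\mnf\cdot$—and never inside an explicit substitution. Matching the asymmetric side-conditions of the grammar of $\vmsym$-normal forms (variable-headed versus abstraction-headed left subterms) with the two redex shapes of $\vsubcalc$ is the delicate point; everything else is routine and uses only the closure of $\tom$, $\toe$ and $\eqstruct$ under the relevant contexts.
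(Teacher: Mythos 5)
Your proof is correct, and while it establishes exactly the same inductive content as the paper's proof, it packages it along a genuinely different route. The paper does not re-derive the structure of $\shuf$-normal forms: it imports the grammar characterization from \cite[Prop.~12]{DBLP:conf/fossacs/CarraroG14} ($\shufnf \grameq \val \mid \const \mid (\la\var\shufnf)\const$, where $\const$ are the $\shuf$-normal terms that are neither values nor $\beta$-redexes), mirrors it with grammars $\const_\vsub$ and $\shufnf_\vsub$ of $\vsub$-terms, proves by mutual induction that $\mnf\cdot$ maps $\const$-terms to $\const_\vsub$-terms and $\shufnf$-terms to $\shufnf_\vsub$-terms, and then concludes by citing \cite[Lemma~5]{DBLP:conf/flops/AccattoliP12} for the $\vsub$-normality of $\shufnf_\vsub$-terms. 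Your strengthened invariant---a neutral application has a $\mult$-normal form that is not of the shape $\sctxp\val$---is precisely the semantic content of the class $\const_\vsub$: in the clause $\const_\vsub\shufnf_\vsub$ the fact that $\const_\vsub$ is never $\sctxp{\la\var\tmthree}$ blocks root $\mult$-redexes, and in the clause $\shufnf_\vsub\esub\var{\const_\vsub}$ the fact that the content is never $\sctxp\val$ blocks $\expo$-redexes, which are the two places where you invoke your invariant. Similarly, your three preliminary facts (no applied abstraction on the left of an application, arguments of applied abstractions are neutral applications, bodies of applied abstractions are $\vmsym$-normal) are an inline re-derivation of the cited grammar, with the production $(\la\var\shufnf)\const$ corresponding to your applied-abstraction case and your computation $\mnf{((\la\var\tmfour)\tmthree)} = \mnf\tmfour\esub\var{\mnf\tmthree}$ matching the paper's third inductive clause. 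What your route buys is self-containment and a transparent accounting of where each side condition of $\tosl$, $\tosr$ and $\tobvm$ is consumed; what the paper's route buys is brevity through reuse of published results. Both rest on the same prerequisite, namely uniqueness of $\mnf\cdot$ via strong normalization and confluence of $\tom$ (\refprop{basic-value-substitution}). One cosmetic slip: in your neutral-application case, that the left subterm is a variable or a neutral application follows from your second preliminary fact (the exclusion of $\tosl$-redexes), not from the strengthened induction hypothesis, which is needed only afterwards to rule out the shape $\sctxp{\la\var\tmfour}$ for its $\mult$-normal form.
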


\begin{proof}
  In \cite[Prop.~12]{DBLP:conf/fossacs/CarraroG14} (where the reduction $\toshuf$ is denoted by $\to_\wsym$) it has been shown that:
  \begin{enumerate}
    \item a term is $\shuf$-normal iff it is of the form $\shufnf$, 
    \item a term is $\shuf$-normal and is neither a value nor a $\beta$-redex (\ie of the form $(\la\var\tm)\tmtwo$) iff it is of the form $\const$, 
  \end{enumerate}
  where the forms $\shufnf$ and $\const$ are defined by mutual induction as follows:
  \begin{align*}
    \const 	&\grameq \var\val \, \mid \, \var\const \, \mid \, \const\shufnf &
    \shufnf	&\grameq \val \, \mid \, \const \, \mid \, (\la\var\shufnf)\const.
  \end{align*}

  The idea is the following: on the one hand, not only terms of the form $\const$ are not values but also they cannot reduce to value through $\mult$-derivations; on the other hand, any $\mult$-derivation from a term of the form $\shufnf$ cannot create an ES of the form $\esub\var{\sctxp{\val}}$, therefore the $\expo$-normality of $\shufnf$ (which is without ES) is preserved in its $\mult$-normal form $\mnf{\shufnf}$ and hence $\mnf{\shufnf}$ is $\vsub$-normal.
  
  More formally, consider the types $\const_\vsub$ and $\shufnf_\vsub$ of $\vsub$-terms defined by mutual induction as follows ($\val$ is a value, without ES):
  \begin{align*}
    \const_\vsub 	&\grameq \var\val \, \mid \, \var\const_\vsub \, \mid \, \const_\vsub\shufnf_\vsub \\
    \shufnf_\vsub	&\grameq \val \mid \const_\vsub \mid \shufnf_\vsub\esub\var{\const_\vsub}.
  \end{align*}
  
  First, we prove by mutual induction on $\const$ and $\shufnf$ that the $\mult$-normal form $\mnf{\const}$ of $\const$ is of the form $\const_\vsub$, and the $\mult$-normal form $\mnf{\shufnf}$ of $\shufnf$ is of the form $\shufnf_\vsub$.
  The base cases are $\mnf{\val} = \val$ (since $\tom$ does not reduce under $\l$'s) and $\mnf{\var\val} = \var\val$. Inductive cases:
  \begin{enumerate}
    \item $\mnf{\var\const} = \var\mnf{\const} = \var\const_\vsub$ where $\mnf{\const} = \const_\vsub$ by \ih,
    \item $\mnf{\const\shufnf} = \mnf{\const}\mnf{\shufnf} = \const_\vsub\shufnf_\vsub$ (since $\const_\vsub$ is not an abstraction) where $\mnf{\const} = \const_\vsub$ and $\mnf{\shufnf} = \shufnf_\vsub$ by \ih,
    \item $\mnf{(\la\var\shufnf)\const} = \mnf{\shufnf}\esub\var{\mnf{\const}} = \shufnf_\vsub\esub\var{\const_\vsub}$ (since $\const_\vsub$ is not of the form $\sctxp{\val}$) where $\mnf{\const} = \const_\vsub$ and $\mnf{\shufnf} = \shufnf_\vsub$ by \ih.
  \end{enumerate}

  To conclude the proof of \reflemma{normal-vsub-shuffling}, it is sufficient to observe that all terms of type $\shufnf_\vsub$ are $\vsub$-normal, see \cite[Lemma 5]{DBLP:conf/flops/AccattoliP12} (where $\tovsub$ is denoted by $\to_\wsym$).
\end{proof}

\setcounter{theoremAppendix}{\value{thm:sim-shuf-into-vsubeq}}
\begin{theoremAppendix}[Quantitative Simulation of $\shufcalc$ in $\vsubcalc$\!]
\label{thmappendix:sim-shuf-into-vsubeq}
  Let%
\NoteState{thm:sim-shuf-into-vsubeq}
  $\tm, \tmtwo \in \Lambda$.
  If $\deriv \colon \tm \toshuf^* \tmtwo$ then there are $\tmthree \in \vsubterms$ and $\derivtwo \colon \tm \tovsub^* \tmthree$ such that
  \begin{enumerate}
    \item\label{pappendix:sim-shuf-into-vsubeq-qual} \emph{Qualitative Relationship}: $\tmthree \eqstruct \mnf{\tmtwo}$;

    \item\label{pappendix:sim-shuf-into-vsubeq-quant} \emph{Quantitative Relationship}: $\sizebshuf{\deriv} = \sizee{\derivtwo}$;
    \item\label{pappendix:sim-shuf-into-vsubeq-normal} \emph{Normal Forms}: if $\tmtwo$ is $\shuf$-normal then $\tmthree$, $\mnf{\tmtwo}$ are $\vsub$-normal.
  \end{enumerate}
\end{theoremAppendix}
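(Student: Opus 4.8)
The plan is to prove the three points simultaneously by induction on the length $\sizevm{\deriv}$ of the $\shuf$-derivation $\deriv$, reusing the one-step projection \reflemma{tovm-mproj} to absorb the last step and the strong bisimulation property of $\eqstruct$ (\reflemmap{eqstruct-post-and-term}{locpost}) to glue the pieces together. In the base case $\sizevm{\deriv} = 0$ we have $\tm = \tmtwo$, and it suffices to take $\derivtwo \colon \tm \tom^* \mnf{\tm}$ (the multiplicative normalization of $\tm$, which exists and is unique by \refprop{basic-value-substitution}) and $\tmthree \defeq \mnf{\tm}$: then $\tmthree \eqstruct \mnf{\tmtwo}$ holds by reflexivity, $\derivtwo$ contains no $\expo$-step, and $\sizebshuf{\deriv} = 0 = \sizee{\derivtwo}$.

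For the inductive step I would write $\deriv \colon \tm \toshuf^* \tmfour \toshuf \tmtwo$ and let $\deriv'$ be $\deriv$ without its last step. By the induction hypothesis there are $\tmthreep \in \vsubterms$ and $\derivtwo' \colon \tm \tovsub^* \tmthreep$ with $\tmthreep \eqstruct \mnf{\tmfour}$ and $\sizebshuf{\deriv'} = \sizee{\derivtwo'}$, and then I distinguish the kind of the last step. If $\tmfour \tosigm \tmtwo$, then $\mnf{\tmfour} \eqstruct \mnf{\tmtwo}$ by \reflemmap{tovm-mproj}{sigm-on-eqstruct}, so transitivity of $\eqstruct$ gives $\tmthreep \eqstruct \mnf{\tmtwo}$; I take $\tmthree \defeq \tmthreep$ and $\derivtwo \defeq \derivtwo'$, and since a $\sigm$-step contributes nothing to $\sizebshuf{\cdot}$ the quantitative equality is preserved. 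If instead $\tmfour \tobvm \tmtwo$, then $\mnf{\tmfour} \toe\tom^* \mnf{\tmtwo}$ by \reflemmap{tovm-mproj}{betavm-on-tovsub}, with exactly one $\expo$-step; transporting this derivation along $\tmthreep \eqstruct \mnf{\tmfour}$ step by step via the strong bisimulation \reflemmap{eqstruct-post-and-term}{locpost} yields $\tmthreep \toe\tom^* \tmthree$ with $\tmthree \eqstruct \mnf{\tmtwo}$ and the same number and kind of steps. Appending this to $\derivtwo'$ gives $\derivtwo \colon \tm \tovsub^* \tmthree$ with one additional $\expo$-step, matching the one extra $\betavm$-step counted by $\sizebshuf{\cdot}$ (well defined by \refpropp{basic-shuffling}{different}); hence $\sizebshuf{\deriv} = \sizebshuf{\deriv'} + 1 = \sizee{\derivtwo'} + 1 = \sizee{\derivtwo}$.

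Because the strong bisimulation returns genuine rewriting steps rather than $\eqstruct$-steps, the resulting $\derivtwo$ is an honest $\tovsub$-derivation from $\tm$ with no structural equivalence left inside it. The normal-form point then follows at once: if $\tmtwo$ is $\shuf$-normal, \reflemma{normal-vsub-shuffling} gives that $\mnf{\tmtwo}$ is $\vsub$-normal, and since $\eqstruct$ preserves $\vsub$-normality (\reflemmap{eqstruct-post-and-term}{normal}) and $\tmthree \eqstruct \mnf{\tmtwo}$, also $\tmthree$ is $\vsub$-normal. The main obstacle I anticipate is precisely the gluing in the $\tobvm$ case: each single-step projection lands on a \emph{multiplicative normal form} and may drag a structural equivalence in front of it, so the delicate point is to realign these per-step projections into one derivation starting at $\tm$ while keeping the exponential count exactly on the nose. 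The strong bisimulation of $\eqstruct$ is what makes this bookkeeping work, since it carries a derivation across $\eqstruct$ without altering the number or kind of its steps.
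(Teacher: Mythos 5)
Your proof is correct and takes essentially the same approach as the paper: the same one-step projection lemma onto multiplicative normal forms, the same structural-equivalence bookkeeping, and the same normal-form lemmas (\reflemma{normal-vsub-shuffling} plus \reflemmap{eqstruct-post-and-term}{normal}). The only difference is presentational: the paper first builds the whole $\tovsubeq$-derivation from $\mnf\tm$ to $\mnf\tmtwo$ by induction and then applies the global postponement of $\eqstruct$ (\reflemmap{eqstruct-post-and-term}{globpost}) once at the end, prepending $\tm \tom^* \mnf\tm$, whereas you inline that postponement into the induction via per-step applications of the strong bisimulation (\reflemmap{eqstruct-post-and-term}{locpost}) and place the multiplicative normalization in the base case --- which is exactly how the global postponement is itself derived from the local one.
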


\begin{proof}
  First, by straightforward induction on $\sizevm{\deriv} \!\in \nat$ using 
  the projection via $\mult$-normal forms (\reflemmasps{tovm-mproj}{sigm-on-eqstruct}{betavm-on-tovsub}), one proves that there is $\derivtwo_1 \colon \mnf\tm \tovsubeq^* \mnf\tmtwo$ with $\sizee{\derivtwo_1} = \sizebvm{\deriv}$.
  By postponement of $\eqstruct$ (\reflemmap{eqstruct-post-and-term}{globpost}), there is $\derivtwo_2 \colon \mnf\tm \tovsub^* \eqstruct \mnf\tmtwo$ with $\sizee{\derivtwo_2} = \sizee{\derivtwo_1}
  $.
  Clearly, $\tm \tom^* \mnf{\tm}$.
  It easy to check that $\tmthree \eqstruct \tmfour$ implies $\tmthree \not\toe \tmfour$ for all $\tmthree, \tmfour \in \Lambda_\vsub$.
  Therefore, there exist $\tmthree \in \Lambda_\vsub$ and $\derivtwo \colon \tm \tovsub^* \tmthree$ such that $\tmthree \eqstruct \mnf{\tmtwo}$ and $\sizee{\derivtwo} = \sizee{\derivtwo_2} = \sizebvm{\deriv}$.
  
  Finally, if moreover $\tmtwo$ is $\shuf$-normal then, since normal forms are preserved by multiplicative projection (\reflemma{normal-vsub-shuffling}), $\mnf{\tmtwo}$ is $\vsub$-normal, and hence so is $\tmthree$ (\reflemmap{eqstruct-post-and-term}{normal}, because $\tmthree \eqstruct \mnf{\tmtwo}$).
  \qedhere
\end{proof}

\setcounter{corollaryAppendix}{\value{coro:equivalence-vsub-shuf-termination}}
\begin{corollaryAppendix}[Termination Equivalence of $\vsubcalc$ and $\shufcalc$]
\label{coroappendix:equivalence-vsub-shuf-termination}
  Let%
\NoteState{coro:equivalence-vsub-shuf-termination}
  $\tm \in \Lambda$. 
  There is a $\vmsym$-normalizing derivation $\deriv$ from $\tm$ iff there is a $\vsub$-normalizing derivation $\derivtwo$ from $\tm$. 
  Moreover, $\sizebshuf{\deriv} = \sizee{\derivtwo}$.
\end{corollaryAppendix}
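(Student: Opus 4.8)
The plan is to derive both implications from the quantitative simulation \refthm{sim-shuf-into-vsubeq}, exploiting that normalization and strong normalization coincide in both calculi. For the left-to-right direction I would start from a $\vmsym$-normalizing derivation $\deriv \colon \tm \toshuf^* \tmtwo$, so that $\tmtwo$ is $\shuf$-normal, and apply \refthm{sim-shuf-into-vsubeq}. This directly yields a $\vsub$-derivation $\derivtwo \colon \tm \tovsub^* \tmthree$ with $\tmthree \eqstruct \mnf{\tmtwo}$; by the normal-form clause \refthmp{sim-shuf-into-vsubeq}{normal} the term $\tmthree$ is $\vsub$-normal, hence $\derivtwo$ is $\vsub$-normalizing, and the quantitative clause \refthmp{sim-shuf-into-vsubeq}{quant} already provides $\sizebshuf{\deriv} = \sizee{\derivtwo}$.

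The right-to-left direction is the delicate one, because the simulation runs only from $\shufcalc$ into $\vsubcalc$, so I would argue by contraposition plus a counting argument. Assume $\tm$ is $\vsub$-normalizable: by strong confluence of $\tovsub$ (\refpropp{basic-value-substitution}{strong-confluence}) it is then strongly $\vsub$-normalizable and all its normalizing derivations have a common length $N$, so every $\vsub$-derivation from $\tm$ extends to a normalizing one and thus has at most $N$ steps, in particular at most $N$ $\expo$-steps. Now suppose, for contradiction, that $\tm$ is \emph{not} $\vmsym$-normalizable. By \refpropp{basic-shuffling}{weak-strong-normalize} it is not strongly $\vmsym$-normalizable either, so there is an infinite $\vmsym$-derivation issuing from $\tm$. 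Since $\tosigm$ is strongly normalizing (\refpropp{basic-shuffling}{terminates}), this derivation cannot have an eventually $\sigm$-only tail, hence it contains infinitely many $\betashuf$-steps. Taking a finite prefix $\deriv_n$ with more than $N$ $\betashuf$-steps and projecting it through \refthm{sim-shuf-into-vsubeq} gives a $\vsub$-derivation $\derivthree$ from $\tm$ with $\sizee{\derivthree} = \sizebshuf{\deriv_n} > N$, contradicting the bound above; therefore $\tm$ is $\vmsym$-normalizable.

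The main obstacle I expect is making the counting argument airtight: I must justify that an infinite $\vmsym$-derivation carries arbitrarily many $\betashuf$-steps (this is precisely where strong normalization of $\tosigm$ is needed) and that the $\betashuf$-count of a prefix is well defined, which follows from \refpropp{basic-shuffling}{different}. Finally, to state the equality $\sizebshuf{\deriv} = \sizee{\derivtwo}$ for \emph{arbitrary} normalizing $\deriv$ and $\derivtwo$ rather than only for the matched pair produced in the forward direction, I would invoke \refcoro{shuffling-number-steps} (all $\shuf$-normalizing derivations from $\tm$ have the same number of $\betashuf$-steps) together with the invariance of the $\expo$-count under $\tovsub$ recorded in \refpropp{basic-value-substitution}{number-steps}.
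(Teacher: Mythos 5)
Your proposal is correct and takes essentially the same route as the paper: the forward direction is the projection through \refthm{sim-shuf-into-vsubeq} using its normal-form and quantitative clauses, and the backward direction is the paper's own contradiction argument, where strong normalization of $\tosigm$ (\refpropp{basic-shuffling}{terminates}) forces infinitely many $\betashuf$-steps into any diverging derivation, which project to $\expo$-steps and contradict $\vsub$-normalizability. Your explicit finite-prefix counting (bounding $\expo$-steps by the common length $N$ of normalizing $\vsub$-derivations, via \refpropp{basic-value-substitution}{strong-confluence}) merely makes rigorous what the paper compresses into ``absurd'', and your closing appeal to \refcoro{shuffling-number-steps} for arbitrary pairs of normalizing derivations is non-circular, since that corollary depends only on \refthm{sim-shuf-into-vsubeq} and strong confluence, not on the termination equivalence itself.
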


\begin{proof}\hfill
  \begin{description}
       \item[$\Rightarrow$:] Let $\deriv \colon \tm \toshuf^* \tmtwo$ be a $\shuf$-normalizing derivation and $\derivtwo \colon \tm \tovsub^* \tmthree$ be its projection in $\vsubcalc$ with $\tmthree$ $\vsub$-normal, according to \refthm{sim-f-into-vsubeq}. 
       Then $\derivtwo$ is a $\vsub$-normalizing derivation from $\tm$.
       
       \item[$\Leftarrow$:] By contradiction, suppose that there is a diverging $\shuf$-derivation $\deriv$ from $\tm$ in $\shufcalc$.
       Since $\tosigm$ is strongly normalizing (\refpropp{basic-shuffling}{terminates}), necessarily in $\deriv$ there are infinitely many $\betashuf$-steps.
       By \refthm{sim-f-into-vsubeq}, $\deriv$ projects to a $\vsub$-derivation in $\vsubcalc$ that has as many $\expo$-steps as the $\betashuf$-steps in $\shufcalc$, absurd.
  \end{description}

  About the length, we have $\sizebshuf{\deriv} = \sizee{\derivtwo}$ by \refthmp{sim-f-into-vsubeq}{quant}.
  \qedhere
\end{proof}

\setcounter{corollaryAppendix}{\value{coro:shuffling-number-steps}}
\begin{corollaryAppendix}[Number of $\betavm$-Steps is Invariant]
\label{coroappendix:shuffling-number-steps}
  All%
\NoteState{coro:shuffling-number-steps}
  $\shuf$-nor\-ma\-lizing derivations from $\tm \in \Lambda$ (if any) have the same number of $\betashuf$-steps.
\end{corollaryAppendix}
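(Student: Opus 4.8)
The plan is to transfer to $\shufcalc$ the invariance of the number of $\expo$-steps that holds in $\vsubcalc$ by strong confluence, routing everything through the simulation of \refthm{sim-shuf-into-vsubeq}. First I would fix two arbitrary $\shuf$-normalizing derivations $\deriv_1 \colon \tm \toshuf^* \tmtwo_1$ and $\deriv_2 \colon \tm \toshuf^* \tmtwo_2$ from $\tm$, so that $\tmtwo_1$ and $\tmtwo_2$ are $\shuf$-normal. Applying \refthm{sim-shuf-into-vsubeq} to each of them yields $\vsub$-terms $\tmthree_1, \tmthree_2 \in \vsubterms$ and $\vsub$-derivations $\derivtwo_1 \colon \tm \tovsub^* \tmthree_1$ and $\derivtwo_2 \colon \tm \tovsub^* \tmthree_2$ whose quantitative relationship (\refthmp{sim-shuf-into-vsubeq}{quant}) gives $\sizebshuf{\deriv_1} = \sizee{\derivtwo_1}$ and $\sizebshuf{\deriv_2} = \sizee{\derivtwo_2}$.

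The crucial point is that each $\derivtwo_i$ is in fact a $\vsub$-\emph{normalizing} derivation from $\tm$: since $\tmtwo_i$ is $\shuf$-normal, the normal-form part of the theorem (\refthmp{sim-shuf-into-vsubeq}{normal}) guarantees that $\tmthree_i$ is $\vsub$-normal. Thus $\derivtwo_1$ and $\derivtwo_2$ are two $\vsub$-normalizing derivations sharing the same source $\tm$, so by strong confluence of $\tovsub$ (\refpropp{basic-value-substitution}{number-steps}) they carry the same number of $\expo$-steps, i.e.\ $\sizee{\derivtwo_1} = \sizee{\derivtwo_2}$. Chaining the three equalities produces $\sizebshuf{\deriv_1} = \sizee{\derivtwo_1} = \sizee{\derivtwo_2} = \sizebshuf{\deriv_2}$, which is exactly the claim. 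Note that this argument does not even require uniqueness of the $\shuf$-normal form (although it holds by \refpropp{basic-shuffling}{confluence}), as the $\expo$-step count is insensitive to the endpoint.

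I do not expect a genuine obstacle here, since the heavy lifting is already done by the quantitative simulation and by the strong confluence of $\tovsub$; the only delicate step is verifying that the projected derivations truly reach $\vsub$-normal forms, which is precisely what \refthmp{sim-shuf-into-vsubeq}{normal} supplies and is what lets the $\expo$-step invariance apply. As a preliminary sanity check I would also recall that the quantity $\sizebshuf{\deriv}$ is well defined on any $\vmsym$-derivation thanks to \refpropp{basic-shuffling}{different}, so that the statement ``the same number of $\betashuf$-steps'' is meaningful to begin with.
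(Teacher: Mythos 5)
Your proof is correct and follows essentially the same route as the paper: project both $\shuf$-normalizing derivations into $\vsubcalc$ via \refthm{sim-shuf-into-vsubeq} (using its normal-form part to see the projections are $\vsub$-normalizing), then invoke the invariance of the number of $\expo$-steps given by \refpropp{basic-value-substitution}{number-steps}. The only divergence is that the paper additionally uses confluence of $\toshuf$ (\refpropp{basic-shuffling}{confluence}) to identify the two $\shuf$-normal forms before projecting, a step you correctly observe is dispensable since the $\expo$-step count of \refpropp{basic-value-substitution}{number-steps} is invariant over \emph{all} $\vsub$-normalizing derivations from $\tm$, regardless of endpoint.
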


\begin{proof}
  Let $\deriv \colon \tm \toshuf^* \tmtwo$ and $\deriv' \colon \tm \toshuf^* \tmtwop$ be $\shuf$-normalizing. 
  By confluence of $\toshuf$ (\refpropp{basic-shuffling}{confluence}), $\tmtwo = \tmtwop$.
  According to \refthm{sim-shuf-into-vsubeq}, $\deriv$ and $\deriv'$ project, respectively, to two $\vsub$-normalizing derivations $\derivtwo \colon \tm \tovsub^* \tmthree \in \Lambda_\vsub$ and $\derivtwo' \colon \tm \tovsub^* \tmthreep \in \Lambda_\vsub$ such that $\tmthree \eqstruct \mnf{\tmtwo} \eqstruct \tmthreep$, $\sizee{\derivtwo} = \sizebshuf{\deriv}$ and $\sizee{\derivtwo'} = \sizebshuf{\deriv'}$.
  By \refpropp{basic-value-substitution}{strong-confluence}, $\sizee{\derivtwo} = \sizee{\derivtwo'}$ and hence $\sizebshuf{\deriv} = \sizebshuf{\deriv'}$.
\end{proof}



%
%
%
%
%
%
%

\subsection{Proofs of Section~\ref{sect:kernel} (\texorpdfstring{Quantitative Equivalence of $\vsubcalc$ and $\vseqcalc$, via $\vsubkcalc$}{Quantitative Equivalence of lambda-vsub and lambda-vseq, via lambda-vsubk})
}

\label{s:kernel-proofs}
\subsubsection{Proofs of Subsection \ref{subsect:vsubk-vsub} (\texorpdfstring{Equivalence of $\vsubcalc$ and $\vsubkcalc$}{Equivalence of Value Substitution Calculus and its kernel})}

We first give some more details on $\vsubkcalc$.

The kernel $\vsubkcalc$ of $\vsubcalc$ is the sublanguage of $\vsubcalc$ defined by the following grammar of terms and values (by mutual induction), and evaluation contexts:
    \begin{align*}
	    \vsubk\textup{-Values} && \val & \grameq  \var \mid \la\var\tm  \\
	    \vsubk\textup{-Terms} && \tm, \tmtwo, \tmthree & \grameq  \val \mid \tm\val \mid \tm\esub\var\tmtwo \\
	    \vsubk\textup{-Evaluation Contexts} && \evctx & \grameq  \ctxhole\mid \evctx\val \mid \evctx\esub\var\tmtwo \mid \tm\esub\var\evctx 
    \end{align*}
 
The top-level rewriting rules $\rtom$ and $\rtoe$ are the same as in $\vsubcalc$. 
Note that evaluation contexts of $\vsubkcalc$ no longer include the case $\tm \evctx$, because in $\vsubkcalc$ such contexts cannot surround redexes, as $\evctx$ necessary is the empty context, that can only be filled in with a value, $\val$, and values are normal forms.
The set of terms of $\vsubkcalc$ is denoted by $\vsubkterms$. The restriction of $\tovsub$ to $\vsubkterms$ (\ie the closure of $\rtom \cup \rtoe$ under $\vsubk$-evaluation contexts) is denoted by $\tovsubk$.
Note that $\vsubk$-terms are closed under substitution of $\vsubk$-values (\ie if $\tm$ is a $\vsubk$-term and $\val$ is a $\vsubk$-value, then $\tm\isub\var\val$ is a $\vsubk$-term), therefore for every $\vsubk$-term $\tm$, if $\tm \tovsub \tmtwo$ then $\tmtwo$ is a $\vsubk$-term: so, $\tovsubk$ is a binary relation on $\vsubkterms$.

\begin{lemma}[Substitution]
 \label{l:vsubtoker-sub}
  For 
  any $\vsub$-term $\tm$ and any $\vsub$-value $\val$, one has
 $\vsubtoker{\tm\isub\var\val} = \vsubtoker\tm\isub\var{\vsubtoker\val}$.
\end{lemma}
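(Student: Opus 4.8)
The plan is to proceed by structural induction on the $\vsub$-term $\tm$, following the four clauses defining the translation $\vsubtoker{(\cdot)}$. Throughout I would exploit the fact, noted just after the definition of the translation, that $\fv{\vsubtoker\tmtwo} = \fv{\tmtwo}$ for every $\vsub$-term $\tmtwo$: this guarantees that a variable chosen fresh for $\val$ is automatically fresh for $\vsubtoker\val$, and conversely, which is exactly what is needed to keep the two sides of the equation in sync under the freshness side-conditions of the substitutions.

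For the base case $\tm = \vartwo$ I would split on whether $\vartwo = \var$. If $\vartwo = \var$, both sides collapse to $\vsubtoker\val$: the left-hand side because $\var\isub\var\val = \val$, the right-hand side because $\vsubtoker\var = \var$ and $\var\isub\var{\vsubtoker\val} = \vsubtoker\val$. If $\vartwo \neq \var$, both sides are just $\vartwo$, since the substitution acts trivially. The abstraction case $\tm = \la\vartwo\tmtwo$ is then immediate once, by $\alpha$-conversion, we assume $\vartwo \neq \var$ and $\vartwo \notin \fv\val$: both the translation and the meta-level substitution commute with the binder, so a single appeal to the \ih on $\tmtwo$ closes it. The explicit-substitution case $\tm = \tmtwo\esub\vartwo\tmthree$ is analogous: picking $\vartwo$ fresh for $\var$ and $\fv\val$, capture-avoiding substitution distributes over the explicit substitution in the standard way on both sides, and two applications of the \ih, on $\tmtwo$ and on $\tmthree$, yield the equality.

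The one case requiring genuine care is the application $\tm = \tmtwo\tmthree$, because the translation itself creates a fresh explicit substitution: $\vsubtoker{(\tmtwo\tmthree)} = (\vsubtoker\tmtwo\,\vartwo)\esub\vartwo{\vsubtoker\tmthree}$ for a fresh $\vartwo$. Here I would fix $\vartwo$ fresh not only for $\tmtwo$ and $\tmthree$ but also for $\var$ and for $\fv\val$ (hence, by the remark above, for $\fv{\vsubtoker\val}$), so that the same name can serve as the fresh variable on both sides. Expanding the left-hand side through the application clause applied to $(\tmtwo\isub\var\val)(\tmthree\isub\var\val)$, and the right-hand side by pushing $\isub\var{\vsubtoker\val}$ inside the freshly created explicit substitution, the two expressions coincide after using $\vartwo\isub\var{\vsubtoker\val} = \vartwo$ (as $\vartwo \neq \var$) and the absence of capture in the body of the explicit substitution (as $\vartwo \notin \fv{\vsubtoker\val}$); the \ih on $\tmtwo$ and $\tmthree$ finishes the case.

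The main, and essentially only, obstacle is this bookkeeping of freshness: I expect the routine algebra to go through mechanically, while the delicate point is checking that a single fresh name can be chosen to satisfy simultaneously the avoidance condition of the explicit substitution introduced by the translation and the side-conditions of the two capture-avoiding substitutions $\isub\var\val$ and $\isub\var{\vsubtoker\val}$. Everything else is a direct unfolding of the definitions.
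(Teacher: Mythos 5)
Your proof is correct and follows essentially the same route as the paper's: a structural induction on $\tm$ over the five syntactic cases, with the freshness bookkeeping in the application case resolved exactly as in the paper, by choosing the variable introduced by the translation to avoid $\var$ and $\fv{\val}$ and exploiting $\fv{\vsubtoker\val} = \fv{\val}$ (together with identification of terms up to $\alpha$-equivalence, so the same fresh name can be used on both sides). No gaps.
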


\begin{proof}
 By induction on $\tm$. Cases:
 \begin{itemize}
   \item $\tm = \var$: then, $\tm\isub\var\val = \val$ and $\vsubtoker{\tm} = \var$, thus $\vsubtoker{\tm\isub\var\val} = \vsubtoker{\val} = \vsubtoker\tm\isub\var{\vsubtoker\val}$.
   
   \item $\tm = \vartwo \neq \var$: then, $\tm\isub\var{\val} = \vartwo$ and $\vsubtoker{\tm} = \vartwo$, hence $\vsubtoker{\tm\isub\var\val} = \vartwo = \vsubtoker\tm\isub\var{\vsubtoker\val}$.
   
   \item $\tm = \la\vartwo\tmthree$: we can suppose without loss of generality that $\vartwo \notin \fv{\val} \cup \{\var\}$, whence $\vartwo \notin \fv{\vsubtoker{\val}}$. 
   By \ih, $\vsubtoker{\tmthree\isub\var\val} = \vsubtoker{\tmthree}\isub\var{\vsubtoker{\val}}$.
   So, $\vsubtoker{\tm\isub\var\val} = \la\vartwo\vsubtoker{\tmthree\isub\var\val} = \la\vartwo \vsubtoker{\tmthree}\isub\var{\vsubtoker{\val}} = \vsubtoker\tm\isub\var{\vsubtoker\val}$.
   
   \item $\tm = \tmthree\tmfour$: then, $\vsubtoker{\tm} = (\vsubtoker{\tmthree}\vartwo)\esub\vartwo{\vsubtoker{\tmthree}}$ with $\vartwo \notin \fv{\tmthree}$, and we can suppose without loss of generality that $\vartwo \notin \fv{\val} \cup \{\var\}$.
   By \ih, $\vsubtoker{\tmthree\isub\var\val} = \vsubtoker{\tmthree}\isub\var{\vsubtoker{\val}}$ and $\vsubtoker{\tmfour\isub\var\val} = \vsubtoker{\tmfour}\isub\var{\vsubtoker{\val}}$, hence $\vsubtoker{\tm\isub\var\val} = (\vsubtoker{\tmthree\isub\var\val}\vartwo)\esub\vartwo{\vsubtoker{\tmfour\isub\var\val}} =  (\vsubtoker{\tmthree}\isub\var{\vsubtoker{\val}}\vartwo)\esub\vartwo{\vsubtoker{\tmfour}\isub\var{\vsubtoker{\val}}} = \vsubtoker{\tm}\isub\var{\vsubtoker{\val}}$, since $\vartwo \neq \var$.
   
   \item $\tm = \tmthree\esub{\vartwo}{\tmfour}$: we can suppose without loss of generality that $\vartwo \notin \fv{\val} \cup \{\var\}$.
   By \ih, $\vsubtoker{\tmthree\isub\var\val} = \vsubtoker{\tmthree}\isub\var{\vsubtoker{\val}}$ and $\vsubtoker{\tmfour\isub\var\val} = \vsubtoker{\tmfour}\isub\var{\vsubtoker{\val}}$, so $\vsubtoker{\tm\isub\var\val} = \vsubtoker{\tmthree\isub\var\val}\esub\vartwo{\vsubtoker{\tmfour\isub\var\val}} =  \vsubtoker{\tmthree}\isub\var{\vsubtoker{\val}}\esub\vartwo{\vsubtoker{\tmfour}\isub\var{\vsubtoker{\val}}} = \vsubtoker{\tm}\isub\var{\vsubtoker{\val}}$.
   \qedhere
 \end{itemize}

\end{proof}

\setcounter{lemmaAppendix}{\value{l:vsub-to-ker-sim}}
\begin{lemmaAppendix}[Simulation] 
\label{lappendix:vsub-to-ker-sim} 
  Let 
\NoteState{l:vsub-to-ker-sim}
  $\tm, \tmtwo \in \vsubterms$.
\begin{enumerate}
	\item \emph{Multiplicative}: \label{pappendix:vsub-to-ker-sim-mult}
	if $\tm \tom \tmtwo$ then $\vsubtoker\tm \tom \vsubtoker\tmtwo$ or $\vsubtoker\tm \tom\toevar\eqstruct\vsubtoker\tmtwo$;
	\item \emph{Exponential Abstractions \& Variables}: \label{pappendix:vsub-to-ker-sim-exp}
	if $\tm \toeabs \tmtwo$ then $\vsubtoker\tm \toeabs \vsubtoker\tmtwo$, and if $\tm \toevar\tmtwo$ then $\vsubtoker\tm \toevar \vsubtoker\tmtwo$.
	\item \emph{Structural Equivalence}: $\tm \eqstruct\tmtwo$ implies $\vsubtoker\tm \eqstruct \vsubtoker\tmtwo$.
\end{enumerate}
\end{lemmaAppendix}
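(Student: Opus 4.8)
The plan is to prove the three points separately, each by a routine induction that rests on the Substitution Lemma (\reflemma{vsubtoker-sub}) and on the elementary fact that the translation commutes with substitution contexts, i.e. $\vsubtoker{(\sctxp\tm)} = \vsubtoker\sctx\ctxholep{\vsubtoker\tm}$, where $\vsubtoker\sctx$ is the pointwise extension of the translation to substitution contexts (a one-line induction on $\sctx$). Points~1 and~2 are proved by induction on the derivation of the rewriting step (i.e. on the evaluation context closing the fired root redex), point~3 by induction on the derivation of $\tm \eqstruct \tmtwo$. Since $\vsub$- and $\vsubk$-terms are identified up to $\alpha$-equivalence, I will choose the fresh variables introduced by $\vsubtoker{}$ consistently on the two sides of each equation. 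The contextual cases are mechanical: the $\vsubk$-evaluation contexts $\evctx\val$, $\evctx\esub\var\tmtwo$ and $\tm\esub\var\evctx$, together with the closure of $\eqstruct$ under these contexts, let one propagate the induction hypothesis, so I focus on the root cases.

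For point~2 (the easy one), a root $\toeabs$-step $\tm\esub\var{\sctxp{\la\vartwo\tmfour}} \rtoeabs \sctxp{\tm\isub\var{\la\vartwo\tmfour}}$ translates, via the commutation with substitution contexts, to $\vsubtoker\tm\esub\var{\vsubtoker\sctx\ctxholep{\la\vartwo\vsubtoker\tmfour}}$, which is again a $\toeabs$-redex at a distance; firing it yields $\vsubtoker\sctx\ctxholep{\vsubtoker\tm\isub\var{\la\vartwo\vsubtoker\tmfour}}$, and by the Substitution Lemma this is exactly $\vsubtoker{(\sctxp{\tm\isub\var{\la\vartwo\tmfour}})}$. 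The $\toevar$-case is identical, with a variable in place of the abstraction. Thus the contracta match on the nose, no $\eqstruct$ is needed, and the kind of step ($\expoabs$ vs.\ $\expovar$) is preserved.

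The multiplicative point carries the real content. At the root, $\sctxp{\la\var\tmfour}\tmfive \rtom \sctxp{\tmfour\esub\var\tmfive}$ translates to $\vsubtoker{(\sctxp{\la\var\tmfour}\tmfive)} = (\vsubtoker\sctx\ctxholep{\la\var\vsubtoker\tmfour}\,\varfour)\esub\varfour{\vsubtoker\tmfive}$ with $\varfour$ fresh, i.e.\ the argument has been pushed into an ES, administrative-normal-form style. I then fire the distant abstraction, obtaining $(\vsubtoker\sctx\ctxholep{\vsubtoker\tmfour\esub\var\varfour})\esub\varfour{\vsubtoker\tmfive}$, and eliminate the renaming substitution with one $\toevar$-step, reaching $(\vsubtoker\sctx\ctxholep{\vsubtoker\tmfour\isub\var\varfour})\esub\varfour{\vsubtoker\tmfive}$, which up to $\alpha$ (rename $\varfour$ to $\var$) is $(\vsubtoker\sctx\ctxholep{\vsubtoker\tmfour})\esub\var{\vsubtoker\tmfive}$. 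The target is $\vsubtoker{(\sctxp{\tmfour\esub\var\tmfive})} = \vsubtoker\sctx\ctxholep{\vsubtoker\tmfour\esub\var{\vsubtoker\tmfive}}$, so it remains to push $\esub\var{\vsubtoker\tmfive}$ from outside $\vsubtoker\sctx$ down to the innermost position; this is a sequence of $\tostructcom$ steps, whose side conditions hold because the variables bound by $\sctx$ are not free in $\tmfive$ (the side condition of the root $\mult$-rule) and $\var$ is not free in the entries of $\sctx$. Hence $\vsubtoker\tm \tom\toevar\eqstruct\vsubtoker\tmtwo$, the second disjunct; the weaker first disjunct is included only to streamline contextual propagation, since in fact every root case produces this stronger form.

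Finally, for point~3 I check that each generating axiom of $\eqstruct$ is sent into $\eqstruct$. The two purely-ES axioms $\tostructcom$ and $\tostructes$ translate to instances of themselves, their side conditions surviving because $\fv{\vsubtoker\tm} = \fv\tm$. The delicate axioms are $\tostructapl$ and $\tostructapr$, since translating an application inserts an extra ES: for $\tm\esub\var\tmtwo\tmthree \tostructapl (\tm\tmthree)\esub\var\tmtwo$ (with $\var\notin\fv\tmthree$) both sides translate to terms carrying the auxiliary ES $\esub\varfour{\vsubtoker\tmthree}$, and I recover the equivalence by one $\tostructapl$ step followed by one $\tostructcom$ step swapping $\esub\var{\vsubtoker\tmtwo}$ with $\esub\varfour{\vsubtoker\tmthree}$; $\tostructapr$ is symmetric. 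I expect this re-derivation of the application axioms as compound $\eqstruct$-moves, together with the ES-commutation in the multiplicative root case and the attendant fresh-variable and $\alpha$-conversion bookkeeping, to be the only genuinely fiddly part; everything else is mechanical context propagation.
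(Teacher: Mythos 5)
Your proposal is correct and follows essentially the same route as the paper's proof: pointwise translation of substitution contexts, the root multiplicative case simulated as $\tom\toevar$ followed by an $\eqstruct$-push of the new ES through the translated substitution context and an $\alpha$-renaming, the exponential root cases closed on the nose via the Substitution Lemma, and the $\eqstruct$-axioms checked one by one with $\tostructapl$ mapped to $\tostructapl$ plus $\tostructcom$. The only cosmetic difference is that for $\tostructapr$ the paper uses a single $\tostructes$ step (the two ES end up nested, not parallel, so no commutation is needed) rather than your ``symmetric'' combination, which does not affect correctness.
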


\begin{proof}
Define $\vsubtoker\sctx$ by $\vsubtoker{\ctxhole} \defeq \ctxhole$ and $\vsubtoker{\sctx\esub\var\tmfour} \defeq \vsubtoker\sctx\esub\var{\vsubtoker{\tmfour}}$. Then note that if $\tm = \sctxp\tmthree$ we have $\vsubtoker{\tm} = \vsubtoker\sctx\ctxholep{\vsubtoker{\tmthree}}$. By induction on the evaluation context in which the step takes place. The base cases:
\begin{enumerate}
	\item \emph{Multiplicative}, \ie $\tm = \sctxp{\la\var\tmthree} \tmfour \tom \sctxp{\tmthree\esub\var\tmfour} = \tmtwo$. Then
	\[ \begin{array}{rclclc}
      \vsubtoker{\tm} & = & \vsubtoker{(\sctxp{\la\var\tmthree} \tmfour)}
      \\
      & = &
      (\vsubtoker\sctx\ctxholep{\la\var\vsubtoker{\tmthree}} \vartwo) \esub\vartwo{\vsubtoker{\tmfour}}
      \\
      & \tom &
      \vsubtoker\sctx\ctxholep{\vsubtoker{\tmthree}\esub\var\vartwo} \esub\vartwo{\vsubtoker{\tmfour}}
      \\
      & \toevar &
      \vsubtoker\sctx\ctxholep{\vsubtoker{\tmthree}\isub\var\vartwo} \esub\vartwo{\vsubtoker{\tmfour}}
      \\
      & \eqstruct &
      \vsubtoker\sctx\ctxholep{\vsubtoker{\tmthree}\isub\var\vartwo\esub\vartwo{\vsubtoker{\tmfour}}}
      \\
      & \alphaeq &
      \vsubtoker\sctx\ctxholep{\vsubtoker{\tmthree} \esub\var{\vsubtoker{\tmfour}}}
      \\
      & = & 
      \vsubtoker{\sctxp{\tmthree\esub\var\tmfour}}
      & = & 
      \vsubtoker{\tmtwo}
 \end{array}\]
 
	\item \emph{Exponential Abstractions \& Variables}, \ie $\tm = \tmthree\esub\var{\sctxp\val} \toe \sctxp{\tmthree \isub\var\val} = \tmtwo$.
	
	\[ \begin{array}{rclclc}
      \vsubtoker{\tm} & = & \vsubtoker{\tmthree\esub\var{\sctxp\val}}
      \\
      & = &
      \vsubtoker\tmthree\esub\var{\vsubtoker\sctx\ctxholep{\vsubtoker\val}}
      \\
      & \toe &
      \vsubtoker\sctx\ctxholep{\vsubtoker\tmthree\isub\var{\vsubtoker\val}}
      \\
      & =_\reflemmaeq{vsubtoker-sub} &
      \vsubtoker\sctx\ctxholep{\vsubtoker{\tmthree \isub\var\val}}
      \\
      & = &
      \vsubtoker{\sctxp{\tmthree \isub\var\val}}      
      & = & 
      \vsubtoker{\tmtwo}
 \end{array}\]
	Note that the translation $\vsubtoker{\cdot}$ maps $\toeabs$ steps on $\toeabs$ steps and $\toevar$ steps to $\toevar$ steps because it maps variables to variables and abstractions to abstractions.
	
	\item \emph{Structural Equivalence}:  it is enough to prove that the statement holds for the axioms of $\eqstruct$, \ie if $\tm \eqstruct_\mathsf{r} \tmtwo$ for some $\mathsf{r} \in \{\textup{com}, @_\textup{l}, @_\textup{r}, [\cdot]\}$, then $\vsubtoker{\tm} \eqstruct \vsubtoker{\tmtwo}$. 
       Cases (recall that $\vsubk$ is a sublanguage of $\vsub$):
       \begin{itemize}
         \item $\tm\esub{\vartwo}{\tmthree}\esub{\var}{\tmtwo} \tostructcom \tm\esub{\var}{\tmtwo}\esub{\vartwo}{\tmthree}$ with $\vartwo\notin\fv{\tmtwo}$ and $\var\notin\fv{\tmthree}$: then, $\vsubtoker{\tm\esub{\vartwo}{\tmthree}\esub{\var}{\tmtwo}} = \vsubtoker{\tm}\esub{\vartwo}{\vsubtoker{\tmthree}}\esub{\var}{\vsubtoker{\tmtwo}} \tostructcom \vsubtoker{\tm}\esub{\var}{\vsubtoker{\tmtwo}}\esub{\vartwo}{\vsubtoker{\tmthree}} = \vsubtoker{\tm\esub{\var}{\tmtwo}\esub{\vartwo}{\tmthree}}$\!, since $\vartwo\notin\fv{\vsubtoker{\tmtwo}}$ and $\var\notin\fv{\vsubtoker{\tmthree}}$.
         
         \item $\tm\,\tmthree\esub\var\tmtwo \tostructapr  (\tm\tmthree)\esub\var\tmtwo$ with $\var\not\in\fv{\tm}$: 
         then, $\var\not\in\fv{\vsubtoker{\tm}}$ and $\vsubtoker{(\tm\tmthree)} = (\vsubtoker{\tm}\vartwo)\esub\vartwo{\vsubtoker{\tmthree}}$ with $\vartwo \notin \fv{\tm} = \fv{\vsubtoker{\tm}}$, and we can suppose without loss of generality that 
         $\vartwo \neq \var$.
         So, $\vsubtoker{(\tm\,\tmthree\esub\var\tmtwo)} = (\vsubtoker{\tm}\vartwo)\esub\vartwo{\vsubtoker{\tmthree}\esub\var{\vsubtoker{\tmtwo}}} \tostructes (\vsubtoker{\tm}\vartwo)\esub\vartwo{\vsubtoker{\tmthree}}\esub\var{\vsubtoker{\tmtwo}} = \vsubtoker{(\tm\tmthree)\esub\var\tmtwo}$.
	
	\item $\tm\esub\var\tmtwo\tmthree \tostructapl (\tm\tmthree)\esub\var\tmtwo$ with $\var\not\in\fv\tmthree$:
	then, $\var\not\in\fv{\vsubtoker{\tmthree}}$ and $\vsubtoker{(\tm\tmthree)} = (\vsubtoker{\tm}\vartwo)\esub\vartwo{\vsubtoker{\tmthree}}$ with $\vartwo \notin \fv{\tm} = \fv{\vsubtoker{\tm}}$.
	We can suppose without loss of generality that $\vartwo \notin \fv{\tmtwo} \cup \{\var\}$, hence $\vsubtoker{\tm}\esub\var{\vsubtoker{\tmtwo}}\vartwo \tostructapl (\vsubtoker{\tm}\vartwo)\esub\var{\vsubtoker{\tmtwo}}$.
	Therefore, $\vsubtoker{(\tm\esub\var\tmtwo\tmthree)} = (\vsubtoker{\tm}\esub\var{\vsubtoker{\tmtwo}}\vartwo)\esub\vartwo{\vsubtoker{\tmthree}} \tostruct (\vsubtoker{\tm}\vartwo)\esub\var{\vsubtoker{\tmtwo}}\esub\vartwo{\vsubtoker{\tmthree}} \tostructes (\vsubtoker{\tm}\vartwo)\esub\vartwo{\vsubtoker{\tmthree}}\esub\var{\vsubtoker{\tmtwo}} = \vsubtoker{(\tm\tmthree)\esub\var\tmtwo}$.
	
         \item $\tm\esub{\var}{\tmtwo\esub{\vartwo}{\tmthree}} \tostructes \tm\esub{\var}{\tmtwo}\esub{\vartwo}{\tmthree} $ with $\vartwo\not\in\fv{\tm}$: then, $\vsubtoker{\tm\esub{\var}{\tmtwo\esub{\vartwo}{\tmthree}}} = \vsubtoker{\tm}\esub{\var}{\vsubtoker{\tmtwo}\esub{\vartwo}{\vsubtoker{\tmthree}}} \tostructes \vsubtoker{\tm}\esub{\var}{\vsubtoker{\tmtwo}}\esub{\vartwo}{\vsubtoker{\tmthree}} = \vsubtoker{\tm\esub{\var}{\tmtwo}\esub{\vartwo}{\tmthree}}$, since $\vartwo\not\in\fv{\vsubtoker{\tm}}$.
       \end{itemize}

\end{enumerate}
The inductive cases simply follow from the \ih, note indeed that evaluation contexts are translated to evaluation contexts.
\end{proof}

\begin{definition}
  Let $\tm$ be a $\vsub$-term: $\tm$ is \emph{\harmless} if for every subterm of the form $\tmtwo\esub{\var}{\sctxp{\val}}$ occurring in $\tm$ one has $\tmtwo = \tmthree\valtwo$ with $\var \notin \fv{\tmthree}$ (where $\val$ and $\valtwo$ are $\vsub$-values).
\end{definition}

\begin{lemma}[Harmless preservations]
\label{l:harmless-preserve}
  Let $\tm \in \vsubterms$. 
  \begin{varenumerate}
    \item\label{p:harmless-preserve-mult} Suppose $\tm$ is \harmless and $\tm \toe \tmtwo$. Then, $\tmtwo$ is \harmless. 
    If, moreover, $\tm$ is $\mult$-normal, then $\tmtwo$ is $\mult$-normal.
    \item\label{p:harmless-preserve-transl} If $\tm$ is $\vsub$-normal, then $\vsubtoker\tm$ is \harmless and $\mult$-normal.
  \end{varenumerate}
\end{lemma}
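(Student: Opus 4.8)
The plan is to reduce the statement to a few stability facts about substitution of \vsub-values and about harmlessness, each provable by a routine induction. Two preliminary observations drive everything. First, substituting a value never changes the ``shape modulo ES'': for any value $\val$, a term $\tmtwo$ is of value-form $\sctxp{\valtwo}$ if and only if $\tmtwo\isub\var\val$ is, and an application stays an application; hence $\isub\var\val$ sends value-content ES (i.e.\ $\expo$-redexes) to value-content ES and preserves the decomposition $\tmthree\valtwo$ of a body. Second, harmlessness is hereditary (every subterm of a \harmless term is \harmless) and is preserved by substitution of a \harmless value: if $\tm$ and $\val$ are \harmless then so is $\tm\isub\var\val$, since every ES of $\tm\isub\var\val$ is either a (substituted) ES of $\tm$, whose body keeps its $\tmthree\valtwo$-shape and whose bound variable stays out of $\fv\tmthree$ by monotonicity of free variables and the $\alpha$-convention $\vartwo\notin\fv\val\cup\set\var$, or an ES living entirely inside a copy of $\val$, hence already \harmless. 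I would prove both observations by induction on $\tm$.

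For Point~1 I would argue on the evaluation context holding the fired $\expo$-redex. At the root the redex is $\tmfour\esub\var{\sctxp\val}$; by heredity it is \harmless, so its top ES forces $\tmfour=\tmthree\valtwo$ with $\var\notin\fv\tmthree$, and the contractum is $\sctxp{\tmthree\,(\valtwo\isub\var\val)}$. This is \harmless: the inner body $\tmthree\,(\valtwo\isub\var\val)$ is \harmless by the substitution observation, and for each ES $\esub{\vartwo_1}{\tm_1}\dots\esub{\vartwo_k}{\tm_k}$ of $\sctx$ the condition is vacuous, because harmlessness of the redex already forbids any $\tm_i$ from being of value-form (otherwise the subterm $\val\esub{\vartwo_1}{\tm_1}\dots\esub{\vartwo_i}{\tm_i}$ would need a body that is at once of value-form and an application). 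In the context cases I would use that reduction does not increase free variables and that weak evaluation never enters a value: an enclosing value-content ES $\esub\vartwo{\sctxtwop\valtwo}$ has body $\tmfive\valthree$ with $\vartwo\notin\fv\tmfive$, the hole lies in $\tmfive$ or in $\sctxtwop\valtwo$ but never in the value $\valthree$, and in either subcase the shape and freeness conditions survive.

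Still for Point~1, the $\mult$-normality claim is where harmlessness is truly needed: I would show the step creates no weak $\mult$-redex. By harmlessness the fired variable $\var$ occurs in $\tmfour=\tmthree\valtwo$ only inside $\valtwo$, i.e.\ in argument or under-$\lambda$ position, so substituting a value (in particular an abstraction) for it cannot place an abstraction in a weakly reachable function position; moreover $\tmthree$ is left untouched and cannot be an abstraction up to ES, for otherwise $\tmfour$ would already be a $\mult$-redex in the $\mult$-normal $\tm$. Hence no new weak $\mult$-redex appears inside the contractum, and none at the junction created by relocating $\sctx$, because the relocated head $\tmthree\,(\valtwo\isub\var\val)$ up to ES is still an application, not an abstraction.

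For Point~2 I would induct on $\tm$, reading off the structure of a \vsub-normal form, and exploit the shape of the translation. The only applications produced by $\vsubtoker{\cdot}$ arise from $\vsubtoker{(\tm\tmtwo)}=(\vsubtoker\tm\,\vartwo)\esub\vartwo{\vsubtoker\tmtwo}$, whose argument is a fresh variable; thus every translation-generated ES has body $\vsubtoker\tm\,\vartwo$, an application ending in the value $\vartwo$ with $\vartwo\notin\fv{\vsubtoker\tm}$, and is \harmless unconditionally. The remaining ES come from ES of $\tm$, namely $\vsubtoker\tmthree\esub\var{\vsubtoker\tmtwo}$; here $\vsubtoker\tmtwo$ is of value-form iff $\tmtwo$ is, while $\vsubtoker\tmthree$ is never a bare application, so such an ES is \harmless exactly when $\tmthree\esub\var\tmtwo$ is not an $\expo$-redex, which \vsub-normality of $\tm$ guarantees in evaluation position. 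Finally, since every application of $\vsubtoker\tm$ carries a variable in argument position, a weak $\mult$-redex of $\vsubtoker\tm$ would force $\vsubtoker\tmthree$, hence $\tmthree$, to be an abstraction up to ES, producing a weak $\mult$-redex of $\tm$; as $\vsubtoker{\cdot}$ sends evaluation contexts to evaluation contexts (already used in \reflemma{vsub-to-ker-sim}), \vsub-normality of $\tm$ excludes this and yields $\mult$-normality of $\vsubtoker\tm$. The main obstacle throughout is Point~1: the careful bookkeeping of free variables and of the value-form predicate under both the substitution and the relocation of $\sctx$, and especially the no-creation argument for $\mult$-redexes—precisely the creation of type~1 that harmlessness is designed to block.
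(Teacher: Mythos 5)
Your proof is correct and takes essentially the same route as the paper's: for Point~1 you analyse the fired $\expo$-redex directly, using the harmless shape $\tmthree\valtwo$ with $\var\notin\fv\tmthree$ to confine the substituted occurrences to argument or under-$\lambda$ position (thus blocking type-1 creation of $\mult$-redexes), and for Point~2 you observe that $\vsubtoker{(\cdot)}$ creates no new applications or abstractions while its freshly created ES satisfy the harmless condition by construction---you merely spell out details the paper leaves implicit (the vacuity of the harmless condition for the ES of the relocated $\sctx$, shape-preservation under value substitution, and the correspondence of evaluation contexts). Your parenthetical ``in evaluation position'' also matches a convention the paper uses tacitly: since $\tovsub$ is weak, a $\vsub$-normal term may still contain $\expo$-redexes under abstractions, so in Point~2 ``harmless'' must be read as quantifying over subterms out of abstractions, a restriction on which both your argument and the paper's rely.
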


\begin{proof}\hfill
  \begin{enumerate}
    \item 
    From $\tm \toe \tmtwo$ it follows that $\tmtwo$ is obtained from $\tm$ by replacing a subterm of the form $\tmp\esub\var{\sctxp{\val}}$ with $\sctxp{\tmp\isub\var\val}$, where all the variables binded by the substitution context $\sctx$ are not free in $\tmp$.
    Since $\tm$ is harmless, then $\tmp = \tmthree\valtwo$ with $\var \notin \fv{\tmthree}$ and hence $\sctxp{\tmp\isub\var\val} = \sctxp{\tmthree(\valtwo\isub\var\val)}$ where $\valtwo\isub\var\val$ is a $\vsub$-value, since $\vsub$-values are closed by substitution.
    Moreover, all the variables binded by the substitution context $\sctx$ are not free in $\tmthree$.
    So, $\sctxp{\tmp\isub\var\val}$ is not an objection to the harmless property.
    Therefore, $\tmtwo$ is harmless.
    
    If $\tmtwo$ is not $\mult$-normal then it has a subterm of the form $\sctxp{\la\var\tmthree}\tmfour$, but $\tm$ is \harmless and hence the step $\tm \toe \tmtwo$ can only substitute a value for a variable in argument position of an application: this means that $\tm$ has a subterm of the form $\sctxp{\la\var\tmthree}\tmfourp$ and hence $\tm$ is not $\mult$-normal.

    \item The translation $\vsubtoker{(\cdot)}$ 
    essentially creates new ES from applications, but does not create new abstractions or new applications, 
    hence if $\vsubtoker{\tm}$ contained a subterm of the form $\sctxp{\la\var\tmtwo}\tmthree$ then $\tm$ would contain a subterm of the form $\sctxtwop{\la\var\tmtwop}\tmthreep$: but $\tm$ is $\mult$-normal and therefore $\vsubtoker{\tm}$ is $\mult$-normal.
    Moreover, since $\tm$ is $\expo$-normal, the only subterms of $\vsubtoker{\tm}$ of the form $\tmtwo\esub{\var}{\sctxp{\val}}$ are created by the translation $\vsubtoker{(\cdot)}$ (\ie they are not in $\tm$), and the ES introduced by the translation $\vsubtoker{(\cdot)}$ fulfill the \harmless condition.
    Therefore $\vsubtoker{\tm}$ is \harmless.
    \qedhere
  \end{enumerate}
\end{proof}

\setcounter{theoremAppendix}{\value{thm:sim-vsub-into-vsubk}}
\begin{theoremAppendix}[Quantitative Simulation of $\vsubcalc$ in $\vsubkcalc$]
\label{thmappendix:sim-vsub-into-vsubk}
  Let%
\NoteState{thm:sim-vsub-into-vsubk}
  $\tm, \tmtwo \in \vsubterms$.
  If $\deriv \colon \tm \tovsub^* \tmtwo$ then there are $\tmthree \in \vsubkterms$ and $\derivtwo \colon \vsubtoker\tm \tovsubk^* \tmthree$ such that
  
  \begin{varenumerate}
    \item\label{pappendix:sim-vsub-into-vsubk-qual} \emph{Qualitative Relationship}: $\tmthree \eqstruct \vsubtoker\tmtwo$;

    \item\label{pappendix:sim-vsub-into-vsubk-quant} \emph{Quantitative Relationship}: 
    \begin{enumerate}
    	\item \emph{Multiplicative Steps}: $\sizem{\derivtwo} = \sizem{\deriv}$;
	\item \emph{Exponential Steps}: $\sizeeabs\derivtwo = \sizeeabs\deriv$ and $\sizeevar\derivtwo = \sizeevar\deriv + \sizem\deriv$;
    \end{enumerate}
    \item\label{pappendix:sim-vsub-into-vsubk-normal} \emph{Normal Form}: if $\tmtwo$ is $\vsub$-normal then $\tmthree$ is $\mult$-normal and $\enf{\!\tmthree\!}$ is $\vsubk$\!-normal.
  \end{varenumerate}
\end{theoremAppendix}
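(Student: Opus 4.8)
The plan is to follow the simulation-plus-normal-forms recipe announced in \refsect{kernel}, using \reflemma{vsub-to-ker-sim} for the single-step simulation and the strong bisimulation of $\eqstruct$ (\reflemmap{eqstruct-post-and-term}{locpost}) to carry the structural equivalence through the whole derivation while keeping the step counts exact. Points \refpoint{sim-vsub-into-vsubk-qual} and \refpoint{sim-vsub-into-vsubk-quant} I would prove together by induction on $\sizevsub\deriv$. The empty derivation is immediate: take $\tmthree \defeq \vsubtoker\tm = \vsubtoker\tmtwo$ and $\derivtwo$ empty. For the inductive step, write $\deriv \colon \tm \tovsub^* \tmfour \tovsub \tmtwo$ and apply the \ih to the prefix, obtaining $\tmfive \in \vsubkterms$ and $\derivtwo' \colon \vsubtoker\tm \tovsubk^* \tmfive$ with $\tmfive \eqstruct \vsubtoker\tmfour$ and the matching counts.

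I then feed the last step $\tmfour \tovsub \tmtwo$ to \reflemma{vsub-to-ker-sim}: a $\toeabs$- (resp.\ $\toevar$-) step gives $\vsubtoker\tmfour \toeabs \vsubtoker\tmtwo$ (resp.\ $\vsubtoker\tmfour \toevar \vsubtoker\tmtwo$), while a $\tom$-step gives $\vsubtoker\tmfour \tom\toevar\eqstruct\vsubtoker\tmtwo$. Since the \ih only yields $\tmfive \eqstruct \vsubtoker\tmfour$, each of these reduction steps must be transported from $\vsubtoker\tmfour$ onto $\tmfive$ by strong bisimulation (once in the exponential cases, twice in the multiplicative case, once per reduction step). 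Because $\vsubkcalc$ is stable under $\tovsub$ and $\tmfive$ is a $\vsubk$-term, the transported steps are genuine $\tovsubk$-steps of the \emph{same kind}, and appending them to $\derivtwo'$ produces $\derivtwo$ with $\tmthree \eqstruct \vsubtoker\tmtwo$ by transitivity of $\eqstruct$. The counting is then automatic: each $\tom$ of $\deriv$ contributes exactly one $\tom$ and one $\toevar$ to $\derivtwo$, each $\toevar$ contributes one $\toevar$, and each $\toeabs$ contributes one $\toeabs$, whence $\sizem{\derivtwo} = \sizem{\deriv}$, $\sizeeabs\derivtwo = \sizeeabs\deriv$, and $\sizeevar\derivtwo = \sizeevar\deriv + \sizem\deriv$.

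For point \refpoint{sim-vsub-into-vsubk-normal}, suppose $\tmtwo$ is $\vsub$-normal. By \reflemmap{harmless-preserve}{transl}, $\vsubtoker\tmtwo$ is \harmless and $\mult$-normal; since $\tmthree \eqstruct \vsubtoker\tmtwo$ and $\eqstruct$ preserves $\mult$-normal forms (\reflemmap{eqstruct-post-and-term}{normal}), $\tmthree$ is $\mult$-normal as well. To control $\enf{\tmthree}$ I would first normalize $\vsubtoker\tmtwo$ exponentially: iterating \reflemmap{harmless-preserve}{mult} along $\vsubtoker\tmtwo \toe^* \enf{\vsubtoker\tmtwo}$ shows that harmlessness and $\mult$-normality are preserved, so $\enf{\vsubtoker\tmtwo}$ is $\mult$-normal, $\expo$-normal, hence $\vsubk$-normal. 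Finally, transporting the exponential derivation $\vsubtoker\tmtwo \toe^* \enf{\vsubtoker\tmtwo}$ along $\tmthree \eqstruct \vsubtoker\tmtwo$ by strong bisimulation gives $\tmthree \toe^* w$ with $w \eqstruct \enf{\vsubtoker\tmtwo}$; as $\eqstruct$ preserves $\expo$-normality, $w$ is $\expo$-normal, so $w = \enf{\tmthree}$ by uniqueness of the exponential normal form (\refpropp{basic-value-substitution}{tom-toe-terminates}), and $\enf{\tmthree} \eqstruct \enf{\vsubtoker\tmtwo}$ is $\vsubk$-normal again by \reflemmap{eqstruct-post-and-term}{normal}.

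The main obstacle I expect is not conceptual but the careful $\eqstruct$-bookkeeping in the inductive step: the multiplicative single-step simulation closes only up to $\eqstruct$, so every induction step interposes a structural equivalence between the running $\vsubk$-derivation and the target $\vsubtoker{(\cdot)}$-image, and this must be pushed through by the strong bisimulation without perturbing the exact step counts. The delicate points are that transporting a step along $\eqstruct$ yields a step of the \emph{same kind} (guaranteed by \reflemmap{eqstruct-post-and-term}{locpost}) and that all transported reducts remain inside $\vsubkterms$ (guaranteed by stability of $\vsubkcalc$ under $\tovsub$); the normal-form argument then avoids ever needing $\eqstruct$ to preserve harmlessness by reasoning on $\enf{\vsubtoker\tmtwo}$ and transporting to $\enf{\tmthree}$ only at the end.
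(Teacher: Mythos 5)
Your proposal is correct and follows essentially the paper's own proof: points 1--2 are proved there by exactly your induction on $\sizevsub\deriv$ combining \reflemma{vsub-to-ker-sim} with the strong bisimulation property of $\eqstruct$ (\reflemmap{eqstruct-post-and-term}{locpost}), and point 3 by \reflemma{harmless-preserve} together with transporting along $\eqstruct$ to conclude that $\enf{\tmthree} \eqstruct \enf{\vsubtoker\tmtwo}$ is normal. Your extra bookkeeping (step-kind preservation under $\eqstruct$, stability of $\vsubkterms$, and uniqueness of exponential normal forms) just spells out what the paper's one-line argument leaves implicit.
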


\begin{proof}\hfill
\begin{enumerate}
\item[1-2.] By induction on $\size\deriv$ using \reflemmap{vsub-to-ker-sim}{mult} and \reflemmap{vsub-to-ker-sim}{exp} plus the postponement of $\eqstruct$ (\reflemmap{eqstruct-post-and-term}{locpost}).

\item[3.]
  By \reflemmap{harmless-preserve}{transl}, $\vsubtoker{\tmtwo}$ is $\mult$-normal and \harmless, thus also $\enf{\vsubtoker{\tmtwo}}$ is $\mult$-normal according to \reflemmap{harmless-preserve}{mult}. Therefore, $\enf{\vsubtoker{\tmtwo}}$ is $\vsub$-normal. 
  By the properties of strong bisimulation (\reflemma{eqstruct-post-and-term}), $\tmthree \eqstruct \vsubtoker\tmtwo$ implies that $\enf{\tmthree} \eqstruct \enf{\vsubtoker\tmtwo}$ and hence $\enf{\tmthree}$ is $\vsub$-normal.
\qedhere
\end{enumerate}
\end{proof}

\setcounter{corollaryAppendix}{\value{coro:equivalence-vsub-vsubk-termination}}
\begin{corollaryAppendix}[Linear Termination Equivalence of $\vsubcalc$ and $\vsubkcalc$\!]
\label{coroappendix:equivalence-vsub-vsubk-termination}
  Let
\NoteState{coro:equivalence-vsub-vsubk-termination}
  $\tm \in \vsubterms$. 
  There exists a $\vsub$-normalizing derivation $\deriv$ from $\tm$ iff there exists a $\vsubk$-normalizing derivation $\derivthree$ from $\vsubtoker\tm$. 
    Moreover, $\sizem{\deriv} = \sizem{\derivthree}$.
\end{corollaryAppendix}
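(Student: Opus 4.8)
The plan is to obtain the corollary as a termination-transfer consequence of the Quantitative Simulation Theorem \refthm{sim-vsub-into-vsubk}, following exactly the scheme used for the earlier equivalences, and crucially exploiting that in both calculi normalization and strong normalization coincide. For $\vsubcalc$ this is the content of strong confluence (\refprop{basic-value-substitution}); for $\vsubkcalc$ the same holds because $\vsubkcalc$ is the restriction of the strongly confluent $\vsubcalc$ to a sublanguage closed under $\vsub$-reduction, so its basic termination/confluence package (collected in the $\vsubkcalc$ material of this section) is inherited. I would split into the two implications and then read off the multiplicative count.

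For the forward direction, let $\deriv\colon\tm\tovsub^*\tmtwo$ be $\vsub$-normalizing, so $\tmtwo$ is $\vsub$-normal. Applying \refthm{sim-vsub-into-vsubk} yields $\derivtwo\colon\vsubtoker\tm\tovsubk^*\tmthree$ with $\tmthree$ being $\mult$-normal, $\enf\tmthree$ being $\vsubk$-normal, and $\sizem\derivtwo=\sizem\deriv$. Since $\tmthree$ is $\mult$-normal, every further step from it is exponential, and $\toe$ is strongly normalizing (\refprop{basic-value-substitution}), so I can extend $\derivtwo$ by the exponential normalization $\tmthree\toe^*\enf\tmthree$ to a derivation $\derivthree\colon\vsubtoker\tm\tovsubk^*\enf\tmthree$. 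As $\enf\tmthree$ is $\vsubk$-normal, $\derivthree$ is $\vsubk$-normalizing, and as the extension adds only exponential steps, $\sizem\derivthree=\sizem\derivtwo=\sizem\deriv$, which also yields the quantitative claim (well-defined independently of the chosen derivations, since by strong confluence all normalizing derivations in $\vsubcalc$, resp.\ $\vsubkcalc$, share the same number of $\mult$-steps).

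For the backward direction I would argue by contradiction: assume $\vsubtoker\tm$ is $\vsubk$-normalizing but $\tm$ is not $\vsub$-normalizable. By strong confluence of $\tovsub$ normalization coincides with strong normalization, so $\tm$ admits an infinite $\vsub$-derivation $\exec$. Because $\tom$ and $\toe$ are each separately strongly normalizing (\refprop{basic-value-substitution}), $\exec$ cannot have a suffix made of $\toe$-steps only, hence it contains infinitely many $\tom$-steps. For each $n$ I take a finite prefix $\deriv_n$ of $\exec$ with $\sizem{\deriv_n}\ge n$ and apply \refthm{sim-vsub-into-vsubk}, obtaining $\derivtwo_n\colon\vsubtoker\tm\tovsubk^*\tmthree_n$ with $\sizem{\derivtwo_n}=\sizem{\deriv_n}\ge n$, whence $\size{\derivtwo_n}\ge n$. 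But $\vsubtoker\tm$ is $\vsubk$-normalizable, hence strongly $\vsubk$-normalizable, and since $\tovsubk$ is finitely branching its reduction tree from $\vsubtoker\tm$ is finite, bounding the length of all derivations and contradicting $\size{\derivtwo_n}\ge n$ for arbitrary $n$.

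The only genuinely delicate point is that the $\Leftarrow$ direction rests on strong normalization for $\vsubkcalc$ (to bound derivation lengths and thereby rule out the simulated divergence), which in turn requires transferring the strong-confluence/termination results of $\vsubcalc$ to its kernel; once that is in place, everything else is a routine prefix-simulation argument together with the exponential-normalization extension of the forward direction.
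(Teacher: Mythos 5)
Your proof is correct and follows essentially the same route as the paper's: the forward direction applies \refthm{sim-vsub-into-vsubk} and extends the simulated derivation with the (always terminating) exponential normalization to $\enf{\tmthree}$, with the multiplicative count preserved because the extension adds only $\expo$-steps. Your backward direction is merely a more explicit rendering---via prefixes with at least $n$ multiplicative steps and finite branching---of the paper's terser contradiction argument that a diverging $\vsub$-derivation from $\tm$ would project to arbitrarily long $\vsubk$-derivations from $\vsubtoker\tm$, which is absurd since $\vsubtoker\tm$ is $\vsubk$-normalizable and hence, by strong confluence, strongly $\vsubk$-normalizable.
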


\begin{proof}\hfill
  \begin{description}
       \item[$\Rightarrow$:] Let $\deriv \colon \tm \tovsub^* \tmtwo$ be a $\vsub$-normalizing derivation and $\derivtwo \colon \vsubtoker\tm \tovsubk^* \tmthree$ be its projection in $\vsubkcalc$, according to \refthm{sim-vsub-into-vsubk}. 
       By \refthmp{sim-vsub-into-vsubk}{normal}, the derivation $\derivthree$ obtained by extending $\derivtwo$ with a normalization with respect to $\toe$ (that always terminate) is a $\vsubk$-normalizing derivation from $\vsubtoker\tm$.
       
       \item[$\Leftarrow$:] By contradiction, suppose that there is a diverging $\vsub$-derivation from $\tm$ in $\vsubcalc$. By \refthm{sim-vsub-into-vsubk} it projects to a $\vsubk$-derivation from $\vsubtoker\tm$ in $\vsubkcalc$ that is at least as long as the one in $\vsubcalc$, absurd.
  \end{description}
  
  About lengths, consider the normalizing derivations $\deriv$, $\derivtwo$, and $\derivthree$ discussed in the proof of the $\Rightarrow$ direction. We have to show that $\sizem\deriv = \sizem\derivthree$. By \refthmp{sim-vsub-into-vsubk}{normal}, $\sizem\deriv = \sizem\derivtwo$. Since $\derivthree$ extends $\derivtwo$ only with exponential steps, we obtain $\sizem\derivthree = \sizem\derivtwo$.
\end{proof}


%

\subsubsection{Proof of Subsection \ref{subsect:vsubk-vseq} (\texorpdfstring{Equivalence of $\vsubkcalc$ and $\vseqcalc$}{Equivalence of Kernel Value Substitution and Value Sequent Substitution})}

\begin{lemma}[Translation and Substitution Commute]
\label{l:tolbarmu-subs}
Let 
$\val$, $\valtwo$ be values and $\tm$ be a term of $\vsubk$.
\begin{enumerate}
\item \emph{Values}: $\tolbarmuv{ (\valtwo \isub\var\val) } = \tolbarmuv\valtwo \isub\var{\tolbarmuv\val}$;
\item \emph{Terms}: $\tolbarmu{ \tm \isub\var\val } = \tolbarmu\tm \isub\var{\tolbarmuv\val}$.
\end{enumerate}
\end{lemma}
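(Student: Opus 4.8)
The plan is to prove the two points simultaneously by mutual induction on the value $\valtwo$ (point~1) and on the $\vsubk$-term $\tm$ (point~2), exactly mirroring the mutual recursion defining $\tolbarmuv{\cdot}$ and $\tolbarmu{\cdot}$. Before starting I would isolate the single genuinely new ingredient: that meta-level substitution commutes with the appending operation. Precisely, for every command $\cm$, environments $\cotm,\cotmtwo$, and any substitution $\isub\var\val$ of a $\vseqcalc$-value $\val$ for $\var$, one has $(\append{\cm}{\cotm})\isub\var\val = \append{(\cm\isub\var\val)}{(\cotm\isub\var\val)}$ and $(\append{\cotmtwo}{\cotm})\isub\var\val = \append{(\cotmtwo\isub\var\val)}{(\cotm\isub\var\val)}$. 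This auxiliary fact is proved by a routine mutual induction on $\cm$ and $\cotmtwo$ following the defining clauses of $\append{\cdot}{\cdot}$; the only clause requiring attention is $\append{(\mutilde\vartwo{\cm})}{\cotm} = \mutilde\varthree{(\append{(\cm\isub\vartwo\varthree)}{\cotm})}$, where the variable $\varthree$ freshly generated by the append must additionally be chosen outside $\fv\val\cup\{\var\}$, so that it does not interact with the outer substitution. Working up to $\alpha$-equivalence this is immediate.

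With the commutation lemma available, the inductive cases of the main statement are direct unfoldings. For point~1, if $\valtwo=\vartwo$ the claim reduces to the two sub-cases $\vartwo=\var$ and $\vartwo\neq\var$ of substitution on a variable; if $\valtwo=\la\vartwo\tmtwo$, choosing $\vartwo\notin\fv\val\cup\{\var\}$ (hence $\vartwo\notin\fv{\tolbarmuv\val}$, using $\fv{\tolbarmuv\val}=\fv\val$), one computes $\tolbarmuv{(\valtwo\isub\var\val)}=\la\vartwo\tolbarmu{(\tmtwo\isub\var\val)}$ and closes the case with the \ih\ for the body $\tmtwo$ (point~2), since $(\la\vartwo\tolbarmu{\tmtwo})\isub\var{\tolbarmuv\val}=\la\vartwo{(\tolbarmu{\tmtwo}\isub\var{\tolbarmuv\val})}$. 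For point~2, if $\tm$ is a value the claim follows from point~1 together with $\comm{\tolbarmuv\valtwo}{\stempty}\isub\var{\tolbarmuv\val}=\comm{(\tolbarmuv\valtwo\isub\var{\tolbarmuv\val})}{\stempty}$ (the environment $\stempty$ being closed). If $\tm=\tmtwo\valthree$ is an application, then $\tolbarmu{(\tm\isub\var\val)}=\append{\tolbarmu{(\tmtwo\isub\var\val)}}{(\stacker{\tolbarmuv{(\valthree\isub\var\val)}}{\stempty})}$; applying the \ih\ to $\tmtwo$ (point~2) and $\valthree$ (point~1) and then the commutation lemma to push $\isub\var{\tolbarmuv\val}$ through the append yields exactly $\tolbarmu{\tm}\isub\var{\tolbarmuv\val}$. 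The case $\tm=\tmtwo\esub\vartwo\tmthree$ is analogous: after renaming $\vartwo$ fresh, unfold $\tolbarmu{(\tm\isub\var\val)}=\append{\tolbarmu{(\tmthree\isub\var\val)}}{\mutilde\vartwo{\tolbarmu{(\tmtwo\isub\var\val)}}}$, apply the \ih\ to $\tmtwo$ and $\tmthree$, and use the commutation lemma together with $(\mutilde\vartwo{\tolbarmu{\tmtwo}})\isub\var{\tolbarmuv\val}=\mutilde\vartwo{(\tolbarmu{\tmtwo}\isub\var{\tolbarmuv\val})}$.

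The main obstacle is not any individual computation but the coherent management of the three binders at play---the $\l$ of abstractions, the $\mutildesym$ of the ES clause, and the variable freshly created inside $\append{\cdot}{\cdot}$---since all the equalities hold only modulo $\alpha$-equivalence. I would therefore fix once and for all the Barendregt-style convention that bound variables are taken distinct from $\var$ and from $\fv\val$ (equivalently from $\fv{\tolbarmuv\val}$), and check that the append--substitution commutation lemma respects this choice. Once that bookkeeping is in place, every case of the mutual induction is a mechanical unfolding of the definitions of $\tolbarmu{\cdot}$, $\tolbarmuv{\cdot}$, $\append{\cdot}{\cdot}$ and capture-avoiding substitution.
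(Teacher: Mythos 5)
Your proof is correct and follows essentially the same route as the paper's: a mutual induction on values and terms mirroring the mutual recursion defining $\tolbarmuv{(\cdot)}$ and $\tolbarmu{\cdot}$, with the standard freshness conventions on the bound variables, and with each case closed by unfolding the translation. The only difference is that you explicitly isolate and prove the commutation of meta-level substitution with $\append{\cdot}{\cdot}$ (including the fresh-variable side condition in the $\mutildesym$ clause), a step the paper's equational chains in the application and explicit-substitution cases use silently---a harmless sharpening rather than a different approach.
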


\begin{proof}\hfill
\begin{enumerate}
\item Cases:
\begin{itemize}
\item \emph{Variable}, \ie $\tm = \var$. Then $\tolbarmuv{ (\var \isub\var\val) } = \tolbarmuv{ \val } = \var \isub\var{\tolbarmuv\val} = \tolbarmuv\var \isub\var{\tolbarmuv\val}$.

	  \item \emph{Abstraction}, \ie $\tm = \la\vartwo\tmtwo$. 
  \[\begin{array}{rclcc}
     \tolbarmuv{ ((\la\vartwo\tmtwo) \isub\var\val)}
     & = & 
     \tolbarmuv{ (\la\vartwo\tmtwo \isub\var\val)}
     \\
     & = & 
	\la\vartwo\tolbarmu{\tmtwo\isub\var\val} 
     \\
     & =_{P.2} & 
	\la\vartwo\tolbarmu\tmtwo\isub\var{\tolbarmuv\val}
     \\
     & = & 
	(\la\vartwo\tolbarmu\tmtwo)  \isub\var{\tolbarmuv\val}
     & = & 
	\tolbarmuv{(\la\vartwo\tmtwo)}  \isub\var{\tolbarmuv\val}
    \end{array}\]

\end{itemize}

\item By induction on $\tm$. Cases:
\begin{itemize}
	\item \emph{Value}, \ie $\tm = \valtwo$. Note that $\valtwo \isub\var\val$ is a value. Then $\tolbarmu{ \valtwo \isub\var\val } = \comm{ \tolbarmuv{ (\valtwo \isub\var\val) } }\stempty =_{P.1} \comm{ \tolbarmuv\valtwo \isub\var{\tolbarmuv\val}} \stempty = \comm{ \tolbarmuv\valtwo } \stempty \isub\var{\tolbarmuv\val} = \tolbarmu\valtwo \isub\var{\tolbarmuv\val}$.
	
	\item \emph{Application}, \ie $\tm = \tmtwo\valtwo$.
  \[\begin{array}{rclcc}
     \tolbarmu{\tmtwo\valtwo \isub\var\val}
     & = & 
     \tolbarmu{\tmtwo\isub\var\val \valtwo\isub\var\val}
     \\
     & = & 
     \append{ \tolbarmu{ \tmtwo\isub\var\val } }{ (\stacker{\tolbarmuv{\valtwo\isub\var\val}}\stempty) }
     \\
     & =_\ih & 
     \append{ \tolbarmu\tmtwo \isub\var{ \tolbarmuv\val } }{ (\stacker{\tolbarmuv{\valtwo\isub\var\val}}\stempty) }
     \\
     & =_{P.1} & 
     \append{ \tolbarmu\tmtwo \isub\var{ \tolbarmuv\val } }{ (\stacker{\tolbarmuv\valtwo\isub\var{ \tolbarmuv\val }}\stempty) }
     \\
     & = & 
     \append{ \tolbarmu\tmtwo }{ (\stacker{ \tolbarmuv\valtwo }\stempty) } \isub\var{ \tolbarmuv\val }
     & = & 
     \tolbarmu{ \tmtwo\valtwo } \isub\var{ \tolbarmuv\val }
    \end{array}\]
    
  \item \emph{Substitution}, \ie $\tm = \tmtwo\esub\vartwo\tmthree$. 
   \[\begin{array}{rclcc}
     \tolbarmu{\tmtwo\esub\vartwo\tmthree \isub\var\val}
     & = & 
     \tolbarmu{\tmtwo\isub\var\val \esub\vartwo{ \tmthree\isub\var\val }}
     \\
     & = & 
     \append{ \tolbarmu{ \tmthree\isub\var\val } }{ \mutilde\vartwo\tolbarmu{\tmtwo\isub\var\val} } 
     \\
     & =_\ih & 
     \append{ \tolbarmu\tmthree \isub\var{ \tolbarmuv\val } }{ \mutilde\vartwo\tolbarmu\tmtwo \isub\var{ \tolbarmuv\val } } 
     \\
     & = & 
          (\append{ \tolbarmu\tmthree }{ \mutilde\vartwo\tolbarmu\tmtwo }) 
 \isub\var{ \tolbarmuv\val }
     & = & 
     \tolbarmu{ \tmtwo\esub\vartwo\tmthree } \isub\var{ \tolbarmuv\val }
    \end{array}\]
  \qedhere
\end{itemize}

\end{enumerate}
\end{proof}

\begin{lemma}
  \label{l:cotctx}
 Let 
 $\tm$ be a $\vsubk$-term. Then there exist a command evaluation context  $\cmctx$ and a environment evaluation context $\cotctx$ such that $\tolbarmu\tm = \cmctxp{\cotctxp\stempty}$.
\end{lemma}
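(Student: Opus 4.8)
The plan is to argue by structural induction on the $\vsubk$-term $\tm$, following the grammar $\tm \grameq \val \mid \tm\val \mid \tm\esub\var\tmtwo$ of \reffig{lambdamu-calculus}. The only external ingredient is the first point of \reflemma{append-commutes}, namely $\append{\cmctxp\cm}\cotm = \cmctxp{\append\cm\cotm}$ and $\append{\cotctxp\cotmtwo}\cotm = \cotctxp{\append\cotmtwo\cotm}$, which lets me push an append through a command or an environment evaluation context; I also use the defining clause $\append\stempty\cotm = \cotm$. Besides these, I first record one auxiliary closure property, proved by a quick induction on $\cmctx$: command evaluation contexts compose, i.e.\ if $\cmctx$ and $\cmctxtwo$ are command evaluation contexts then so is $\cmctxp{\cmctxtwo}$ (for $\cmctx = \ctxhole$ this is just $\cmctxtwo$; for $\cmctx = \cotctxp{\mutilde{\var}{\cmctxthree}}$ it equals $\cotctxp{\mutilde{\var}{\cmctxthreep{\cmctxtwo}}}$, which matches the inductive clause of the command-context grammar once $\cmctxthreep{\cmctxtwo}$ is a command context by the \ih).

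The base case $\tm = \val$ is immediate: since $\tolbarmu\val = \comm\val\stempty$, the pair $\cmctx \defeq \ctxhole$ and $\cotctx \defeq \comm\val\ctxhole$ (the base clause of the environment-context grammar) gives $\tolbarmu\val = \cmctxp{\cotctxp\stempty}$. In the application case $\tm = \tmtwo\val$ I start from $\tolbarmu{\tmtwo\val} = \append{\tolbarmu\tmtwo}{(\stacker{\tolbarmuv\val}\stempty)}$ and use the \ih $\tolbarmu\tmtwo = \cmctxp{\cotctxp\stempty}$. Pushing the append inward with \reflemmap{append-commutes}{ev-ctxs}, first past $\cmctx$ and then past $\cotctx$, and simplifying $\append\stempty{(\stacker{\tolbarmuv\val}\stempty)} = \stacker{\tolbarmuv\val}\stempty$, yields $\cmctxp{\cotctxp{\stacker{\tolbarmuv\val}\stempty}}$. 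It then remains to absorb the trailing stack entry into the environment context: $\cotctxp{\stacker{\tolbarmuv\val}\ctxhole}$ is again an environment evaluation context (inductive clause of its grammar, with value $\tolbarmuv\val$), so the witnesses are $\cmctx$ and $\cotctxp{\stacker{\tolbarmuv\val}\ctxhole}$.

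The case $\tm = \tmtwo\esub\var\tmthree$ is where the work concentrates, because two decompositions must be combined. Here $\tolbarmu{\tmtwo\esub\var\tmthree} = \append{\tolbarmu\tmthree}{\mutilde{\var}{\tolbarmu\tmtwo}}$; applying the \ih to $\tmthree$ gives $\tolbarmu\tmthree = \cmctxtwop{\cotctxtwop\stempty}$, and pushing the append inward as before collapses the expression to $\cmctxtwop{\cotctxtwop{\mutilde{\var}{\tolbarmu\tmtwo}}}$. Now the \ih on $\tmtwo$ supplies $\tolbarmu\tmtwo = \cmctxp{\cotctxp\stempty}$, turning the inner environment into $\mutilde{\var}{\cmctxp{\cotctxp\stempty}}$ and the whole command into $\cmctxtwop{\cotctxtwop{\mutilde{\var}{\cmctxp{\cotctxp\stempty}}}}$. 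The crucial observation is that $\cotctxtwop{\mutilde{\var}{\cmctx}}$ matches exactly the inductive clause $\cotctxp{\mutilde{\var}{\cmctx}}$ of the command-context grammar, hence is itself a command evaluation context; composing it with $\cmctxtwo$ via the closure property above produces a single command evaluation context $\cmctxtwop{\cotctxtwop{\mutilde{\var}{\cmctx}}}$ whose hole is filled by $\cotctxp\stempty$. This delivers the required decomposition, with environment evaluation context $\cotctx$. The main obstacle is precisely this last step: one must recognise the merged $\mu$-layer as a command context and know that command contexts compose, so that the distinguished $\stempty$ originating from $\tmtwo$ ends up at the bottom of a genuine environment context while everything above it forms a genuine command context.
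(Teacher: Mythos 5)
Your proof is correct and follows essentially the same route as the paper's: structural induction on $\tm$ with the same witnesses in every case, including taking $\cmctxp{\cotctxtwop{\mutilde{\var}{\cmctxtwo}}}$ as the command context in the explicit-substitution case. The only difference is that you make explicit (and rightly prove) the closure of command evaluation contexts under composition, a fact the paper's proof uses silently when it declares $\cmctxtwop{\cotctxtwop{\mutilde\var{\cmctxthree}}}$ to be a command evaluation context.
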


\begin{proof}
 By induction on $\tm$. Cases:
 \begin{enumerate}
  \item \emph{Variable}, \ie $\tm = \var$. Trivial just take $\cmctx \defeq \ctxhole$ and $\cotctx \defeq \comm\var\ctxhole$.
  \item \emph{Abstraction}, \ie $\tm = \la\var\tmtwo$. Trivial just take $\cmctx \defeq \ctxhole$ and $\cotctx \defeq \comm{\la\var\tolbarmu\tmtwo}\ctxhole$.
  \item \emph{Application}, \ie $\tm = \tmtwo\val$.
  \[\begin{array}{rclcc}
     \tolbarmu{\tmtwo\val}
     & = & 
     \append{ \tolbarmu\tmtwo }{ (\stacker{\tolbarmuv\val}\stempty) }
     \\
     & =_\ih & 

     \append{ \cmctxtwop{\cotctxtwop\stempty} }{ (\stacker{\tolbarmuv\val}\stempty) }
     \\
     & = & 
     \cmctxtwop{\cotctxtwop{\stacker{\tolbarmuv\val}\stempty}}
    \end{array}\]
    
    The statement holds with respect to $\cmctx \defeq \cmctxtwo$ and  
    $\cotctx \defeq \cotctxtwop{\stacker\var\ctxhole}$.
    
  \item \emph{Substitution}, \ie $\tm = \tmtwo\esub\var\tmthree$. 
  \[\begin{array}{rclcc}
     \tolbarmu{\tmtwo\esub\var\tmthree}
     & = & 
     \append{\tolbarmu\tmthree}{\mutilde\var{\tolbarmu\tmtwo}}
     \\
     & =_\ih & 
     \append{ \cmctxtwop{\cotctxtwop\stempty} }{\mutilde\var{\tolbarmu\tmtwo}}
     \\
     & = & 
     \cmctxtwop{\cotctxtwop{ \mutilde\var{\tolbarmu\tmtwo} }}
     \\
     & =_\ih & 
     \cmctxtwop{\cotctxtwop{ \mutilde\var{\cmctxthreep{\cotctxthreep\stempty}} }}     
    \end{array}\]
    
    The statement holds with respect to $\cmctx \defeq \cmctxtwop{\cotctxtwop{\mutilde\var{\cmctxthree}}}$ and  
    $\cotctx \defeq \cotctxthree$. 
    \qedhere
 \end{enumerate}

\end{proof}


\begin{lemma}
  \label{l:sctx-mutilde}
 Let 
 $\sctx$ be a substitution context of $\vsub$. There exists a command evalution context $\cmctx$ such that $\tolbarmu{\sctxp\tm} = \cmctxp{\tolbarmu\tm}$ for any $\vsubk$-term $\tm$. Moreover, $\fv\cmctx = \fv\sctx$ and $\cmctx$ and $\sctx$ capture they same variables of $\tm$.
\end{lemma}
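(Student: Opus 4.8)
The plan is to proceed by induction on the structure of the substitution context $\sctx$. The base case $\sctx = \ctxhole$ is immediate: I would take $\cmctx \defeq \ctxhole$, so that $\tolbarmu{\sctxp\tm} = \tolbarmu\tm = \cmctxp{\tolbarmu\tm}$ for every $\vsubk$-term $\tm$, and the conditions $\fv\cmctx = \fv\sctx$ and on captured variables hold vacuously since both contexts are empty.

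For the inductive case $\sctx = \sctxtwo\esub\var\tmtwo$, I would first unfold the translation of the explicit substitution, obtaining $\tolbarmu{\sctxp\tm} = \tolbarmu{\sctxtwop\tm\esub\var\tmtwo} = \append{\tolbarmu\tmtwo}{\mutilde\var{\tolbarmu{\sctxtwop\tm}}}$. The induction hypothesis applied to $\sctxtwo$ yields a command evaluation context $\cmctxtwo$, independent of $\tm$ since the statement is universally quantified over $\tm$, such that $\tolbarmu{\sctxtwop\tm} = \cmctxtwop{\tolbarmu\tm}$. Then I would invoke \reflemma{cotctx} to decompose $\tolbarmu\tmtwo = \cmctxthreep{\cotctxthreep\stempty}$ for a command evaluation context $\cmctxthree$ and an environment evaluation context $\cotctxthree$, and push the append through these contexts using \reflemmap{append-commutes}{ev-ctxs} twice together with $\append\stempty\cotm = \cotm$:
\begin{align*}
  \append{\cmctxthreep{\cotctxthreep\stempty}}{\mutilde\var{\cmctxtwop{\tolbarmu\tm}}}
  &= \cmctxthreep{\append{\cotctxthreep\stempty}{\mutilde\var{\cmctxtwop{\tolbarmu\tm}}}} \\
  &= \cmctxthreep{\cotctxthreep{\append\stempty{\mutilde\var{\cmctxtwop{\tolbarmu\tm}}}}} \\
  &= \cmctxthreep{\cotctxthreep{\mutilde\var{\cmctxtwop{\tolbarmu\tm}}}}.
\end{align*}
This suggests setting $\cmctx \defeq \cmctxthreep{\cotctxthreep{\mutilde\var\cmctxtwo}}$, which indeed satisfies $\cmctxp{\tolbarmu\tm} = \tolbarmu{\sctxp\tm}$ for all $\tm$.

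It then remains to check that $\cmctx$ is a genuine command evaluation context and that the variable conditions transfer. First, $\cotctxthreep{\mutilde\var\cmctxtwo}$ is a command evaluation context directly by the production $\cmctx \grameq \cotctxp{\mutilde\var\cmctx}$, and hence $\cmctx = \cmctxthreep{\cotctxthreep{\mutilde\var\cmctxtwo}}$ is a command evaluation context by the routine auxiliary fact that command evaluation contexts are closed under composition, proved by a straightforward induction on the outer context. For the side conditions, the contexts $\cmctxthree, \cotctxthree$ arising from \reflemma{cotctx} contribute exactly $\fv{\tolbarmu\tmtwo} = \fv\tmtwo$ and capture nothing over the hole, which lies inside $\cmctxtwo$ past the new $\mutildesym$-binder; the binder $\mutilde\var{\cdot}$ captures $\var$ over $\cmctxtwo$; and by the induction hypothesis $\cmctxtwo$ has free variables $\fv\sctxtwo$ and captures the same variables as $\sctxtwo$. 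Combining these gives $\fv\cmctx = (\fv\sctxtwo \setminus \set\var) \cup \fv\tmtwo = \fv\sctx$ and that $\cmctx$ captures $\var$ together with the variables captured by $\sctxtwo$, matching $\sctx$. I expect the main obstacle to be precisely this bookkeeping of binders: one must verify that the $\mutildesym$-binder produced by translating the explicit substitution sits exactly where the substitution binds over the hole, and that the decomposition of \reflemma{cotctx} leaves the scope of the hole undisturbed; the rest is a direct computation.
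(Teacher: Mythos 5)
Your proof is correct and follows essentially the same route as the paper's: induction on $\sctx$, with $\cmctx \defeq \ctxhole$ in the base case, and in the inductive case unfolding the translation of the ES, applying the induction hypothesis, decomposing $\tolbarmu\tmtwo$ via \reflemma{cotctx}, commuting the append through the contexts, and setting $\cmctx \defeq \cmctxthreep{\cotctxthreep{\mutilde\var\cmctxtwo}}$. You are in fact slightly more careful than the paper, which leaves implicit both the closure of command evaluation contexts under composition and the variable bookkeeping for the \emph{moreover} part (dismissed there by appeal to a ``moreover'' clause of \reflemma{cotctx} that is not even stated); your explicit accounting, read with the usual capture-avoiding conventions built into the append operation, fills these gaps correctly.
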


\begin{proof}
 By induction on $\sctx$. Cases:
    \begin{enumerate}
      \item \emph{Empty Context}, \ie $\sctx = \ctxhole$. Just take $\cmctx \defeq \ctxhole$.
      \item \emph{Non-Empty Context}, \ie $\sctx = \sctxtwo\esub\var\tmtwo$. Then
      \[\begin{array}{rclcc}
     \tolbarmu{\sctxtwop\tm \esub\var\tmtwo}
     & = & 
     \append{\tolbarmu{\tmtwo}}{\mutilde{\var}{\tolbarmu{\sctxtwop\tm}}}
     \\
     & =_\ih & 
     \append{\tolbarmu{\tmtwo}}{\mutilde{\var}{\cmctxtwop{\tolbarmu\tm}}}
     \\
     & =_{\reflemmaeq{cotctx}} & 
     \append{\cmctxthreep{\cotctxp{\stempty}}}{ \mutilde{\var}{\cmctxtwop{\tolbarmu\tm}} }
     \\
     & = & 
     \cmctxthreep{\cotctxp{ \mutilde{\var}{\cmctxtwop{\tolbarmu\tm}} }}
    \end{array}\]
    The statement holds with respect to $\cmctx \defeq \cmctxthreep{\cotctxp{\mutilde{\var}{\cmctxtwo}}}$. The \emph{moreover} part follows from the \emph{moreover} part of \reflemma{cotctx} and the \ih
     \qedhere
    \end{enumerate}
\end{proof}

\setcounter{lemmaAppendix}{\value{l:lbarmut-simulates-vsubk}}
\begin{lemmaAppendix}[Simulation of $\tovsubk$ by $\tovseq$]
\label{lappendix:lbarmut-simulates-vsubk}
  Let 
  \NoteState{l:lbarmut-simulates-vsubk}
  $\tm, \tmtwo \in \vsubkterms$.
  \begin{enumerate}
    \item \emph{Multiplicative}: if $\tm \tom \tmtwo$ then 
    $\tolbarmu\tm \tobvmu \tolbarmu\tmtwo$.
    \item \emph{Exponential}: if $\tm \toe \tmtwo$ then 
    $\tolbarmu\tm \tomut \tolbarmu\tmtwo$.
  \end{enumerate}
\end{lemmaAppendix}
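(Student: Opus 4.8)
The plan is to prove both points simultaneously by induction on the $\vsubk$-evaluation context $\evctx$ in which the fired redex occurs, writing $\tm = \evctxp\tmfour$ and $\tmtwo = \evctxp\tmfourp$ with $\tmfour \rootRew{\mathsf{x}} \tmfourp$ for the relevant top-level rule ($\mathsf{x} = \mult$ for point~1, and $\mathsf{x} \in \{\expoabs, \expovar\}$ for point~2). The workhorses will be the three commutation lemmas for the translation: \reflemma{tolbarmu-subs} (translation commutes with substitution of values), \reflemma{sctx-mutilde} (a substitution context $\sctx$ becomes a command evaluation context under $\tolbarmu\cdot$), and \reflemma{append-commutes} (append commutes both with evaluation contexts and with reduction steps), used alongside the defining equations of the append operation and the identities $\append\stempty\cotm = \cotm$ and $\append\cm\stempty = \cm$ (the latter up to $\alpha$).

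For the base case $\evctx = \ctxhole$ I would compute $\tolbarmu\cdot$ of both sides explicitly. For a root $\mult$-step $\sctxp{\la\var\tmthree}\val \rtom \sctxp{\tmthree\esub\var\val}$, applying the translation and \reflemma{sctx-mutilde} rewrites $\tolbarmu{\sctxp{\la\var\tmthree}}$ as $\cmctxp{\comm{\la\var\tolbarmu\tmthree}\stempty}$; pulling $\cmctx$ out of the outer append via \reflemmap{append-commutes}{ev-ctxs} and simplifying the resulting append with its defining clauses lands me on the command evaluation context $\cmctx$ filled with the $\rtobvmu$-redex $\comm{\la\var\tolbarmu\tmthree}{\stacker{\tolbarmuv\val}\stempty}$. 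Firing it and collapsing $\append{(\mutilde\var\tolbarmu\tmthree)}\stempty$ to $\mutilde\var\tolbarmu\tmthree$ yields $\cmctxp{\comm{\tolbarmuv\val}{\mutilde\var\tolbarmu\tmthree}}$, which the same equalities identify with $\tolbarmu{\sctxp{\tmthree\esub\var\val}} = \tolbarmu\tmtwo$. The two exponential root steps, $\tmthree\esub\var{\sctxp\val} \rtoe \sctxp{\tmthree\isub\var\val}$ (where $\val$ is an abstraction for $\expoabs$ and a variable for $\expovar$), are handled uniformly: the analogous manipulations produce the $\rtomut$-redex $\comm{\tolbarmuv\val}{\mutilde\var\tolbarmu\tmthree}$, whose contractum $\tolbarmu\tmthree\isub\var{\tolbarmuv\val}$ equals $\tolbarmu{\tmthree\isub\var\val}$ by \reflemma{tolbarmu-subs} and hence matches $\tolbarmu\tmtwo = \cmctxp{\tolbarmu{\tmthree\isub\var\val}}$.

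For the inductive step I would run through the three non-empty productions of $\vsubk$-evaluation contexts. When $\evctx = \evctxtwo\val$ (reduction in function position) one has $\tolbarmu\tm = \append{\tolbarmu{\evctxtwop\tmfour}}{(\stacker{\tolbarmuv\val}\stempty)}$, and when $\evctx = \tm_0\esub\var\evctxtwo$ (reduction in the ES content) one has $\tolbarmu\tm = \append{\tolbarmu{\evctxtwop\tmfour}}{\mutilde\var\tolbarmu{\tm_0}}$; in both the reduced subterm sits in the left argument of the append, so the IH step transports directly through the fixed right argument by \reflemmap{append-commutes}{cm-steps}. The remaining production $\evctx = \evctxtwo\esub\var\tm_0$ (reduction in the ES body) is the awkward one: there $\tolbarmu\tm = \append{\tolbarmu{\tm_0}}{\mutilde\var\tolbarmu{\evctxtwop\tmfour}}$, so the reduced subterm lands under a $\mutildesym$ in the \emph{right} argument, outside the reach of \reflemma{append-commutes}. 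Here I would instead invoke \reflemma{cotctx} to write $\append{\tolbarmu{\tm_0}}{\cotm} = \cmctxp{\cotctxp\cotm}$ for suitable evaluation contexts, and then observe that $\cmctx\langle\cotctx\langle\mutilde\var{\ctxhole}\rangle\rangle$ is again a command evaluation context (by the clause $\cmctx \grameq \cotctxp{\mutilde\var\cmctx}$), so that the IH step, being closed under command evaluation contexts, fires inside it.

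The main obstacle I anticipate is \emph{kind preservation}. The whole point is that $\mult$-steps must map onto $\lambdabar$-steps and $\expo$-steps onto $\tilde{\mu}$-steps \emph{exactly} (this is what \refthm{sim-vsubk-into-vseq} will later count), yet \reflemmap{append-commutes}{cm-steps} is stated with a generic $\tovseq$ conclusion. I would discharge this by noting that the proof of \reflemmap{append-commutes}{cm-steps} proceeds by cases on the fired rule and never converts one rule into the other, so it can be read as kind-preserving; for the ES-body case the issue does not even arise, since closure under command evaluation contexts is precisely how $\tobvmu$ and $\tomut$ are defined. The only other delicate points are routine: the hidden $\alpha$-renaming in $\append{(\mutilde\var\cm)}\cotm$, and the congruence of reduction through $\mutildesym$ supplied by \refdef{env-reduction} and \refrmk{env-reduction}.
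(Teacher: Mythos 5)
Your proof is correct, and on the root cases and two of the three inductive cases it is the paper's proof: the explicit unfoldings via \reflemma{sctx-mutilde}, \reflemmap{append-commutes}{ev-ctxs} and \reflemma{tolbarmu-subs} for the $\rtom$ and $\rtoe$ redexes, and the transport through the fixed right argument of the append via \reflemmap{append-commutes}{cm-steps} for $\evctx = \evctxtwo\val$ and $\evctx = \tm_0\esub\var\evctxtwo$, all match. The one genuine divergence is the case you rightly single out as awkward, $\evctx = \evctxtwo\esub\var\tm_0$, where the redex ends up under a $\mutildesym$ in the \emph{right} argument of the append. The paper settles it in one line with a lemma you did not spot: the Append lemma (\reflemma{cosubstitution}), which states, for each $\Rule \in \{\lambdabar, \mut, \vseq\}$ separately, that $\cotm \torule \cotmtwo$ implies $\append\cm\cotm \torule \append\cm\cotmtwo$; since $\mutilde\var{\tolbarmu{\evctxtwop\tmfour}} \torule \mutilde\var{\tolbarmu{\evctxtwop\tmfourp}}$ by \refdef{env-reduction}, the case closes immediately, and the rule-indexed statement gives kind preservation there for free. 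Your alternative---factoring $\tolbarmu{\tm_0} = \cmctxp{\cotctxp\stempty}$ via \reflemma{cotctx} so that the redex sits inside the command evaluation context $\cmctxp{\cotctxp{\mutilde\var\ctxhole}}$---is also sound and makes kind preservation definitional, since $\tobvmu$ and $\tomut$ are by definition closed under command evaluation contexts; but you should make explicit the two small facts it rests on: the identity $\append{\cmctxp{\cotctxp\stempty}}{\cotm} = \cmctxp{\cotctxp\cotm}$ (which follows from \reflemmap{append-commutes}{ev-ctxs} together with the clause $\append\stempty\cotm = \cotm$, up to $\alpha$-renaming of the $\mutildesym$-binders in the contexts) and the closure of command evaluation contexts under composition (a one-line induction on the grammar $\cmctx \grameq \ctxhole \mid \cotctxp{\mutilde\var\cmctx}$, also needed to absorb the context coming with the inductive hypothesis). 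Finally, your worry about \reflemmap{append-commutes}{cm-steps} being stated with a generic $\tovseq$ conclusion is well placed, and your discharge is the correct one: its proof is by cases on the fired rule and never converts a $\lambdabar$-step into a $\mut$-step or vice versa, a kind-preserving reading on which the paper itself tacitly relies when it later counts steps in \refthm{sim-vsubk-into-vseq}.
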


\begin{proof}
Both points are proved by induction on the evaluation context $\evctx$ in which the step takes place. Cases:
 \begin{itemize}
  \item \emph{Root case}, \ie $\evctx = \ctxhole$.
  \begin{enumerate}
    \item \emph{Multiplicative Step}: $\tm = \sctxp{\la\var\tmthree}\val \rtom \sctxp{\tmthree\esub\var\val} = \tmtwo$. 
  
    \[\begin{array}{rclcc}
     \tolbarmu{\sctxp{\la\var\tmthree}\val}
     & = & 
     \append{\tolbarmu{\sctxp{\la\var\tmthree}}}{(\stacker{\tolbarmuv{\val}}\stempty)}
     \\
     & =_\reflemmaeq{sctx-mutilde} & 
     \append{\cmctxp{\tolbarmu{\la\var\tmthree}}}{(\stacker{\tolbarmuv{\val}}\stempty)}     
     \\
     & = & 
     \append{\cmctxp{\comm{\la\var\tolbarmu\tmthree}\stempty  } }{(\stacker{\tolbarmuv{\val}}\stempty)}
     \\
     & = & 
     \cmctxp{\append{\comm{\la\var\tolbarmu\tmthree}\stempty}  {(\stacker{\tolbarmuv{\val}}\stempty)}} 
     \\
     & = & 
     \cmctxp{\comm{\la\var\tolbarmu\tmthree}{ \stacker{\tolbarmuv{\val}}\stempty} } 
     \\
     & \tobvmu  & 
     \cmctxp{\comm{ \tolbarmuv\val} { \append{\mutilde\var \tolbarmu\tmthree}\stempty }}       
     \\
     & = & 
     \cmctxp{\comm{ \tolbarmuv\val} { \mutilde\var \tolbarmu\tmthree }}       
     \\
     & = & 
     \cmctxp{\tolbarmu{\tmthree\esub\var\val}}       
     
     & =_\reflemmaeq{sctx-mutilde} & 
     \tolbarmu{\sctxp{\tmthree\esub\var\val}}
    \end{array}\]

    \item \emph{Exponential Step}: $\tm = \tmthree\esub\var{\sctxp\val} \rtoe \sctxp{\tmthree\isub\var\val} = \tmtwo$. 
    
    \[\begin{array}{rclcc}
     \tolbarmu{\tmthree\esub\var{\sctxp\val}}
          & = & 
     \append{\tolbarmu{\sctxp{\val}}}{\mutilde\var\tolbarmu\tmthree}
     \\
     & =_\reflemmaeq{sctx-mutilde} & 
     \append{\cmctxp{\tolbarmu{\val}}}{\mutilde\var\tolbarmu\tmthree}
     \\
     & = & 
     \append{\cmctxp{ \comm{\tolbarmuv\val}\stempty }}{\mutilde\var\tolbarmu\tmthree}
     \\
     & = & 
     \cmctxp{ \comm{\tolbarmuv\val}{ \mutilde\var\tolbarmu\tmthree }}
     \\
     & \tomut & 
     \cmctxp{ \tolbarmu\tmthree \isub\var{\tolbarmuv\val} }
     \\
     & =_{\reflemmaeq{tolbarmu-subs}} & 
     \cmctxp{ \tolbarmu{\tmthree \isub\var\val} }
	  & =_\reflemmaeq{sctx-mutilde} & 
     \tolbarmu{\sctxp{\tmthree\isub\var\val}}
    \end{array}\]

  \end{enumerate}
  \item \emph{Inductive Cases}: for each case the two points differs only in the kind of the rewriting step, so we treat them compactly, by referring to $\tovsubk$ and $\tovseq$
  \begin{itemize}
  	\item \emph{Left Application}, \ie $\tm = \tmthree \val \tovsubk \tmfour \val = \tmtwo$ with $\tmthree \tovsubk \tmfour$. By \ih, $\tolbarmu\tmthree \tovseq \tolbarmu\tmfour$. And by \reflemmap{append-commutes}{cm-steps}, $\tolbarmu{ \tmthree \val } = \append{ \tolbarmu\tmthree}  { (\stacker{ \tolbarmu\val} \stempty) } \tovseq \append{ \tolbarmu\tmfour}  { (\stacker{ \tolbarmu\val}\stempty) } = \tolbarmu{ \tmfour \val }$.
	
  	\item \emph{Left of a Substitution}: \ie $\tm = \tmthree \esub\var\tmfive \tovsubk \tmfour \esub\var\tmfive = \tmtwo$ with $\tmthree \tovsubk \tmfour$. By \ih, $\tolbarmu\tmthree \tovseq \tolbarmu\tmfour$. By \reflemma{cosubstitution}, $\tolbarmu{ \tmthree \esub\var\tmfive } = \append{ \tolbarmu\tmfive}  { \mutilde\var\tolbarmu\tmthree } \tovseq \append{ \tolbarmu\tmfive}  { \mutilde\var\tolbarmu\tmfour } = \tolbarmu{ \tmthree \esub\var\tmfour }$.
	
  	\item \emph{Inside a Substitution}:  \ie $\tm = \tmthree \esub\var\tmfive \tovsubk \tmthree \esub\var\tmfour = \tmtwo$ with $\tmfive \tovsubk \tmfour$. By \ih, $\tolbarmu\tmfive \tovseq \tolbarmu\tmfour$. And by \reflemmap{append-commutes}{cm-steps}, $\tolbarmu{ \tmthree \esub\var\tmfive } = \append{ \tolbarmu\tmfive}  { \mutilde\var\tolbarmu\tmthree } \tovseq \append{ \tolbarmu\tmfour}  { \mutilde\var\tolbarmu\tmthree } = \tolbarmu{ \tmthree \esub\var\tmfour }$.
  	\qedhere
  \end{itemize}
 \end{itemize}

\end{proof}

\setcounter{theoremAppendix}{\value{thm:sim-vsubk-into-vseq}}
\begin{theoremAppendix}[Quantitative Simulation of $\vsubkcalc$ in $\vseqcalc$]
\label{thmappendix:sim-vsubk-into-vseq}
  Let %
\NoteState{thm:sim-vsubk-into-vseq}
  $\tm, \tmtwo \in \vsubkterms$.
  If $\deriv \colon \tm \tovsubk^* \tmtwo$ then there is $\derivtwo \colon \tolbarmu\tm \tovseq^* \tolbarmu\tmtwo$ such that ($\sizevseq\derivtwo$ denotes the length of $\derivtwo$)
  
  \begin{varenumerate}
    	\item\label{pappendix:sim-vsubk-into-vseq-mult} \emph{Multiplicative Steps}: $\sizem{\deriv} = \sizehole{\lambdabar}{\derivtwo}$ (the number $\lambdabar$-steps in $\derivtwo$);
	\item\label{pappendix:sim-vsubk-into-vseq-exp} \emph{Exponential Steps}: $\sizee\deriv = \sizehole{\mutildesym}\derivtwo$ (the number $\mut$-steps in $\derivtwo$), so $\sizevsubk\deriv = \sizevseq\derivtwo$;
    \item\label{pappendix:sim-vsubk-into-vseq-normal} \emph{Normal Form}: if $\tmtwo$ is $\vsubk$-normal then $\tolbarmu\tmtwo$ is $\vseq$-normal.
  \end{varenumerate}
\end{theoremAppendix}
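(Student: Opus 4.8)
The plan is to prove \refth{sim-vsubk-into-vseq} by a straightforward induction on the length $\size\deriv$ of the derivation, since all the real work has been factored out into the one-step simulation \reflemma{lbarmut-simulates-vsubk}. First I would handle the empty case $\size\deriv = 0$: then $\tm = \tmtwo$, and $\derivtwo$ is the empty derivation $\tolbarmu\tm \tovseq^* \tolbarmu\tm$, so the multiplicative and exponential counts are all $0$ and trivially match. For the inductive step, I would split $\deriv$ as $\tm \tovsubk^* \tmthree \tovsubk \tmtwo$, apply the \ih to $\tm \tovsubk^* \tmthree$ to obtain $\derivtwo' \colon \tolbarmu\tm \tovseq^* \tolbarmu\tmthree$ with matching step counts, and then apply \reflemma{lbarmut-simulates-vsubk} to the last step: a $\mult$-step gives a single $\lambdabar$-step $\tolbarmu\tmthree \tobvmu \tolbarmu\tmtwo$, while a $\toe$-step gives a single $\mut$-step $\tolbarmu\tmthree \tomut \tolbarmu\tmtwo$. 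Concatenating $\derivtwo'$ with this last step yields $\derivtwo$, and since each $\vsubk$-step is mapped to exactly one $\vseq$-step of the corresponding kind, the counts $\sizem\deriv = \sizehole{\lambdabar}\derivtwo$ and $\sizee\deriv = \sizehole{\mutildesym}\derivtwo$ are preserved; summing them gives $\sizevsubk\deriv = \sizevseq\derivtwo$ for Point~\refpointmute{sim-vsubk-into-vseq-exp}.

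The remaining point, the preservation of normal forms (Point~\refpointmute{sim-vsubk-into-vseq-normal}), is independent of the derivation and must be proved separately: if $\tmtwo$ is $\vsubk$-normal then $\tolbarmu\tmtwo$ is $\vseq$-normal. I would prove the contrapositive, by induction on the structure of the $\vsubk$-term $\tmtwo$, showing that a $\vseq$-redex in $\tolbarmu\tmtwo$ forces a $\vsubk$-redex in $\tmtwo$. The translation $\tolbarmu\cdot$ builds commands by appending $\mutildesym$-environments for explicit substitutions and pushing values onto the stack for applications, so a $\tomut$-redex $\comm\val{\mutilde\var\cm}$ in $\tolbarmu\tmtwo$ can only come from a subterm $\tmthree\esub\var{\sctxp\val}$ of $\tmtwo$, i.e.\ an $\expo$-redex, while a $\tobvmu$-redex $\comm{\la\var\cm}{\stacker\val\cotm}$ can only come from a subterm $\sctxp{\la\var\tmthree}\val$, i.e.\ a $\mult$-redex. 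Here it is convenient to lean on the structural auxiliary lemmas already established—\reflemma{cotctx} (every $\tolbarmu\tm$ decomposes as $\cmctxp{\cotctxp\stempty}$) and \reflemma{sctx-mutilde} (substitution contexts translate to command evaluation contexts)—to locate the head redex of the translated term and trace it back through the defining clauses of $\tolbarmu\cdot$.

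The main obstacle I anticipate is the normal-form direction, specifically dealing with the substitution contexts $\sctx$ hidden inside the redex patterns of $\vsubcalc$: the $\vsubk$-redexes are \emph{at a distance}, so a $\vseq$-redex at the head of $\tolbarmu\tmtwo$ corresponds to a $\vsubk$-redex only after threading back through a possibly nonempty chain of explicit substitutions, which in the image becomes a chain of $\mutildesym$-appendings interleaved with the decomposition of \reflemma{cotctx}. Making precise that the head of $\tolbarmu\tmtwo$ coincides with the $\append\cdot\cdot$-output position, and that appending never destroys or creates the relevant redex shape, is the delicate bookkeeping step; the append-commutation facts of \reflemma{append-commutes} are exactly what keep this manageable. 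Once the redex correspondence is set up, the counting and the induction are routine, so I would spend the bulk of the effort on the case analysis establishing Point~\refpointmute{sim-vsubk-into-vseq-normal}.
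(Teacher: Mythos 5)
Your handling of the first two points coincides with the paper's: iterate the one-step simulation \reflemma{lbarmut-simulates-vsubk}, each $\mult$-step mapping to exactly one $\lambdabar$-step and each $\expo$-step to exactly one $\mut$-step, so the counts match trivially (the paper compresses this into one line, calling it an immediate consequence of the lemma). On the normal-form point you take a genuinely different route. You argue contrapositively, tracing every $\vseq$-redex of $\tolbarmu\tmtwo$ back through the translation to a $\vsubk$-redex of $\tmtwo$; the paper instead argues directly, by structural induction over the grammar of $\vsubk$-normal forms---a normal $\tmtwo$ is a value, a compound inert term $\var\val_1\ldots\val_k$, or $\tmthree\esub\var\gconst$ with $\tmthree$ normal and $\gconst$ a compound inert term---and simply computes the translation in each case: $\tolbarmu{(\var\val_1\ldots\val_k)} = \comm\var{\stacker{\stacker{\stacker{\tolbarmuv{\val_1}}{\ldots}}{\tolbarmuv{\val_k}}}{\stempty}}$, and in the ES case $\comm\vartwo{\stacker{\stacker{\stacker{\tolbarmuv{\val_1}}\ldots}{\tolbarmuv{\val_k}}}{\mutilde\var\tolbarmu\tmthree}}$, both visibly normal because the head is a variable, so neither rule can fire, with the body normal by \ih. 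This direct computation entirely sidesteps the ``delicate bookkeeping'' you anticipate: there is no need to locate the output position through chains of $\append{\cdot}{\cdot}$ or to invoke \reflemma{cotctx} and \reflemma{append-commutes} to back-translate redexes, because the normal-form grammar already pins down the shape of $\tolbarmu\tmtwo$. Your contrapositive is nonetheless sound---weak $\vseq$-evaluation does not enter abstractions, and your identification of $\mut$-redexes with $\expo$-redexes $\tmthree\esub\var{\sctxp\val}$ and of $\lambdabar$-redexes with $\mult$-redexes $\sctxp{\la\var\tmthree}\val$ is exactly right---but it pays for independence from the normal-form characterization with a heavier case analysis, and in carrying it out you would largely re-derive that characterization anyway; the paper's choice is the lighter of the two once the grammar of clean normal forms is granted.
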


\begin{proof}
The existence of $\derivtwo$ and the first two points are immediate consequences of \reflemma{lbarmut-simulates-vsubk}. We prove \refpoint{sim-vsubk-into-vseq-normal} by proving that the translation of a clean normal form $\tm$ of $\vsubkcalc$ is normal. Cases of $\tmtwo$:
\begin{itemize}
\item \emph{Value}: then clearly $\tolbarmu\tmtwo = \comm{\tolbarmuv\tmtwo}\stempty$ is normal. 
\item \emph{Compound Inert Term}: then $\tmtwo$ has the form $\tmtwo = \var \val_1 \ldots \val_k$. A straightforward induction on $k$ shows that it translates to $\comm\var{\stacker{\stacker{\stacker{\tolbarmuv{\val_1}}\ldots} {\tolbarmuv{\val_k}}} \stempty}$, that is normal. 
\item \emph{Substitution}: then $\tmtwo$ has the form $\tmtwo = \tmthree\esub\var\gconst$ where $\tmthree$ is a clean normal form and $\gconst$ is a compound inert term. If $\gconst = \vartwo \val_1 \ldots \val_k$ then $\tolbarmu\gconst = \comm\vartwo{\stacker{\stacker{\stacker{\tolbarmuv{\val_1}}\ldots} {\tolbarmuv{\val_k}}} \stempty}$ and $\tolbarmu{\tmthree\esub\var\gconst} = \append{\comm\vartwo{\stacker{\stacker{\stacker{\tolbarmuv{\val_1}}\ldots} {\tolbarmuv{\val_k}}} \stempty}} {\mutilde\var\tolbarmu\tmthree} = \comm\vartwo{\stacker{\stacker{\stacker{\tolbarmuv{\val_1}}\ldots} {\tolbarmuv{\val_k}}} \mutilde\var\tolbarmu\tmthree}$, that is normal because by \ih $\tolbarmu\tmthree$ is normal.
\qedhere
\end{itemize}
\end{proof}

\setcounter{corollaryAppendix}{\value{coro:equivalence-vsubk-vseq-termination}}
\begin{corollaryAppendix}[Linear Termination Equivalence of $\vsubkcalc$ and $\vseqcalc$\!]
\label{coroappendix:equivalence-vsubk-vseq-termination}
  Let
\NoteState{coro:equivalence-vsubk-vseq-termination}
  $\tm \in \vsubkterms$. 
  There is a $\vsubk$-normalizing derivation $\deriv$ from $\tm$ iff there is a $\vseqsym$-norm\-alizing derivation $\derivtwo$ from $\tolbarmu\tm$. 
  Moreover, \mbox{$\sizevsubk\deriv \!= \sizevseq\derivtwo$, $\sizee\deriv = \sizemut\derivtwo$ and $\sizem\deriv = \sizelbar\derivtwo$.}
\end{corollaryAppendix}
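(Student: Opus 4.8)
The plan is to derive the corollary from the quantitative simulation already established in \refthm{sim-vsubk-into-vseq}, together with the coincidence of normalization and strong normalization that holds in both calculi because each is strongly confluent. The structure mirrors that of the earlier termination equivalences, \ie simulation for one direction and a divergence-transfer argument for the other.

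First, for the left-to-right implication, I would take a $\vsubk$-normalizing derivation $\deriv \colon \tm \tovsubk^* \tmtwo$, so that $\tmtwo$ is $\vsubk$-normal. Applying \refthm{sim-vsubk-into-vseq} yields a derivation $\derivtwo \colon \tolbarmu\tm \tovseq^* \tolbarmu\tmtwo$, and by \refthmp{sim-vsubk-into-vseq}{normal} the target $\tolbarmu\tmtwo$ is $\vseq$-normal; hence $\derivtwo$ is a $\vseq$-normalizing derivation from $\tolbarmu\tm$. The quantitative clauses of the theorem give, for this particular pair, $\sizem\deriv = \sizelbar\derivtwo$, $\sizee\deriv = \sizemut\derivtwo$, and $\sizevsubk\deriv = \sizevseq\derivtwo$.

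For the converse, I would argue by contradiction, exactly as in the $\Leftarrow$ direction of \refcoro{equivalence-vsub-fire-termination}. Suppose $\tolbarmu\tm$ is $\vseq$-normalizing but $\tm$ is not $\vsubk$-normalizing. Since $\tovsubk$ is strongly confluent---being the restriction to the $\vsub$-stable sublanguage $\vsubkterms$ of the strongly confluent relation $\tovsub$ (\refpropp{basic-value-substitution}{strong-confluence})---normalization and strong normalization coincide (\refpropp{basic-confluence}{strong-normalizable}), so a non-normalizing $\tm$ admits an infinite $\vsubk$-derivation, hence $\vsubk$-derivations from $\tm$ of every finite length. By \refthm{sim-vsubk-into-vseq} (and the length equality $\sizevsubk\deriv = \sizevseq\derivtwo$) each such prefix is simulated by a $\vseq$-derivation from $\tolbarmu\tm$ of the same length, so $\tolbarmu\tm$ admits $\vseq$-derivations of unbounded length and is therefore not strongly $\vseq$-normalizing. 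But $\tolbarmut$ is strongly confluent (\refpropp{basic-lambdamu}{strong-confluence}), so its $\vseq$-normalizability forces strong $\vseq$-normalizability (\refpropp{basic-confluence}{strong-normalizable})---a contradiction.

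Finally, to obtain the quantitative equalities for \emph{arbitrary} normalizing derivations $\deriv$ from $\tm$ and $\derivtwo$ from $\tolbarmu\tm$, not merely for the pair produced by the simulation, I would invoke the invariance of step counts within each calculus: strong confluence makes all $\vsubk$-normalizing derivations from $\tm$ agree on their number of $\mult$- and $\expo$-steps, and \refpropp{basic-lambdamu}{strong-confluence} does the same for the $\lambdabar$- and $\mut$-steps of all $\vseq$-normalizing derivations from $\tolbarmu\tm$. Applying the simulation to the given $\deriv$ produces one $\vseq$-normalizing witness satisfying the three equalities, and the two invariances transport them to the given $\derivtwo$. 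The only subtle point---the main obstacle---is the $\Leftarrow$ direction: the simulation theorem speaks only of finite derivations, so divergence must be transferred through prefixes of unbounded length, which is legitimate precisely because normalization and strong normalization coincide in both systems.
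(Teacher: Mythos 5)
Your proposal is correct and follows essentially the same route as the paper's proof: the left-to-right direction projects the normalizing derivation through \refthm{sim-vsubk-into-vseq} and uses its normal-form clause, while the converse is the same contradiction argument transferring a diverging $\vsubk$-derivation (via prefixes of unbounded length) against the strong confluence of $\tolbarmut$, which bounds the length of derivations from the $\vseq$-normalizable $\tolbarmu\tm$. Your explicit treatment of the prefix argument and of transporting the step-count equalities to \emph{arbitrary} normalizing derivations via the invariances of \refpropp{basic-value-substitution}{strong-confluence} and \refpropp{basic-lambdamu}{strong-confluence} merely spells out what the paper leaves implicit.
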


\begin{proof}
  \begin{description}
       \item[$\Rightarrow$:] Let $\deriv \colon \tm \tovsubk^* \tmtwo$ be a $\vsubk$-normalizing derivation and $\derivtwo \colon \tolbarmu\tm \tovseq^* \tolbarmu\tmtwo$ be its projection in $\vseqcalc$, according to \refthm{sim-vsubk-into-vseq}. 
       Then $\derivtwo$ is a $\vsubk$-normalizing derivation from $\tolbarmu\tm$, since the $\vsubk$-normality of $\tmtwo$ implies the $\vseq$-normality of $\tolbarmu\tmtwo$ by \refthmp{sim-vsubk-into-vseq}{normal}.
       
       \item[$\Leftarrow$:] By contradiction, suppose that there is a diverging $\vsubk$-derivation from $\tm$ in $\vsubkcalc$. By \refthm{sim-vsubk-into-vseq} it projects to a $\vseq$-derivation from $\tolbarmu\tm$ in $\vseqcalc$ that is at least as long as the one in $\vsubkcalc$, which is absurd since $\tolbarmu\tm$ is $\vseq$-normalizable and all $\vseq$-normalizing derivations from $\tolbarmu\tm$ have the same length by \refpropp{basic-lambdamu}{strong-confluence}.
  \end{description}
  
  The result about lengths follows immediately from \refthmsps{sim-vsubk-into-vseq}{mult}{exp}.
  \qedhere
\end{proof}

\paragraph{Structural equivalence for $\vseqcalc$.}

\begin{remark}
  \label{rmk:cotctx}
  Every environment evaluation context can be uniquely written as $\cotctx = \comm{\val}{\stacker{\stacker{\stacker{\val_1}{\dots}}{\val_n}}{\cotm}}$ where either $\cotm = \stempty$ or $\cotm = \mutilde\var\cm$.
\end{remark}

\begin{lemma}
\label{l:seq-to-seqbar} 
 Let 
 $\tm$ and $\tmtwo$ be $\vsubk$-terms
 . 
 \begin{enumerate}
  \item \label{p:seq-to-seqbar-eq} If $\tm \eqstruct_{\essym,\aplsym} \tmp$ then $\tolbarmu\tm = \tolbarmu\tmp$.
  \item \label{p:seq-to-seqbar-seq} If $\tm \eqstruct_{\comsym} \tmp$ then $\tolbarmu\tm \seqbar_{\mut\mut} \tolbarmu\tmp$.
 \end{enumerate}
\end{lemma}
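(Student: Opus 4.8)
The plan is to prove both points by induction on the $\vsubk$-evaluation context in which the structural step takes place, exactly as in the proofs of \reflemma{lbarmut-simulates-vsubk} and \reflemma{append-commutes}. Before starting I would record three routine facts: (i) $\tolbarmu\cdot$ preserves free variables, i.e.\ $\fv{\tolbarmu\tm} = \fv\tm$ and $\fv{\tolbarmuv\val} = \fv\val$ (easy mutual induction); (ii) append is associative, $\append{(\append X\cotm)}\cotmtwo = \append X{(\append\cotm\cotmtwo)}$ for $X$ a command or an environment (induction on $X$; this is already used silently in the proof of \reflemma{append-commutes}); and (iii) $\append{(\mutilde\var\cm)}\cotm = \mutilde\var{(\append\cm\cotm)}$ whenever $\var\notin\fv\cotm$, which is the defining clause for appending to a $\mutildesym$ read without the $\alpha$-renaming, now unnecessary. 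I would also note at the outset that the $@r$ axiom $\tostructapr$ never applies inside a $\vsubk$-term, since the right argument of a $\vsubk$-application is always a value and never an explicit substitution; this is why point~1 concerns only $\tostructes$ and $\tostructapl$.

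For point~1 the base cases are a single $\tostructes$- or $\tostructapl$-step at the root, and in both the two sides have literally equal translation. For $\tostructapl$, i.e.\ $(\tm\esub\var\tmtwo)\val \tostructapl (\tm\val)\esub\var\tmtwo$ with $\var\notin\fv\val$, I compute $\tolbarmu{(\tm\esub\var\tmtwo)\val} = \append{(\append{\tolbarmu\tmtwo}{\mutilde\var{\tolbarmu\tm}})}{(\stacker{\tolbarmuv\val}\stempty)}$, then use (ii) to reassociate and (iii) (with $\var\notin\fv\val$) to push $\mutilde\var$ outward, obtaining $\append{\tolbarmu\tmtwo}{\mutilde\var{(\append{\tolbarmu\tm}{(\stacker{\tolbarmuv\val}\stempty)})}} = \append{\tolbarmu\tmtwo}{\mutilde\var{\tolbarmu{(\tm\val)}}} = \tolbarmu{(\tm\val)\esub\var\tmtwo}$. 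The $\tostructes$ case $\tm\esub\var{\tmtwo\esub\vartwo\tmthree} \tostructes (\tm\esub\var\tmtwo)\esub\vartwo\tmthree$ with $\vartwo\notin\fv\tm$ is symmetric: reassociate $\append{(\append{\tolbarmu\tmthree}{\mutilde\vartwo{\tolbarmu\tmtwo}})}{\mutilde\var{\tolbarmu\tm}}$ and apply (iii) with $\vartwo\notin\fv{\tolbarmu\tm}$ (which holds by (i)).

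For point~2 the base case is a single $\tostructcom$-step at the root, $\tm\esub\vartwo\tmthree\esub\var\tmtwo \tostructcom \tm\esub\var\tmtwo\esub\vartwo\tmthree$ with $\vartwo\notin\fv\tmtwo$ and $\var\notin\fv\tmthree$. A direct computation gives $\tolbarmu{\tm\esub\vartwo\tmthree\esub\var\tmtwo} = \append{\tolbarmu\tmtwo}{\mutilde\var{(\append{\tolbarmu\tmthree}{\mutilde\vartwo{\tolbarmu\tm}})}}$ and $\tolbarmu{\tm\esub\var\tmtwo\esub\vartwo\tmthree} = \append{\tolbarmu\tmthree}{\mutilde\vartwo{(\append{\tolbarmu\tmtwo}{\mutilde\var{\tolbarmu\tm}})}}$, so what must be shown is that these two independent appends commute up to $\seqbar$. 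Writing $\tolbarmu\tmtwo = \cmctxp{\cotctxp\stempty}$ and $\tolbarmu\tmthree = \cmctxtwop{\cotctxtwop\stempty}$ via \reflemma{cotctx}, appending replaces the output $\stempty$, so the two sides become $\cmctxp{\cotctxp{\mutilde\var{\cmctxtwop{\cotctxtwop{\mutilde\vartwo{\tolbarmu\tm}}}}}}$ and $\cmctxtwop{\cotctxtwop{\mutilde\vartwo{\cmctxp{\cotctxp{\mutilde\var{\tolbarmu\tm}}}}}}$. I would then commute the block $(\cotctxtwo,\mutilde\vartwo)$ leftward past the intervening $\mutildesym$-layers of $\cmctx$ and the layer $(\cotctx,\mutilde\var)$ by repeated applications of the axiom $\seqbar_{\mut\mut}$, each under a command evaluation context and each justified by a freshness condition coming from $\vartwo\notin\fv\tmtwo$, $\var\notin\fv\tmthree$ and $\alpha$-renaming. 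This is the main obstacle: a single $\tostructcom$-step does not in general translate to one $\seqbar_{\mut\mut}$-axiom instance but to a \emph{chain} of them, one per $\mutildesym$-layer in the output contexts of $\tolbarmu\tmtwo$ and $\tolbarmu\tmthree$, so the honest reading of point~2 is that $\tolbarmu\tm$ and $\tolbarmu\tmp$ are related by the equivalence $\seqbar$ generated by $\seqbar_{\mut\mut}$, and the bookkeeping of these swaps and their side conditions is where the real work lies. It is convenient to isolate this commutation as a separate lemma proved by induction on $\tmtwo$, peeling one layer of the output context of $\tolbarmu\tmtwo$ at a time.

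Finally, the inductive context cases are uniform. Under $\tolbarmu\cdot$ the three nontrivial $\vsubk$-context formers become $\append\cdot{(\stacker{\tolbarmuv\val}\stempty)}$, $\append{\tolbarmu\tmtwo}{\mutilde\var\ctxhole}$ and $\append\cdot{\mutilde\var{\tolbarmu\tm}}$: the first and third append a fixed environment on the right of a command, while the second plugs the command into $\append{\tolbarmu\tmtwo}{\mutilde\var\ctxhole}$, which by \reflemma{cotctx} is a command evaluation context. For point~1 each operation preserves equality trivially. For point~2 I use that $\seqbar$ is a congruence for these operations: closure under command evaluation contexts is definitional (handling the second former), while preservation of $\seqbar$ under appending a fixed environment on the right follows because appending into the output of a $\seqbar_{\mut\mut}$-axiom instance yields again a $\seqbar_{\mut\mut}$-axiom instance (the output sits in the innermost command $\cm$). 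These are the exact analogues for $\seqbar$ of \reflemmap{append-commutes}{cm-steps} and \reflemma{cosubstitution}, and with them the inductive step is immediate from the induction hypothesis.
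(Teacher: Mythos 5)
Your proof is correct, and on point~1 it is essentially the paper's argument: your facts (ii)--(iii) are exactly what the paper packages as \reflemmap{append-commutes}{ev-ctxs}, applied with the command evaluation context $\cmctx = \append{\tolbarmu\tmtwo}{\mutilde\vartwo\ctxhole}$ (legitimate by \reflemma{cotctx}), and your opening observation that $\tostructapr$ never applies inside a $\vsubk$-term is also made in the paper, though in the proof of \refprop{seq-to-seqbar} rather than here. The same holds for your inductive context cases: the lemma in the paper covers only root instances of the axioms, and the closure under $\vsubk$-evaluation contexts is performed separately in \refprop{seq-to-seqbar} by precisely the case analysis you sketch (\emph{Application Left}, \emph{Left of a Substitution}, \emph{Inside a Substitution}); your explicit append-congruence for $\seqbar$ is a worthwhile addition there, since the paper invokes it silently.

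On point~2 you genuinely diverge from the paper, and your version is the more careful one. The paper claims a \emph{single} instance of the axiom $\seqbarmut$, ``setting $\cotctx = \append{\tolbarmu\tmtwo}{\ctxhole}$ and $\cotctxtwo = \append{\tolbarmu\tmfour}{\ctxhole}$''; but these are environment evaluation contexts only when the outputs of $\tolbarmu\tmtwo$ and $\tolbarmu\tmfour$ sit under no $\mutildesym$, \ie when $\tmtwo$ and $\tmfour$ have no ES along their output spine---the grammar $\cotctx \grameq \comm\val\ctxhole \mid \cotctxp{\stacker\val\ctxhole}$ admits no $\mutildesym$-layer. Concretely, for $a,b,c,\varthree,\varfour$ pairwise distinct variables and $\tm = \varthree\esub{\vartwo}{a}\esub{\var}{b\esub{\varfour}{c}}$ (a legal $\tostructcom$-instance with $\tmp = \varthree\esub{\var}{b\esub{\varfour}{c}}\esub{\vartwo}{a}$), one computes $\tolbarmu\tm = \comm{c}{\mutilde{\varfour}{\comm{b}{\mutilde{\var}{\comm{a}{\mutilde{\vartwo}{\comm{\varthree}{\stempty}}}}}}}$ and $\tolbarmu\tmp = \comm{a}{\mutilde{\vartwo}{\comm{c}{\mutilde{\varfour}{\comm{b}{\mutilde{\var}{\comm{\varthree}{\stempty}}}}}}}$, which differ by moving one $\mutildesym$-layer across two; no single instance of $\seqbarmut$ (even under a command evaluation context) can do this, since the axiom only transposes \emph{adjacent} layers. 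So the statement is literally true only with $\seqbar$ in place of $\seqbarmut$---exactly your ``honest reading''---and this weaker conclusion is all that \refprop{seq-to-seqbar} uses downstream. Your auxiliary layer-peeling lemma, with side conditions discharged by $\var\notin\fv\tmfour$, $\vartwo\notin\fv\tmtwo$ and $\alpha$-renaming of the internal ES-binders, supplies the bookkeeping that the paper's proof elides; the only slip is quantitative and harmless: swapping two adjacent blocks of $\mutildesym$-layers costs the \emph{product}, not the sum, of their sizes in axiom instances, though your one-layer-at-a-time induction is unaffected.
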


\begin{proof}
 \begin{enumerate}
  \item If $\tm \eqstruct_{\essym} \tmp$, then $\tm = \tmthree\esub\var{\tmfour\esub{\vartwo}{\tmtwo}}$ and $\tmp = \tmthree\esub\var\tmfour\esub{\vartwo}{\tmtwo}$.
  So, just apply the translation and \reflemmap{append-commutes}{ev-ctxs} (setting $\cmctx = \append{\tolbarmu{\tmtwo}}{\mutilde\vartwo\ctxhole}$, $\cm = \tolbarmu{\tmfour}$ and $\cotm = \mutilde\var{\tolbarmu{\tmthree}}$):
  \begin{align*}
    \tolbarmu{\tm} = \tolbarmu{\tmthree\esub\var{\tmfour\esub{\vartwo}{\tmtwo}}} = \append{(\append{\tolbarmu{\tmtwo}}{\mutilde\vartwo\tolbarmu{\tmfour}})}{\mutilde\var\tolbarmu\tmthree} = \append{\tolbarmu{\tmtwo}}{\mutilde\vartwo{(\append{\tolbarmu{\tmfour}}{\mutilde\var\tolbarmu{\tmthree}})}} = \tolbarmu{\tmthree \esub\var\tmfour \esub\vartwo\tmtwo} = \tolbarmu{\tmp}.
  \end{align*}

  If $\tm \eqstruct_{\aplsym} \tmp$, then $\tm = \tmthree\esub\var{\tmtwo}\val$ and $\tmp = (\tmthree\val)\esub\var{\tmtwo}$.
  So, just apply the translation and \reflemmap{append-commutes}{ev-ctxs} (setting $\cmctx = \append{\tolbarmu{\tmtwo}}{\mutilde\var\ctxhole}$, $\cm = \tolbarmu{\tmthree}$ and $\cotm = \stacker{\tolbarmuv{\val}}\stempty$):
  \begin{align*}
    \tolbarmu{\tm} = \tolbarmu{\tmthree\esub\var{\tmtwo}\val} = \append{(\append{\tolbarmu{\tmtwo}}{\mutilde\var\tolbarmu{\tmthree}})}{(\stacker{\tolbarmuv{\val}}\stempty)} = \append{\tolbarmu{\tmtwo}}{\mutilde\var{(\append{\tolbarmu{\tmthree}}{(\stacker{\tolbarmuv{\val}}\stempty))}}} = \tolbarmu{(\tmthree\val) \esub\var\tmtwo} = \tolbarmu{\tmp}.
  \end{align*}

  \item As $\tm \eqstruct_{\comsym} \tmp$, then $\tm = \tmthree\esub{\vartwo}{\tmfour}\esub{\var}{\tmtwo}$ and $\tmp = \tmthree\esub{\var}{\tmtwo}\esub{\vartwo}{\tmfour}$ with $\var \notin \fv{\tmthree}$ and $\vartwo \notin \fv{\tmfour}$.
  So, just apply the translation and the definition of $\seqbar_{\mut\mut}$, the only axiom generating $\seqbar$ (setting $\cotctx = \append{\tolbarmu{\tmtwo}}{\ctxhole}$ and $\cotctxtwo = \append{\tolbarmu{\tmfour}}{\ctxhole}$):
  \begin{align*}
    \tolbarmu{\tm} = \tolbarmu{\tmthree\esub{\vartwo}{\tmfour}\esub{\var}{\tmtwo}} = \append{\tmtwo}{\mutilde\var(\append{\tolbarmu\tmfour}{\mutilde\vartwo{\tolbarmu\tmthree}})}
    \seqbar_{\mut\mut}
    \append{\tmfour}{\mutilde\vartwo(\append{\tolbarmu\tmtwo}{\mutilde\var{\tolbarmu\tmthree}})} = \tolbarmu{\tmthree\esub{\var}{\tmtwo}\esub{\vartwo}{\tmfour}} = \tolbarmu{\tmp}.
  \end{align*}
  \qedhere

 \end{enumerate}
\end{proof}

\begin{proposition}[Simulation of $\eqstruct$ by $\seqbar$]
\label{prop:seq-to-seqbar} 
 Let 
 $\tm$ and $\tmp$ be $\vsubk$-terms. If $\tm \eqstruct \tmp$ then $\tolbarmu\tm \seqbar \tolbarmu\tmp$.
\end{proposition}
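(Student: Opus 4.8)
The plan is to lift \reflemmap{seq-to-seqbar}{eq} and \reflemmap{seq-to-seqbar}{seq}, which handle the generating axioms applied at the root, to the full contextual and equivalence closure defining $\eqstruct$. First I would record that on $\vsubk$-terms the relation $\eqstruct$ is generated by only three of its four axioms: the $\tostructapr$ axiom is vacuous here, since its redex $\tm\,(\tmthree\esub\var\tmtwo)$ has a non-value argument and hence is not a $\vsubk$-term. Since $\seqbar$ is, by definition, an equivalence relation, and $\eqstruct$ is the least equivalence relation closed under evaluation contexts containing the axioms, the relation $\tm \eqstruct \tmp$ is witnessed by a finite zig-zag of one-step (or reversed) contextual axiom applications. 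By reflexivity, symmetry and transitivity of $\seqbar$, a straightforward induction on the length of this zig-zag reduces the statement to a single step.

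So it suffices to treat $\tm = \evctxp{\tmthree}$ and $\tmp = \evctxp{\tmthreep}$ with $\tmthree \tostruct_\mathsf{r} \tmthreep$ a root application of $\mathsf{r} \in \{\comsym, \aplsym, \essym\}$, inside a $\vsubk$-evaluation context $\evctx$. At the root, \reflemma{seq-to-seqbar} gives $\tolbarmu{\tmthree} \seqbar \tolbarmu{\tmthreep}$: for $\aplsym$ and $\essym$ it is even an equality of translations, while for $\comsym$ it is a $\seqbarmut$-step, and $\seqbarmut \, \subseteq \, \seqbar$. It then remains to propagate this through the context, i.e. to prove that $\tolbarmu{\tmthree} \seqbar \tolbarmu{\tmthreep}$ implies $\tolbarmu{\evctxp{\tmthree}} \seqbar \tolbarmu{\evctxp{\tmthreep}}$, which I would do by induction on $\evctx$. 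The base case $\evctx = \ctxhole$ is immediate.

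In the inductive cases the translation clauses rewrite $\tolbarmu{\evctxp{\tmthree}}$ as an append and the inductive hypothesis supplies $\tolbarmu{\evctxtwop{\tmthree}} \seqbar \tolbarmu{\evctxtwop{\tmthreep}}$. For $\evctx = \evctxtwo\val$ one has $\tolbarmu{\evctxp{\tmthree}} = \append{\tolbarmu{\evctxtwop{\tmthree}}}{(\stacker{\tolbarmuv{\val}}\stempty)}$ and for $\evctx = \tm\esub\var\evctxtwo$ one has $\tolbarmu{\evctxp{\tmthree}} = \append{\tolbarmu{\evctxtwop{\tmthree}}}{\mutilde\var{\tolbarmu{\tm}}}$; in both the hole content occupies the left (command) argument of the append, so both follow from the key lemma that \emph{appending preserves $\seqbar$}: if $\cm \seqbar \cmtwo$ then $\append{\cm}{\cotm} \seqbar \append{\cmtwo}{\cotm}$ for every environment $\cotm$. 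For $\evctx = \evctxtwo\esub\var\tmtwo$ one has instead $\tolbarmu{\evctxp{\tmthree}} = \append{\tolbarmu{\tmtwo}}{\mutilde\var{\tolbarmu{\evctxtwop{\tmthree}}}}$, where the hole content sits under a $\mutildesym$ at the output of $\tolbarmu{\tmtwo}$; using \reflemma{cotctx} to write $\tolbarmu{\tmtwo} = \cmctxp{\cotctxp{\stempty}}$, this equals $\cmctxp{\cotctxp{\mutilde\var{\tolbarmu{\evctxtwop{\tmthree}}}}}$, and one concludes directly from closure of $\seqbar$ under the command evaluation context $\cmctxp{\cotctxp{\mutilde\var{\ctxhole}}}$.

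The key lemma is the main obstacle. I would prove it, once more, by reducing to a single root step $\cmctxp{\cm_1} \seqbar \cmctxp{\cm_1'}$ with $\cm_1 \seqbarmut \cm_1'$: by \reflemmap{append-commutes}{ev-ctxs} one has $\append{\cmctxp{\cm_1}}{\cotm} = \cmctxp{\append{\cm_1}{\cotm}}$, and likewise for $\cm_1'$, so it suffices to check that the $\seqbarmut$ axiom commutes with appending, namely $\append{\cm_1}{\cotm} \seqbarmut \append{\cm_1'}{\cotm}$. This holds because append acts only at the output $\stempty$, which in $\cotctxp{\mutilde\var{\cotctxtwop{\mutilde\vartwo{\cm}}}}$ lies strictly below the two $\mutildesym$-binders rearranged by the axiom; hence appending $\cotm$ merely replaces that $\stempty$ by $\cotm$ on both sides and leaves the axiom (and its side conditions) intact, up to the freshness conventions of the $\mutildesym$-clause of append. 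The delicate points are exactly this commutation and the bookkeeping of the output position, for which \reflemmas{cotctx}{sctx-mutilde} give the needed normal-form description of translations. The $\aplsym/\essym$ cases, where the root step yields an equality, are lifted for free, since $\tolbarmu{\evctxp{\cdot}}$ factors through $\tolbarmu{\cdot}$ by the same induction on $\evctx$.
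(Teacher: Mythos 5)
Your proposal is correct and follows essentially the same route as the paper's proof: the same preliminary observation that $\tostructapr$ is vacuous on $\vsubk$-terms, the same reduction of $\eqstruct$ to the reflexive--transitive--symmetric closure of single contextual axiom steps, the base cases discharged by \reflemma{seq-to-seqbar}, and an induction over the three non-trivial $\vsubk$-evaluation-context constructors. The only difference is one of explicitness: the paper's inductive cases silently use that appending is compatible with $\seqbar$ (and, for the hole under $\mutildesym$, closure of $\seqbar$ under command evaluation contexts), facts you isolate and justify via \reflemmap{append-commutes}{ev-ctxs} and the observation that the output $\stempty$ lies below the two binders permuted by $\seqbarmut$ --- a welcome filling-in of detail rather than a different approach.
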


\begin{proof}
  First, observe that there are no $\tmtwo, \tmtwop \in \vsubkterms$ such that $\tmtwo \eqstruct_{\aprsym} \tmtwop$: indeed, $\tmtwo \eqstruct_{\aprsym} \tmtwop$ implies that $\tmtwo = \tmthree \, \tmfour\esub\var\tmfive$ and $\tmfour\esub\var\tmfive$ is not a value, therefore $\tmtwo \notin \vsubkterms$.
  
  Let $\eqstruct'$ the closure of $\eqstruct_{\aplsym} \cup \eqstruct_{\comsym} \cup \eqstruct_{\essym}$ under evaluation contexts of $\vsubkcalc$. 
  As $\eqstruct$ on $\vsubkterms$ is just the reflexive-transitive and symmetric closure of $\eqstruct$ (and $\seqbar$ is an equivalence relation), in order to prove \refprop{seq-to-seqbar} it is enough to prove that the following statement $(*)$: for every $\tm, \tmp \in \vsubkterms$, if $\tm \eqstruct' \tmp$ then $\tolbarmu\tm \seqbar \tolbarmu\tmp$.
  The proof of $(*)$ is by induction on the definition $\tm \eqstruct' \tmp$.
  
  The base cases (\ie when $\tm \eqstruct_{\aplsym} \tmp$ or $\tm \eqstruct_{\comsym} \tmp$ or $\tm \eqstruct_{\essym} \tmp$) are already proved in \reflemma{seq-to-seqbar}.
  Concerning the inductive cases, we have:
  \begin{itemize}
    \item \emph{Application Left}, \ie $\tm \defeq \tmtwo \val \eqstruct \tmtwop \val \eqdef \tmp$ with $\tmtwo \eqstruct \tmtwop$: by \ih, $\tolbarmu{\tmtwo} \seqbar \tolbarmu{\tmtwop} $; so, $\tolbarmu{\tm} = \append{\tolbarmu\tmtwo}{(\stacker{\tolbarmuv\val}\stempty)} \seqbar \append{\tolbarmu\tmtwop}{(\stacker{\tolbarmuv\val}\stempty)} = \tolbarmu{\tmp}$;
    \item \emph{Left of a Substitution}, \ie $\tm \defeq \tmtwo \esub\var\tmthree \eqstruct \tmtwop \esub\var \tmthree \eqdef \tmp$ with $\tmtwo \eqstruct \tmtwop$:
    by \ih, $\tolbarmu{\tmtwo} \seqbar \tolbarmu{\tmtwop} $; so, $\tolbarmu{\tm} = \append{\tolbarmu\tmthree}{\mutilde\var{\tolbarmu\tmtwo}} \seqbar \append{\tolbarmu\tmthree}{\mutilde\var{\tolbarmu\tmtwop}} = \tolbarmu{\tmp}$;
    \item \emph{Inside a Substitution}, \ie $\tm \defeq \tmthree \esub\var\tmtwo \eqstruct \tmthree\esub\var\tmtwop \eqdef \tmp$ with $\tmtwo \eqstruct \tmtwop$:
    by \ih, $\tolbarmu{\tmtwo} \seqbar \tolbarmu{\tmtwop} $; thus, $\tolbarmu{\tm} = \append{\tolbarmu\tmtwo}{\mutilde\var{\tolbarmu\tmthree}} \seqbar \append{\tolbarmu\tmtwop}{\mutilde\var{\tolbarmu\tmthree}} = \tolbarmu{\tmp}$.
    \qedhere
  \end{itemize}
\end{proof}

\begin{proposition}[Basic Properties of Structural Equivalence $\seqbar$]
\label{prop:seqbar-post-and-term} 
  Let 
  $\cm_0, \cm_1$ be commands and $\Rule \in\set{\lambdabar, \mut, \vseq}$.
    \begin{varenumerate}
      \item\label{p:seqbar-post-and-term-locpost} \emph{Strong Bisimulation of $\seqbar$ wrt $\tovseq$}: 
      if $\cm_0 \seqbar \cm_1$ and $\cm_1 \torule \cm_2$ then there exists a command $\cm_3$ such that $\cm_0 \torule \cm_3 \seqbar \cm_2$.      
      \item\label{p:seqbar-post-and-term-globpost} \emph{Postponement of $\seqbar$ wrt $\tovseq$}: 
      if $\deriv \colon \cm_0 \tovseqeq^* \cm_1$ then there are $\cm_2 \seqbar \cm_1$ and $\derivtwo \colon \cm_0 \tovseq^* \cm_2$ such that $\sizevseq{\deriv} = \sizevseq{\derivtwo}$, $\sizemut{\deriv} = \sizemut{\derivtwo}$ and $\sizelbar{\deriv} = \sizelbar{\derivtwo}$.
      \item\label{p:seqbar-post-and-term-normal}\emph{Normal Forms}: if $\tm \seqbar \tmtwo$ then $\tm$ is $\Rule$-normal iff $\tmtwo$ is $\Rule$-normal.
      \item\label{p:seqbar-post-and-term-strong-confluence} \emph{Strong confluence}: $\tovseqeq$ is strongly confluent.
    \end{varenumerate}	
\end{proposition}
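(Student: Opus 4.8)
The plan is to mirror exactly the treatment of the structural equivalence $\eqstruct$ of $\vsubcalc$ (\reflemma{eqstruct-post-and-term}), reducing all four items to the single strong-bisimulation property (the first item): once one knows that whenever $\cm_0 \seqbar \cm_1$ and $\cm_1 \torule \cm_2$ there is $\cm_3$ with $\cm_0 \torule \cm_3 \seqbar \cm_2$ (with the step $\torule$ of the same kind), the other three items are standard consequences. So the core of the work is establishing this bisimulation, and I would attack it first.

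To prove the bisimulation I would first reduce to the case where $\cm_0 \seqbar \cm_1$ is a single application, inside an evaluation context, of the generating axiom $\cotctxp{\mutilde\var{\cotctxtwop{\mutilde\vartwo\cm}}} \seqbarmut \cotctxtwop{\mutilde\vartwo{\cotctxp{\mutilde\var\cm}}}$; the general case then follows by chaining along the zig-zag of axiom steps witnessing $\seqbar$, using the symmetry of the axiom and the fact that both $\tovseq$ and $\seqbar$ are closed under command evaluation contexts. For the single-axiom case I would perform an induction on the evaluation context and a case analysis on how the fired $\torule$-redex overlaps the pattern $\mutilde\var{(\cdots \mutilde\vartwo\cm)}$. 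The side conditions $\var \notin \fv\cotctxtwo$ and $\vartwo \notin \fv\cotctx$ are exactly what guarantees that the two $\mutildesym$-binders can be transposed while commuting with any step: a step taking place strictly inside $\cotctx$, $\cotctxtwo$, or $\cm$ is matched verbatim on the other side, while a root step interacting with one of the binders is reproduced using the Substitution and Append lemmas (\reflemma{substitution}, \reflemma{cosubstitution}) together with the commutation of appending with evaluation contexts and with reductions (\reflemma{append-commutes}, in particular \reflemmap{append-commutes}{cm-steps}).

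From the bisimulation the remaining items are routine. For postponement I would argue by induction on the length of the $\tovseqeq$-derivation, repeatedly using the bisimulation to slide each $\seqbar$-step rightward past the following $\tovseq$-step, which preserves the total number of $\lambdabar$- and $\mut$-steps (exactly as in \reflemmap{eqstruct-post-and-term}{globpost}). The normal-forms item is immediate: if $\tm \seqbar \tmtwo$ and $\tmtwo$ fires a redex, the bisimulation applied to $\tmtwo \seqbar \tm$ produces a redex of $\tm$, and symmetry of $\seqbar$ gives the converse. Finally, strong confluence of $\tovseqeq$ follows from strong confluence of $\tovseq$ (\refpropp{basic-lambdamu}{strong-confluence}) together with the bisimulation, via the general principle that an equivalence which is a strong bisimulation for a strongly confluent relation yields a strongly confluent relation when composed as $\seqbar\,\tovseq\,\seqbar$.

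The main obstacle will be the overlap analysis in the bisimulation's base case, specifically root $\lambdabar$- and $\mut$-steps that straddle the two transposed $\mutildesym$-binders (or that consume the value whose environment is being reorganized): there one must check that appending and substitution interact correctly so that the matching step on the other side exists and lands in a $\seqbar$-equivalent command, and this is where the free-variable side conditions and the append lemmas must be deployed with care. As a hedge, I note an alternative, lighter route that avoids much of this casework: transport the already-known properties of $\eqstruct$ (\reflemma{eqstruct-post-and-term}) across the isomorphism between $\vsubkcalc$ and $\vseqcalc$, using the simulation of $\eqstruct$ by $\seqbar$ (\refprop{seq-to-seqbar}) and the step-preserving translations (\reflemma{lbarmut-simulates-vsubk} and \refthm{sim-vsubk-into-vseq}); this would reduce the proposition to the corresponding facts on the $\vsubcalc$ side, at the cost of making the converse direction of the correspondence explicit.
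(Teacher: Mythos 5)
Your primary plan coincides with the paper's proof: the paper likewise reduces items 2--4 to the strong bisimulation, proving it first for the single-axiom closure under command evaluation contexts (its relation $\seqbarp$) by induction on the context, with a root-overlap case analysis (internal/external $\lambdabar$- and $\mut$-steps, made explicit via the canonical form $\comm{\val}{\stacker{\val_1}{\cdots\,\stacker{\val_n}{\cotm}}}$ of environment evaluation contexts) in which the freshness side conditions play exactly the role you assign them, and then declares items 2--4 immediate consequences. Your hedge through the translation to $\vsubkcalc$ is not the route the paper takes, but your main argument is essentially the paper's.
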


\begin{proof}
  \begin{enumerate}
    \item It is enough to prove the following statement ($*$): if $\cm_0 \seqbarp \cm_1$ and $\cm_1 \torule \cm_2$ then there exists a command $\cm_3$ such that $\cm_0 \torule \cm_3 \seqbarp \cm_2$, where $\seqbarp$ is the reflexive closure under command evaluation contexts of $\seqbarmut$, the unique axiom generating the equivalence $\seqbar$.
    Indeed $\seqbarp$ is reflexive and symmetric, therefore $\seqbar$ is just the transitive closure of $\seqbarp$, so the proof of \refpropp{seqbar-post-and-term}{locpost} follows immediately.
    
    The proof of ($*$) is by induction on the definition of $\seqbarp$.

    In the inductive cases the proof follows immediately from the \ih, since $\seqbarp$ and $\torule$ are closed under the same contexts. 
    
    Concerning the base cases, according to \refrmk{cotctx}, we have 
      \begin{align*}
        \cm_0 &\defeq \comm{\val}{\stacker{\stacker{\stacker{\val_1}{\dots}}{\val_n}}{\mutilde\var{\comm{\valtwo}{\stacker{\stacker{\stacker{\valtwo_1}{\dots}}{\valtwo_{n'}}}{\mutilde\vartwo\cm}}} }} \\
        &\seqbarmut \comm{\valtwo}{\stacker{\stacker{\stacker{\valtwo_1}{\dots}}{\valtwo_{n'}}}{\mutilde\vartwo{\comm{\val}{\stacker{\stacker{\stacker{\val_1}{\dots}}{\val_n}}{\mutilde\var\cm}}} }} \eqdef \cm_1
      \end{align*}
      where $\var \notin \fv{\valtwo} \cup \bigcup_{i'=1}^{n'}\fv{\valtwo_{i'}}$ and $\vartwo \notin \fv{\val} \cup \bigcup_{i=1}^n \fv{\val_i}$.
      Thus there are only four cases:
      \begin{enumerate}
        \item \emph{Internal $\lambdabar$-step}, \ie $\val = \la\varthree\cmp$, $n > 0$ and 
        \begin{equation*}
          \cm_1 \tobvmu \comm{\valtwo}{\stacker{\stacker{\stacker{\valtwo_1}{\dots}}{\valtwo_{n'}}}{\mutilde\vartwo{\comm{\val_1}{\append{(\mutilde\varthree\cmp)} (\stacker{\stacker{\stacker{\val_2}{\dots}}{\val_n}}{\mutilde\var\cm})}} }} = \cm_2
        \end{equation*}
        then, $\cm_0 \tobvmu \comm{\val_1}{\append{(\mutilde\varthree\cmp)}{(\stacker{\val_2}{\stacker{\dots}{\stacker{\val_n}{\mutilde\var{\comm{\valtwo}{\stacker{\valtwo_1}\stacker\dots\stacker{\valtwo_{n'}}{\mutilde\vartwo\cm}}}}}})}} \seqbarp \cm_2$, where the last equivalence holds by applying the axiom $\seqbarmut$ with the environment evaluation contexts $\cotctx = \comm{\val_1}{\append{(\mutilde\varthree\cmp)}{\stacker{\val_2}{\stacker\dots{\stacker{\val_n}{\ctxhole}}}}}$ and $\cotctxtwo = \comm{\valtwo}{\stacker{\valtwo_1}{\stacker{\dots}{\stacker{\valtwo_{n'}}{\ctxhole}}}}$.
        \item \emph{External $\lambdabar$-step}, \ie $\valtwo = \la\varthree\cmp$, $n' > 0$ and 
        \begin{equation*}
          \cm_1 \tobvmu \comm{\valtwo_1}{\append{(\mutilde\varthree\cmp)}{(\stacker{\valtwo_2}{\stacker{\dots}{\stacker{\valtwo_{n'}}{\mutilde\vartwo{\comm{\val}{\stacker{\val_1}\stacker\dots\stacker{\val_{n}}{\mutilde\var\cm}}}}}})}} = \cm_2
        \end{equation*}
        then, $\cm_0 \tobvmu \comm{\val}{\stacker{\stacker{\stacker{\val_1}{\dots}}{\val_n}}{\mutilde\var{\comm{\valtwo_1}{\append{(\mutilde\varthree\cmp)} (\stacker{\stacker{\stacker{\valtwo_2}{\dots}}{\valtwo_n}}{\mutilde\vartwo\cm})}} }}  \seqbarp \cm_2$, where the last equivalence holds by applying the axiom $\seqbarmut$ with the environment evaluation contexts $\cotctx = \comm{\val}{\stacker{\val_1}{\stacker{\dots}{\stacker{\val_n}{\ctxhole}}}}$ and $\cotctxtwo = \comm{\valtwo_1}{\append{(\mutilde\varthree\cmp)}{\stacker{\valtwo_2}{\stacker\dots{\stacker{\valtwo_{n'}}{\ctxhole}}}}}$.
        
        \item \emph{Internal $\mut$-step}, \ie $n = 0$ and $\cm_1 
        \tomut \comm{\valtwo}{\stacker{\stacker{\stacker{\valtwo_1}{\dots}}{\valtwo_{n'}}}{\mutilde\vartwo{\cm\isub\var\val} }} = \cm_2$:
        then, 
        $\cm_0 \tomut \cm_2$ since $\var  \notin \fv{\valtwo} \cup \bigcup_{i'=1}^{n'}\fv{\valtwo_{i'}}$.
        \item \emph{External $\mut$-step}, \ie $n' = 0$ and $\cm_1 \tomut \comm{\val}{\stacker{\stacker{\stacker{\val_1}{\dots}}{\val_n}}{\mutilde\var{\cm\isub\vartwo\valtwo} }} = \cm_2$ (recall that $\vartwo \notin \fv{\val} \cup \bigcup_{i=1}^n \fv{\val_i}$):
        then, 
        $\cm_0 \tomut \cm_2$.
      \end{enumerate}
      
    \item Immediate consequence of \refpropp{seqbar-post-and-term}{locpost}.
    \item Immediate consequence of \refpropp{seqbar-post-and-term}{locpost}.
    \item Immediate consequence of \refpropp{seqbar-post-and-term}{locpost}.
    \qedhere
  \end{enumerate}

\end{proof}

}

\end{document}